\providecommand{\ignore}[1]{}
\newif\ifcmnt
    \providecommand{\aucmnt}[1]{#1}
    \providecommand{\jrvcolor}[1]{\textcolor{red}{#1}}
    \providecommand{\aucmnt}[1]{}
    \providecommand{\jrvcolor}[1]{{#1}}
\newcommand{\jrvc}[1]{\aucmnt{\jrvcolor{[#1]}}}
\providecommand{\ignore}[1]{}
\newif\ifcmnt
    \providecommand{\aucmnt}[1]{#1}
    \providecommand{\aucmnt}[1]{}
\newcommand{\rls}{\mathbb{R}}
\newcommand{\cmplx}{\mathbb{C}}
\newcommand{\nats}{\mathbb{N}}
\newcommand{\cH}{\mathcal{H}}
\newcommand{\cP}{\mathcal{P}}
\newcommand{\cQ}{\mathcal{Q}}
\newtheorem{theorem}{Theorem}
\newtheorem{lemma}[theorem]{Lemma}
\newtheorem{corollary}[theorem]{Corollary}
\newcommand{\setydiagramtext}{\ytableausetup{nosmalltableaux}\ytableausetup{boxsize=0.4em,centertableaux}}
\newcommand{\setydiagrameq}{\ytableausetup{nosmalltableaux}\ytableausetup{smalltableaux,centertableaux}}
\newcommand{\subydiagramtext}[1]{{\ytableausetup{boxsize=0.2em,nocentertableaux}\ydiagram{#1}\setydiagramtext}}
\newcommand{\subydiagram}[1]{{\ytableausetup{boxsize=0.2em,nocentertableaux}\ydiagram{#1}\setydiagramtext}}
\newcommand{\subydiagrameq}[1]{{\ytableausetup{boxsize=0.2em,nocentertableaux}\ydiagram{#1}\setydiagrameq}}
\let\s\shortstack
\let\ytab\ytableaushort
\newcommand{\lambdap}{\lambda'\,}
\newcommand{\mup}{\mu'\,}
\newcommand{\nup}{\nu'\,}
\newcommand{\bzero}{\mathbf{0}}
\newcommand{\tphi}{\tilde{\phi}}
\newcommand{\mus}{{\mu^{(1)}\cdots\mu^{(N)}}}
\newcommand{\lambdas}{{\lambda^{(1)}\cdots\lambda^{(N)}}}
\newcommand{\musk}{{\mu^{(1)}\cdots\mu^{(k)}}}
\newcommand{\musp}{{\mu^{(1)\prime}\cdots\mu^{(N)\prime}}}
\newcommand{\chimn}{\chi_\mus^\nu}
\newcommand{\chimnp}{\chi_\mus^{\nu'}}
\newcommand{\bepsilon}{\boldsymbol{\epsilon}}
\newcommand{\blambda}{\boldsymbol{\lambda}}
\newcommand{\bmu}{\boldsymbol{\mu}}
\newcommand{\bnu}{\boldsymbol{\nu}}
\newcommand{\bigboxtimes}{\bigotimes}
\newcommand{\tp}{\intercal}
\newcommand{\cR}{\mathcal{R}}
\newcommand{\one}{\mathbb{1}}
\newcommand{\JM}{\text{Jucys--Murphy }}
\DeclareMathOperator{\End}{End}
\DeclareMathOperator{\Aut}{Aut}
\newcommand{\Ind}{\text{Ind}}
\DeclareMathOperator{\Hom}{Hom}
\newcommand{\Span}{\text{span}}
\newcommand{\SU}{\text{SU}}
\newcommand{\Id}{\text{Id}}
\newcommand{\h}{\mathfrak{h}}
\newcommand{\gl}{\mathfrak{gl}}
\newcommand{\g}{\mathfrak{g}}
\newcommand{\osp}{\mathfrak{osp}}
\newcommand{\sla}{\mathfrak{sl}}
\newcommand{\proper}{\text{ proper}}
\newcommand{\so}{\mathfrak{so}}
\newcommand{\sort}{\text{sort}}
\newcommand{\spa}{\mathfrak{sp}}
\newcommand{\su}{\mathfrak{su}}
\newcommand{\GL}{\text{GL}}
\newcommand{\U}{\text{U}}
\newcommand{\SL}{\text{SL}}
\newcommand{\diag}{\text{diag}}
\newtheorem{prop}[theorem]{Proposition}
\theoremstyle{definition}
\newtheorem{exmp}[theorem]{Example}
\newtheorem{Definition}[theorem]{Definition}
\def\renewtheorem#1{%
  \expandafter\let\csname#1\endcsname\relax
  \expandafter\let\csname c@#1\endcsname\relax
  \gdef\renewtheorem@envname{#1}
  \renewtheorem@secpar
}
\def\renewtheorem@secpar{\@ifnextchar[{\renewtheorem@numberedlike}{\renewtheorem@nonumberedlike}}
\def\renewtheorem@numberedlike[#1]#2{\newtheorem{\renewtheorem@envname}[#1]{#2}}
\def\renewtheorem@nonumberedlike#1{  
\def\renewtheorem@caption{#1}
\edef\renewtheorem@nowithin{\noexpand\newtheorem{\renewtheorem@envname}{\renewtheorem@caption}}
\renewtheorem@thirdpar
}
\def\renewtheorem@thirdpar{\@ifnextchar[{\renewtheorem@within}{\renewtheorem@nowithin}}
\def\renewtheorem@within[#1]{\renewtheorem@nowithin[#1]}
\title{Universality of swap for qudits: a representation theory approach}
\author{James~R.}{van Meter}
\abstract{  \OnePageChapter     


An open problem of quantum information theory has been to determine under what conditions universal exchange-only computation is possible for qudits encoded on 
$d$-state systems for $d>2$.  
This problem can be posed in terms of representation theory by recognizing that each quantum mechanical swap, generated by exchange-interaction, can be identified with a transposition in a symmetric group, each $d$-state system can be identified with the fundamental representation of $\SU(d)$, and each encoded qudit can be identified with an irreducible representation of a Lie algebra generated by transpositions.   Towards this end we first give a mathematical definition of exchange-only universality
in terms of a map from the special unitary algebra on the product of qudits into a representation of a Lie algebra  generated by transpositions.   We show that this definition is consistent with quantum computing requirements.
We then proceed with the task of characterizing universal families of qudits, that is families of encoded qudits admitting exchange-only universality.
This endeavor is aided by the fact that the irreducible representations corresponding to qudits are canonically labeled by partitions.  In particular we derive necessary and sufficient conditions for universality on one or two such
qudits, in terms of simple arithmetic conditions on the associated partitions.  We also derive necessary and
sufficient conditions for universality on arbitrarily many such qudits, in terms of Littlewood--Richardson coefficients.
Among other results, we prove that universal families of multiple qudits are upward closed,
that universality is guaranteed for sufficiently many qudits, and that any family that is not universal can be made so by simply adding at most five ancillae.
We also obtain results for 2-state systems as a special case.

        }
\begin{document}

\chapter{Introduction}

Although still in their infancy, quantum computers promise to speed up various computations by many orders of magnitude, relative to classical computers, and perform tasks that are presently unfeasible.
Such capabilities are made possible by taking advantage of unique features of quantum mechanics.
For example, the composition of multiple quantum systems is described by the tensor product of the same number of Hilbert spaces.  As a consequence, the resources required to simulate such a composite system
grow exponentially with the number of components on a classical computer, but only polynomially on a quantum computer.  This is an auspicious fact for the future of computational chemistry, and also the principle
behind Google's recent experimental demonstration of a quantum speedup \cite{google}.  Key to such success is that rather than operating on classical bits, each equal to 0 or 1, a quantum computer effectively operates on superpositions of 0s and 1s (and maybe additional values), which are vectors in two-dimensional Hilbert spaces called qubits (or qudits for arbitrary dimensions).  Various clever ways have been found to exploit such superposed logical values, and particularly the cross-terms, also known as interference terms, that result from products of multiple such qudits.
Applications range from efficiently factoring large integers to searching unstructured databases faster than any classical computer. 

There are myriad possibilities for quantum computing architecture.  Despite the early success of Google's prototype in which qubits are implemented with superconducting circuits, there is no consensus on the best approach to building a quantum computer.  Indeed there are a great many proposals in regards to which specific materials to use, ways to encode information into physical systems, and how to translate computer operations into physical processes. 
Probably the most popular choice for the physical representation of a qubit, in theoretical discussions, is the spin of a particle.  

Here we consider a less common alternative, that of encoding information in the permutations of particles. 
Such encoding has the advantage of being impervious to certain kinds of noise that affect only spins \cite{Zanardi}.  It also has the advantage that, in certain physical systems, permutations of particles are conveniently enacted by the quantum
mechanical exchange interaction, which generates swaps between particle
states \cite{Loss}.
This operation is perhaps the purest physical realization of a transposition in a symmetric group.  Unsurprisingly, then, symmetric group representation theory plays a central role in the mathematical description of this approach to computation.
Indeed a $D$-dimensional qudit encoded on $n$ particles, as described above, may be identified with 
a $D$-dimensional irreducible representation of the symmetric group $S_n$.  
It may also be identified with a $D$-dimensional irreducible representation
of the Lie algebra generated by the transpositions in $S_n$.   
The elements of this ``algebra of transpositions", which physically correspond to Hamiltonians, 
may then be exponentiated to form operators serving as logical gates on the qudits.

This approach to computation is not merely theoretical.
Precise control of the exchange-interaction has been experimentally demonstrated between
specially confined particles in silicon semiconductors known as quantum dots \cite{QuTech,HRLa,HRLb}.
Silicon-based quantum dots have even been used to physically realize the two-dimensional irreducible
representation of the symmetric group on three elements \cite{HRLa}.
Such an approach is potentially competitive: 
although superconducting quantum computers benefit from optimal
conductivity, a quantum dot approach may be more efficient since it involves
fewer particles over shorter distance scales.

To live up to its name, exchange-only computation requires that every desired logical gate,
or unitary operator, be implementable by exchange interactions.
In the experiment mentioned above, it was shown that every unitary operator could be implemented by exchange-only means on the encoded qubit \cite{HRLa}.
More generally, the capability to implement every unitary operator on given qudits is called
{\it universality}.  Kempe et al. have shown that exchange-only universality is always achievable
for qudits that are encoded in permutations of particles corresponding to 2-state systems \cite{PhysRevA.63.042307}.
The number of states has the effect of constraining the irreducible 
representations considered.
The main effort of the present work generalizes the theorem of 
Kempe et al. to $d$-state systems of arbitrary $d$, on which universality does not always hold.   
Their result for 2-state systems is obtained here as a special case.
We also investigate in some detail the relative merits of qudits of various dimensions
encoded in 2-state systems.

The bulk of this thesis considers $d$-state systems for $d>2$.
Such facilitates efficient construction of qudits of arbitrarily large dimension,
which may have computational advantages.
To the author's knowledge, exchange-only computation with $d$-state systems for $d>2$
has never before been attempted experimentally, nor prior to this work investigated
theoretically.
Therefore, an objective of this thesis is to lay the theoretical foundation for this novel kind
of computation. 

A first step is to define with mathematical precision the universality of exchange-only computation on the encoded qudits.  
An essential requirement is that the algebra of transpositions generate, in some way,
every desired unitary operator on the product of the qudits.  We develop such a definition of universality in terms of the existence of an injective linear 
``universality witnessing" map from the special unitary algebra on the product of the qudits onto a physically motivated representation $\rho$ of the algebra of 
transpositions.  By ``physically motivated representation" we mean, first, that the vectors spanning the representation space correspond to states of the 
physical system, and second that the action of transpositions on these vectors describe the physical effect of the exchange-interaction on the corresponding 
states.  Physical consistency further requires compatibility of this universality-witnessing map with symmetric group intertwiners from the product of qudits to 
the representation space of $\rho$, reflecting the fact that the objects being permuted have a physical existence independent of the choice of representation.

There is however no unique way to define universality with the above properties.
For example, different definitions may result in universality-witnessing maps differing by scalars.
Such scalars are considered ``phase factors" in the relevant physical context, and have
no bearing on the desired computation.  Indeed multiple, seemingly disparate definitions of universality arise naturally from various physical considerations.  
One contribution of this thesis is to prove the mathematical equivalence of such definitions.

For the purpose of defining and proving exchange-only universality, it is helpful to note that each irreducible representation of the symmetric group $S_n$, 
and thus each qudit encoded on $n$ particles, is canonically labeled by a partition of the number $n$.  
More specifically, if the qudit is encoded on $d$-state systems,
it is labeled by a partition having $d$ parts.
This labeling is such that properties of the 
representation can be deduced from properties of the associated partition.  For example, a representation is one-dimensional
if and only if its labeling partition is trivial, that is consisting either of a single part of size $n$, or $n$ parts of size 1.  
For another example, 
if two partitions are conjugates of each other then
the associated representations are also related to each other, and 
in particular have the same dimension.  

Given that each qudit is labeled by a partition,
we find it convenient to define universality as a property of a family of partitions.
We then show that universality can be determined by properties of the partitions alone, without explicit reference to the associated group or 
Lie algebra representations.
One important result derived here is that a single-partition family, or singleton, 
is universal if and only if simple conditions are satisfied by the partition.
For arbitrarily many, nontrivial partitions, necessary and sufficient conditions
for universality are again obtained in terms of the partitions.

Here we should distinguish between the goal of characterizing universal families of partitions,
and the more ambitious goal of efficiently characterizing universal families of partitions.
The conditions referred to above for the universality of arbitrarily many nontrivial partitions
are not efficient, because they involve a number of computational steps that grows faster than a polynomial in 
the total number of particles.  On the other hand, we derive efficient conditions that are both necessary and sufficient for the universality of 
two nontrivial partitions, in a simple arithmetic form.
We also derive efficient necessary and sufficient conditions for the universality of arbitrarily many nontrivial partitions provided at least one is not self-conjugate.

The task of efficiently characterizing universal families of partitions is further simplified by another important result of this thesis, that universal families of multiple nontrivial partitions are upward closed.
This means that if such a family is contained in a larger family, the larger family is also universal.
Therefore the problem of characterizing universal families of nontrivial partitions reduces to the problem of characterizing minimal such families.
This problem is made tractable by another useful result: that for a sufficiently large but finite number of nontrivial partitions, universality is guaranteed.  
This last result further implies that as an exchange-only quantum computer is scaled up, universality is inevitable.

Finally, given a family of partitions that is not universal, we prove that it can be made so via only slight modification, 
which in principle can be realized by the corresponding physical system.
In particular we show that inclusion of the simple $3+2$ partition of the number 5 
-- and thus the addition of at most 5 ancillary particles to the system --  ensures the universality of any family of partitions.  
Alternatively the physical system can be constrained so as to also constrain the physically relevant representation of the algebra of transpositions, which we show simplifies the task of achieving universality.

The results of this thesis are enabled by two fields of study, in particular. 
One is symmetric group representation theory,
from Young's seminal formalism through Marin's recent development of the
representation theory of the algebra of transpositions \cite{MARIN2007742}.  The other is the 
interdisciplinary topic of Littlewood--Richardson coefficients, beginning naturally with 
Littlewood and Richardson and culminating with Knutson and Tao's proof that
Horn's inequalities are necessary and sufficient to determine whether each coefficient 
is nonvanishing \cite{1998math......7160K}.

It may be of interest to note that this thesis depends crucially on mathematical 
discoveries that are 
relatively recent.  Any attempt to obtain the results of this thesis without Marin's 
2003 work would require, in effect, that the latter be independently derived.
Meanwhile Knutson and Tao's results on necessary and sufficient conditions for
nonvanishing Littlewood--Richardson coefficients facilitated proof of one of the main
theorems of this thesis, that being necessary and sufficient conditions for 
exchange-only universality on two qudits.
While the latter may have been achievable without Knutson and Tao's breakthrough,
the combinatorics involved would have been formidable. 

This thesis is organized as follows.
In Chapter~2 we present the needed mathematical background mentioned above.
In Chapter~3 we give a brief overview of exchange-only quantum computation and discuss the relative merits of 
higher dimensional encoded qudits.  We note in particular that a fixed collection of $d$-state systems can be used to
encode various possible qudits of various dimensions, depending on how they are initialized. 
Then we expand on an observation made in \cite{PhysRevA.63.042307} 
of a feature unique to exchange-only quantum computation, namely that the higher the dimension of the encoded qudit,
the higher its ``coding efficiency".
We also give design specifications for exchange-only universality, as motivated by quantum computation, before defining it mathematically in terms of partitions in Chapter~4.  We then proceed in the rest of Chapter~4 to prove a number of original theorems, alluded to above, on characterizing universal families of partitions.  
Finally, conclusions and open problems are discussed in Chapter~5.

\chapter{Mathematical preliminaries}

Below we summarize the mathematical background needed for subsequent chapters.
Section~2.1 reviews representation theory and introduces Young's formalism for
symmetric group representation theory,
defines the algebra of transpositions which plays a central role in this work, and 
describes its ``Jucys--Murphy elements"  which prove to be valuable tools.
Section~2.2 summarizes Marin's pioneering work on the algebra of transpositions,
Section~2.3 reviews Schur--Weyl duality and the role played by Littlewood--Richardson
coefficients in representation theory, and Section~2.4 highlights relevant results
on calculating Littlewood--Richardson coefficients.
Notation and terminology is introduced as needed throughout.

\section{Representation theory}
We require some representation theory of groups and Lie algebras.
The former concerns group homomorphisms of the form  
 $\rho:G\rightarrow\GL(V)$, or equivalently the
$G$-module $V$, and the latter concerns the Lie algebra homomorphisms of the 
form,
$\rho:\g\rightarrow\gl(V)$, or equivalently the $\g$-module $V$.
Of course transpositions, as both group elements and Lie algebra elements,
figure prominently in this discussion. 
The material for this section is mostly drawn from 
Young \cite{young1977collected}, Fulton and Harris \cite{fulton1991representation}, and Marin \cite{MARIN2007742}. 

\subsection{Symmetric and unitary groups}
\setydiagrameq

Representations of symmetric groups are of central importance to this thesis,
so we now introduce conventions in that regard.
The symmetric group denoted by $S_n$ is the group of permutations of  
$\{1,\ldots,n\}$.  
When there is a need to be explicit, the elements of the symmetric group $S_n$ are given by permutations of $\{1,\ldots,n\}$ in the form of products of $k$-cycles, where a $k$-cycle is denoted in the usual way: $(i_1 i_2 \ldots i_k)$.
A subgroup of $S_n$ consisting of 
permutations of a subset $A$ of $\{1,\ldots,n\}$ is signified by $S_A$.  
For example, 
$S_{\{4,5,6\}}$ is a subgroup of $S_6$. 
Also we treat $S_{\{1,\ldots,\ell\}}\times S_{\{\ell+1,\ldots,n\}}$ as the obvious
subset of $S_n$, with elements $(s,t)$ of the former identified with 
elements $s \circ t$ of the latter, and therefore write $S_{\{1,\ldots,\ell\}}\times S_{\{\ell+1,\ldots,n\}}\subset S_n$.
(Note we overload the symbol ``$\subset$" to mean ``subset", 
``subgroup", or ``subalgebra", depending on the context, and allow equality in every case.)
Furthermore, in the isomorphic Cartesian product $S_\ell\times S_{n-\ell}$,
we assume the elements of the sets being permuted are labeled so as to allow the identification: $S_\ell\times S_{n-\ell}=S_{\{1,\ldots,\ell\}}\times S_{\{\ell+1,\ldots,n\}}$.
More generally, we write
$$
\prod_{i=1}^N S_{m_i}=\prod_{i=1}^N S_{\{n_{i-1}+1,\ldots,n_i\}}
$$
where $n_i=\sum_{j=1}^im_j$.

Recall that the number of irreducible representations of a finite group equals the 
number of conjugacy classes of that group, which motivates determination of the conjugacy 
classes of $S_n$.  For that purpose we observe that every element can be expressed as the 
product of disjoint cycles.  Further, because every element is generated by 
transpositions, and conjugation by a transposition only transposes numbers but does not 
change the length of a cycle, it follows that conjugation by any element does not change 
the lengths of cycles but only the numbers within.  For example: $(1 2 3)(4 5)$ 
conjugated by $(3 4)$ is $(1 2 4)(3 5)$.  Therefore, each conjugacy class is characterized by the multiset of lengths of the disjoint cycles that compose each of its representative elements.
Since the lengths of these disjoint cycles add up to $n$ (including trivial 1-cycles), 
each conjugacy class is uniquely characterized by a partition of $n$.  So the irreducible 
representations of $S_n$ are in bijection with the partitions of $n$.

There is, in particular, a canonical bijection between the irreducible representations of
$S_n$ and the partitions of $n$, by which it proves useful to label the former by the latter.
Therefore we introduce special notation for that purpose.
We denote partitions by certain Greek letters, typically $\kappa$, $\lambda$, $\mu$, and $\nu$,
or by a Greek letter with a superscripted number in parenthesis in the case of a family of partitions, e.g. $\mu^{(i)}$ where $i\in\nats$ .   
These are partitions of numbers sometimes denoted by corresponding Roman letters, e.g.
$\kappa \vdash k$,
$\lambda \vdash \ell$,
$\mu \vdash m$,
$\nu \vdash n$,
$\mu^{(i)} \vdash m_i$,
where ``$\vdash$" 
indicates the number being partitioned. 
We may also refer to the size of a partition $\lambda$, denoted as $|\lambda|$, which 
equals the number being partitioned, so in particular 
$\lambda\vdash|\lambda|$.
A part of a partition is indicated by a subscript, so that $\lambda_i$ is the 
$i$th part, where by convention it is assumed the parts are in weakly descending order so 
that $\lambda_i\geq\lambda_{i+1}$ for all $i\in\nats$. 
When there is a need to explicitly specify a partition, it may be given by its 
nonzero parts listed in square brackets: $[\lambda_1,\lambda_2,\ldots,\lambda_d]$, 
where $\lambda_i=0$ for $i>d$.  Further, in this notation, multiple parts with the same 
length may be denoted by the length of that part superscripted 
by its repetition, so for example $[1,1,1]=[1^3]$.

A pictorial notation for a partition is the {\it Young diagram}, 
in which parts 
are depicted by rows of squares arranged by descending order vertically.  For example the 
Young diagram for $[3,2,2]$ is:
$$
\ydiagram{3,2,2}
$$
An advantage of this presentation is that the squares may be filled in with numbers, 
for various purposes.  In particular a {\it Young tableau} is a Young diagram $\lambda$ 
in which the squares have been filled in with the numbers from 1 to $|\lambda|$, with no 
repetition, such that they increase along each row and down each column.  For example:
$$
\ytableaushort{145,26,37}
$$
\setydiagramtext
The number of Young tableaux for a given partition $\lambda$ equals the 
dimension of the irreducible representation of $S_{|\lambda|}$ canonically labeled by 
$\lambda$ and denoted by $V_\lambda$ or, when there is no confusion, simply by $\blambda$.
For example, if $\lambda=\ydiagram{3,1}$ then $\dim(\blambda)=3$ because there are
three Young tableaux of this shape:
\setydiagrameq
$$
\ytableaushort{134,2},\ \ytableaushort{124,3},\ \ytableaushort{123,4},
$$
\setydiagramtext
while if $\lambda=\ydiagram{3,2}$ then $\dim(\lambda)=5$ because there are five Young
tableaux of this shape:
\setydiagrameq
$$
\ytableaushort{135,24},\ \ytableaushort{134,25},\ \ytableaushort{125,34},\ \ytableaushort{124,35},\ \ytableaushort{123,45}
$$
These tableaux can serve as basis vectors for the associated $S_{|\lambda|}$-module.
In Section~2.1.3 we show how the action of $S_{|\lambda|}$ on $\blambda$ can be specified in terms of the tableaux.

\setydiagramtext
Young diagrams also conveniently label irreducible representations of certain other
groups, notably certain subgroups of the general linear group $\GL(d,\cmplx)$.  Although we are primarily concerned
with the symmetric group, the special unitary group also plays an important role in this
work.  Finite dimensional irreducible representations of $\SU(d)$
are canonically labeled by Young diagrams of at most $d$ rows.
The dimension of the representation thus labeled is given by the number of ways in which
the integers between 1 and $d$ can fill the squares of the Young diagram, with repetition,
such that they are weakly increasing along rows and strictly increasing down columns.
A Young diagram filled in according to such rules of ordering is called a 
{\it semistandard tableau}, in general.
When specifically filled in with 
integers between 1 and $d$, particularly when considered as a basis vector of
an irreducible representation of $\GL(d,\cmplx)$ or its subgroup, 
such a semistandard tableau is sometimes called a {\it Weyl tableau}.
For example, the representation of $\SU(2)$ labeled by $\ydiagram{3,1}$ is 3-dimensional
because there are three Weyl tableaux of this shape:
\setydiagrameq
$$
\ytableaushort{111,2},\ \ytableaushort{112,2},\ \ytableaushort{122,2},
$$
\setydiagramtext
while the representation of $\SU(2)$ labeled by $\ydiagram{3,2}$ is 2-dimensional because
there are two Weyl tableaux of this shape:
\setydiagrameq
$$
\ytableaushort{111,22},\ \ytableaushort{112,22}
$$
Such a representation of $\SU(d)$ labeled by a partition $\lambda$ is denoted
by $U_\lambda^{(d)}$.

We also consider tensor products of representations.
A tensor product of $\SU(d)$-modules is always assumed to be an ``inner" tensor product,
meaning that a product $M\otimes N$ between $G$-modules $M$ and $N$ is taken to be 
a $G$-module with the action of $g\in G$ on $M\otimes N$ given by 
$g.(m\otimes n)=g.m\otimes g.n$ for all $m\in M$ and $n\in N$.
On the other hand,
unless otherwise noted, a tensor product of symmetric group modules is assumed to be an
``outer" tensor product, meaning that a product $M\otimes N$ between $G$-module $M$ and 
$H$-module $N$
is taken to be a $G\times H$-module with the action of $(g,h)\in G\otimes H$ given by 
$(g,h).m\otimes n = g.m\otimes h.n$.
In other cases, we specify the resulting module and its action as needed.

\subsection{Hooks and conjugates}

Another advantage of Young diagrams is that various features of their shapes
correspond to properties of the associated representations,
and certain relationships between the shapes correspond to maps between the
representations.  Of particular interest are {\it hooks}, which are (roughly) 
hook-shaped diagrams,
e.g.
$$
\ydiagram{4,1,1}
$$
Hooks are identifiable by the length of the second row and column equaling one, 
whereas if the second row is greater than one then the partition is called {\it proper}
(while an {\it improper} partition is either a hook or trivial).
In terms of the bracket notation, a hook has the form $[n-r,1^r]$, for some $0<r<n-1$.
It proves useful to distinguish between {\it shallow hooks}, for which $r=1$ or $r=n-2$,
and {\it deep hooks}, for which $1<r<n-2$.
The correspondence between hooks and symmetric group representations is such that each 
irreducible $S_n$-module associated
with a hook of size $n$ is an exterior power of $V_{[n-1,1]}$; that is
$V_{[n-r,1^r]}=\bigwedge^rV_{[n-1,1]}$.  
The dimension of such a representations is given by the following formula \cite{MARIN2007742}:
\begin{lemma}
\label{lem:hookdimension}
We have
$$
\dim(V_{[n-r,1^r]})=\left(\begin{array}{c} n-1 \\ r \end{array}\right)=\frac{(n-1)!}{r!(n-1-r)!}.
$$
\end{lemma}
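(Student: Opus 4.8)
The plan is to lean on the identification $V_{[n-r,1^r]}=\bigwedge^r V_{[n-1,1]}$ recorded just above the statement, together with the elementary fact that for a vector space $W$ of dimension $m$ one has $\dim\bigwedge^r W=\binom{m}{r}$. So the whole lemma reduces to computing $\dim V_{[n-1,1]}$ and substituting.

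First I would count the standard Young tableaux of shape $[n-1,1]$, whose number is $\dim V_{[n-1,1]}$. In any such tableau the box $(1,1)$ must contain $1$; the single box in the second row may contain any one of the values $2,\ldots,n$, and once that entry is fixed the remaining $n-2$ values fill the rest of the first row in the unique increasing order. Hence there are exactly $n-1$ such tableaux, so $\dim V_{[n-1,1]}=n-1$. Applying $\dim\bigwedge^r W=\binom{\dim W}{r}$ with $W=V_{[n-1,1]}$ then yields $\dim(V_{[n-r,1^r]})=\binom{n-1}{r}=\frac{(n-1)!}{r!(n-1-r)!}$. It is worth noting that this also covers the improper boundary cases: $r=0$ gives the trivial module and $r=n-1$ the sign module, both one-dimensional, in agreement with $\binom{n-1}{0}=\binom{n-1}{n-1}=1$.

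Alternatively, and without invoking the exterior-power description at all, the formula falls out of the hook length formula applied directly to $\lambda=[n-r,1^r]\vdash n$. The corner box has hook length $(n-r-1)+r+1=n$; the remaining boxes of the first row have hook lengths $n-r-1,n-r-2,\ldots,1$; and the remaining boxes of the first column have hook lengths $r,r-1,\ldots,1$. Thus the product of all hook lengths is $n\cdot(n-r-1)!\cdot r!$, and $\dim V_\lambda=\frac{n!}{n\cdot(n-r-1)!\cdot r!}=\binom{n-1}{r}$.

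There is no real obstacle here; the only points requiring care are being explicit about the normalization of $V_{[n-1,1]}$ and confirming the cited exterior-power identification, or—if one prefers the self-contained route—getting the hook-length bookkeeping for the arm and leg lengths exactly right.
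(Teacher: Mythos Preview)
Your proposal is correct. The paper does not actually supply a proof of this lemma; it simply records the formula with a citation to Marin \cite{MARIN2007742}, immediately after noting the identification $V_{[n-r,1^r]}=\bigwedge^r V_{[n-1,1]}$. Your first argument (exterior power plus $\dim V_{[n-1,1]}=n-1$) is exactly the computation that identification invites, and your hook-length alternative is a clean self-contained substitute; both are standard and sound.
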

\noindent So for example a shallow hook yields dimension $n-1$, which for $n>4$ is the lowest possible
dimension of a nontrivial irreducible representation of $S_n$ \cite{fulton1991representation}.

Another noteworthy feature of a Young diagram's shape is the symmetry, or lack thereof,
with respect to the diagonal axis passing through the upper left corner.
A diagram that is flipped around this axis is called the {\it conjugate} of the original,
as for example:
$$
\ytableausetup{aligntableaux=top}
\ytableaushort{{\color{gray}\diagdown}\none\none\none,\none{\color{gray}\diagdown}\none,\none}*{4,3,1}\rightarrow
\ytableaushort{{\color{gray}\diagdown}\none\none,\none{\color{gray}\diagdown},\none\none,\none}*{3,2,2,1}
\ytableausetup{centertableaux}
$$
This operation may also be described as interchanging rows and columns.
Denoting the conjugate of a partition $\lambda$ by 
$\lambda'$, it also satisfies
$\lambda'_{\ i}=\max\{j\geq 1|\lambda_j\geq i\}$. 

We are interested in how the representations associated with conjugate partitions
relate to each other.  To begin with, they have the same dimension; that is, 
$\dim(\blambda')=\dim(\blambda)$, for all $\lambda$.
More specifically, there is an $S_{|\lambda|}$-module isomorphism
between $\boldsymbol{\lambda'}$ and the inner tensor product $\boldsymbol{\lambda}\otimes\boldsymbol{\epsilon}$, 
where $\bepsilon$ is the one-dimensional sign representation (which may also be
associated with the partition
$[1^{|\lambda|}]$) \cite[Exercise~4.4]{fulton1991representation}.
For this reason conjugate pairs and self-conjugate partitions, which we refer to collectively as
{\it conjugates}, are of special significance. 
We return to this point later.

\subsection{Young's orthogonal form}

The symmetric group action is conveniently formulated in terms of Young tableaux.  
With that formalism, and a given irreducible representation, there are many common ways 
to explicitly represent the symmetric group as matrices.  One popular choice is the representation afforded by the Specht module \cite{zbMATH03016521},
also known as Young's natural representation. Another is Young's seminormal form \cite{young1977collected}.  
However, the matrices representing the transpositions, as given by either of the above 
two representations,  are not Hermitian. 

For consistency with the quantum physical realization of the transposition, 
that being the exchange interaction, we require the matrix representation of each
transposition to be Hermitan and unitary.
For this purpose, a suitable choice is {\it Young's orthogonal form}, defined as follows.
Given a Young tableau $T$, let $r_T(k)$ and $c_T(k)$ respectively denote the row and column of $T$ 
containing the number $k$, and for $1\leq i,j\leq \ell$ let
$$
d_T(i,j)\equiv(c_T(j)-r_T(j))-(c_T(i)-r_T(i)),
$$
where $c_T(i)-r_T(i)$ is called the {\it content} of $i$ and $d_T(i,j)$ is called the
{\it axial distance} between $i$ and $j$.
Denoting Young's orthogonal form of the irreducible representation associated 
with partition $\lambda$ by $\rho_\lambda:S_\ell\rightarrow\GL(\blambda)$, we have
\begin{equation*}
\rho_\lambda((i\ i+1))T=\frac{1}{d_T(i,i+1)}T+\sqrt{1-\frac{1}{d_T(i,i+1)^2}}(i\ i+1)T,
\end{equation*}
where
$(i\ i+1)T$ denotes direct action on $T$ via exchange of the positions
of $i$ and $i+1$.
Note that $(i\ i+1)T$ results in an invalid tableau (only) when
$i$ and $i+1$ are in the same row or column, but then
$d_T(i,i+1)=\pm 1$ and thus the coefficient of the invalid term vanishes.
Since transpositions of the form $(i\ i+1)$ generate $S_\ell$,
the above definition is sufficient to
determine the representation of every permutation.

It is useful to define an inner product on the representation space, by 
$\langle S,T\rangle=\delta_{ST}$, 
for any tableau basis vectors $S$ and $T$ of the space.  
This inner product may then be
sesquilinearly extended to the full space. 
As the name suggests, Young's orthogonal form of each permutation is 
an orthogonal matrix in the tableau basis, 
which we verify below:
\begin{lemma}
\label{lem:orthogonalform}
Young's orthogonal form is orthogonal with respect
to the inner product defined on the associated representation space 
by $\langle S,T\rangle=\delta_{ST}$,
for any tableaux $S$ and $T$.
\end{lemma}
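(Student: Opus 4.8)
Because the tableaux form an orthonormal basis for the inner product $\langle S,T\rangle=\delta_{ST}$, the claim is equivalent to saying that the matrix of each $\rho_\lambda(\sigma)$ in the tableau basis is an orthogonal matrix. The plan is to reduce this to the adjacent transpositions $(i\ i+1)$: these generate $S_\ell$ and $\rho_\lambda$ is a homomorphism, so $\rho_\lambda(\sigma)$ is a product of the matrices $\rho_\lambda((i\ i+1))$, and a product of orthogonal matrices is orthogonal. Thus it suffices to show each $\rho_\lambda((i\ i+1))$ is orthogonal.

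Fix $i$. First I would check that $(i\ i+1)$ partitions the tableau basis of $\blambda$ into $\rho_\lambda((i\ i+1))$-invariant blocks of size at most two. Either $(i\ i+1)T$ is an invalid tableau --- which by the note following the definition of $\rho_\lambda$ happens exactly when $i$ and $i+1$ lie in the same row or column of $T$, forcing $d_T(i,i+1)=\pm1$ and hence $\rho_\lambda((i\ i+1))T=\pm T$, a $1\times1$ block --- or $S:=(i\ i+1)T$ is again a valid tableau, in which case $\Span\{T,S\}$ is invariant (using $(i\ i+1)S=T$), a $2\times2$ block. For the $2\times2$ blocks the key point is the identity $d_S(i,i+1)=-d_T(i,i+1)$, which holds because passing from $T$ to $S$ swaps the cells --- hence the contents --- of $i$ and $i+1$; consequently the two off-diagonal entries of the block coincide. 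With $d:=d_T(i,i+1)$ the block is
\begin{equation*}
\begin{pmatrix}
1/d & \sqrt{1-1/d^2}\\
\sqrt{1-1/d^2} & -1/d
\end{pmatrix},
\end{equation*}
and a one-line computation shows its columns are orthonormal: each has squared norm $1/d^2+(1-1/d^2)=1$ and their inner product is $(1/d)\sqrt{1-1/d^2}-(1/d)\sqrt{1-1/d^2}=0$. The $1\times1$ blocks are $\pm1$ and hence trivially orthogonal. Therefore $\rho_\lambda((i\ i+1))$ is block diagonal with orthogonal blocks, so it is orthogonal, and the reduction above completes the proof. (Equivalently: $\rho_\lambda((i\ i+1))$ is real by its defining formula, symmetric by the identity $d_S(i,i+1)=-d_T(i,i+1)$, and an involution because $\rho_\lambda$ is a representation; a real symmetric involution is automatically orthogonal.)

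I do not expect a real obstacle here --- the argument is elementary once the tableau basis is split into $(i\ i+1)$-orbits. The only thing that needs care is the bookkeeping establishing symmetry of the generator matrices: the identity $d_S(i,i+1)=-d_T(i,i+1)$ when $S=(i\ i+1)T$, together with the observation that when $(i\ i+1)T$ is invalid the coefficient $\sqrt{1-1/d_T(i,i+1)^2}$ automatically vanishes, so the block structure is not spoiled by stray off-diagonal entries.
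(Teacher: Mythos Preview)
Your proof is correct and follows essentially the same approach as the paper: reduce to the adjacent transpositions, use the identity $d_{(i\ i+1)T}(i,i+1)=-d_T(i,i+1)$, and verify orthogonality of each generator matrix directly from the explicit $1\times1$ and $2\times2$ blocks. The paper's version computes $\langle\rho_\lambda((i\ i+1))S,\rho_\lambda((i\ i+1))T\rangle$ and checks the same three cases ($S=T$, $S=(i\ i+1)T$, and otherwise), which is just your block computation rephrased as an inner-product identity. Your parenthetical alternative --- real, symmetric, and an involution implies orthogonal --- is a nice shortcut the paper does not mention.
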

\begin{proof}
We have
\begin{eqnarray*}
\langle\rho_\lambda((i\ i+1))S,\rho_\lambda((i\ i+1))T\rangle &=&
\left(\frac{1}{d_T(i,i+1)d_S(i,i+1)}+\sqrt{\left(1-\frac{1}{d_S(i,i+1)^2}\right)\left(1-\frac{1}{d_T(i,i+1)^2}\right)}\right)\langle S,T\rangle\\
&&+\frac{1}{d_S(i,i+1)}\sqrt{1-\frac{1}{d_T(i,i+1)^2}}\langle S,(i\ i+1)T\rangle\\
&&+\frac{1}{d_T(i,i+1)}\sqrt{1-\frac{1}{d_S(i,i+1)^2}}\langle T,(i\ i+1)S\rangle. 
\end{eqnarray*}
Thus if $S=T$ then $\langle\rho_\lambda((i\ i+1))S,\rho_\lambda((i\ i+1))T\rangle=1$.  If $S=(i\ i+1)T$ or equivalently $T=(i\ i+1)S$, then 
$d_T(i,i+1)=-d_S(i,i+1)$ and all terms cancel.
Otherwise, $\langle\rho_\lambda((i\ i+1))S,\rho_\lambda((i\ i+1))T\rangle=0$.
Therefore $\langle\rho_\lambda((i\ i+1))S,\rho_\lambda((i\ i+1))T\rangle=\langle S,T\rangle$.  
This result extends to all permutations since again the nearest neighbor
transpositions generate the entire symmetric group $S_{|\lambda|}$.
\end{proof}
\noindent This inner product is therefore $S_{|\lambda|}$-invariant.
Since the permutations act as real orthogonal matrices, they are also unitary.
Since the transpositions are self-inverse, they are furthermore Hermitian.

Henceforth, if there is a need to consider the explicit action of $S_\ell$ on 
the irreducible module $\lambda$, we assume $S_\ell$ acts by Young's orthogonal
form.  Further, given $s\in S_\ell$ and $v\in\blambda$ we may denote 
$\rho_\lambda(s)v$ by $s.v$, since the action should be unambiguously 
understood.

\subsection{Lie algebras}

Although we are interested in elements of unitary groups, our focus is mostly on Lie 
algebras.  In their quantum mechanical applications, transpositions correspond to terms in 
Hamiltonians which can be considered elements of a Lie algebra, which in turn generate 
unitary operators by exponentiation.  The focus on Lie algebras is also convenient in taking 
advantage of Marin's results on the algebra of transpositions.

Therefore, starting with the group algebra 
$$
\cmplx S_{|\lambda|}\equiv\left\{\left.\sum_{s\in S_{|\lambda|}}a_ss\right|a_s\in\cmplx\right\}
$$
we construct a complex Lie algebra $(\cmplx S_{|\lambda|},[\cdot,\cdot])$ 
by using the multiplication and addition operators of $\cmplx S_{|\lambda|}$
to define a Lie bracket equal to the commutator $[\cdot,\cdot]$.
Of particular interest is the {\it algebra of transpositions} $\g_{|\lambda|}$,
that being the Lie subalgebra of $\cmplx S_{|\lambda|}$ generated
by the transpositions in $S_{|\lambda|}$.
Additional Lie subalgebras of $\cmplx S_{|\lambda|}$ with significant roles to play are the derived
algebra of $\g_{|\lambda|}$, denoted by $\g'_{|\lambda|}\equiv[\g_{|\lambda|},\g_{|\lambda|}]$, which consists of the span of the commutators of all elements of $\g_{|\lambda|}$,
and a compact real form of $\g'_{|\lambda|}$, denoted by $\Re(\g'_{|\lambda|})$,
which up to isomorphism is the real subalgebra of $\g'_{|\lambda|}$ with complexification equal to
$\g'_{|\lambda|}$ that generates a compact Lie group.  

We also consider irreducible representations of the Lie algebra $\cmplx S_{|\lambda|}$ by linearly extending $\rho_\lambda$ from $\rho_\lambda:S_{|\lambda|}\rightarrow\GL(\blambda)$ to $\rho_\lambda:\cmplx S_{|\lambda|}\rightarrow\gl(\blambda)$,
where $\gl(\blambda)$ is the Lie algebra of all complex matrices on the vector
space associated with $\lambda$, i.e. $\gl(\blambda)\cong\gl(\dim(\blambda),\cmplx)$.
The use of these particular representations for $\g_{|\lambda|}$ and $\g'_{|\lambda|}$
is justified by the following observation \cite[Proposition~2]{MARIN2007742}:
\begin{lemma}
The following statements are equivalent.
\begin{enumerate}[label=(\roman{*})]
\item The group representations $\rho_\lambda$ and $\rho_\mu$ of $S_m$ are equivalent.
\item The Lie algebra representations $\rho_\lambda$ and $\rho_\mu$ of
$\g_m$ are equivalent.
\item The Lie algebra representations $\rho_\lambda$ and $\rho_\mu$ of
$\g'_m$ are equivalent.
\end{enumerate}
\end{lemma}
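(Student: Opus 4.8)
The plan is to establish the cycle $(i)\Rightarrow(ii)\Rightarrow(iii)\Rightarrow(i)$, with the first two implications essentially formal and the real content in $(iii)\Rightarrow(i)$; I would also reduce at the outset to the case in which neither $\lambda$ nor $\mu$ is a trivial partition, treating the one-dimensional representations by direct inspection. For $(i)\Rightarrow(ii)$: if $A$ is a linear isomorphism with $A\rho_\lambda(s)=\rho_\mu(s)A$ for all $s\in S_m$, then the set of $x\in\g_m$ satisfying $A\rho_\lambda(x)=\rho_\mu(x)A$ is a Lie subalgebra of $\g_m$ (the relation is stable under linear combinations and under commutators) and it contains the transpositions, which generate $\g_m$; hence it is all of $\g_m$, and the same $A$ witnesses $(ii)$. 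The implication $(ii)\Rightarrow(iii)$ is mere restriction along $\g'_m\subset\g_m$. Conversely, an intertwiner of the $\g_m$-actions restricts to an intertwiner of the images of the transpositions, and since these generate $S_m$ as a group the same closure argument (now for a subgroup) gives $(ii)\Rightarrow(i)$; it therefore suffices to recover $(ii)$ from $(iii)$.

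For $(iii)\Rightarrow(ii)$ I would use the structure of $\g_m$. The one external ingredient, which I would quote from Marin, is that the sum of transpositions $c:=\sum_{i<j}(i\ j)$, which is central in $\cmplx S_m$, does not lie in $\g'_m$ but spans a complement to it, so $\g_m=\g'_m\oplus\cmplx c$ with $\g_m/\g'_m$ one-dimensional; and since this quotient is abelian while $S_m$ is generated by transpositions, all transpositions have the same image in it, so every transposition $t$ satisfies $t-\binom{m}{2}^{-1}c\in\g'_m$. Two consequences follow. First, $c$ acts on the irreducible $\g_m$-module $\blambda$ as a scalar $a_\lambda$ by Schur's lemma ($\blambda$ is $\g_m$-irreducible because the transpositions generate $S_m$), so every $\g'_m$-submodule of $\blambda$ is automatically $\g_m$-stable; hence $\blambda$ and likewise $\bmu$ restrict irreducibly to $\g'_m$. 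Second, two irreducible $\g_m$-modules whose restrictions to $\g'_m$ are isomorphic must differ by tensoring with a one-dimensional character of $\g_m$, namely the character that vanishes on $\g'_m$ and records the difference of the scalars by which $c$ acts.

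Now suppose $(iii)$ holds, so that there is a $\g'_m$-module isomorphism between $\blambda|_{\g'_m}$ and $\bmu|_{\g'_m}$. By the second consequence there is a character $\chi$ of $\g_m$ --- vanishing on $\g'_m$ and taking a common value $s$ on all transpositions --- with $\bmu\cong\blambda\otimes\chi$ as $\g_m$-modules; concretely, $\rho_\mu(t)$ is conjugate to $\rho_\lambda(t)+s\,\Id$ for every transposition $t$. But $\rho_\mu(t)$ is an involution, so $(\rho_\lambda(t)+s\,\Id)^2=\Id$, i.e. $2s\,\rho_\lambda(t)+s^2\,\Id=0$, forcing $s\,(2\rho_\lambda(t)+s\,\Id)=0$. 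Since $\lambda$ is nontrivial, $\rho_\lambda$ is neither the trivial nor the sign representation, so $\rho_\lambda(t)$ is not a scalar; therefore $s=0$, $\chi$ is trivial, and $\bmu\cong\blambda$ as $\g_m$-modules. That is $(ii)$, and with the reverse implication from the first paragraph it gives $(i)$, closing the cycle.

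The step I expect to be the real obstacle is the structural input about $\g_m$ used in the second paragraph --- that $c$ generates a one-dimensional complement to $\g'_m$ and that all transpositions agree modulo $\g'_m$ --- which is exactly the subtle part of Marin's analysis of the algebra of transpositions; I would cite it rather than reprove it. The only other thing to handle is the degenerate case where $\lambda$ or $\mu$ is trivial, so that $\blambda$ or $\bmu$ is one-dimensional: there one compares $(i)$, $(ii)$, $(iii)$ directly, using that $\g'_m$ is perfect so that every one-dimensional $S_m$-module restricts to the trivial $\g'_m$-module.
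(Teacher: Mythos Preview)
The paper does not actually prove this lemma; it merely cites it as \cite[Proposition~2]{MARIN2007742}. So there is no proof in the paper to compare against, and your argument must be assessed on its own terms.

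Your cycle $(i)\Rightarrow(ii)\Rightarrow(iii)$ and the reverse step $(ii)\Rightarrow(i)$ are clean and correct. The substantive step $(iii)\Rightarrow(ii)$ for nontrivial $\lambda,\mu$ is also sound: you correctly use $\g_m=\g'_m\oplus\cmplx c$ to see that $\blambda$ and $\bmu$ restrict irreducibly to $\g'_m$, deduce that the two $\g_m$-modules differ by a one-dimensional character, and then kill that character via the involution constraint $(\rho_\lambda(t)+s\,\Id)^2=\Id$, which forces $\rho_\lambda(t)$ to be scalar unless $s=0$. This is a nice argument.

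The gap is in your last paragraph. You propose to ``handle directly'' the case where $\lambda$ or $\mu$ is trivial, noting that every one-dimensional $S_m$-module restricts to the trivial $\g'_m$-module. But that observation is precisely the problem, not its resolution: for $m\geq 2$ the trivial representation $[m]$ and the sign representation $[1^m]$ are inequivalent as $S_m$-modules yet both restrict to the zero (hence equivalent) representation of $\g'_m$. So $(iii)$ holds while $(i)$ fails, and the lemma as stated in the thesis is simply false for this pair. Your reduction to the nontrivial case was the right instinct, but the degenerate case cannot be patched --- it is a genuine counterexample. Marin's original Proposition~2 presumably carries an extra hypothesis (such as $\dim(\blambda),\dim(\bmu)>1$ for the equivalence with $(iii)$), which the thesis has omitted in transcription. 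Your proof is correct once that hypothesis is restored; without it, no proof exists.
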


We also have need of various subalgebras of $\gl(\blambda)$.
In particular, $\sla(\blambda)$ is the special linear algebra of traceless matrices on $\blambda$, isomorphic to $\sla(\dim(\blambda),\cmplx)$.
Its compact real form is the special unitary algebra $\su(\blambda)$, isomorphic to
$\su(\dim(\blambda),\cmplx)$.
Additional subalgebras of $\sla(\blambda)$, namely the special orthogonal algebra
$\so(\blambda)$, the symplectic algebra $\spa(\blambda)$, and the orthosymplectic algebra
$\osp(\blambda)$, can be defined in terms of a bilinear form on $\blambda$, to be discussed
in Sections~2.1--2.2.
Importantly, if $V$ is a subspace of vector space $W$, our conventions are such that $\su(V)$ denotes
the subalgebra of $\su(W)$ consisting of operators on $W$ that act as $\su(V)$ on $V$ and as 0 on
the orthogonal complement (with respect to the Euclidean inner product) of $V$ in $W$.

\subsection{Jucys--Murphy elements}
Certain sums of transpositions in $\g_n$ prove to be particularly useful.
\begin{Definition}
The elements $X_1,\ldots,X_n\in\cmplx S_n$
given by
$ X_k\equiv\sum_{i=1}^{k-1}(i\ k) $ 
are called {\it Jucys--Murphy elements}.
\end{Definition}
\noindent 
According to \cite{zbMATH02213760}, these special elements were originally 
noticed by Young, then independently rediscovered by Jucys in 1966 \cite{Jucys} and again
by Murphy in 1981 \cite{MURPHY1981287}.
These elements all commute with each other, and therefore
have joint eigenvectors.
We have in particular \cite[Equation~(12)]{JUCYS1974107},\cite[Equation~(3.8)]{MURPHY1981287}:
\begin{lemma}[Jucys--Murphy]
\label{lem:JucysMurphy}
Every Young tableau $T$ of size $n\geq k$ is an eigenvector of
Jucys--Murphy element $X_k$, with eigenvalue $x_k=c_T(k)-r_T(k)$ equal to the content of $k$.
\end{lemma}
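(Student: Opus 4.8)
The plan is to prove the lemma by induction on $k$, reducing the action of the non-adjacent transpositions appearing in $X_k$ to that of the nearest-neighbor transposition $(k-1\ k)$, whose action on a tableau is given directly by Young's orthogonal form. First I would record the key recursion: writing $\tau=(k-1\ k)$, conjugation of cycles gives $\tau\,(i\ k-1)\,\tau=(i\ k)$ for $1\le i\le k-2$, so that
\begin{equation*}
X_k=\tau X_{k-1}\tau+\tau\qquad\text{in }\cmplx S_n .
\end{equation*}
Since $\rho_\lambda$ is an algebra homomorphism, this identity holds equally for the operators on $\blambda$. The base case $k=1$ is immediate: the sum defining $X_1$ is empty, so $X_1=0$, while $1$ occupies the corner box of any tableau $T$, so its content $c_T(1)-r_T(1)=0$.

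For the inductive step I would assume every Young tableau is an eigenvector of $X_{k-1}$ with eigenvalue its content of $k-1$, fix a tableau $T$ of size $n\ge k$, and set $a=c_T(k-1)-r_T(k-1)$ and $b=c_T(k)-r_T(k)$, so that $d:=d_T(k-1,k)=b-a$. I would then split into three cases according to the relative position of $k-1$ and $k$ in $T$. If they lie in the same row they are in adjacent columns, so $d=1$, Young's orthogonal form gives $\tau T=T$, and the recursion gives $X_kT=aT+T=(a+1)T=bT$. If they lie in the same column they are in adjacent rows, so $d=-1$, $\tau T=-T$, and the recursion gives $X_kT=aT-T=(a-1)T=bT$. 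Otherwise $T':=(k-1\ k)T$ is again a standard Young tableau --- the row- and column-increasing conditions survive the swap precisely because $k-1$ and $k$ share neither a row nor a column --- with content of $k-1$ equal to $b$, content of $k$ equal to $a$, and $d_{T'}(k-1,k)=-d$. Here Young's orthogonal form writes $\tau T$ and $\tau T'$ as explicit linear combinations of $T$ and $T'$; substituting these into $X_k=\tau X_{k-1}\tau+\tau$ together with $X_{k-1}T=aT$, $X_{k-1}T'=bT'$ and the identity $a-b=-d$, the $\sqrt{1-1/d^2}$ terms cancel and the coefficient of $T$ collapses to $b$, giving $X_kT=bT$.

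In every case $X_kT=(c_T(k)-r_T(k))T$, which closes the induction. I expect the only genuine work to be in the third case: verifying that $T'$ is a legitimate standard tableau (a short argument comparing $k-1$ and $k$ with their four neighbors and using that they are in neither the same row nor the same column), and checking that the off-diagonal $T'$-contributions to $X_kT$ exactly annihilate, which hinges on $a-b=-d$ and the precise coefficients in Young's orthogonal form. Everything else is routine bookkeeping with the definitions of content and axial distance, and with the elementary fact that two consecutive entries lying in a common row (resp. column) of a standard tableau must be in adjacent columns (resp. rows).
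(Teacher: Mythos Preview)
Your proof is correct. The paper itself does not give a proof of this lemma; it only states it with citations to Jucys and Murphy, so there is nothing to compare against directly. Your argument via the recursion $X_k=\tau X_{k-1}\tau+\tau$ with $\tau=(k-1\ k)$, followed by the three-case analysis using Young's orthogonal form, is essentially Murphy's original approach and is the standard way this result is established. The bookkeeping in the third case is exactly right: the cancellation of the $T'$-terms and the collapse of the $T$-coefficient to $b$ both hinge on $a-b=-d$, as you identified.
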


From this follows the utility of the Jucys--Murphy elements, as their eigenvalues characterize the Young tableaux
\cite[Lemma~2.2]{MURPHY1981287}:
\begin{lemma}[Jucys--Murphy]
\label{lem:JMunique}
The $k$-tuple $(x_1,\ldots,x_k)$ of Jucys--Murphy eigenvalues
(called a {\it content vector}) of Jucys--Murphy elements $X_1,\ldots,X_k$
acting on a Young tableau $T$ of size $k$, uniquely determines $T$.
\end{lemma}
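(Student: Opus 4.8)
The plan is to prove Lemma~\ref{lem:JMunique} by induction on $k$, reconstructing the tableau $T$ one box at a time from the content vector $(x_1,\ldots,x_k)$. The base case $k=1$ is trivial: a size-1 tableau has its unique box at position $(1,1)$ with content $0$, so $x_1=0$ forces $T$ to be the single-box tableau. For the inductive step, suppose that for tableaux of size $k-1$ the content vector $(x_1,\ldots,x_{k-1})$ determines the tableau. Given a size-$k$ tableau $T$ with content vector $(x_1,\ldots,x_k)$, first observe that deleting the box containing $k$ yields a valid Young tableau $T'$ of size $k-1$ (this is immediate: $k$ is the largest entry, so it must occupy a corner box of the diagram, and removing a corner box of a Young diagram leaves a Young diagram, with the entries $1,\ldots,k-1$ still increasing along rows and columns). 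By Lemma~\ref{lem:JucysMurphy}, the content vector of $T'$ is exactly $(x_1,\ldots,x_{k-1})$, so by the inductive hypothesis $T'$ is uniquely determined. It then remains to show that $x_k$ pins down which corner box of $T'$ the number $k$ was added to.

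The crux is therefore the following claim: \emph{distinct addable corner boxes of a Young diagram have distinct contents $c - r$.} This is a standard combinatorial fact, and I would prove it as follows. Suppose two addable boxes sit at positions $(r_1,c_1)$ and $(r_2,c_2)$ with $r_1 < r_2$ (they cannot share a row or a column, since a diagram has at most one addable box per row and per column). Because the shape is a Young diagram and both boxes are addable, row $r_1$ is strictly longer than row $r_2$ as one moves down; concretely $c_1 = \lambda_{r_1}+1 > \lambda_{r_2}+1 = c_2$ when $r_2 = r_1+1$, and in general the addable-box column indices are strictly decreasing as the row index increases. Hence $c_1 > c_2$ and $r_1 < r_2$, which gives $c_1 - r_1 > c_2 - r_2$, so the contents are distinct. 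Consequently the content $x_k = c_T(k) - r_T(k)$ of the box containing $k$ picks out a unique addable corner of $T'$, and adding $k$ there reconstructs $T$ uniquely. This completes the induction.

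I expect the main obstacle to be nothing deep but rather getting the combinatorial bookkeeping exactly right in the claim about addable corners: one must be careful that "addable box" means a box $(r,c)$ with $c = \lambda_r + 1$ and $\lambda_{r-1} \geq \lambda_r + 1$ (or $r=1$), and verify that as $r$ increases through the rows admitting an addable box, the column index $\lambda_r + 1$ is strictly decreasing. A clean way to phrase this is: the addable boxes of $\lambda$ correspond to the "outer corners" of the staircase profile of the diagram, and reading them from top to bottom their row indices strictly increase while their column indices strictly decrease, so the map (addable box) $\mapsto$ (content) is strictly decreasing, hence injective. Once that injectivity is in hand, the inductive reconstruction argument is routine. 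One small alternative worth noting is that the whole statement follows even more directly from Lemma~\ref{lem:JucysMurphy} combined with the observation that a Young tableau is equivalent to a standard Young tableau, and a standard Young tableau is nothing but a chain $\emptyset = \lambda^{(0)} \subset \lambda^{(1)} \subset \cdots \subset \lambda^{(k)} = \lambda$ of diagrams each obtained from the previous by adding one box; the content vector records the content of each added box, and the injectivity of (addable box) $\mapsto$ (content) shows each step of the chain is forced, so the whole chain — equivalently $T$ — is determined.
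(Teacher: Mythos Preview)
Your proof is correct and complete. The inductive argument via removal of the box containing $k$, together with the injectivity of the content map on addable corners, is the standard Okounkov--Vershik style argument and is fully rigorous as you have written it (your ``expected obstacle'' paragraph already supplies the one step that was slightly compressed in the main argument: for addable boxes at rows $r_1<r_2$, addability at $r_2$ forces $\lambda_{r_2-1}>\lambda_{r_2}$, whence $\lambda_{r_1}\geq\lambda_{r_2-1}>\lambda_{r_2}$ and so $c_1>c_2$).

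The paper takes a genuinely different route: rather than inducting box by box, it reconstructs $T$ diagonal by diagonal. Concretely, for each integer $j$ it collects the indices $i$ with $x_i=j$, and fills the $j$th diagonal (the cells with content $j$) with those indices in increasing order; the point is that entries along any fixed diagonal of a standard tableau must be increasing, so the placement is forced. Your approach has the advantage of being a clean induction that directly exhibits the equivalence between standard tableaux and saturated chains of shapes, and it isolates exactly the combinatorial fact needed (injectivity of content on addable corners). The paper's diagonal-filling approach is more global and gives an explicit one-shot reconstruction algorithm without induction, at the cost of leaving implicit why the number of indices with $x_i=j$ matches the length of the $j$th diagonal (which of course follows from Lemma~\ref{lem:JucysMurphy}). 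Either argument is fine; yours is arguably the more transparent of the two.
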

\noindent A proof can be sketched by recovering the tableau eigenvector from the eigenvalues $x_i$ as follows:
Given the subset $\{x_{i_0}\in\{x_i\}|x_{i_0}=0\}$ of the eigenvalues, by Lemma~\ref{lem:JucysMurphy} we can fill the cells of the main diagonal
(from upper left to lower right) of the tableau with the values $\{i_0\}$ 
in ascending order.  Labeling the remaining diagonals of the tableau
by $k$ such that the diagonal is offset from the first diagonal by $k$ cells (positive $k$ being above
and negative $k$ being below), 
we can similarly fill the cells of the $k$th diagonal with 
the eigenvalues $\{x_{i_k}\in\{x_i\}|x_{i_k}=k\}$.
Again, by definition of tableau, the values along
each diagonal are to be put in ascending order.  In this way the tableau is uniquely specified.

\begin{exmp}
The Jucys--Murphy element $X_k$ acting on a particular Young tableau results in multiplication
by an eigenvalue $x_k$:
$$
X_k\ytableaushort{1 3 {9} {10} {11}, 2 4 {12} {13}, 5 6 {14} {15},7 8 {16}}=x_k\ytableaushort{1 3 {9} {10} {11}, 2 4 {12} {13}, 5 6 {14} {15},7 8 {16}}
$$
Filling in the corresponding Young diagram with all such eigenvalues such that $x_k$
is in the cell containing $k$ in the original Young tableau results in cells along the
$j$-diagonal filled in with the value $j$:
\ytableausetup{boxsize=1.5em}
$$
\ytableaushort{{x_1} {x_3} {x_{9}} {x_{10}} {x_{11}}, {x_2} {x_4} {x_{12}} {x_{13}}, {x_5} {x_6} {x_{14}} {x_{15}},{x_7} {x_{8}} {x_{16}}}=\ytableaushort{0 1 2 3 4, {-1} 0 1 2, {-2} {-1} 0 1, {-3} {-2} {-1}}
$$
This process can be reversed:  Starting from the 16-tuple of eigenvalues:
$$
(0,-1,1,0,-2,-1,-3,-2,2,3,4,1,2,0,1,-1), 
$$
the number of eigenvalues equal to $j$
determines the length of the $j$-diagonal, which in turn uniquely determines the shape of
the Young diagram.  Then filling in the cells along each $j$-diagonal with the index
$k$ of each $x_k$ equal to $j$, in numerical order, uniquely determines the Young tableau.
For example, the Jucys--Murphy eigenvalues equal to 1 are  $x_3$, $x_{12}$, and $x_{15}$.
Therefore the 1-diagonal of the resulting tableau is $(3,12,15)$.
\end{exmp}
\setydiagramtext

As should be evident from the above discussion, when it is said
that the Jucys--Murphy elements uniquely determine a tableau, that includes the
shape of the tableau.
This important point is discussed in \cite{zbMATH02213760}, wherein the shape
of tableaux defines an equivalence class.
See in particular \cite[Proposition~5.3]{zbMATH02213760} and its proof.

Another useful feature follows immediately from the fact that 
Young's orthogonal form of each transposition is an orthogonal operator
by Lemma~\ref{lem:orthogonalform}.
Since each Jucys--Murphy element is a sum of such orthogonal operators,
it is also orthogonal.  
Thus by property of orthogonal operators, we have
\begin{lemma}
\label{lem:JMorthogonal}
Eigenvectors of a Jucys--Murphy element having different eigenvalues are orthogonal.
\end{lemma}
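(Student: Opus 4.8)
The plan is to reduce the statement to a single structural fact—that, with respect to the inner product $\langle S,T\rangle=\delta_{ST}$ of Lemma~\ref{lem:orthogonalform}, the operator $\rho_\lambda(X_k)$ is self-adjoint—and then invoke the elementary observation that eigenvectors of a self-adjoint operator associated with distinct eigenvalues are orthogonal. Nothing beyond Lemma~\ref{lem:orthogonalform} and Lemma~\ref{lem:JucysMurphy} is required.

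First I would record that each transposition $(i\ k)$ acts on $\blambda$ by a real orthogonal matrix in the tableau basis, by Lemma~\ref{lem:orthogonalform}; since $(i\ k)$ is an involution we have $\rho_\lambda((i\ k))^{-1}=\rho_\lambda((i\ k))$, and combined with orthogonality this gives $\rho_\lambda((i\ k))^{\tp}=\rho_\lambda((i\ k))$, i.e. the matrix is real symmetric, hence self-adjoint for $\langle\cdot,\cdot\rangle$ (this is the Hermiticity remark following Lemma~\ref{lem:orthogonalform}). Next, since $X_k=\sum_{i=1}^{k-1}(i\ k)$, the operator $\rho_\lambda(X_k)=\sum_{i=1}^{k-1}\rho_\lambda((i\ k))$ is a finite sum of self-adjoint operators and is therefore itself self-adjoint.

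Finally I would run the standard argument: suppose $X_k.v=\alpha v$ and $X_k.w=\beta w$ with $\alpha\neq\beta$. By Lemma~\ref{lem:JucysMurphy} the eigenvalues $\alpha,\beta$ are contents, hence real integers, so self-adjointness of $\rho_\lambda(X_k)$ yields
\[
\alpha\langle v,w\rangle=\langle X_k.v,\,w\rangle=\langle v,\,X_k.w\rangle=\beta\langle v,w\rangle,
\]
whence $(\alpha-\beta)\langle v,w\rangle=0$ and therefore $\langle v,w\rangle=0$ (the computation is independent of which slot of the sesquilinear form is taken to be linear, since $\alpha,\beta$ are real).

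There is essentially no serious obstacle here; the only points needing care are that self-adjointness passes to the finite sum defining $X_k$—immediate—and that the relevant inner product is precisely the $S_{|\lambda|}$-invariant one of Lemma~\ref{lem:orthogonalform} for which transpositions are Hermitian. It is worth noting for clarity that $\rho_\lambda(X_k)$ is generally \emph{not} an orthogonal operator (its eigenvalues are contents and can exceed $1$ in absolute value); what the proof actually uses is that it is real symmetric, equivalently orthogonally diagonalizable.
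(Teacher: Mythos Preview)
Your proof is correct and follows the same approach as the paper, deducing the relevant property of $X_k$ from Lemma~\ref{lem:orthogonalform}. In fact you are more careful than the paper's own wording, which calls the Jucys--Murphy element ``orthogonal''; as you rightly observe, $\rho_\lambda(X_k)$ is generally not an orthogonal operator, and the property actually at work is that it is real symmetric (self-adjoint), from which orthogonality of distinct eigenspaces follows.
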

\noindent This has nontrivial applications to cases in which the
eigenvectors are linear combinations of
tableaux. 

In the present work we often make use of the Cartesian product of symmetric groups,
which motivates a slight variation of the definition considered above.
For the following, recall the identification
$$
\prod_{i=1}^N S_{m_i}=\prod_{i=1}^N S_{\{n_{i-1}+1,\ldots,n_i\}}
$$
where $n_i=\sum_{j=1}^im_j$.
\begin{Definition}
The element $X^{(i)}_k\in\cmplx S_{m_1}\times\cdots\times S_{m_N}$
given by $X^{(i)}_k=\sum_{j=1}^{k-1}(n_{i-1}+j\ n_{i-1}+k)$,
where $1\leq i\leq N$ and $1\leq k\leq m_i$,
is called a {\it local Jucys--Murphy element}.
\end{Definition}
\noindent As should be clear from this definition, given an irreducible representation
$\bigotimes_{i=1}^N\bmu^{(i)}$ of $\prod_{i=1}^N S_{m_i}$,
the local Jucys--Murphy element $X^{(i)}_k$ acts as the identity on every factor $\bmu^{(j)}$ for $j\neq i$,
and acts as an ordinary Jucys--Murphy element on $\bmu^{(i)}$.

\section{Marin's analysis of the algebra of transpositions}

In this section we summarize some of Marin's work on the algebra of transpositions. 
Primarily, we draw upon the first six sections of \cite{MARIN2007742}, which
are concerned with decomposing the algebra of transpositions into a direct sum of simple algebras.
Marin's strategy is to find a minimal set of irreducible representations whose direct sum is 
a faithful representation, while showing that the image of each such irreducible representation is a simple algebra.
Along the way Marin derives a number of results pertaining to irreducible representations that we can use.

To begin with, 
the algebra of transpositions is found to be isomorphic to 
the direct sum of an abelian Lie algebra and a semisimple Lie algebra,
which is to say it is reductive. 
More specifically by \cite[Proposition~1]{MARIN2007742}, we have
\begin{lemma}[Marin]
\label{lem:reductive}
The Lie algebra
$\g_{|\lambda|}$ is reductive, with its one-dimensional center generated by
the sum of all its transpositions. 
It follows that $\g_{|\lambda|}\cong\cmplx\oplus\g'_{|\lambda|}$,
and $\rho_\lambda(\g_{|\lambda|})=\rho_\lambda(\g'_{|\lambda|})$ when the
sum of transpositions acts as 0 on $\blambda$ 
and $\rho_\lambda(\g_{|\lambda|})=\cmplx\oplus\rho_\lambda(\g'_{|\lambda|})$ otherwise.
\end{lemma}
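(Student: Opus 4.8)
The plan is to derive reductivity of $\g_{|\lambda|}$ from the existence of a faithful, completely reducible representation, then to pin down the one-dimensional centre by an elementary manipulation of transpositions, and finally to read off the behaviour of $\rho_\lambda$ from the Jucys--Murphy description of the transposition sum. Write $n:=|\lambda|$ (the cases $n\le 1$ being vacuous) and $T:=\sum_{1\le i<j\le n}(i\ j)$. Since $T$ is the class sum of all transpositions it lies in the centre of $\cmplx S_n$, hence in the centre of $\g_n$, and $T\neq 0$ for $n\ge 2$.

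For reductivity I would use the $\g_n$-module $V:=\bigoplus_{\mu\vdash n}\bmu$. It is faithful, because $\bigoplus_\mu\rho_\mu$ is already injective on the semisimple algebra $\cmplx S_n\supseteq\g_n$ by Artin--Wedderburn. It is completely reducible, because each summand $\bmu$ is irreducible already over $\g_n$: the associative algebra generated by the operators $\rho_\mu((i\ j))$ contains $\rho_\mu(s)$ for every $s\in S_n$, since each permutation is a product of transpositions, so it equals $\rho_\mu(\cmplx S_n)=\End(\bmu)$ by Burnside's theorem, and hence any nonzero $\g_n$-submodule of $\bmu$ is all of $\bmu$. A finite-dimensional complex Lie algebra admitting a faithful completely reducible representation is reductive (its largest ideal of nilpotent operators annihilates each irreducible constituent, hence annihilates $V$, hence vanishes); so $\g_n=Z(\g_n)\oplus\g'_n$ with $\g'_n=[\g_n,\g_n]$ semisimple.

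To identify the centre, note that for reductive $\g_n$ one has $\dim Z(\g_n)=\dim(\g_n/\g'_n)$; so, since $T\in Z(\g_n)$ is nonzero for $n\ge 2$, it is enough to show $\dim(\g_n/\g'_n)\le 1$. Because $\g_n$ is generated by transpositions, $\g_n$ is spanned by the transpositions together with $\g'_n$, so it suffices that all transpositions share one image in $\g_n/\g'_n$. This follows from the bracket identity, valid for distinct $a,m,b$,
\[
\big[(a\ m),\,[(a\ m),(m\ b)]\big]=2\big((m\ b)-(a\ b)\big),
\]
whose left-hand side lies in $[\g_n,\g_n]=\g'_n$: any two transpositions with a common point are thus congruent modulo $\g'_n$, and chaining through common points covers all pairs (for $n=2$ there is a single transposition and the claim is trivial). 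Hence $\dim Z(\g_n)=1$, $Z(\g_n)=\cmplx T$, and $\g_{|\lambda|}=\cmplx T\oplus\g'_{|\lambda|}\cong\cmplx\oplus\g'_{|\lambda|}$.

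For the statement about $\rho_\lambda$, observe that $T=\sum_{k=1}^n X_k$ directly from the definition of the Jucys--Murphy elements, so by Lemma~\ref{lem:JucysMurphy} every tableau of shape $\lambda$ is a $T$-eigenvector with eigenvalue $\tau_\lambda:=\sum_{k=1}^n\big(c_T(k)-r_T(k)\big)$, a number depending only on $\lambda$; thus $\rho_\lambda(T)=\tau_\lambda\,\Id_{\blambda}$. If $\tau_\lambda=0$ then $\rho_\lambda(T)=0$ and $\rho_\lambda(\g_{|\lambda|})=\rho_\lambda(\cmplx T)+\rho_\lambda(\g'_{|\lambda|})=\rho_\lambda(\g'_{|\lambda|})$. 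If $\tau_\lambda\neq 0$ then $\rho_\lambda(\g'_{|\lambda|})=[\rho_\lambda(\g_{|\lambda|}),\rho_\lambda(\g_{|\lambda|})]$ consists of traceless operators and so meets the line $\cmplx\rho_\lambda(T)=\cmplx\,\Id_{\blambda}$ trivially, giving $\rho_\lambda(\g_{|\lambda|})=\cmplx\oplus\rho_\lambda(\g'_{|\lambda|})$. I expect the only genuine obstacle to be a self-contained proof that a faithful completely reducible representation forces reductivity, plus care with the signs in the bracket identity; the remaining steps only reassemble facts already in the excerpt, and one could instead simply cite Marin's proof of Proposition~1.
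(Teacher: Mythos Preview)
Your proof is correct. The bracket identity $[(a\,m),[(a\,m),(m\,b)]]=2((m\,b)-(a\,b))$ checks out, the argument that $\g_n$ is spanned by the transpositions together with $\g'_n$ (since every iterated bracket of depth at least one already lies in $\g'_n$) is sound, and the Jucys--Murphy computation of $\rho_\lambda(T)$ is exactly right.

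As for comparison: the paper does not give its own proof of this lemma at all --- it simply attributes the result to Marin and cites \cite[Proposition~1]{MARIN2007742}. So your write-up is strictly more than what the paper contains. Your approach is essentially the standard one (and, as far as one can tell from Marin's published argument, close in spirit to his): reductivity via a faithful completely reducible representation, the centre identified by showing all transpositions coincide modulo the derived algebra. The one place you go beyond what the paper needs is the explicit use of $T=\sum_k X_k$ and Lemma~\ref{lem:JucysMurphy} to compute $\rho_\lambda(T)$; this is a nice touch that ties the result to machinery already introduced in the thesis, though one could equally appeal to Schur's lemma and the centrality of $T$ to conclude $\rho_\lambda(T)$ is scalar without computing the eigenvalue.
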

\noindent As we are ultimately interested in constructing unitary operators from the 
transpositions, and the 
center of $\g_{|\lambda|}$ amounts to nothing 
more than a ``global phase factor" in the parlance of quantum mechanics, we, like Marin,
focus on the structure of the derived algebra $\g'_{|\lambda|}$.

\subsection{Dependence between representations}

Returning to our quantum computing motivation for a moment, for a given operator we would 
like to find an element in $\g_{|\lambda|}$ that acts as that operator on a particular 
subspace of every irreducible representation associated with a partition in a set 
$\Lambda$.  A 
sufficient condition 
for this to be possible is as follows:
for each $\lambda\in\Lambda$ and each $X_\lambda\in\rho_\lambda(\g_{|\lambda|})$, there exists an element $x\in\g_{|\lambda|}$ such that
$(\bigoplus_{\lambda\in\Lambda}\rho_\lambda)(x)=\bigoplus_{\lambda\in\Lambda}X_\lambda$.
Equivalently, Kempe et al. describe representations as ``independent" if, for a given
operator on a given irreducible representation $\blambda$, there exists an element in $\g_{|\lambda|}$
that acts as that operator on $\blambda$ while acting as zero on all other irreducible
representations \cite{PhysRevA.63.042307}.

A consequence of Marin's work, summarized below, is that,
given representations $\rho_\lambda$ and $\rho_\mu$ of $\g'_n$,
we have $\ker\rho_\lambda = \ker\rho_\mu$
if $\lambda=\mu'$ or if $\lambda$ and $\mu$ are both hooks.
Therefore, such representations cannot be independent of each other in the
sense described above.
To evaluate whether or not the resulting actions are too constrained for our purposes,
we consider these cases in some detail.

We have already seen that $S_\ell$-modules associated with hooks are related to each
other, as exterior powers of $V_{[\ell-1,1]}$.
Marin has shown that the same is true of $\g'_\ell$-modules \cite[Lemma~12]{MARIN2007742}:
\begin{lemma}[Marin]
As a representation of $\g'_\ell$, we have $V_{[\ell-r,1^r]}\cong\wedge^rV_{[\ell-1,1]}$.
\end{lemma}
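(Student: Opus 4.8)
The plan is to leverage the already-established $S_\ell$-module isomorphism $V_{[\ell-r,1^r]}\cong\bigwedge^rV_{[\ell-1,1]}$ and promote it to an isomorphism of $\g'_\ell$-modules. The key observation is that the exterior power $\bigwedge^rV_{[\ell-1,1]}$ carries a canonical action not just of $S_\ell$ but of the whole group algebra $\cmplx S_\ell$, and hence of the Lie algebra $(\cmplx S_\ell,[\cdot,\cdot])$ and its subalgebra $\g_\ell$: namely, each transposition $\tau=(i\ j)$ acts on $V_{[\ell-1,1]}^{\otimes r}$ diagonally, $\tau.(v_1\otimes\cdots\otimes v_r)=\rho_{[\ell-1,1]}(\tau)v_1\otimes\cdots\otimes\rho_{[\ell-1,1]}(\tau)v_r$, and this action descends to the antisymmetric subspace. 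First I would observe that the $S_\ell$-module isomorphism $\phi\colon V_{[\ell-r,1^r]}\to\bigwedge^rV_{[\ell-1,1]}$ is automatically $\cmplx S_\ell$-linear by linear extension, since both sides are $\cmplx S_\ell$-modules and $\phi$ intertwines the group action on a spanning set.

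The subtlety — and this is the heart of the matter — is that the diagonal action of a transposition $\tau$ on the tensor power is \emph{not} the same as the linear extension of $\rho_{[\ell-1,1]}$ to $\cmplx S_\ell$ applied coordinatewise; rather, the Lie algebra $\g_\ell$ sits inside $\cmplx S_\ell$ but acts on tensor products via the \emph{coproduct}, i.e. $\tau$ acts as $\rho(\tau)\otimes 1 + 1\otimes\rho(\tau)$ would if $\tau$ were primitive, but $\tau$ is group-like, not primitive. So I must be careful about exactly which $\g'_\ell$-action on $\bigwedge^rV_{[\ell-1,1]}$ is intended. The correct statement is that $V_{[\ell-r,1^r]}$, as an \emph{irreducible} $\cmplx S_\ell$-module, restricts to an irreducible $\g'_\ell$-module (by the lemma equating equivalence of $S_m$-, $\g_m$-, and $\g'_m$-representations), and the $S_\ell$-isomorphism $\phi$ restricts to a $\g'_\ell$-isomorphism onto the image, which is the subrepresentation of the diagonal $\cmplx S_\ell$-action on $V_{[\ell-1,1]}^{\otimes r}$ supported on $\bigwedge^r V_{[\ell-1,1]}$. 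Concretely: the exterior power inherits a $\cmplx S_\ell$-module structure from the diagonal action (this is the inner tensor product, which for $\cmplx S_\ell$-modules is well-defined because $\cmplx S_\ell$ is a Hopf algebra with group-like generators), $\g'_\ell\subset\cmplx S_\ell$ acts by restriction, and $\phi$ is an isomorphism of these structures.

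The main steps, then, are: (i) equip $\bigwedge^rV_{[\ell-1,1]}$ with the diagonal $\cmplx S_\ell$-action and note $\g'_\ell$ acts by restriction of this action to the Lie subalgebra; (ii) invoke the $S_\ell$-module isomorphism $V_{[\ell-r,1^r]}\cong\bigwedge^rV_{[\ell-1,1]}$ from Section~2.1.2; (iii) observe that this isomorphism, being $\cmplx S_\ell$-linear after linear extension, automatically intertwines the restricted $\g'_\ell$-actions, since $\g'_\ell$ consists of $\cmplx$-linear combinations of products of transpositions; (iv) conclude equality of $\g'_\ell$-modules up to isomorphism. The only real obstacle is bookkeeping the Hopf-algebraic subtlety in step (i)–(iii) — making sure the reader understands that the $\g'_\ell$-action on the exterior power is \emph{not} coordinatewise application of $\rho_{[\ell-1,1]}$ but the genuine diagonal group action restricted to the Lie subalgebra — after which the argument is essentially immediate from the already-cited $S_\ell$ result together with Marin's equivalence lemma identifying $S_\ell$-module isomorphism with $\g'_\ell$-module isomorphism. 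No separate verification of irreducibility or dimension is needed, since the $S_\ell$-statement already carries all of that.
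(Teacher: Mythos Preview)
You correctly flagged the central subtlety—that there are two candidate $\g'_\ell$-actions on $\bigwedge^r V_{[\ell-1,1]}$—but then resolved it the wrong way. The paper makes the intended action explicit in the sentence immediately following the lemma: an element $X\in\rho_{[\ell-1,1]}(\g'_\ell)$ acts by the Leibniz rule,
\[
\Delta_r(X)(v_1\wedge\cdots\wedge v_r)=Xv_1\wedge v_2\wedge\cdots\wedge v_r+\cdots+v_1\wedge\cdots\wedge v_{r-1}\wedge Xv_r,
\]
i.e.\ the standard Lie-algebra exterior power (primitive coproduct), \emph{not} the restriction to $\g'_\ell\subset\cmplx S_\ell$ of the diagonal group action $s\mapsto\rho(s)^{\otimes r}$. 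Your argument does establish the isomorphism for the diagonal action, and that is indeed immediate from the $S_\ell$ statement; but it is not what Marin's lemma asserts, and it is not what the thesis uses downstream. The whole purpose of this lemma is to supply the Lie algebra homomorphism $\Delta_r:\sla(V_{[\ell-1,1]})\to\sla(V_{[\ell-r,1^r]})$ appearing in Theorem~\ref{thm:marin}, and only the Leibniz action factors through $\rho_{[\ell-1,1]}(\g'_\ell)=\sla(V_{[\ell-1,1]})$ as a Lie algebra map. The diagonal action $x\mapsto$ (linear extension of $s\mapsto\rho(s)^{\otimes r}$) does not: for $x=a s_1+b s_2$ one gets $a\rho(s_1)^{\otimes r}+b\rho(s_2)^{\otimes r}$, which is not a function of $\rho_{[\ell-1,1]}(x)$ alone.

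Consequently your step (iii) fails for the action actually at issue. The map $\Delta_r\circ\rho_{[\ell-1,1]}$ is a Lie algebra representation of $\g'_\ell$ that does \emph{not} arise by restricting any $\cmplx S_\ell$-module structure (the Leibniz rule is a derivation, not an algebra homomorphism), so the $S_\ell$-isomorphism cannot be invoked to transport it, and the equivalence lemma you cite (equating $S_\ell$-, $\g_\ell$-, $\g'_\ell$-isomorphism) applies only when both sides carry the same underlying $\cmplx S_\ell$-structure. A correct argument must identify $(\bigwedge^r V_{[\ell-1,1]},\Delta_r\circ\rho_{[\ell-1,1]})$ with $(V_{[\ell-r,1^r]},\rho_{[\ell-r,1^r]})$ directly as $\g'_\ell$-modules—for instance by using that $\bigwedge^r$ of the standard representation is an irreducible $\sla(\ell-1)$-module and then matching it against $\rho_{[\ell-r,1^r]}$ via explicit intertwiner or by highest-weight data, which is Marin's route.
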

\noindent As Marin has observed of the exterior power of the $\g'_\ell$-module $V_{[\ell-1,1]}$ \cite[Section~5.3]{MARIN2007742}, 
accompanying the exterior power of the vector
space $V_{[\ell-1,1]}$ is a linear map $\Delta_{r}$ from $\rho_{[\ell-1,1]}(\g'_\ell)$ to
$\rho_{[\ell-r,1^r]}(\g'_\ell)$ defined such that for all $X\in\rho_{[\ell-1,1]}(\g'_\ell)$ and
$v_1,\ldots,v_r\in V_{[\ell-1,1]}$, 
$$
\Delta_r(X)v_1\wedge\cdots\wedge v_r=Xv_1\wedge\cdots\wedge v_r+v_1\wedge Xv_2\wedge\cdots\wedge v_r+v_1\wedge v_2\wedge\cdots\wedge Xv_r
$$
The map $\Delta_r$ defines a Lie algebra homomorphism from $\rho_{[\ell-1,1]}(\g'_\ell)$ to $\rho_{[\ell-r,1^r]}(\g'_\ell)$.  Since for all $x\in\g'_\ell$, 
if $\rho_{[\ell-1,1]}(x)$ is nonzero then it is nonscalar, by Lemma~\ref{lem:reductive}, $\Delta_r$ is also
bijective for $0<r<\ell-1$.  Thus we have \cite[Section~6.6]{MARIN2007742}:
\begin{lemma}[Marin]
\label{lem:marinisohooks}
Given hooks $\lambda,\mu\vdash\ell$, we have
$\rho_\lambda(\g'_\ell)\cong\rho_\mu(\g'_\ell)$.
\end{lemma}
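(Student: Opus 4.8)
The claim to prove is Lemma~\ref{lem:marinisohooks}: for any two hooks $\lambda,\mu\vdash\ell$, the images $\rho_\lambda(\g'_\ell)$ and $\rho_\mu(\g'_\ell)$ are isomorphic as Lie algebras. The strategy I would follow is to reduce the general case to the distinguished hook $[\ell-1,1]$, using the maps $\Delta_r$ already introduced. Concretely, write $\lambda=[\ell-r,1^r]$ and $\mu=[\ell-s,1^s]$ for some $0<r,s<\ell-1$. It suffices to show $\rho_{[\ell-r,1^r]}(\g'_\ell)\cong\rho_{[\ell-1,1]}(\g'_\ell)$ for each such $r$, since then
\begin{equation*}
\rho_\lambda(\g'_\ell)\cong\rho_{[\ell-1,1]}(\g'_\ell)\cong\rho_\mu(\g'_\ell),
\end{equation*}
and composition of Lie algebra isomorphisms is a Lie algebra isomorphism.

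\textbf{Main step.} Fix $r$ with $0<r<\ell-1$ and consider the map $\Delta_r:\rho_{[\ell-1,1]}(\g'_\ell)\to\rho_{[\ell-r,1^r]}(\g'_\ell)$ defined in the excerpt via the derivation-type action on $\wedge^r V_{[\ell-1,1]}$. By the identification $V_{[\ell-r,1^r]}\cong\wedge^r V_{[\ell-1,1]}$ as $\g'_\ell$-modules (Marin's lemma quoted just above), $\Delta_r$ is exactly the map $\rho_{[\ell-1,1]}(x)\mapsto\rho_{[\ell-r,1^r]}(x)$ induced on the images; the text asserts this is a Lie algebra homomorphism, so I would simply invoke that. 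What remains is \emph{bijectivity} of $\Delta_r$ for $0<r<\ell-1$. Surjectivity is immediate: every element of $\rho_{[\ell-r,1^r]}(\g'_\ell)$ is $\rho_{[\ell-r,1^r]}(x)=\Delta_r(\rho_{[\ell-1,1]}(x))$ for some $x\in\g'_\ell$, by construction. Injectivity is the real content: I must show $\ker\rho_{[\ell-1,1]}\supseteq\ker(\Delta_r\circ\rho_{[\ell-1,1]})=\ker\rho_{[\ell-r,1^r]}$, i.e.\ that if $x\in\g'_\ell$ acts as zero on $\wedge^r V_{[\ell-1,1]}$ then it acts as zero on $V_{[\ell-1,1]}$. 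This is where Lemma~\ref{lem:reductive} enters: since $\g'_\ell$ is semisimple and $V_{[\ell-1,1]}$ is irreducible, $\rho_{[\ell-1,1]}(x)$ is either $0$ or nonscalar (a scalar nonzero operator would lie in the center, but the derived algebra has trivial center). If $\rho_{[\ell-1,1]}(x)=:X\neq 0$, pick an eigenvalue $c$ of $X$ and a decomposition $V_{[\ell-1,1]}=\ker(X-c)\oplus W$ with $W\neq 0$ (nonscalarity of $X$ forces $\dim\ker(X-c)<\dim V_{[\ell-1,1]}$, so $W\neq 0$; also $\dim\ker(X-c)\ge 1$). Choose $v_1$ an eigenvector for $c$ and $v_2,\dots,v_r\in W$ — here I need $\dim V_{[\ell-1,1]}=\ell-1\geq r+1$, which is exactly $r<\ell-1$ — so that the wedge $v_1\wedge\cdots\wedge v_r$ is nonzero, and then compute $\Delta_r(X)(v_1\wedge\cdots\wedge v_r)$; the eigenvalue-$c$ summand contributes $c\,v_1\wedge\cdots\wedge v_r$ while the remaining $r-1$ summands live in $\wedge^r$ with one slot no longer in the $c$-eigenspace — a clean linear-algebra argument (e.g.\ comparing the component along a suitable basis wedge, or choosing $v_2,\dots,v_r$ eigenvectors with eigenvalues different from $c$) shows this image is nonzero. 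Hence $\ker(\Delta_r\circ\rho_{[\ell-1,1]})\subseteq\ker\rho_{[\ell-1,1]}$, and the reverse inclusion is trivial, giving injectivity.

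\textbf{Where the difficulty lies.} The routine part is the reduction and the invocation of $\Delta_r$ being a homomorphism (both essentially given). The one genuine technical point is verifying $\Delta_r(X)\neq 0$ when $X\neq 0$, i.e.\ that a nonzero nonscalar operator on an $(\ell-1)$-dimensional space induces a nonzero derivation on the $r$-th exterior power whenever $1\le r\le\ell-2$. The cleanest route is the eigenvector argument sketched above, taking care that the hypothesis $r<\ell-1$ guarantees enough room (at least $r$ linearly independent vectors, with at least one outside a chosen eigenspace) to form a nonvanishing wedge witnessing $\Delta_r(X)\ne 0$; nonscalarity of $X$ — supplied by Lemma~\ref{lem:reductive} since $\g'_\ell$ is the derived algebra — is exactly what rules out the degenerate case where $X$ acts as $c\cdot\mathrm{Id}$ and $\Delta_r(X)$ could conceivably be examined separately (in fact even then $\Delta_r(c\cdot\mathrm{Id})=rc\cdot\mathrm{Id}\neq 0$ for $r,c\neq 0$, but invoking nonscalarity lets us avoid case analysis and handle everything uniformly). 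Assembling these pieces yields $\rho_{[\ell-r,1^r]}(\g'_\ell)\cong\rho_{[\ell-1,1]}(\g'_\ell)$ for all admissible $r$, and composing gives the lemma.
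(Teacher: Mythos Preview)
Your approach is correct and essentially identical to the paper's: the paper establishes the lemma in the paragraph immediately preceding it by noting that $\Delta_r$ is a Lie algebra homomorphism and then invoking Lemma~\ref{lem:reductive} (nonzero $\rho_{[\ell-1,1]}(x)$ is nonscalar) to conclude bijectivity for $0<r<\ell-1$, without further detail. Your proposal fills in the injectivity argument that the paper leaves implicit; one small caveat is that your specific choice ``$v_1$ a $c$-eigenvector, $v_2,\dots,v_r\in W$'' needs $\dim W\ge r-1$, which is not guaranteed, but the cleaner route you allude to (triangularize $X$ and observe that nonscalarity forces some $r$-subset sum of eigenvalues to be nonzero, since two subsets differing in one index have sums differing by $\lambda_i-\lambda_j\neq 0$) works without this issue.
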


The other important relationship in this context is the one between any two representations 
corresponding to a conjugate pair of partitions.  As discovered in 1900 by
Frobenius \cite[p.148--166]{frobenius1968gesammelte}, there is an  $S_{|\lambda|}$-module isomorphism between 
$\blambda'$ and the inner tensor product $\blambda\otimes\bepsilon$ \cite[Exercise~4.4]{fulton1991representation}. 
Originally formulated for any irreducible $S_{|\lambda|}$-module $\blambda$, 
such an isomorphism extends linearly to irreducible $\g_{|\lambda|}$-modules.
However, instead of considering a map between $\blambda'$ and $\blambda\otimes\bepsilon$,
it proves more convenient to consider the map induced directly between
$\blambda$ and $\blambda'$.

\subsection{The alternating intertwiner}

As discussed by Marin, the existence of the $S_{|\lambda|}$-intertwiner between 
$\blambda'$ and $\blambda\otimes\bepsilon$ implies a map, albeit not an 
$S_{|\lambda|}$-intertwiner,
between $\blambda$ and $\blambda'$:
\begin{Definition}
Given irreducible $\cmplx S_\ell$-module $\blambda$ and its conjugate $\blambda'$,
an {\it alternating intertwiner} $M_\lambda:\blambda\rightarrow\blambda'$ is a 
vector space isomorphism that further satisfies the condition
$$M_\lambda (s.v)=\epsilon(s)s.(M_\lambda v)$$ 
for all $s\in S_\ell$ and $v\in\blambda$,
where $\epsilon(s)$ is the sign of permutation $s$.
\end{Definition}
\noindent We call such a map an ``alternating intertwiner" because it is an intertwiner with 
respect to elements of the alternating group.
In Marin's notation this map is signified by $M$ when considering the case $\lambda=\lambda'$ \cite[Section~4.2]{MARIN2007742}, 
and by $P_{\lambda'}$ when considering the case $\lambda\neq\lambda'$ \cite[Section~5.1]{MARIN2007742}.
However Marin's definitions for $M$ and $P_{\lambda'}$ are consistent with each other, such
that each may be considered a special case of an alternating intertwiner as defined above.

Marin further points out a valuable relationship between an alternating intertwiner $M_\lambda$ and the action
of $\g_{|\lambda|}$ \cite[Sections~5.1--5.2]{MARIN2007742}.  In the following, ``$\tp$" denotes
the matrix transpose in the tableau basis:
\begin{lemma}[Marin]
\label{lem:isoMorphism}
For all $x\in\g_{|\lambda|}$, we have $M_\lambda\rho_\lambda(x)M_\lambda^{-1}=-\rho_{\lambda'}(x)^\tp$.  As a consequence, we have
$\rho_\lambda(\g'_{|\lambda|})\cong\rho_{\lambda'}(\g'_{|\lambda|})$.
\end{lemma}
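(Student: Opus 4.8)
The plan is to first establish the conjugation identity $M_\lambda\rho_\lambda(x)M_\lambda^{-1}=-\rho_{\lambda'}(x)^\tp$ on generators, then bootstrap to all of $\g_{|\lambda|}$ by linearity, and finally extract the claimed isomorphism of derived algebras as a formal consequence. The key object is the alternating intertwiner $M_\lambda$, whose defining property $M_\lambda(s.v)=\epsilon(s)\,s.(M_\lambda v)$ for $s\in S_\ell$ is the only input beyond Lemma~\ref{lem:orthogonalform} and the reductivity statement in Lemma~\ref{lem:reductive}. I would start with a transposition $t=(i\ i+1)$, so $\epsilon(t)=-1$ and the intertwiner relation reads $M_\lambda\rho_\lambda(t)=-\rho_{\lambda'}(t)M_\lambda$, i.e. $M_\lambda\rho_\lambda(t)M_\lambda^{-1}=-\rho_{\lambda'}(t)$. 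To finish the generator case I must replace $\rho_{\lambda'}(t)$ with $\rho_{\lambda'}(t)^\tp$; but by Lemma~\ref{lem:orthogonalform} each transposition acts in the tableau basis as a real symmetric (orthogonal and Hermitian) matrix, hence $\rho_{\lambda'}(t)^\tp=\rho_{\lambda'}(t)$, and the identity holds for every nearest-neighbor transposition.

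Next I would extend from the $(i\ i+1)$ to all of $\g_{|\lambda|}$. Since $\g_{|\lambda|}$ is the Lie subalgebra of $\cmplx S_{|\lambda|}$ generated by the transpositions, and general transpositions are products (conjugates) of nearest-neighbor ones, every element of $\g_{|\lambda|}$ is a $\cmplx$-linear combination of iterated commutators of nearest-neighbor transpositions. The map $X\mapsto M_\lambda\,\rho_\lambda(X)\,M_\lambda^{-1}$ is a Lie algebra homomorphism $\rho_\lambda(\g_{|\lambda|})\to\gl(\blambda')$ (conjugation preserves brackets), and the map $Y\mapsto -Y^\tp$ on $\gl(\blambda')$ is likewise a Lie algebra homomorphism (the transpose is an anti-homomorphism for associative multiplication, so negation of the transpose respects the commutator). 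Both composites $x\mapsto M_\lambda\rho_\lambda(x)M_\lambda^{-1}$ and $x\mapsto -\rho_{\lambda'}(x)^\tp$ are therefore Lie algebra homomorphisms $\g_{|\lambda|}\to\gl(\blambda')$ that agree on the generating set of nearest-neighbor transpositions; hence they agree on the whole Lie algebra, which is the first assertion.

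For the consequence, note that conjugation by the fixed invertible matrix $M_\lambda$ is a Lie algebra isomorphism from $\rho_\lambda(\g_{|\lambda|})$ onto its image, and by the identity just proved this image is $\{-\rho_{\lambda'}(x)^\tp : x\in\g_{|\lambda|}\}=\{-Y^\tp : Y\in\rho_{\lambda'}(\g_{|\lambda|})\}$, which is the image of $\rho_{\lambda'}(\g_{|\lambda|})$ under the Lie algebra isomorphism $Y\mapsto -Y^\tp$. Composing, $\rho_\lambda(\g_{|\lambda|})\cong\rho_{\lambda'}(\g_{|\lambda|})$ as Lie algebras, and passing to derived algebras (which isomorphisms preserve) gives $\rho_\lambda(\g'_{|\lambda|})\cong\rho_{\lambda'}(\g'_{|\lambda|})$; one may also simply note the identity restricts verbatim to $\g'_{|\lambda|}$ since that subalgebra is spanned by commutators of elements of $\g_{|\lambda|}$. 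The only genuine subtlety — the potential obstacle — is the bookkeeping that $\rho_{\lambda'}$ uses the \emph{same} tableau basis convention relative to which Young's orthogonal form is real symmetric, so that the transpose appearing in the statement is consistent with the transpose used when invoking Lemma~\ref{lem:orthogonalform}; once that basis is fixed on both $\blambda$ and $\blambda'$, everything above is routine.
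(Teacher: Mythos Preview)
The paper does not supply its own proof of this lemma; it is stated with attribution to Marin and a citation to \cite[Sections~5.1--5.2]{MARIN2007742}, so there is no in-paper argument to compare against. Your argument is the natural one and is essentially correct: the alternating-intertwiner relation gives $M_\lambda\rho_\lambda(t)M_\lambda^{-1}=-\rho_{\lambda'}(t)$ for every transposition $t$, orthogonality plus $t^2=e$ forces $\rho_{\lambda'}(t)^\tp=\rho_{\lambda'}(t)$, and the two Lie-algebra homomorphisms $x\mapsto M_\lambda\rho_\lambda(x)M_\lambda^{-1}$ and $x\mapsto -\rho_{\lambda'}(x)^\tp$ then agree on a generating set.

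One small point of care: your sentence ``general transpositions are products (conjugates) of nearest-neighbor ones, \emph{therefore} every element of $\g_{|\lambda|}$ is a linear combination of iterated commutators of nearest-neighbor transpositions'' is not a valid inference as written---group-theoretic conjugation does not automatically place an element in the Lie subalgebra generated by the conjugators. The conclusion is nonetheless true (e.g.\ one computes $[[(i\ i{+}1),(i{+}1\ i{+}2)],(i{+}1\ i{+}2)]=2(i\ i{+}1)-2(i\ i{+}2)$ and inducts), but the cleanest fix is simply to run your generator argument for \emph{all} transpositions rather than only nearest-neighbor ones: every transposition is odd and, being orthogonal and self-inverse in Young's orthogonal form, is symmetric, so the identity holds on the full generating set of $\g_{|\lambda|}$ by definition. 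With that tweak the proof is complete.
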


It is helpful to give an explicit expression of $M_\lambda$ in the tableau basis \cite[Lemma~7]{MARIN2007742}: 
\begin{lemma}[Marin]
The map
$M_\lambda:\blambda\rightarrow\blambda'$ given by 
$$
M_\lambda T=w(T)T'
$$ 
for all tableaux $T\in\blambda$, where
$$
w(T)\equiv(-1)^{|\{i<j|r_T(i)>r_T(j)\}|},
$$
 is an alternating intertwiner.
\end{lemma}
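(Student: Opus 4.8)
The plan is to verify in turn the two defining properties of an alternating intertwiner: that $M_\lambda$ is a vector space isomorphism and that $M_\lambda(s.v)=\epsilon(s)\,s.(M_\lambda v)$. For the first, transposing a standard tableau of shape $\lambda$ produces a standard tableau of shape $\lambda'$ (rows and columns swap, entries stay strictly increasing in the new rows and columns), so $T\mapsto T'$ is a bijection between the two tableau bases; since $w(T)\in\{+1,-1\}$ is a nonzero scalar for each $T$, the assignment $T\mapsto w(T)T'$ carries a basis of $\blambda$ to a basis of $\blambda'$, and its linear extension $M_\lambda$ is the desired isomorphism.

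For the intertwining identity I would first reduce to generators: because the adjacent transpositions $(i\ i{+}1)$ generate $S_\ell$ and $\epsilon$ is multiplicative, it suffices to prove $M_\lambda((i\ i{+}1).v)=-(i\ i{+}1).(M_\lambda v)$ for all $i$ and $v$, the general case following by induction on word length; and by linearity I may take $v=T$ a tableau basis vector. Fix $i$ and $T$, and set $d:=d_T(i,i{+}1)$. I would record two elementary facts: $((i\ i{+}1)T)'=(i\ i{+}1)(T')$ as filled diagrams (exchanging the cells of $i$ and $i{+}1$ commutes with reflection across the main diagonal), and $d_{T'}(i,i{+}1)=-d$ (conjugation swaps rows with columns, hence negates every content, hence every axial distance). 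Expanding the left-hand side with Young's orthogonal form for $\rho_\lambda$ and the right-hand side with Young's orthogonal form for $\rho_{\lambda'}$ and substituting $d_{T'}(i,i{+}1)=-d$, the ``diagonal'' terms $\tfrac1d w(T)\,T'$ match automatically on both sides, and equating the ``off-diagonal'' terms reduces the whole identity to the single combinatorial claim
\[
w\bigl((i\ i{+}1)T\bigr)=-w(T)
\]
whenever $i$ and $i{+}1$ lie in different rows and different columns of $T$; if instead they share a row or a column then $d=\pm1$, the coefficient $\sqrt{1-1/d^2}$ vanishes on both sides, and there is nothing to check (the would-be invalid tableau never appears).

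It remains to prove the sign claim, which I would do by comparing the inversion counts $|\{a<b: r_{(\cdot)}(a)>r_{(\cdot)}(b)\}|$ of $T$ and $U:=(i\ i{+}1)T$. Since $r_U$ differs from $r_T$ only in that the labels $i$ and $i{+}1$ have exchanged their rows, any pair $(a,b)$ with $\{a,b\}\cap\{i,i{+}1\}=\emptyset$ contributes identically to both counts. For fixed $a<i$, the two pairs $(a,i)$ and $(a,i{+}1)$ together contribute the same total to $T$ as to $U$ — the conditions ``$r_T(a)>r_T(i)$'' and ``$r_T(a)>r_T(i{+}1)$'' are merely reassigned between the two pairs — and symmetrically for $(i,b)$, $(i{+}1,b)$ with $b>i{+}1$. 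Hence everything cancels except the pair $(i,i{+}1)$ itself, which is an inversion of $T$ iff $r_T(i)>r_T(i{+}1)$ and an inversion of $U$ iff $r_T(i{+}1)>r_T(i)$; as $i$ and $i{+}1$ occupy different rows of $T$, exactly one of these holds, so the two inversion counts differ by exactly one and $w(U)=-w(T)$. I expect the only delicate point to be precisely this last bookkeeping — tracking which ordered pairs gain or lose inversions as the labels are permuted — but once one notices that all contributions except that of $(i,i{+}1)$ cancel in matched pairs, the sign flip is forced.
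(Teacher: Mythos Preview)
Your proof is correct and follows essentially the same approach as the paper: reduce to adjacent transpositions, expand both sides via Young's orthogonal form, use $d_{T'}(i,i{+}1)=-d$, and invoke the sign flip $w((i\ i{+}1)T)=-w(T)$. In fact you go further than the paper, which simply asserts this sign identity as ``the crucial observation'' without justification; your inversion-counting argument (pairing off the contributions from $(a,i)$ with $(a,i{+}1)$ and from $(i,b)$ with $(i{+}1,b)$, leaving only the pair $(i,i{+}1)$) is a clean way to supply the missing step.
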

\begin{proof}
Clearly $M_\lambda$ defined as above is a vector space isomorphism, since
the map $T\mapsto T'$ is a bijection between the tableau basis of $\blambda$
and the tableau basis of $\blambda'$.
It remains to show that $M_\lambda s.v=\epsilon(s)s.(M_\lambda v)$ for all $s\in S_{|\lambda|}$ and
$v\in\blambda$. 
Since $S_{|\lambda|}$ is generated by transpositions of neighboring elements,
and $\blambda$ admits as a basis its tableaux, it suffices to show that
$M_\lambda s.T=-s.(M_\lambda T)$ for any $s=(j\ j+1)$ such that $1\leq j<|\lambda|$
and any tableau $T\in\blambda$.
Let $T_s$ equal $T$ with its $j$ and $j+1$ entries transposed, 
let $d=d_T(j\ j+1)$, let $d'=d_{T'}(j\ j+1)$, and note that $d'=-d$.  
We also need the crucial observation that $w(T)=-w(T_s)$, which is the purpose of this
factor.
Then using the ``two-column proof" style for clarity, 
we have:
\begin{align*}
M_\lambda s.T &= M_\lambda\left(\frac{1}{d}T+\sqrt{1-\frac{1}{d^2}}T_s\right) & \text{by Young's orthogonal form},\\
&= \frac{1}{d}w(T)T'+\sqrt{1-\frac{1}{d^2}}w(T_s)T'_s & \text{by definition of }M_\lambda, \\
&= -w(T)\left(-\frac{1}{d}T'+\sqrt{1-\frac{1}{d^2}}T'_s\right) & \text{by }w(T_s)=-w(T),\\
&= -w(T)\left(\frac{1}{d'}T'+\sqrt{1-\frac{1}{(d')^2}}T'_s\right) & \text{by }d'=-d,\\
&= -w(T)s.T' & \text{by Young's orthogonal form}, \\
&= -s.(M_\lambda T) & \text{by definition of }M_\lambda. 
\end{align*}
\end{proof}

\subsection{Images of the algebra of transpositions}

\setydiagramtext
For the case of $\lambda=\lambda'$, the condition of Lemma~\ref{lem:isoMorphism} significantly
constrains 
$\rho_\lambda(\g'_{|\lambda|})$.
This condition can be written
$$\rho_\lambda(x)^\tp M_\lambda+M_\lambda\rho_\lambda(x)=0,$$
which is a common form of the condition for membership in the special orthogonal
or symplectic Lie algebra, depending on $M_\lambda$ \cite[Section~8.2]{fulton1991representation}.
It is convenient to define, as Marin does, a Lie algebra in terms of the above condition
for any $M_\lambda$ \cite[Definition~2]{MARIN2007742}:
\begin{Definition}
The Lie algebra $\osp(\blambda)$ is the subalgebra of $\sla(\blambda)$ such that for all
$X\in\sla(\blambda)$, $X$ is in $\osp(\blambda)$ if and only if $X^\tp M_\lambda+M_\lambda X=0$.
\end{Definition}
\noindent Elements satisfying the above definition can be verified to form a
subalgebra by observing that for all $X,Y\in\sla(\blambda)$, we have
$M_\lambda[X,Y]M_\lambda^{-1}=-[X,Y]^\tp$.
Then we have $\osp(\blambda)\cong\so(\dim(\blambda),\cmplx)$ or $\osp(\blambda)\cong\spa(\dim(\blambda),\cmplx)$ depending on whether $M_\lambda$ is symmetric or antisymmetric \cite{MARIN2007742}.  
In turn, $M_\lambda$ is found to be symmetric or antisymmetric depending on whether
4 divides $|\lambda|-b(\lambda)$, where $b(\lambda)=\max\{i|\lambda_i=\lambdap_i\}$ \cite[Section~4.2]{MARIN2007742}.
Therefore, the $\sla(\blambda)$ subalgebras $\so(\blambda)$ and $\spa(\blambda)$ can be defined \cite{MARIN2007742},
isomorphic to $\so(\dim(\blambda),\cmplx)$ and $\spa(\dim(\blambda),\cmplx)$ respectively,
such that 
$$
\osp(\blambda)=\left\{\begin{array}{ll} \so(\blambda),& 4|(|\lambda|-b(\lambda)),\\
\spa(\blambda),& \text{otherwise}.\end{array}\right.
$$
Returning to $\rho_\lambda(\g_{|\lambda|})$ we have \cite[Lemma~11]{MARIN2007742}: 
\begin{lemma}[Marin]
\label{lem:osp}
If $\lambda=\lambda'$, then we have $\rho_\lambda(\g_{|\lambda|})\subset\osp(\blambda)$.
\end{lemma}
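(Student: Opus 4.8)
\noindent The plan is to read this off directly from Lemma~\ref{lem:isoMorphism} together with the definition of $\osp(\blambda)$. When $\lambda=\lambda'$ the partition $\lambda'$ equals $\lambda$, so $\blambda'=\blambda$, $\rho_{\lambda'}=\rho_\lambda$, and the alternating intertwiner $M_\lambda$ is an endomorphism of $\blambda$. Specializing Lemma~\ref{lem:isoMorphism} to this situation gives, for every $x\in\g_{|\lambda|}$,
$$
M_\lambda\,\rho_\lambda(x)\,M_\lambda^{-1}=-\rho_\lambda(x)^\tp,
$$
which rearranges to $\rho_\lambda(x)^\tp M_\lambda+M_\lambda\rho_\lambda(x)=0$. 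This is exactly the relation in the definition of $\osp(\blambda)$, so the only remaining point is to check that $\rho_\lambda(x)$ lies in $\sla(\blambda)$, i.e.\ that it is traceless.

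\noindent First I would dispatch tracelessness, for which there are two routes. The quick one: taking the trace of $\rho_\lambda(x)^\tp=-M_\lambda\rho_\lambda(x)M_\lambda^{-1}$ and using $\operatorname{tr}(A^\tp)=\operatorname{tr}(A)$ together with cyclicity of the trace yields $\operatorname{tr}\rho_\lambda(x)=-\operatorname{tr}\rho_\lambda(x)$, hence $\operatorname{tr}\rho_\lambda(x)=0$. The structural one, which also clarifies the role of the hypothesis: the sum of all transpositions generates the center of $\g_{|\lambda|}$ (Lemma~\ref{lem:reductive}) and, being $\sum_k X_k$, acts on a tableau $T$ of shape $\lambda$ by the scalar $\sum_k\bigl(c_T(k)-r_T(k)\bigr)$, the sum of the contents of $\lambda$ (Lemma~\ref{lem:JucysMurphy}); conjugating a self-conjugate diagram negates contents while fixing the diagram, so this sum is zero, and hence by Lemma~\ref{lem:reductive} we have $\rho_\lambda(\g_{|\lambda|})=\rho_\lambda(\g'_{|\lambda|})$, with $\g'_{|\lambda|}=[\g_{|\lambda|},\g_{|\lambda|}]$ spanned by commutators, whose images under $\rho_\lambda$ are traceless.

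\noindent Combining the two steps, each $\rho_\lambda(x)$ with $x\in\g_{|\lambda|}$ is traceless and satisfies $\rho_\lambda(x)^\tp M_\lambda+M_\lambda\rho_\lambda(x)=0$, which is precisely membership in $\osp(\blambda)$; hence $\rho_\lambda(\g_{|\lambda|})\subset\osp(\blambda)$. I do not anticipate a genuine obstacle: the mathematical content is entirely carried by Lemma~\ref{lem:isoMorphism}, and self-conjugacy enters only to make $M_\lambda$ map $\blambda$ to itself (so the transpose condition makes sense on a single space) and to make the central element act by zero. The one point to handle carefully is that the matrix transpose $\tp$ and the form represented by $M_\lambda$ must be taken in the same (tableau) basis, consistent with the conventions already fixed for Young's orthogonal form and for the formula $M_\lambda T=w(T)T'$.
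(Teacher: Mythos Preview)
Your proposal is correct and follows essentially the same approach as the paper: the paper's discussion immediately preceding the lemma already rewrites the conclusion of Lemma~\ref{lem:isoMorphism} in the self-conjugate case as $\rho_\lambda(x)^\tp M_\lambda+M_\lambda\rho_\lambda(x)=0$, which is the defining relation of $\osp(\blambda)$, and then simply cites Marin for the formal statement. Your write-up is more complete in that you explicitly verify tracelessness (needed since $\osp(\blambda)\subset\sla(\blambda)$); both of your arguments for this are valid, and the trace trick from the $M_\lambda$-relation is the cleanest.
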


More generally, we seek to understand the structure of the image of every 
irreducible representation $\rho_\lambda$ of $\g_{|\lambda|}$. 
A good starting place is \cite[Proposition~4]{MARIN2007742},
which states:
\begin{lemma}[Marin]
\label{lem:marindeq2}
If $\lambdap_1=2$, then we have $\rho_\lambda(\g'_{|\lambda|})=\sla(\blambda)$.
\end{lemma}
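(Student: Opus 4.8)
Write $\lambda=[a,b]$ with $a\ge b\ge1$, so the hypothesis $\lambdap_1=2$ says precisely that $\blambda=:V$ has exactly two rows, and put $n=|\lambda|=a+b$. The plan is to induct on $n$. First I would record the facts that hold for any $\lambda$. Since the transpositions generate $S_n$, their products generate $\End(\blambda)$ (by irreducibility of $\rho_\lambda$ over $\cmplx S_n$), so $\g_n$, and hence $\g'_n$ by Lemma~\ref{lem:reductive}, acts irreducibly on $V$; moreover $\rho_\lambda(\g'_n)$ is semisimple, being a quotient of the semisimple algebra $\g'_n$, and so lies in $\sla(V)$. Thus it suffices to show that this semisimple, irreducibly acting subalgebra exhausts $\sla(V)$.

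\textbf{Base cases.} For $n\le 3$ the only two-row partitions are $[1,1]$ (where $\sla(V)=\sla_1=0=\rho_{[1,1]}(\g'_2)$) and $[2,1]$ (where $\sla(V)\cong\mathfrak{sl}_2$ has no proper nonzero semisimple subalgebra, and the image is nonzero because $(12)$ and $(23)$ act differently). All larger two-row partitions, including $[3,1]$ and $[2,2]$, I would handle by the inductive step below.

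\textbf{Inductive step} ($n\ge 4$). Restrict $\rho_\lambda$ to $\g'_{n-1}\subseteq\g'_n$. By branching-compatibility of Young's orthogonal form --- immediate from its defining formula, since the nearest-neighbour transpositions in $S_{n-1}$ never involve the box holding $n$ --- we have $V\downarrow_{S_{n-1}}=\bigoplus_\mu V_\mu$ over the partitions $\mu$ obtained from $\lambda$ by deleting a removable corner, with $\rho_\lambda|_{\g_{n-1}}$ acting block-diagonally as $\bigoplus_\mu\rho_\mu$. For two rows there are one or two such $\mu$. If there is one, namely $\lambda=[a,a]$ with $\mu=[a,a-1]$, then by the inductive hypothesis $\rho_\lambda(\g'_{n-1})=\rho_\mu(\g'_{n-1})=\sla(V_\mu)=\sla(V)$ (equal dimensions), and we are done. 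Otherwise $a>b$ and the summands are $\mu^{(1)}=[a-1,b]$ and $\mu^{(2)}=[a,b-1]$ (reading $[a,0]:=[a]$, the trivial one-row partition), both of size $n-1$; by the inductive hypothesis $\rho_{\mu^{(i)}}(\g'_{n-1})=\sla(V_{\mu^{(i)}})$ (interpreting $\sla(V_{\mu^{(2)}})=0$ when $b=1$). Hence the image of $\g'_{n-1}$ inside $\sla(V_{\mu^{(1)}})\oplus\sla(V_{\mu^{(2)}})$ surjects onto each factor; a Goursat-type argument then gives that this image is the full product unless the two factors are linked by an algebra isomorphism carrying one representation to the other, i.e.\ unless $\rho_{\mu^{(1)}}\cong\rho_{\mu^{(2)}}$ or $\rho_{\mu^{(1)}}\cong\rho_{\mu^{(2)}}{}^{*}$. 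The first is impossible because distinct partitions label inequivalent $\g'_{n-1}$-modules; the second, by Lemma~\ref{lem:isoMorphism} (which identifies the dual of $V_\mu$ with $V_{\mu'}$), would force $\mu^{(2)}=(\mu^{(1)})'$, which never holds for two-row partitions. Therefore $\rho_\lambda(\g'_n)\supseteq\sla(V_{\mu^{(1)}})\oplus\sla(V_{\mu^{(2)}})$, the block-diagonal subalgebra.

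\textbf{Gluing the blocks and conclusion.} Since $S_{n-1}$ together with $(n-1\ n)$ generate $S_n$ and $V$ is $S_n$-irreducible, $\rho_\lambda((n-1\ n))$ cannot preserve the decomposition $V_{\mu^{(1)}}\oplus V_{\mu^{(2)}}$; subtracting an appropriate scalar (so as to land in $\g'_n$, using Lemma~\ref{lem:reductive}) yields an element $\tilde Z\in\rho_\lambda(\g'_n)$ with a nonzero off-diagonal block. It then remains to apply the linear-algebra fact that a Lie subalgebra of $\sla(V_1\oplus V_2)$ containing $\sla(V_1)\oplus\sla(V_2)$ (with $\dim V_1\ge 2$, which holds here since a two-row partition of size $\ge3$ has $\dim\ge2$) and a matrix with a nonzero off-diagonal block must equal $\sla(V_1\oplus V_2)$: bracketing $\tilde Z$ against $\sla(V_1)\oplus\sla(V_2)$ produces all of $\Hom(V_1,V_2)\oplus\Hom(V_2,V_1)$, and bracketing those recovers the traceless block-diagonal part, so everything is generated. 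This gives $\rho_\lambda(\g'_n)=\sla(\blambda)$ and closes the induction.

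\textbf{Main obstacle.} The delicate points are the final sandwiching lemma --- equivalently, classifying the subalgebras between $\sla(V_1)\oplus\sla(V_2)$ and $\sla(V_1\oplus V_2)$ --- and making the Goursat step fully rigorous in the degenerate case where one branch is trivial (so one factor $\sla(V_{\mu^{(2)}})$ is $0$), where one must argue directly with $\sla(V_1)\oplus 0$ and a nonzero off-diagonal element on $V_1\oplus\cmplx$.
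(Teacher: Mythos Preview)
The paper does not supply its own proof of this lemma; it is quoted from Marin~\cite[Proposition~4]{MARIN2007742}. Your inductive argument via the branching rule is essentially Marin's original proof: restrict to $\g'_{n-1}$, apply the inductive hypothesis to the (at most two) two-row summands, use a Goursat-type argument to get the full block-diagonal subalgebra $\sla(V_{\mu^{(1)}})\oplus\sla(V_{\mu^{(2)}})$, and then break the block structure with $(n{-}1\ n)$.

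One point in the gluing step deserves tightening. From an element with a single nonzero off-diagonal block you only generate one of $\Hom(V_1,V_2)$ or $\Hom(V_2,V_1)$ under the adjoint action of $\sla(V_1)\oplus\sla(V_2)$ (each is irreducible, but they are inequivalent), and $\sla(V_1)\oplus\sla(V_2)\oplus\Hom(V_2,V_1)$ is a genuine proper subalgebra of $\sla(V_1\oplus V_2)$. You therefore need both off-diagonal blocks of $\tilde Z$ to be nonzero. This holds here because $\rho_\lambda((n{-}1\ n))$ is real symmetric in Young's orthogonal basis (Lemma~\ref{lem:orthogonalform} plus $(n{-}1\ n)^2=1$), so its two off-diagonal blocks are transposes of one another and vanish together. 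With that addition, the sandwiching argument goes through, including in the degenerate hook case $b=1$ where $\dim V_{\mu^{(2)}}=1$.
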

\noindent But there are two such partitions that are also self-conjugate,
namely $\ydiagram{2,1}$ and $\ydiagram{2,2}$.  
In those cases we have $\rho_\lambda(\g_{|\lambda|})=\sla(\blambda)$,
yet as implied above we have $\rho_\lambda(\g_{|\lambda|})\subset\osp(\blambda)$.
To verify the consistency between these statements, 
first note that $\dim(V_\subydiagramtext{2,1})=\dim(V_\subydiagramtext{2,2})=2$.
Calculation of $M_\lambda$ in the tableau basis then yields
$$
M_\subydiagrameq{2,1}=M_\subydiagrameq{2,2}=\left(\begin{array}{cc}0&1\\-1&0\end{array}\right).
$$
The resulting condition $\rho_\lambda(x)^\tp M_\lambda+M_\lambda\rho_\lambda(x)=0$
is satisfied by each of the Pauli matrices,
$$
\left(\begin{array}{cc}0&1\\1&0\end{array}\right),\ \left(\begin{array}{cc}0&i\\-i&0\end{array}\right),\ \left(\begin{array}{cc}1&0\\0&-1\end{array}\right),
$$
\setydiagramtext
which constitute a basis for $\sla(2,\cmplx)$, and therefore of the equivalent
algebras $\sla(V_\subydiagramtext{2,1})$ and $\sla(V_\subydiagramtext{2,2})$.
Thus we must have that $\sla(\blambda)\subset\osp(\blambda)$ in this case.
Yet, for such a $\lambda$, we also have that $\osp(\blambda)=\spa(\blambda)$,
which is a subalgebra of $\sla(\blambda)$.
These statements are consistent because
$$
\spa(\blambda)\cong\spa(2,\cmplx)\cong\sla(2,\cmplx)\cong\sla(\blambda),
$$
from which we conclude that $\rho_\lambda(\g_{|\lambda|})=\sla(\blambda)=\osp(\blambda)$ for $\lambda\in\{\ydiagram{2,1},\ydiagram{2,2}\}$.

Lemma~\ref{lem:marindeq2} has considerably more implications for more general cases. 
Since $\rho_{[\ell-1,1]}(\g'_\ell)=\sla(V_{[\ell-1,1]})$, it follows
by Lemma~\ref{lem:marinisohooks} that $\rho_\lambda(\g'_\ell)\cong\sla(V_{[\ell-1,1]})$ 
for every hook $\lambda\vdash\ell$.
Further, for all proper, non-self-conjugate $\lambda$ with two rows ($\lambdap_1=2$), it follows that
$\rho_\lambda(\g'_{|\lambda|})=\sla(\blambda)$ and thus, by Lemma~\ref{lem:isoMorphism}, 
all proper, non-self-conjugate $\lambda$ with two columns ($\lambda_1=2$)
also satisfy $\rho_\lambda(\g'_{|\lambda|})=\sla(\blambda)$.
In particular, this implies that $\rho_\lambda(\g'_{|\lambda|})=\sla(\blambda)$ for all proper, non-self-conjugate $\lambda$ 
such that $|\lambda|<7$. 

For the case of $|\lambda|\geq 7$, Marin proved the following \cite[Proposition~7]{MARIN2007742}:
\begin{lemma}
Given a proper partition $\lambda\vdash\ell\geq 7$, we have
$$
\rho_\lambda(\g'_\ell)=\left\{\begin{array}{ll} \sla(\blambda),& \lambda\neq\lambda'\\
\osp(\blambda),& \lambda=\lambda'.\end{array}\right.
$$
\end{lemma}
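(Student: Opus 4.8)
Write $\mathfrak{h}:=\rho_\lambda(\g'_\ell)\subseteq\sla(\blambda)$. By Lemma~\ref{lem:reductive} we have $\g_\ell\cong\cmplx\oplus\g'_\ell$ with $\g'_\ell$ semisimple, so $\mathfrak{h}$ is semisimple; and since $\blambda$ is an irreducible $S_\ell$-module on which the central sum of transpositions acts by a scalar, $\blambda$ is an irreducible $\mathfrak{h}$-module. The whole task is to identify this irreducible semisimple linear Lie algebra. The first step is a duality dichotomy: by Lemma~\ref{lem:isoMorphism} the dual $\g'_\ell$-module of $\blambda$ is $\blambda'$, so $\blambda$ carries a nonzero $\mathfrak{h}$-invariant bilinear form if and only if $\lambda=\lambda'$. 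Hence for $\lambda\neq\lambda'$ the algebra $\mathfrak{h}$ lies in no $\so(\blambda)$ or $\spa(\blambda)$, and we must prove $\mathfrak{h}=\sla(\blambda)$; for $\lambda=\lambda'$, Lemma~\ref{lem:osp} already gives $\mathfrak{h}\subseteq\osp(\blambda)$ (equal to $\so(\blambda)$ or $\spa(\blambda)$ according as $4$ divides $|\lambda|-b(\lambda)$ or not), and we must prove this inclusion is an equality.

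The main engine I would use is an induction on $\ell$ via the branching rule. Restricting to $S_{\ell-1}\subset S_\ell$ gives the multiplicity-free decomposition $\blambda|_{S_{\ell-1}}=\bigoplus_\mu\bmu$, the sum over partitions $\mu$ obtained by deleting a removable corner of $\lambda$; and since $\g'_{\ell-1}\subseteq\g'_\ell$, the subalgebra $\rho_\lambda(\g'_{\ell-1})\subseteq\mathfrak{h}$ acts block-diagonally as $\bigoplus_\mu\rho_\mu(\g'_{\ell-1})$. For $\ell\ge 8$ each such $\mu\vdash\ell-1$ is either a hook, for which Lemma~\ref{lem:marinisohooks} gives $\rho_\mu(\g'_{\ell-1})=\sla(\bmu)$, or proper of size $\ge 7$, for which the inductive hypothesis gives $\rho_\mu(\g'_{\ell-1})=\sla(\bmu)$ or $\osp(\bmu)$; so $\mathfrak{h}$ contains a large block-diagonal semisimple subalgebra. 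One then shows that some transposition $(j\ \ell)$, $j<\ell$, has an image in $\mathfrak{h}$ whose off-block-diagonal components connect the branching summands, so that bracketing it against the block-diagonal part generates every $\Hom(\bmu,\bnu)$; combined with $\so(\blambda),\spa(\blambda)\not\supseteq\mathfrak{h}$ (resp.\ with the form constraint in the self-conjugate case) this forces $\mathfrak{h}=\sla(\blambda)$ (resp.\ $\mathfrak{h}=\osp(\blambda)$). The base case $\ell=7$ — where the branching gives no help for partitions like $[4,2,1]$ — is handled instead by the classification of irreducible semisimple subalgebras of the classical Lie algebras (the Lie-algebra form of Dynkin's theorem on maximal subgroups of $\SL$, $\mathrm{SO}$, $\mathrm{Sp}$) together with the explicit highest weights of the $\rho_\lambda$; the low-dimensional coincidences $\so_5\cong\spa_4$, $\so_6\cong\sla_4$, $\so_3\cong\sla_2$ are avoided precisely because $\ell\ge 7$, while ``proper'' removes the hooks (already covered) and the trivial partitions.

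Applying the classification at the base case, and for the non-classical alternatives generally, requires a dimension estimate and a primitivity statement. The dimension estimate is routine: every proper $\lambda\vdash\ell\ge7$ has $\dim(\blambda)\ge\ell(\ell-3)/2\ge14$, which together with the known highest weights rules out the finitely many exceptional embeddings of low rank. The primitivity statement — that $\blambda$ admits no $\mathfrak{h}$-compatible factorization $\blambda\cong V_1\otimes V_2$ with $\dim V_i\ge2$ (which would in particular force $\mathfrak{h}$ to be semisimple but not simple) — is where input special to the symmetric group enters: in such a factorization the Jucys--Murphy elements $X_2,\dots,X_\ell$, whose pairwise differences lie in $\g'_\ell$, would act on the tableau basis as sums $\mathfrak{t}_1\otimes1+1\otimes\mathfrak{t}_2$, whereas by Lemma~\ref{lem:JMunique} their joint spectrum on $\blambda$ is multiplicity-free with all content vectors distinct; a counting argument on when such a ``sumset'' spectrum can be multiplicity-free, together with the branching multiplicities, yields the contradiction.

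The step I expect to be the main obstacle is making the induction robust for \emph{all} proper $\lambda$, in particular for rectangular partitions $[a^b]$ with $a,b\ge2$, whose restriction to $S_{\ell-1}$ is already irreducible, so that the block-diagonal argument above collapses. For these one must restrict instead to a Young subgroup $S_k\times S_{\ell-k}$ and analyze the images $\rho_\lambda(\g'_k)$ and $\rho_\lambda(\g'_{\ell-k})$, which brings in Littlewood--Richardson multiplicities and a more delicate version of the ``connecting transposition'' argument, and one must still rule out that $\rho_\lambda$ factors through a tensor product — the genuinely hard point, since a rectangle's restriction to $S_k\times S_{\ell-k}$ does decompose with some product-like structure. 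Together with the direct verification of the base case $\ell=7$, this is where the bulk of the work lies; everything else is bookkeeping around Lemmas~\ref{lem:reductive}, \ref{lem:isoMorphism}, \ref{lem:osp}, \ref{lem:marinisohooks} and \ref{lem:marindeq2}.
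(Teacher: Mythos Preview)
The paper does not prove this lemma; it is quoted from Marin \cite[Proposition~7]{MARIN2007742} and used as a black box, so there is no in-paper proof against which to compare.

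Your outline is broadly in the spirit of Marin's actual approach: the self-duality dichotomy via Lemmas~\ref{lem:isoMorphism} and~\ref{lem:osp}, induction on $\ell$ through branching to $S_{\ell-1}$, and elimination of the remaining candidates by a Dynkin-type classification of irreducible semisimple linear Lie algebras. One technical claim, however, is wrong. For a deep hook $\mu$, Lemma~\ref{lem:marinisohooks} does \emph{not} give $\rho_\mu(\g'_{|\mu|})=\sla(\bmu)$; it gives only an isomorphism with $\sla(V_{[|\mu|-1,1]})$, embedded in $\sla(\bmu)$ via the exterior-power map $\Delta_r$ of Theorem~\ref{thm:marin}, and by Lemma~\ref{lem:marin} this inclusion is strict whenever $r>1$. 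Deep hooks do arise in the branching of proper partitions --- for instance $[3,2,1,1]\vdash 7$ restricted to $S_6$ contains $[3,1,1,1]$ --- so this affects the block-diagonal input you feed into the induction step and would need repair. Beyond that, your proposal is candid about being a road map rather than a proof: the base case $\ell=7$, the rectangular partitions, and the primitivity argument via Jucys--Murphy spectra are each flagged rather than carried out, and in Marin's paper these are precisely the places where the real work lies.
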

\noindent But note this and the foregoing analysis exclude the case $\lambda=\ydiagram{3,2,1}$.  
To address this case, Marin has shown that \cite[Proposition~13]{Marin2003} 
$$
\g'_6\cong\sla(V_\subydiagram{5,1})\oplus\sla(V_\subydiagram{3,3})\oplus\sla(V_\subydiagram{4,2})\oplus\osp\left(V_\subydiagram{3,2,1}\right).
$$
Further, he established that the homomorphism
$$
(\rho_{\subydiagramtext{5,1}}\oplus\rho_{\subydiagramtext{3,3}}\oplus\rho_{\subydiagramtext{4,2}}\oplus\rho_{\subydiagramtext{3,2,1}})(\g'_6)\rightarrow\sla(V_\subydiagram{5,1})\oplus\sla(V_\subydiagram{3,3})\oplus\sla(V_\subydiagram{4,2})\oplus\osp\left(V_\subydiagram{3,2,1}\right)
$$
is injective \cite[Section~5]{MARIN2007742}.
From these two equations, and from the previous results for cases when $\rho_\lambda(\g'_{|\lambda|})=\sla(\blambda)$, and from \cite[Lemma~11]{MARIN2007742} which implies $\rho_{\subydiagramtext{3,2,1}}(\g'_6)\subset\osp\left(V_\subydiagram{3,2,1}\right)$, we conclude that 
\begin{lemma}[Marin]
The image of $\rho_{\subydiagramtext{3,2,1}}$ is
$
\rho_{\subydiagramtext{3,2,1}}(\g'_6)=\osp\left(V_\subydiagram{3,2,1}\right).
$
\end{lemma}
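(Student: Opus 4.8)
The plan is a short dimension count that combines Marin's abstract decomposition $\g'_6\cong\sla(V_{[5,1]})\oplus\sla(V_{[3,3]})\oplus\sla(V_{[4,2]})\oplus\osp(V_{[3,2,1]})$ with his faithfulness statement and with the three images we have already identified. Since $[3,2,1]$ is self-conjugate, Lemma~\ref{lem:osp} already gives $\rho_{[3,2,1]}(\g'_6)\subseteq\rho_{[3,2,1]}(\g_6)\subseteq\osp(V_{[3,2,1]})$, so it suffices to establish the reverse inequality of dimensions, $\dim\osp(V_{[3,2,1]})\leq\dim\rho_{[3,2,1]}(\g'_6)$; finite-dimensionality then forces the inclusion to be an equality.

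First I would record the three images already in hand. The partitions $[5,1]$, $[3,3]$ and $[4,2]$ each have exactly two rows, i.e.\ conjugate partition with first part $2$, so Lemma~\ref{lem:marindeq2} applies verbatim and yields $\rho_{[5,1]}(\g'_6)=\sla(V_{[5,1]})$, $\rho_{[3,3]}(\g'_6)=\sla(V_{[3,3]})$ and $\rho_{[4,2]}(\g'_6)=\sla(V_{[4,2]})$. Next, Marin's statement that the homomorphism into $\sla(V_{[5,1]})\oplus\sla(V_{[3,3]})\oplus\sla(V_{[4,2]})\oplus\osp(V_{[3,2,1]})$ is injective says precisely that $\bigoplus_{\lambda}\rho_\lambda$, with the sum taken over $\lambda\in\{[5,1],[3,3],[4,2],[3,2,1]\}$, is a faithful representation of $\g'_6$. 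Since the image of this sum lies inside the direct sum of the four individual images, $\dim\g'_6\leq\sum_{\lambda}\dim\rho_\lambda(\g'_6)$.

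Now I would substitute. Abbreviating $S=\dim\sla(V_{[5,1]})+\dim\sla(V_{[3,3]})+\dim\sla(V_{[4,2]})$, the abstract decomposition gives $\dim\g'_6=S+\dim\osp(V_{[3,2,1]})$, while the inequality above together with the three known images gives $\dim\g'_6\leq S+\dim\rho_{[3,2,1]}(\g'_6)$. Cancelling $S$ yields $\dim\osp(V_{[3,2,1]})\leq\dim\rho_{[3,2,1]}(\g'_6)$, which is exactly the reverse inequality sought; combined with the inclusion from Lemma~\ref{lem:osp} and the fact that a subspace of a finite-dimensional space of equal dimension is the whole space, we conclude $\rho_{[3,2,1]}(\g'_6)=\osp(V_{[3,2,1]})$. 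Note that this argument never needs the explicit value $\dim\osp(V_{[3,2,1]})=\dim\spa(16,\cmplx)=136$; if one prefers a numerical sanity check, $\dim V_{[5,1]}=\dim V_{[3,3]}=5$ (the former via Lemma~\ref{lem:hookdimension}), $\dim V_{[4,2]}=9$, $\dim V_{[3,2,1]}=16$, and $\dim\g'_6=24+24+80+136=264$.

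There is essentially no technical obstacle here; the only two points needing care are (i) checking that $[5,1]$, $[3,3]$ and $[4,2]$ genuinely fall under an already-proved case, which they do since each is a two-row partition and hence covered by Lemma~\ref{lem:marindeq2}, and (ii) reading Marin's injectivity statement as the faithfulness of $\bigoplus_\lambda\rho_\lambda$ on $\g'_6$, which is the only interpretation under which it combines with the abstract decomposition to pin down a single factor. Everything else is bookkeeping with dimensions.
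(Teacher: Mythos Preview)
Your proof is correct and follows essentially the same route as the paper: combine the inclusion $\rho_{[3,2,1]}(\g'_6)\subseteq\osp(V_{[3,2,1]})$ from Lemma~\ref{lem:osp}, the three already-known images $\rho_\lambda(\g'_6)=\sla(V_\lambda)$ for the two-row partitions via Lemma~\ref{lem:marindeq2}, Marin's abstract decomposition of $\g'_6$, and faithfulness of $\bigoplus_\lambda\rho_\lambda$, then conclude by a dimension count. The paper sketches exactly this argument in the paragraph preceding the lemma; you have simply made the dimension-cancellation step explicit.
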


Consolidating Marin's results for the image of each irreducible representation we 
finally arrive at: 
\begin{lemma}[Marin]
\label{lem:marin}
The images of the irreducible representations associated with $\lambda\vdash\ell$ of 
$\g'_\ell$ are as follows:
\begin{equation*}
\rho_\lambda(\g'_\ell)\cong\left\{\begin{array}{ll}
\sla(V_{[\ell-1,1]}), & \lambda=[\ell-r,1^r],0< r<\ell-1,\\
\sla(\blambda), & \lambda\neq\lambda'\text{ proper},\\
\osp(\blambda), & \lambda=\lambda'\text{ proper},
\end{array}\right.
\end{equation*}
where this Lie algebra isomorphism is an equality except when 
$\lambda=[\ell-r,1^r]$ for $r>1$.
\end{lemma}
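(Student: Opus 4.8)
The plan is to assemble Lemma~\ref{lem:marin} case by case from the results already collected in this section, using the fact that every non-trivial $\lambda\vdash\ell$ is either a genuine hook $[\ell-r,1^r]$ with $0<r<\ell-1$, or a proper partition, and in the latter case either self-conjugate or not; these three classes are exhaustive and pairwise disjoint, directly from the definitions of ``hook'' and ``proper'' given earlier.

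First I would dispatch the hooks. The shallow hook $[\ell-1,1]$ has $\lambdap_1=2$, so Lemma~\ref{lem:marindeq2} gives $\rho_{[\ell-1,1]}(\g'_\ell)=\sla(V_{[\ell-1,1]})$ on the nose, which is the ``$r=1$'' equality. For a general hook $\lambda=[\ell-r,1^r]$ with $0<r<\ell-1$, Lemma~\ref{lem:marinisohooks} gives $\rho_\lambda(\g'_\ell)\cong\rho_{[\ell-1,1]}(\g'_\ell)=\sla(V_{[\ell-1,1]})$. When $r>1$ the module $V_{[\ell-r,1^r]}$, of dimension $\binom{\ell-1}{r}$ by Lemma~\ref{lem:hookdimension}, is a different space from $V_{[\ell-1,1]}$ --- strictly larger for a deep hook, and a second $(\ell-1)$-dimensional copy when $r=\ell-2$ --- so only the isomorphism survives, which is exactly the stated exception.

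Next I would treat the proper partitions. For $|\lambda|\geq 7$ the assertion is verbatim Marin's proposition classifying $\rho_\lambda(\g'_\ell)$ for proper $\lambda\vdash\ell\geq 7$, quoted above, with the single self-conjugate exception $[3,2,1]$ of size $6$ handled by the lemma immediately preceding the statement, $\rho_{[3,2,1]}(\g'_6)=\osp(V_{[3,2,1]})$. For the remaining proper $\lambda$ with $|\lambda|<7$ I would split on self-conjugacy. The non-self-conjugate cases have already been recorded in the text: $\rho_\lambda(\g'_{|\lambda|})=\sla(\blambda)$ for every proper non-self-conjugate $\lambda$ with $|\lambda|<7$, via Lemma~\ref{lem:marindeq2} for the two-row partitions $[3,2],[3,3],[4,2]$ and, for their two-column conjugates $[2,2,1],[2,2,2],[2,2,1,1]$, via Lemma~\ref{lem:isoMorphism}: the image lies in $\sla(\blambda)$ and is isomorphic to $\sla(\blambda')$, so since $\dim\blambda'=\dim\blambda$ this subalgebra has full dimension and equals $\sla(\blambda)$. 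The only proper self-conjugate $\lambda$ with $|\lambda|<7$ is $[2,2]$: the explicit computation of $M_\lambda$ following Lemma~\ref{lem:marindeq2} gives $\rho_\lambda(\g_{|\lambda|})=\sla(\blambda)=\osp(\blambda)$, and since $\sla(2,\cmplx)$ equals its own derived algebra the same algebra is the image of $\g'_{|\lambda|}$. Collecting all cases yields the stated trichotomy, with equality everywhere except for the hooks with $r>1$.

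I do not expect a genuine obstacle, since the lemma is a consolidation and the work is bookkeeping; the point that needs care is keeping the ``equality versus isomorphism'' distinction straight. Concretely, one must verify via the dimension count against $\sla(\blambda')$ that for the two-column proper partitions the image is \emph{literally} $\sla(\blambda)$, whereas for deep hooks and for the non-principal shallow hook $[2,1^{\ell-2}]$ one can only claim isomorphism with $\sla(V_{[\ell-1,1]})$ --- which is precisely why the exception in the statement reads ``$r>1$'' rather than ``$1<r<\ell-2$''. I would also double-check at the outset that the three classes cover every non-trivial $\lambda$ exactly once, so that no partition is left unclassified.
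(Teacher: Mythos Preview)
Your proposal is correct and follows exactly the paper's approach: the paper presents this lemma as a consolidation of the preceding results without a formal proof, and you have supplied that consolidation explicitly, case by case, citing precisely the same ingredients (Lemmas~\ref{lem:marindeq2}, \ref{lem:marinisohooks}, \ref{lem:isoMorphism}, the $\ell\geq 7$ proposition, the $[3,2,1]$ computation, and the $[2,2]$ discussion). One small phrasing slip: $[3,2,1]$ is also proper self-conjugate with $|\lambda|<7$, so ``the only proper self-conjugate $\lambda$ with $|\lambda|<7$ is $[2,2]$'' should read ``the only \emph{remaining}''; but since you already disposed of $[3,2,1]$ in the previous sentence this does not affect the argument.
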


\subsection{Semisimple decomposition of the algebra of transpositions}

Marin proceeds to use results summarized in the previous sections to explicitly
decompose $\g'_{|\lambda|}$ as a direct sum of simple algebras
(Theorem~\ref{thm:marin} below).  
In the course of this proof,
the following useful fact is derived \cite[Section~6.6]{MARIN2007742}:
\begin{lemma}[Marin]
\label{lem:iffimagesiso}
Given distinct partitions $\lambda,\mu\vdash\ell$ such that $\dim(\blambda)>1$ and
$\dim(\bmu)>1$, 
the map given for all $x\in\g'_\ell$ by $\rho_\lambda(x)\mapsto\rho_\mu(x)$
is a Lie algebra isomorphism
if and only if
$\mu=\lambda'$ or $\lambda$ and $\mu$ are both hooks.
\end{lemma}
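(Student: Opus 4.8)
The plan is to strip the statement down to a claim about kernels. The correspondence $\rho_\lambda(x)\mapsto\rho_\mu(x)$, $x\in\g'_\ell$, is a well-defined linear bijection exactly when $\ker\rho_\lambda|_{\g'_\ell}=\ker\rho_\mu|_{\g'_\ell}$ as subspaces of $\g'_\ell$, and whenever it is well-defined it is automatically a Lie algebra homomorphism because $\rho_\mu$ is one. So I would reduce the lemma to: \emph{the natural map is a Lie algebra isomorphism if and only if $\ker\rho_\lambda|_{\g'_\ell}=\ker\rho_\mu|_{\g'_\ell}$}. Two elementary facts will be used throughout: (i) $\g'_\ell$ acts irreducibly on every $\blambda$, since $\g_\ell=\cmplx T\oplus\g'_\ell$ by Lemma~\ref{lem:reductive} and the center $\cmplx T$ acts on $\blambda$ by a scalar $c_\lambda I$ (Schur); (ii) the ``sum of coefficients'' algebra character of $\cmplx S_\ell$ kills every commutator, hence kills $\g'_\ell$, but sends $T$ to $\binom\ell2\neq0$, so every transposition $t$ satisfies $t-\tfrac1{\binom\ell2}T\in\g'_\ell$ -- a transposition is a central element plus an element of $\g'_\ell$.

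\textbf{The ``if'' direction.} If $\mu=\lambda'$, Lemma~\ref{lem:isoMorphism} gives $\rho_{\lambda'}(x)=-(M_\lambda\rho_\lambda(x)M_\lambda^{-1})^{\tp}$ for all $x$, so $\rho_\lambda(x)\mapsto\rho_{\lambda'}(x)$ is the composite of the inner automorphism $Y\mapsto M_\lambda Y M_\lambda^{-1}$ with the Lie-algebra automorphism $Y\mapsto-Y^{\tp}$, hence a Lie algebra isomorphism of $\rho_\lambda(\g'_\ell)$ onto $\rho_{\lambda'}(\g'_\ell)$. If $\lambda=[\ell-r,1^r]$ and $\mu=[\ell-s,1^s]$ are both hooks (with $0<r,s<\ell-1$, forced since $\dim>1$), then $V_{[\ell-r,1^r]}\cong\wedge^rV_{[\ell-1,1]}$ as $\g'_\ell$-modules, i.e.\ $\rho_\lambda(x)=\Delta_r(\rho_{[\ell-1,1]}(x))$; as each $\Delta_r$ is a Lie algebra isomorphism onto its image, the natural map equals $\Delta_s\circ\Delta_r^{-1}$, again an isomorphism.

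\textbf{The ``only if'' direction: type bookkeeping.} Now assume $\ker\rho_\lambda|_{\g'_\ell}=\ker\rho_\mu|_{\g'_\ell}=:K$, so $\rho_\lambda(\g'_\ell)\cong\g'_\ell/K\cong\rho_\mu(\g'_\ell)$. By Lemma~\ref{lem:marin} each image is $\sla(V_{[\ell-1,1]})$ (hook case), $\sla(\blambda)$, or $\osp(\blambda)\in\{\so(\blambda),\spa(\blambda)\}$. The subtle point -- and the reason abstract isomorphism of the images does \emph{not} suffice -- is that, for instance, $\lambda=[5,1]$ and $\mu=[3,3]$ in $\g'_6$ both have image $\sla(5,\cmplx)$ yet different kernels; so one must work with the representations themselves, not just the images. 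I would first do a short case check -- using that for $\ell\ge5$ the minimal nontrivial degree of $S_\ell$ is realized only by the two shallow hooks, and that a proper self-conjugate partition always has even-dimensional irreducible (its tableaux are freely permuted by conjugation) -- to verify that the non-simple $\so(4,\cmplx)$ and the exceptional coincidences $A_1{=}B_1{=}C_1$, $B_2{=}C_2$, $A_3{=}D_3$, $D_4$-triality never arise for a proper $\lambda$ or $\mu$. This yields: if one of $\lambda,\mu$ is a hook the other cannot be proper (its image would then be the simple $A_{\ell-2}$, impossible by the same degree/parity facts), so both are hooks; and if both are proper then $\rho_\lambda(\g'_\ell)$ and $\rho_\mu(\g'_\ell)$ are simple of the same classical type with $\dim\blambda=\dim\bmu=:n$. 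In this last case, after fixing bases and normalising the invariant bilinear forms, $\rho_\lambda$ and $\rho_\mu$ become surjections of $\g'_\ell$ with the same kernel onto one fixed classical matrix algebra among $\sla(n,\cmplx)$, $\so(n,\cmplx)$, $\spa(n,\cmplx)$, whence $\rho_\mu=\theta\circ\rho_\lambda$ for some automorphism $\theta$ of that algebra.

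\textbf{The crux.} The hard part is to promote this Lie-algebra-level relation to an honest $S_\ell$-module isomorphism. Every automorphism of $\so(n,\cmplx)$ with $n\neq8$, of $\spa(n,\cmplx)$, or of $\sla(2,\cmplx)$ is conjugation by an invertible matrix; and for $\sla(n,\cmplx)$ with $n\ge3$, if $\theta$ is outer then comparing with the relation $\rho_{\lambda'}=(Y\mapsto-Y^{\tp})\circ(Y\mapsto M_\lambda YM_\lambda^{-1})\circ\rho_\lambda$ of Lemma~\ref{lem:isoMorphism} (whose right-hand automorphism is also outer, and two outer automorphisms of $A_{n-1}$ differ by an inner one) gives $\rho_\mu=(Y\mapsto gYg^{-1})\circ\rho_{\lambda'}$ on $\g'_\ell$ for some invertible $g$. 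So in every case $\rho_\mu=(Y\mapsto gYg^{-1})\circ\rho_\kappa$ on $\g'_\ell$ for some invertible $g$, where $\kappa\in\{\lambda,\lambda'\}$. Evaluating at $t-\tfrac1{\binom\ell2}T$ for a transposition $t$, and using that $T$ acts by a scalar in each representation, yields $\rho_\mu(t)=g\,\rho_\kappa(t)\,g^{-1}+sI$ with $s$ a scalar independent of $t$; but $\rho_\mu(t)$ and $g\rho_\kappa(t)g^{-1}$ are both involutions with spectrum exactly $\{+1,-1\}$ (both signs occurring, since $\kappa,\mu$ are nontrivial), which forces $s=0$. Hence $\rho_\mu$ and $\rho_\kappa$ agree up to the single conjugation by $g$ on all transpositions, so on all of $S_\ell$; therefore $\rho_\mu\cong\rho_\kappa$ as $S_\ell$-modules and $\mu=\kappa\in\{\lambda,\lambda'\}$, and since $\mu\neq\lambda$ we conclude $\mu=\lambda'$. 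I expect this final rigidity step -- pinning down the automorphism $\theta$ and then using the $\{+1,-1\}$-spectrum of transpositions to kill the scalar ambiguity -- to be the main obstacle; the classical-type bookkeeping, though elementary, also needs care because of the accidental low-rank isomorphisms and the special behaviour of hooks.
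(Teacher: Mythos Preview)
The paper does not give its own proof of this lemma; it is quoted from Marin \cite[Section~6.6]{MARIN2007742} as a byproduct of the semisimple decomposition $\g'_\ell\cong\bigoplus_{\lambda\in\Lambda}\rho_\lambda(\g'_\ell)$ (Theorem~\ref{thm:marin}), from which one reads off directly which irreducibles share a simple ideal. Your independent argument is structurally sound, and the ``crux'' rigidity step---classifying the automorphism $\theta$ and then using the $\{+1,-1\}$-spectrum of transpositions to kill the scalar discrepancy---is correct and is indeed the heart of the matter. Your justification of the parity claim (no self-transpose standard Young tableau exists for a self-conjugate shape with an off-diagonal cell) is also valid.

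There is, however, one concrete error in your type bookkeeping. You assert that for $\ell\ge5$ the minimal nontrivial degree of $S_\ell$ is achieved only by the two shallow hooks. This is false for $\ell=6$: the proper partitions $[3,3]$ and $[2,2,2]$ also have dimension $5=\ell-1$ (a reflection of the exceptional outer automorphism of $S_6$). Hence your dismissal of the hook-versus-proper case breaks down there: with $\lambda=[5,1]$ and $\mu=[3,3]$, both images are $\sla(5,\cmplx)$, and abstract type alone cannot separate them.

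The repair is already latent in what you wrote. Since every hook has the same kernel as $[\ell-1,1]$ (Lemma~\ref{lem:marinisohooks}), you may without loss of generality replace any hook $\lambda$ by the shallow hook $[\ell-1,1]$, for which $\rho_\lambda(\g'_\ell)=\sla(\blambda)$ with $\dim\blambda=\ell-1$. Your crux argument then applies verbatim to this $\lambda$ against a proper $\mu$ and forces $\mu\in\{[\ell-1,1],[2,1^{\ell-2}]\}$, contradicting properness. In short: fold the hook-versus-proper case into the crux argument rather than trying to dispatch it by degree considerations.
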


Below we state one of the main theorems of \cite{MARIN2007742}, that being an isomorphism 
between $\g’_\ell$ and the direct sum of
its simple ideals, in the form of a commuting diagram which indicates how the isomorphism was constructed.  In a series of commuting diagrams, Marin illustrated how
the image of each irreducible representation $\rho_\lambda(\g'_\ell)$ 
relates to the image of the sum over all partitions of the associated
irreducible representations, ultimately demonstrating how a subset of such irreducible
representations uniquely determines the image of the full sum.
The purpose in doing so, accepting the hypothesis that $\g'_\ell$ is isomorphic to
a direct sum of such images, was to determine 
which representations need constitute that sum.
Here we compose those commuting diagrams together \cite[Theorem~A]{MARIN2007742}:
\begin{theorem}[Marin]
\label{thm:marin}
The following diagram of Lie algebra homomorphisms commutes: 
\begin{displaymath}
\xymatrix
{
  && \bigoplus\limits_{\lambda\vdash \ell}\gl(\blambda) \\
  && \\
  && \bigoplus\limits_{\lambda\vdash \ell}\sla(\blambda)\ar[uu]_-\iota \\
  && \\
  && \bigoplus\limits_{\substack{\lambda\vdash \ell\\\lambda<\lambda'}}\sla(\blambda)\oplus\bigoplus\limits_{\substack{\lambda\vdash \ell\\\lambda=\lambda'}}\sla(\blambda)\ar[uu]_-{\bigoplus\limits_{\substack{\lambda\vdash \ell\\\lambda<\lambda'}}(\Id_{\sla(\blambda)}\oplus\tilde{M}_\lambda)\oplus\bigoplus\limits_{\substack{\lambda\vdash \ell\\\lambda=\lambda'}}\Id_{\sla(\blambda)}} \\
  && \\
  && \bigoplus\limits_{\substack{\lambda\vdash \ell\\\lambda<\lambda'}}\sla(\blambda)\oplus\bigoplus\limits_{\substack{\lambda\vdash \ell\\\lambda=\lambda'}}\osp(\blambda)\ar[uu]_-\iota\\
  && \\
\g'_\ell\ar@<-.3ex>[rr]_-{\bigoplus\limits_{\lambda\in\Lambda}\rho_\lambda}\ar[uuuuuuuurr]^{\bigoplus\limits_{\lambda\vdash \ell}\rho_\lambda}  && \sla(V_{[\ell-1,1]})\oplus\bigoplus\limits_{\substack{\lambda\vdash \ell\\\lambda<\lambda'\\\proper}}\sla(\blambda)\oplus\bigoplus\limits_{\substack{\lambda\vdash \ell\\\lambda=\lambda'\\\proper}}\osp(\blambda)\ar[uu]_-{\bigoplus\limits_{r=1}^{\ell-2}\Delta_r\oplus\bigoplus\limits_{\substack{\lambda\vdash \ell\\\lambda<\lambda'\\\proper}}\Id_{\sla(\blambda)}\oplus\bigoplus\limits_{\substack{\lambda\vdash \ell\\\lambda=\lambda'\\\proper}}\Id_{\osp(\blambda)}}\ar@<-.3ex>[ll]_-{\left({\bigoplus\limits_{\lambda\in\Lambda}\rho_\lambda}\right)^{-1}}
}
\end{displaymath}
where $\Lambda\equiv\{[\ell-1,1]\}\cup\{\lambda\vdash \ell|\lambda_2>1,\lambda\leq\lambda'\}$,  $\iota$ denotes inclusion, $\tilde{M}_\lambda$ is the automorphism
of $\gl(\blambda)$ defined by $X\mapsto -M_\lambda X^\tp M_\lambda^{-1}$,
and
\begin{equation*}
\Delta_r:\sla(V_{[\ell-1,1]})\rightarrow \sla(V_{[\ell-r,1^r]})
\end{equation*}
is defined such that
\begin{equation*}
\Delta_r(X)(v_1\wedge\cdots\wedge v_r)=Xv_1\wedge\cdots\wedge v_r+v_1\wedge Xv_2\wedge\cdots\wedge v_r+v_1\wedge v_2\wedge\cdots\wedge Xv_r
\end{equation*}
for all $v_1,\ldots,v_r\in V_{[\ell-1,1]}$. 
\end{theorem}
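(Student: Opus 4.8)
The plan is to obtain the theorem by combining the structural results already in hand with the elementary fact that a complex semisimple Lie algebra is canonically the direct sum of its simple ideals, and then to read the right-hand column of the diagram as the recipe that rebuilds the full sum $\bigoplus_{\lambda\vdash\ell}\rho_\lambda$ out of the minimal faithful sub-sum indexed by $\Lambda$. First I would record the starting data: by Lemma~\ref{lem:reductive} the algebra $\g'_\ell$ is semisimple, so $\g'_\ell=\bigoplus_k\mathfrak{s}_k$ with simple, uniquely determined ideals $\mathfrak{s}_k$, and every ideal of $\g'_\ell$ is the sum of a sub-collection of the $\mathfrak{s}_k$; moreover $\bigoplus_{\lambda\vdash\ell}\rho_\lambda$ is faithful on $\g'_\ell$, since every irreducible $\cmplx S_\ell$-module occurs among the $\blambda$ and $\g'_\ell\subset\cmplx S_\ell$. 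Fix $\lambda$ with $\dim\blambda>1$. By Lemma~\ref{lem:marin} the image $L_\lambda:=\rho_\lambda(\g'_\ell)$ is one of $\sla(V_{[\ell-1,1]})$, $\sla(\blambda)$, $\osp(\blambda)$, hence simple, and $\rho_\lambda(\mathfrak{s}_k)$, being an ideal of $L_\lambda$, is either $0$ or all of $L_\lambda$; two such images for distinct $k$ would commute while equalling the nonabelian $L_\lambda$, so exactly one $\mathfrak{s}_k$ is carried onto $L_\lambda$, necessarily isomorphically. Writing $k(\lambda)$ for that index, we get $\ker\rho_\lambda=\bigoplus_{k\neq k(\lambda)}\mathfrak{s}_k$.

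Next I would identify $\Lambda$ as a complete set of class representatives. By the above, $k(\lambda)=k(\mu)$ iff $\ker\rho_\lambda=\ker\rho_\mu$, and by Lemma~\ref{lem:iffimagesiso} this holds iff $\mu=\lambda'$ or $\lambda$ and $\mu$ are both hooks. Faithfulness forces $\lambda\mapsto\mathfrak{s}_{k(\lambda)}$ to be surjective, because a simple ideal on which every $\rho_\lambda$ vanished would lie in $\ker(\bigoplus_{\lambda\vdash\ell}\rho_\lambda)=\{0\}$. Hence the simple ideals of $\g'_\ell$ are in bijection with the classes of $\{\lambda\vdash\ell:\dim\blambda>1\}$ under the relation that identifies each $\lambda$ with $\lambda'$ and identifies any two hooks. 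Since $\dim\blambda=1$ exactly for the two trivial partitions, a partition is improper exactly when it is trivial or a hook, and conjugation preserves properness and sends hooks to hooks, so every class meets $\Lambda=\{[\ell-1,1]\}\cup\{\lambda\vdash\ell:\lambda_2>1,\ \lambda\leq\lambda'\}$ in exactly one partition. Therefore $\bigoplus_{\lambda\in\Lambda}\rho_\lambda$ restricts on each $\mathfrak{s}_{k(\lambda)}$ to an isomorphism onto $L_\lambda$, so it is a Lie algebra isomorphism from $\g'_\ell$ onto $\bigoplus_{\lambda\in\Lambda}L_\lambda$, which by Lemma~\ref{lem:marin} is precisely $\sla(V_{[\ell-1,1]})\oplus\bigoplus_{\lambda<\lambda'\proper}\sla(\blambda)\oplus\bigoplus_{\lambda=\lambda'\proper}\osp(\blambda)$. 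This supplies the horizontal arrow at the foot of the diagram, together with its inverse.

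It then remains to check that the right-hand column, read from bottom to top, sends $(\rho_\lambda(x))_{\lambda\in\Lambda}$ to $(\rho_\lambda(x))_{\lambda\vdash\ell}$; that is, each column map fills in the remaining slots of a class from its distinguished member. For a proper non-self-conjugate $\lambda$ the $\blambda'$-slot is supplied by $\tilde{M}_\lambda$: the explicit alternating intertwiner $M_\lambda$ is a signed permutation matrix in the tableau bases, hence orthogonal, so $M_\lambda^{\tp}=M_\lambda^{-1}$, and feeding this into the identity $M_\lambda\rho_\lambda(x)M_\lambda^{-1}=-\rho_{\lambda'}(x)^{\tp}$ of Lemma~\ref{lem:isoMorphism} yields $\tilde{M}_\lambda(\rho_\lambda(x))=\rho_{\lambda'}(x)$, which is exactly the effect of the factor $\Id_{\sla(\blambda)}\oplus\tilde{M}_\lambda$. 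For the hook slots, Marin's isomorphism $V_{[\ell-r,1^r]}\cong\wedge^rV_{[\ell-1,1]}$ of $\g'_\ell$-modules gives, straight from the defining formula for $\Delta_r$, that $\rho_{[\ell-r,1^r]}(x)=\Delta_r(\rho_{[\ell-1,1]}(x))$, so the block $\bigoplus_{r=1}^{\ell-2}\Delta_r$ produces every hook image out of the single summand $\sla(V_{[\ell-1,1]})$; when $\ell$ is odd the self-conjugate hook's image lands in $\osp(V_{[\ell-r,1^r]})$ by Lemma~\ref{lem:osp}, matching its target slot, and Lemma~\ref{lem:osp} likewise justifies routing the self-conjugate proper images through $\osp$-slots. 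The remaining maps are the inclusions $\iota$ (of $\osp(\blambda)$ into $\sla(\blambda)$, and of $\bigoplus_\lambda\sla(\blambda)$ into $\bigoplus_\lambda\gl(\blambda)$) and the identities on slots where the representative is its own conjugate, none of which alters an element; so the two paths agree summand by summand.

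The conceptually clean part is the passage from semisimplicity to the bijection between simple ideals and conjugate/hook classes. I expect the real work to be the bookkeeping in the last step: keeping straight, for each $\lambda\vdash\ell$, which class it belongs to, which ambient algebra ($\gl$, $\sla$, or $\osp$) its image occupies, how the conjugate pairs of hooks are split between the slots with $\lambda<\lambda'$ and with $\lambda=\lambda'$ as one climbs the column, and the point that $\Delta_r$ is only an isomorphism onto a proper subalgebra of $\sla(V_{[\ell-r,1^r]})$ for $r>1$ (so a self-conjugate hook image genuinely sits inside the $\osp$-slot of its node while remaining an abstract copy of $\sla(V_{[\ell-1,1]})$). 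Everything substantive is already present in Lemmas~\ref{lem:reductive}, \ref{lem:marin}, \ref{lem:iffimagesiso}, \ref{lem:isoMorphism}, \ref{lem:osp} and Marin's exterior-power identification; the theorem is the assertion that these fit together into the single commuting diagram.
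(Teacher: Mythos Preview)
The paper does not give its own proof of this theorem: it is stated as Marin's Theorem~A, cited from \cite{MARIN2007742}, and followed only by an informal paragraph explaining what each arrow in the diagram does. Your proposal is correct and is precisely a formalization of that explanatory paragraph together with the preceding lemmas: you use semisimplicity (Lemma~\ref{lem:reductive}) to split $\g'_\ell$ into simple ideals, Lemma~\ref{lem:marin} to identify each $\rho_\lambda(\g'_\ell)$ as simple, Lemma~\ref{lem:iffimagesiso} to group the $\lambda$ into kernel-classes, and then Lemmas~\ref{lem:isoMorphism} and~\ref{lem:osp} plus the exterior-power identification to check that the column maps rebuild every $\rho_\lambda(x)$ from the $\Lambda$-indexed data. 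This is exactly the route the thesis sketches in the discussion surrounding the statement, so there is nothing to contrast.
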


To review the maps depicted in the diagram, starting from the bottom right,
the first map determines all hook representations from wedge products of $[\ell-1,1]$ 
via the isomorphisms $\Delta_r$, the second map is an inclusion map indicating that 
self-conjugate representations are confined to the subalgebra $\osp(\blambda)$ of 
$\sla(\blambda)$, the third map determines the conjugate representation for each 
non-self-conjugate proper partition via the isomorphism induced by the alternating 
intertwiner $M_\lambda$, and the fourth map is another inclusion map indicating that 
representations of $\g'_\ell$ necessarily map to traceless elements of $\gl(\blambda)$.
Also note that, in the isomorphism
$$
\bigoplus_{\lambda\in\Lambda}\rho_\lambda:\g_\ell'\rightarrow \sla(V_{[\ell-1,1]})\oplus\bigoplus\limits_{\substack{\lambda\vdash \ell\\\lambda<\lambda'\\\proper}}\sla(\blambda)\oplus\bigoplus\limits_{\substack{\lambda\vdash \ell\\\lambda=\lambda'\\\proper}}\osp(\blambda),
$$
by Lemma~\ref{lem:iffimagesiso} the hook in $[\ell-1,1]$ in $\Lambda$ can be replaced by any other hook
$[\ell-r,1^r]$ of the same size $\ell$, and also any partition can
be replaced by its conjugate.  Thus
$$
\left(\bigoplus_{\lambda\in\Lambda}\rho_\lambda\right)(\g_\ell')\cong\sla(V_{[\ell-1,1]})\oplus\bigoplus\limits_{\substack{\lambda\vdash \ell\\\lambda<\lambda'\\\proper}}\sla(\blambda)\oplus\bigoplus\limits_{\substack{\lambda\vdash \ell\\\lambda=\lambda'\\\proper}}\osp(\blambda),
$$
where $\Lambda$ can be generalized such that it contains any one hook, one of each
conjugate pair of proper partitions, and all proper self-conjugate partitions.

The above isomorphism implies an important corollary regarding the interdependency of
the representations.  Using Lemma~\ref{lem:marin} the isomorphism can be written
$$
\left(\bigoplus_{\lambda\in\Lambda}\rho_\lambda\right)(\g_\ell')\cong\bigoplus_{\lambda\in\Lambda}\rho_\lambda(\g_\ell').
$$
As this remains true when we remove partitions from both sides, we obtain a slightly
more general statement:
\begin{corollary}
\label{cor:marin}
Given $\Lambda\subset\{\lambda\vdash\ell\}$, we have
$\left(\bigoplus_{\lambda\in\Lambda}\rho_\lambda\right)(\g_\ell')\cong\bigoplus_{\lambda\in\Lambda}(\rho_\lambda(\g_\ell'))$ if and only if $\Lambda$ is free of nontrivial 
conjugate pairs and multiple hooks.
\end{corollary}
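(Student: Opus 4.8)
The plan is to deduce Corollary~\ref{cor:marin} from the semisimple decomposition of $\g'_\ell$ furnished by Theorem~\ref{thm:marin} together with Lemma~\ref{lem:iffimagesiso}, using only elementary facts about quotients of semisimple Lie algebras. First I would dispose of trivial partitions: since $\g_\ell$ is reductive (Lemma~\ref{lem:reductive}) the algebra $\g'_\ell$ is semisimple, so any $\lambda\in\Lambda$ with $\dim\blambda=1$ has $\rho_\lambda\equiv 0$; deleting such $\lambda$ changes neither side of the claimed isomorphism (it only removes a zero summand from each) nor whether $\Lambda$ contains a nontrivial conjugate pair or a repeated hook. So assume every $\lambda\in\Lambda$ is nontrivial. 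Then, by Lemma~\ref{lem:marin} — using Lemmas~\ref{lem:marinisohooks} and~\ref{lem:isoMorphism} to reduce hooks other than $[\ell-1,1]$, and partitions with $\lambda>\lambda'$, to the minimal faithful family — each $\rho_\lambda(\g'_\ell)$ is one of $\sla(V_{[\ell-1,1]})$, $\sla(\blambda)$, $\osp(\blambda)$, and in every such case is a \emph{simple} Lie algebra.

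Next I would set up the subdirect-product picture. Write $\g'_\ell=\bigoplus_j\mathfrak{s}_j$ as a sum of its simple ideals; by Theorem~\ref{thm:marin} these are exactly the $\rho_\mu(\g'_\ell)$ for $\mu$ ranging over a maximal family (one hook, one of each proper conjugate pair, all proper self-conjugate partitions). For each nontrivial $\lambda$, $\rho_\lambda$ maps $\g'_\ell$ onto the simple algebra $\rho_\lambda(\g'_\ell)$, so $\ker\rho_\lambda=\bigoplus_{j\neq\sigma(\lambda)}\mathfrak{s}_j$ for a unique index $\sigma(\lambda)$, and $\rho_\lambda$ restricts to an isomorphism $\mathfrak{s}_{\sigma(\lambda)}\xrightarrow{\sim}\rho_\lambda(\g'_\ell)$ while annihilating the other ideals. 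Hence $\Phi:=\bigoplus_{\lambda\in\Lambda}\rho_\lambda$ maps each $\mathfrak{s}_j$ onto the graph, inside $\bigoplus_{\sigma(\lambda)=j}\rho_\lambda(\g'_\ell)$, of a tuple of isomorphisms; this graph fills that block iff at most one $\lambda\in\Lambda$ has $\sigma(\lambda)=j$. Since the blocks $\{\lambda:\sigma(\lambda)=j\}$ partition $\Lambda$, summing over $j$ yields: $\Phi(\g'_\ell)=\bigoplus_{\lambda\in\Lambda}\rho_\lambda(\g'_\ell)$ iff $\sigma|_\Lambda$ is injective; and whenever $\Phi(\g'_\ell)$ is a proper subalgebra of the external direct sum it has strictly smaller dimension, so it is not even abstractly isomorphic to it. Thus the corollary's isomorphism holds iff $\ker\rho_\lambda\neq\ker\rho_\mu$ for all distinct $\lambda,\mu\in\Lambda$.

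Finally I would translate the kernel condition into combinatorics. If $\ker\rho_\lambda=\ker\rho_\mu=:K$ then $\rho_\lambda$ and $\rho_\mu$ both factor isomorphically through $\g'_\ell/K$, so the assignment $\rho_\lambda(x)\mapsto\rho_\mu(x)$ is a well-defined Lie algebra isomorphism; conversely, if that assignment is a well-defined isomorphism then $\ker\rho_\lambda=\ker\rho_\mu$. Hence, by Lemma~\ref{lem:iffimagesiso} (which applies since all partitions in $\Lambda$ are nontrivial), $\ker\rho_\lambda=\ker\rho_\mu$ precisely when $\mu=\lambda'$ or $\lambda$ and $\mu$ are both hooks. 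Therefore $\sigma|_\Lambda$ fails to be injective exactly when $\Lambda$ contains a nontrivial conjugate pair $\{\lambda,\lambda'\}$ with $\lambda\neq\lambda'$, or two distinct hooks, which is the asserted equivalence. (For the ``if'' direction one may bypass the general argument: such a $\Lambda$ sits inside a maximal family $\Lambda_0$, for which Theorem~\ref{thm:marin} gives the isomorphism, and restricting that isomorphism to the $\Lambda$-coordinates gives the isomorphism for $\Lambda$.)

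The content here is really all in Theorem~\ref{thm:marin} and Lemma~\ref{lem:iffimagesiso}, so I do not expect a serious obstacle; the points requiring a little care are (i) the step from ``$\Phi$ is not onto the external direct sum'' to ``$\Phi(\g'_\ell)$ is not isomorphic to it,'' which rests on the dimension count for a proper subalgebra of a direct sum of simple algebras, and (ii) confirming that every nontrivial $\rho_\lambda(\g'_\ell)$ is genuinely simple, i.e. that the non-simple small orthogonal algebras $\so(1),\so(2),\so(4)$ never arise — the only minuscule proper self-conjugate case, $[2,2]$, has $M_\lambda$ antisymmetric and hence gives $\spa$, not $\so$.
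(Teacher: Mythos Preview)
Your argument is correct and rests on the same two inputs the paper uses, Theorem~\ref{thm:marin} and Lemma~\ref{lem:iffimagesiso}. The paper's own treatment is much terser: it observes that Theorem~\ref{thm:marin} gives the isomorphism for the maximal family $\Lambda_0$ and then says only that ``this remains true when we remove partitions from both sides,'' which is exactly your parenthetical shortcut for the ``if'' direction. The paper does not spell out the ``only if'' direction at all; it is left implicit in the statement of Lemma~\ref{lem:iffimagesiso}. Your subdirect-product/kernel analysis makes that direction explicit, and you also attend to points the paper passes over in silence: disposing of trivial partitions, upgrading ``not surjective onto the external direct sum'' to ``not abstractly isomorphic'' via dimension, and verifying that every $\rho_\lambda(\g'_\ell)$ is genuinely simple so that the simple-ideal bookkeeping is valid. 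So the route is the same, but your version is a proof where the paper gives a remark.
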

\noindent Trivial representations are not a concern since $\rho_{[\ell]}(\g'_\ell)=\rho_{[1^\ell]}(\g'_\ell)=0$.
By further application of Lemma~\ref{lem:iffimagesiso}, conjugates and hooks can be added to the sums on both sides
of the isomorphism of Corollary~\ref{cor:marin}, as follows.
\begin{corollary}
\label{cor:marin2}
Given a set $\cP$ of any partitions of $n$, we have
$$
\left(\bigoplus\limits_{\substack{\text{hooks}\\\nu\in\cP}}\rho_\nu\right)(\g_n)\oplus\bigoplus\limits_{\substack{\text{conjugate pairs}\\\nu\neq\nu'\in\cP}}(\rho_\nu\oplus\rho_{\nu'})(\g_n)\oplus\bigoplus\limits_{\substack{\text{rest of}\\\nu\in\cP}}\rho_\nu(\g_n)\cong\left(\bigoplus_{\nu\in\cP}\rho_\nu\right)(\g_n).
$$
\end{corollary}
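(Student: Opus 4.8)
The plan is to reorganize the direct sum $\bigoplus_{\nu\in\cP}\rho_\nu$ according to the three categories of partitions in $\cP$ -- hooks, members of nontrivial conjugate pairs, and ``the rest'' (proper self-conjugate partitions together with trivial partitions) -- and then apply Corollary~\ref{cor:marin} after passing to a reduced index set. First I would discard the trivial partitions: since $\rho_{[n]}(\g_n)=\rho_{[1^n]}(\g_n)=0$ (and likewise the image of $\g'_n$ vanishes), their presence affects neither side of the claimed isomorphism, so I may assume $\cP$ contains only partitions $\nu$ with $\dim(\bnu)>1$. Next, within $\cP$ I would choose a subset $\Lambda\subset\cP$ that retains exactly one hook (if $\cP$ contains any hooks at all), exactly one partition from each conjugate pair present in $\cP$, and every proper self-conjugate partition in $\cP$; this $\Lambda$ is by construction free of nontrivial conjugate pairs and of multiple hooks, so Corollary~\ref{cor:marin} gives $\left(\bigoplus_{\lambda\in\Lambda}\rho_\lambda\right)(\g'_\ell)\cong\bigoplus_{\lambda\in\Lambda}\rho_\lambda(\g'_\ell)$.

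The key step is then to ``add back'' the discarded hooks and conjugate partners on both sides simultaneously, using Lemma~\ref{lem:iffimagesiso}. On the left-hand side, I would argue that adjoining to $\Lambda$ another hook $\nu$, or the conjugate $\nu'$ of a partition $\nu\in\Lambda$, does not change the image of the direct sum of representations: the extra representation $\rho_\nu$ is related to one already present by a Lie algebra isomorphism of the form $\rho_\mu(x)\mapsto\rho_\nu(x)$ (Lemma~\ref{lem:iffimagesiso}), so the graph of that isomorphism inside $\rho_\mu(\g'_\ell)\oplus\rho_\nu(\g'_\ell)$ is exactly the image of $\g'_\ell$ under $\rho_\mu\oplus\rho_\nu$, which projects isomorphically onto the $\rho_\mu$ factor. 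Iterating, $\left(\bigoplus_{\nu\in\cP}\rho_\nu\right)(\g'_\ell)\cong\left(\bigoplus_{\lambda\in\Lambda}\rho_\lambda\right)(\g'_\ell)$. On the right-hand side, I would observe that the stated grouping -- one summand $\bigl(\bigoplus_{\text{hooks }\nu\in\cP}\rho_\nu\bigr)(\g_n)$, one summand $(\rho_\nu\oplus\rho_{\nu'})(\g_n)$ per conjugate pair, and one summand $\rho_\nu(\g_n)$ for each remaining $\nu$ -- is itself isomorphic, again by Lemma~\ref{lem:iffimagesiso} applied factorwise, to $\bigoplus_{\lambda\in\Lambda}\rho_\lambda(\g_n)$: the block of all hooks collapses to the single $\sla(V_{[n-1,1]})$ (via the $\Delta_r$ of Lemma~\ref{lem:marinisohooks}), each conjugate-pair block collapses to one $\sla(\bnu)$ (via $\tilde M_\nu$ of Lemma~\ref{lem:isoMorphism}), and the rest are untouched. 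Chaining these isomorphisms with Corollary~\ref{cor:marin} for $\Lambda$ yields the claim. I should also handle the minor bookkeeping point that Corollary~\ref{cor:marin} and Lemma~\ref{lem:iffimagesiso} are stated for $\g'_\ell$ whereas the corollary statement uses $\g_n$; this is harmless because by Lemma~\ref{lem:reductive} the images of $\g_n$ and $\g'_n$ under any $\bigoplus\rho_\nu$ differ only by the at most one-dimensional center generated by the sum of transpositions, which sits identically in both sides of the isomorphism, so one may either absorb it or simply restate everything for $\g'_n$ and note the central factor splits off the same way on both sides.

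The main obstacle I anticipate is not any single deep fact -- all the ingredients are already available -- but rather making the ``graph of an isomorphism'' argument airtight when several hooks or several conjugate pairs are added at once, i.e. verifying that the relations imposed among the various $\rho_\nu(\g'_\ell)$ by Lemma~\ref{lem:iffimagesiso} are exactly the relations collapsing the corresponding blocks, and no more. Concretely, one must check that when $\cP$ contains hooks $\nu^{(1)},\dots,\nu^{(k)}$ and a proper self-conjugate or proper non-self-conjugate partition $\mu$, there is \emph{no} extra isomorphism $\rho_{\nu^{(i)}}(\g'_\ell)\mapsto\rho_\mu(\g'_\ell)$ forcing further identification -- but this is precisely the ``only if'' direction of Lemma~\ref{lem:iffimagesiso}, so the obstacle reduces to careful invocation rather than new work. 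The cleanest writeup will probably define $\Lambda$ first, invoke Corollary~\ref{cor:marin} once, and then present the left-side and right-side ``reinsertion'' isomorphisms as two short lemmas-in-passing, each an induction on the number of hooks/conjugate pairs added, with Lemma~\ref{lem:iffimagesiso} doing the work at each step.
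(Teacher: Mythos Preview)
Your proposal is correct and matches the paper's approach exactly: the paper derives Corollary~\ref{cor:marin2} in a single sentence, remarking that ``by further application of Lemma~\ref{lem:iffimagesiso}, conjugates and hooks can be added to the sums on both sides of the isomorphism of Corollary~\ref{cor:marin}.'' Your plan of choosing a reduced set $\Lambda$, applying Corollary~\ref{cor:marin}, and reinserting hooks and conjugate partners via Lemma~\ref{lem:iffimagesiso} is precisely this, just spelled out in more detail than the paper bothers with; your additional bookkeeping on trivial partitions and the $\g_n$ versus $\g'_n$ distinction is sound and arguably makes the argument more complete than the original.
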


\section{Relevant intertwiners}

As representations of the symmetric group 
and unitary group are critical to this work, maps that are equivariant 
with respect to such actions are also of primary concern.   We use the terms 
{\it $G$-intertwiner} or {\it $G$-module homomorphism} interchangeably for linear maps 
that are equivariant with respect to action of the group $G$, and sometimes indicate 
such a map with $G$ above the arrow: ``$\overset{G}{\rightarrow}$", or isomorphism symbol: ``$\overset{G}{\cong}$".  The space of such maps 
between vector spaces $V$ and $W$ is itself a vector space, denoted $\Hom_G(V,W)$.  
Typically we are interested in equivariance with respect to a Cartesian product of 
symmetric groups, irreducibly represented by the first argument of $\Hom$, and when this 
equivariance 
is understood the group subscript is suppressed.  The material for this section is 
primarily drawn from Fulton and Harris \cite{fulton1991representation}, Goodman and Wallach \cite{goodman1998representations}, and Harrow
\cite{10.5555/1195346}.

\subsection{Schur--Weyl duality}

A fundamental intertwiner which makes exchange-only encoding possible is called Schur--Weyl duality.  For our purposes, it may be stated as an $S_\ell\times\SU(d)$-module isomorphism between the Hilbert space $(\cmplx^d)^{\otimes\ell}$ and a certain sum of 
irreducible $S_\ell\times\SU(d)$-representations. 
More commonly, and more generally, it is expressed in terms of an arbitrary vector
space $V$ as a $S_\ell\times\GL(V)$-module isomorphism, as follows \cite[Exercise~6.30]{fulton1991representation}:
$$
V^{\otimes\ell}\overset{S_\ell\times\GL(V)}{\cong}\bigoplus_{\lambda\vdash\ell,\lambdap_1\leq d}V_\lambda\otimes U^{(d)}_\lambda.
$$
Here, $U^{(d)}_\lambda$ is the irreducible representation of $GL(V)$, 
or ``Weyl module", canonically 
associated with Young diagram $\lambda$ \cite[Section~6.1]{fulton1991representation}, $\GL(V)$ acts on $V^{\otimes\ell}$ by
simultaneous multiplication on every vector $v_i\in V$ of the product $v_1\otimes\cdots\otimes v_\ell\in V^{\otimes\ell}$, and $S_\ell$ acts on $V^{\otimes\ell}$ by permuting the vectors in the product
$v_1\otimes\cdots\otimes v_\ell$. 

In this regard, it may be clarifying to remark that every finite dimensional 
irreducible representation of $\GL(d,\cmplx)$ restricts to an irreducible representation of 
$\SU(d)$.  
In fact, referring to Fig.~(\ref{fig:GLsubgroups}), the restriction of an irreducible
representation of any of the depicted groups to any of the indicated subgroups
(connected by a line segment in the figure) is also irreducible. 
An irreducible representation of $\GL(d,\cmplx)$ 
restricted to $\SL(d,\cmplx)$ remains irreducible because each element of the former is a 
scalar multiple of an element of the latter \cite[p.222--223]{fulton1991representation}.
By a similar argument, an irreducible representation of $\U(d)$ restricted to $\SU(d)$
likewise remains irreducible.
Furthermore, an irreducible representation of $\GL(d,\cmplx)$ or 
$\SL(d,\cmplx)$ restricted to $\U(d)$ or $\SU(d)$, respectively, remains irreducible by an argument 
due to Weyl known as the ``unitary trick" \cite[Section~9.3]{fulton1991representation}.  In brief, 
in such a case when complexification of the Lie algebra of the subgroup generates the 
original group, it can be deduced that reducibility of a representation restricted to the 
former would imply reducibility of the representation of the latter.

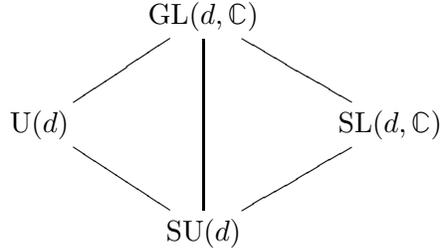
\begin{figure}
\begin{displaymath}
\xymatrix{
& \GL(d,\cmplx)\ar@{-}[dl]\ar@{-}[dd]\ar@{-}[dr] & \\
\U(d)\ar@{-}[dr] & & \SL(d,\cmplx)\ar@{-}[dl] \\
& \SU(d)
}
\end{displaymath}
\caption{Subgroups of the general linear group $\GL(d)$ that share irreducible representations.}  
\label{fig:GLsubgroups}
\end{figure}

So basis vectors of irreducible representations of both $\GL(d,\cmplx)$ and $\SU(d)$
can be taken to be the Weyl tableaux described in Section~2.1.1.  As an example
of the use of both Weyl tableaux and Young tableaux in the context of Schur--Weyl duality,
consider
\setydiagrameq
\begin{equation*}
(\cmplx^2)^{\otimes 3}\overset{S_3\times SU(2)}{\cong}(V_\subydiagrameq{3}\otimes U^{(2)}_\subydiagrameq{3})\oplus(V_\subydiagrameq{2,1}\otimes U^{(2)}_\subydiagrameq{2,1}).
\end{equation*}
\setydiagramtext
where
\setydiagrameq
\begin{eqnarray*}
V_\subydiagrameq{3}\otimes U^{(2)}_\subydiagrameq{3} &\cong&\text{span}\left\{\ytableaushort{1 2 3}\otimes\ytableaushort{1 1 1},\ytableaushort{1 2 3}\otimes\ytableaushort{1 1 2},\ytableaushort{1 2 3}\otimes\ytableaushort{1 2 2},\ytableaushort{1 2 3}\otimes\ytableaushort{2 2 2}\right\},\\
V_\subydiagrameq{2,1}\otimes U^{(2)}_\subydiagrameq{2,1}&\cong&\text{span}\left\{\ytableaushort{1 3,2}\otimes\ytableaushort{1 1, 2},\ytableaushort{1 2,3}\otimes\ytableaushort{1 1,2},\ytableaushort{1 3,2}\otimes\ytableaushort{1 2,2},\ytableaushort{1 2,3}\otimes\ytableaushort{1 2,2}\right\}.
\end{eqnarray*}.
\setydiagramtext
For more details on this particular example, see \cite{PhysRevA.99.042331}.

\subsection{Inducing from a product of symmetric group representations}

Another relevant intertwiner is that between a tensor product of symmetric group
representations, representing the Cartesian product of their symmetric groups,
and the induced representation of the symmetric group containing the factor groups.
For the tensor product of two irreducible representations, this induced representation
is given by the formula \cite[Equation~4.41]{fulton1991representation}
$$
\Ind_{S_{|\lambda|}\times S_{|\mu|}}^{S_{|\lambda|+|\mu|}}\blambda\otimes\bmu=\bigoplus_{\nu\vdash|\lambda|+|\mu|}c_{\lambda\mu}^\nu\bnu,
$$
where the symbol $c_{\lambda\mu}^\nu$ denotes the Littlewood--Richardson coefficient,
which here gives the multiplicity of each irreducible representation $\bnu$ in
the representation induced from $\blambda\otimes\bmu$.  
For the tensor product $\blambda\otimes\bmu$ 
above, and henceforth for products of symmetric group
representations, it is assumed that the group
acting on the resulting module is the Cartesian product of the groups acting on
each factor module.
The above formula may be 
considered a definition of the Littlewood--Richardson coefficient, although it has 
many other equally fundamental applications, some of which will be demonstrated shortly.
We also define it in Section~4.1 in terms of how it is calculated.

As Fulton and Harris remark \cite[p.58]{fulton1991representation}, the above operation of taking a tensor product of 
representations and inducing a new representation can be generalized for a tensor
product of arbitrarily many representations.  Since inducing representations is
transitive \cite[Exercise~3.16]{fulton1991representation}, the above formula immediately generalizes by iteration:
\begin{lemma}
Given a family of partitions $\{\mu^{(i)}\vdash m_i\}_{i=1}^k$, and $n=\sum_{i=1}^km_i$, we have
$$\Ind_{S_{m_1}\times\cdots\times S_{m_k}}^{S_n}{\bmu^{(1)}}\otimes\cdots\otimes {\bmu^{(k)}}=\bigoplus_{\nu\vdash n} c_{\mu^{(1)}\mu^{(2)}}^{\nu^{(2)}}c_{\nu^{(2)}\mu^{(3)}}^{\nu^{(3)}}\cdots c_{\nu^{(k-2)}\mu^{(k-1)}}^{\nu^{(k-1)}}c_{\nu^{(k-1)}\mu^{(k)}}^{\nu}\bnu,$$
where the Einstein summation convention is assumed, so there is an implicit sum
over each repeated index. 
\end{lemma}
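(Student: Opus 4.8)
The plan is to prove the identity by induction on the number $k$ of factors, taking the two-factor Littlewood--Richardson branching rule displayed above as the base case and using transitivity of induction to pass from $k-1$ to $k$ factors.

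First I would dispose of the degenerate cases: for $k=1$ the claim reads $\Ind_{S_{m_1}}^{S_{m_1}}\bmu^{(1)}=\bmu^{(1)}$, the product of Littlewood--Richardson coefficients being empty, and for $k=2$ it is precisely the displayed formula with $(\lambda,\mu)=(\mu^{(1)},\mu^{(2)})$. For the inductive step, I would assume $k\geq 3$ and that the formula holds for every family of fewer than $k$ partitions. Setting $p=\sum_{i=1}^{k-1}m_i$, so that $n=p+m_k$, $S_{m_1}\times\cdots\times S_{m_{k-1}}\subset S_p$, and $S_p\times S_{m_k}=S_{\{1,\ldots,p\}}\times S_{\{p+1,\ldots,n\}}\subset S_n$, I would invoke transitivity of induction \cite[Exercise~3.16]{fulton1991representation} to write
$$\Ind_{S_{m_1}\times\cdots\times S_{m_k}}^{S_n}=\Ind_{S_p\times S_{m_k}}^{S_n}\circ\left(\Ind_{S_{m_1}\times\cdots\times S_{m_{k-1}}}^{S_p}\otimes\Id\right),$$
where $\Id$ acts on the last factor $\bmu^{(k)}$. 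Here one uses the outer tensor product convention: inducing $N\otimes\bmu^{(k)}$ through the intermediate group $S_p\times S_{m_k}$ amounts to inducing the $S_p$-module $N$ while leaving $\bmu^{(k)}$ untouched.

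Next I would feed the induction hypothesis into the inner functor, obtaining
$$\Ind_{S_{m_1}\times\cdots\times S_{m_{k-1}}}^{S_p}\bmu^{(1)}\otimes\cdots\otimes\bmu^{(k-1)}=\bigoplus_{\nu^{(k-1)}\vdash p}c_{\mu^{(1)}\mu^{(2)}}^{\nu^{(2)}}\cdots c_{\nu^{(k-2)}\mu^{(k-1)}}^{\nu^{(k-1)}}\bnu^{(k-1)},$$
with implicit sums over $\nu^{(2)},\ldots,\nu^{(k-2)}$. Then, since induction commutes with finite direct sums (indeed it is exact), I would tensor with $\bmu^{(k)}$ and move $\Ind_{S_p\times S_{m_k}}^{S_n}$ inside the sum, applying the base case to each summand $\bnu^{(k-1)}\otimes\bmu^{(k)}$ to get $\bigoplus_{\nu\vdash n}c_{\nu^{(k-1)}\mu^{(k)}}^{\nu}\bnu$. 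Collecting multiplicities — which simply multiply, all Littlewood--Richardson coefficients being nonnegative integers — and keeping the summation index of the intermediate partition labelled $\nu^{(k-1)}$ yields exactly the stated identity.

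The hard part, such as it is, is purely bookkeeping: checking that the transitivity decomposition respects the outer tensor product convention, that induction distributes over the direct sum produced by the induction hypothesis, and that the index relabelling matches the pattern in the statement (including that the $k=2$ single-coefficient case and the $k=1$ empty-product case are consistent with the conventions $\bigoplus_\nu c_{\mu^{(1)}\mu^{(2)}}^{\nu}\bnu$ and $\bmu^{(1)}$). I do not expect a genuine obstacle: as the statement itself notes, this is a formal iteration of the two-factor branching formula and follows immediately once transitivity of induction is in hand.
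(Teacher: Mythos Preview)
Your proposal is correct and follows exactly the approach the paper indicates: the paper does not give a formal proof but simply remarks that, since induction is transitive \cite[Exercise~3.16]{fulton1991representation}, the two-factor formula ``immediately generalizes by iteration,'' which is precisely the induction on $k$ you have carried out in detail.
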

The above operation of tensor multiplication 
and representation induction is both commutative and associative \cite[p.58]{fulton1991representation}, 
in the sense that the direct sum over $\nu$ of vector spaces $\bnu$, including their
multiplicities, is 
independent of the ordering of the tensor
product and of the particular Littlewood--Richardson coefficients that result.
Therefore we can unambiguously define the coefficient
$$
c_{\mu^{(1)}\cdots\mu^{(k)}}^\nu\equiv c_{\mu^{(1)}\mu^{(2)}}^{\nu^{(2)}}c_{\nu^{(2)}\mu^{(3)}}^{\nu^{(3)}}\cdots c_{\nu^{(k-2)}\mu^{(k-1)}}^{\nu^{(k-1)}}c_{\nu^{(k-1)}\mu^{(k)}}^{\nu}
$$
with the understanding that its value is invariant with respect to the order of its lower indices.

\subsection{Restricting to a product of symmetric group representations}

The Littlewood--Richardson coefficient also gives the multiplicity of the product 
$\blambda\otimes\bmu$ in the restriction of $\bnu$ from $S_{|\lambda|+|\mu|}$ to 
$S_{|\lambda|}\times S_{|\mu|}$ \cite[Exercise~4.43]{fulton1991representation}.  By Schur's lemma this can be 
expressed as \cite[p.16]{fulton1991representation}
$$
\dim(\Hom(\Res^{S_{|\lambda|+|\mu|}}_{S_{|\lambda|}\times S_{|\mu|}}\bnu,\blambda\otimes\bmu))=c_{\lambda\mu}^\nu.
$$
As with induced representations this formula can be generalized for products of
arbitrarily many representations.  In fact the latter may be derived from the former
by Frobenius reciprocity:  
\begin{lemma}
\label{lem:LRres}
The multiplicity of $\bmu^{(1)}\otimes\cdots\otimes\bmu^{(k)}$ in the restriction of
$\bnu$ from $S_n$ to
$S_{m_1}\times\cdots\times S_{m_k}$ is given by $c_{\mu^{(1)}\cdots\mu^{(k)}}^\nu$.
\end{lemma}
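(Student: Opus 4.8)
The plan is to deduce the statement from the iterated induction formula of the preceding lemma by Frobenius reciprocity, exactly as anticipated in the remark above. Everything happens over $\cmplx$, where the group algebras in sight are semisimple, so every module splits as a direct sum of irreducibles and the multiplicity of an irreducible $W$ as a summand of a module $V$ equals both $\dim\Hom(W,V)$ and $\dim\Hom(V,W)$ (using $\End(W)\cong\cmplx$). This is the only structural input beyond the two lemmas already in hand.

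First I would fix $G=S_n$, the subgroup $H=\prod_{i=1}^{k}S_{m_i}$ under the identification of Section~2.1.1, the $H$-module $M=\bmu^{(1)}\otimes\cdots\otimes\bmu^{(k)}$ (the outer tensor product), and the $G$-module $\bnu$. Next I would note that $M$ is irreducible as an $H$-module, being an outer tensor product of the irreducibles $\bmu^{(i)}$ over the distinct factor groups $S_{m_i}$; hence $\End_H(M)\cong\cmplx$, and the multiplicity of $M$ in $\Res^{S_n}_H\bnu$ is well defined and equals $\dim\Hom_H(\Res^{S_n}_H\bnu,M)$. Then I would invoke Frobenius reciprocity for finite groups in the form
$$\Hom_{S_n}(\Ind_H^{S_n}M,\bnu)\cong\Hom_H(M,\Res^{S_n}_H\bnu).$$
Taking dimensions: the right-hand side is the multiplicity of $\bmu^{(1)}\otimes\cdots\otimes\bmu^{(k)}$ in $\Res^{S_n}_H\bnu$, while the left-hand side is the multiplicity of $\bnu$ in $\Ind_H^{S_n}M$, which by the preceding lemma is precisely $c_{\mu^{(1)}\cdots\mu^{(k)}}^\nu$ (recall this symbol was defined as the implicitly summed product of two-index Littlewood--Richardson coefficients). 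Comparing the two sides yields the claim.

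There is essentially no real obstacle; once the setup is in place the proof is a one-line application of an adjunction. The only points deserving a sentence of care are (a) the irreducibility of the outer tensor product $\bmu^{(1)}\otimes\cdots\otimes\bmu^{(k)}$ over $\prod_i S_{m_i}$, which is what makes ``multiplicity'' meaningful and computable by a $\Hom$-dimension, and (b) keeping the two adjunctions straight — for finite groups induction is simultaneously left and right adjoint to restriction, so either form of Frobenius reciprocity works, but one must be consistent about whether a given $\Hom$-space is taken over $S_n$ or over $\prod_i S_{m_i}$. An alternative, induction-free route would iterate the two-factor restriction identity together with transitivity of restriction, peeling off one factor $S_{m_i}$ at a time and applying complete reducibility at each stage; this reproduces the contracted coefficient $c_{\mu^{(1)}\cdots\mu^{(k)}}^\nu$ by exactly the bookkeeping that defined it, but it is somewhat more laborious than the argument above.
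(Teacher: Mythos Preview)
Your proof is correct and follows essentially the same approach as the paper: apply Frobenius reciprocity to relate the restriction multiplicity to the induction multiplicity, then read off the latter from the preceding iterated-induction lemma. The paper uses the other adjunction form (first flipping $\Hom_H(M,\bnu)\cong\Hom_H(\bnu,M)$ by semisimplicity, then $\Hom_H(\bnu,M)\cong\Hom_{S_n}(\bnu,\Ind M)$), but this is a cosmetic difference and your version is if anything slightly more direct.
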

\begin{proof}
Using Frobenius reciprocity in the form $\Hom_G(V,\Ind_H^G W)\cong\Hom_H(V,W)$, we have
\begin{eqnarray*}
\Hom_{S_{m_1}\times\cdots\times S_{m_k}}(\bmu^{(1)}\otimes\cdots\otimes\bmu^{(k)},\bnu) &\cong& \Hom_{S_{m_1}\times\cdots S_{m_k}}(\bnu,\bmu^{(1)}\otimes\cdots\otimes\bmu^{(k)}),\\
&\cong& \Hom_{S_n}(\bnu,\Ind_{S_{m_1}\times\cdots\times S_{m_k}}^{S_n}\bmu^{(1)}\otimes\cdots\otimes\bmu^{(k)}),\\
&\cong& \Hom_{S_n}\left(\bnu,\bigoplus_\nu c_{\mu^{(1)}\cdots\mu^{(k)}}^\nu\bnu\right).
\end{eqnarray*}
Hence we have
$
\dim(\Hom_{S_{m_1}\times\cdots\times S_{m_k}}(\bmu^{(1)}\otimes\cdots\otimes\bmu^{(k)},\bnu))=c_{\mu^{(1)}\cdots\mu^{(k)}}^\nu.
$
\end{proof}

\subsection{Isotypical decomposition} 

Lemma~\ref{lem:LRres} implies that the maximum number of pairwise orthogonal subspaces of
$\bnu$ that are $S_{m_1}\times\cdots\times S_{m_N}$-module isomorphic to 
$\bigboxtimes_{i=1}^N\bmu^{(i)}$ is $c_{\mu^{(1)}\cdots\mu^{(N)}}^\nu$.
We call the direct sum
of a maximal number of orthogonal such subspaces a $\bigboxtimes_{i=1}^N\bmu^{(i)}$-isotypical subspace,
denoted by $V_\mus^\nu$.  
Alternatively, $V_\mus^\nu$ may be defined as the span of all subspaces of $\bnu$
that are $S_{m_1}\times\cdots\times S_{m_N}$-module isomorphic to
$\bigboxtimes_{i=1}^N\bmu^{(i)}$, which by Lemma~\ref{lem:LRres} must have a total dimension
of $\prod_{i=1}^N\dim(\bmu^{(i)})c_{\mu^{(1)}\cdots\mu^{(N)}}^\nu$.
Of course, many such isotypical subspaces 
typically comprise $\bnu$.  Since by Maschke's Theorem the restriction of $\bnu$ to $S_{m_1}\times\cdots\times S_{m_N}$ can, like any finite group representation, be decomposed into irreducible representations,
we have $\bnu=\bigoplus_{\mu^{(i)}\vdash m_i}V_\mus^\nu$.
The orthogonality of these isotypical subspaces comes as a consequence of Schur's lemma, and can be 
stated more precisely as follows \cite[Lemma~I.5.3]{panyushev}:
\begin{lemma}
\label{lem:isotypicalorthogonality}
Given isotypical subspaces $V_\lambdas^\nu$ and $V_\mus^\nu$ of $\bnu$,
if $\bigotimes_{i=1}^N\blambda^{(i)}$ and $\bigotimes_{i=1}^N\bmu^{(i)}$
are inequivalent representations of $S_{m_1}\times\cdots\times S_{m_N}$
then $V_\lambdas^\nu$ and $V_\mus^\nu$ are orthogonal with respect to any
$S_{m_1}\times\cdots\times S_{m_N}$-invariant inner product on $\bnu$.
\end{lemma}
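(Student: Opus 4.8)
The plan is to run the standard argument that isotypic components attached to inequivalent irreducibles are mutually orthogonal under any invariant inner product, reducing first to single irreducible submodules and then applying Schur's lemma. Write $G=S_{m_1}\times\cdots\times S_{m_N}$ and set $W_\lambda=\bigotimes_{i=1}^N\blambda^{(i)}$ and $W_\mu=\bigotimes_{i=1}^N\bmu^{(i)}$. Since the outer tensor product of irreducible modules of the factor groups is an irreducible module for the product group, $W_\lambda$ and $W_\mu$ are irreducible $G$-modules, and by hypothesis they are inequivalent. Because $V_\lambdas^\nu$ is, by its stated characterization, the span of all $G$-submodules of $\bnu$ isomorphic to $W_\lambda$, and $V_\mus^\nu$ the span of all $G$-submodules isomorphic to $W_\mu$, linearity of $\langle\cdot,\cdot\rangle$ in each slot reduces the claim to showing $\langle a,b\rangle=0$ whenever $a$ lies in a submodule $A\subseteq\bnu$ with $A\cong W_\lambda$ and $b$ lies in a submodule $B\subseteq\bnu$ with $B\cong W_\mu$.

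To prove this, first I would note that the orthogonal complement $A^\perp$ of $A$ in $\bnu$ is itself a $G$-submodule; this is exactly where $G$-invariance of the inner product enters, since for $x\in A^\perp$, $a'\in A$, and $g\in G$ one has $\langle g.x,a'\rangle=\langle x,g^{-1}.a'\rangle=0$ because $g^{-1}.a'\in A$. As $\langle\cdot,\cdot\rangle$ is a genuine (positive-definite) inner product on the finite-dimensional space $\bnu$, we get $\bnu=A\oplus A^\perp$ as $G$-modules, so the orthogonal projection $P_A:\bnu\to A$ is a $G$-module homomorphism.

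The finish is then a direct application of Schur's lemma: the restriction $P_A|_B:B\to A$ is an element of $\Hom_G(W_\mu,W_\lambda)$, which is zero because $W_\mu$ and $W_\lambda$ are inequivalent irreducibles. Hence $B\subseteq\ker P_A=A^\perp$, that is $\langle a,b\rangle=0$ for all $a\in A$ and $b\in B$. Spreading this over the spanning submodules of $V_\lambdas^\nu$ and $V_\mus^\nu$ by linearity yields $V_\lambdas^\nu\perp V_\mus^\nu$, as desired.

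I do not expect a genuine obstacle, as this is a textbook fact. The two points deserving care are: (i) recording that an outer tensor product of irreducible representations of $S_{m_1},\ldots,S_{m_N}$ is again irreducible over the product group, so that Schur's lemma is legitimately applicable to $W_\lambda$ and $W_\mu$; and (ii) being explicit that it is precisely the $G$-invariance of $\langle\cdot,\cdot\rangle$ that makes $A^\perp$ a submodule, hence $P_A$ an intertwiner. One could instead argue via the functional $b\mapsto\langle a,b\rangle$ on $B$ and the map $A\to B$ it induces, but the orthogonal-projection formulation is cleaner and sidesteps any bookkeeping about conjugate-linearity of the inner product.
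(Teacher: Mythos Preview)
Your proof is correct and is precisely the standard Schur's-lemma argument. The paper does not actually prove this lemma; it only states it with a citation to an external source and the remark that it ``comes as a consequence of Schur's lemma.'' Your argument supplies exactly those details, so there is nothing to compare.
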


Note that there exists a canonical map $\chi_\mus^\nu$ onto each isotypical subspace $V_\mus^\nu$ from $\bigboxtimes_{i=1}^N\bmu^{(i)}\otimes\Hom(\bigboxtimes_{i=1}^N\bmu^{(i)},\bnu)$, given by 
$\chi_\mus^\nu(\psi\otimes\iota)=\iota(\psi)$.
Considering $\bigboxtimes_{i=1}^N\bmu^{(i)}\otimes\Hom(\bigboxtimes_{i=1}^N\bmu^{(i)},\bnu)$ as an 
$S_{m_1}\times\cdots\times S_{m_N}$-module such that
$s.(\psi\otimes\iota)=(s.\psi)\otimes\iota$, it follows that \cite[Proposition~3.1.6]{goodman1998representations}
\begin{lemma}
\label{lem:canonicaliso}
The linear map 
\begin{eqnarray*}
\chi_\mus^\nu:\bigboxtimes_{i=1}^N\bmu^{(i)}\otimes\Hom(\bigboxtimes_{i=1}^N\bmu^{(i)},\bnu) &\rightarrow& V_\mus^\nu\\
\psi\otimes\iota &\mapsto& \iota(\psi)
\end{eqnarray*}
gives an $S_{m_1}\times\cdots\times S_{m_N}$-module isomorphism. 
\end{lemma}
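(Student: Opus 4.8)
The plan is to verify the three properties that make $\chi_\mus^\nu$ an isomorphism of modules over $G\equiv S_{m_1}\times\cdots\times S_{m_N}$: equivariance, surjectivity onto $V_\mus^\nu$, and injectivity. Throughout, abbreviate $W\equiv\bigboxtimes_{i=1}^N\bmu^{(i)}$, so $\chi_\mus^\nu:W\otimes\Hom(W,\bnu)\to V_\mus^\nu$ sends $\psi\otimes\iota\mapsto\iota(\psi)$. Equivariance is a one-line check: since $G$ acts on the domain only through the first tensor factor and each $\iota$ is a $G$-intertwiner,
\begin{equation*}
\chi_\mus^\nu\big(s.(\psi\otimes\iota)\big)=\chi_\mus^\nu\big((s.\psi)\otimes\iota\big)=\iota(s.\psi)=s.\iota(\psi)=s.\chi_\mus^\nu(\psi\otimes\iota)
\end{equation*}
for every $s\in G$.

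Next I would pin down the image using Schur's lemma. Since $W$ is irreducible, any nonzero $\iota\in\Hom(W,\bnu)$ has zero kernel, so $\iota(W)$ is a submodule of $\bnu$ isomorphic to $W$; hence $\im\chi_\mus^\nu$ is contained in $V_\mus^\nu$, the span of all submodules of $\bnu$ isomorphic to $W$. Conversely, any such submodule $U\cong W$ equals $\iota(W)$ for $\iota$ a $G$-isomorphism from $W$ onto $U$ followed by the inclusion $U\hookrightarrow\bnu$, so $U\subseteq\im\chi_\mus^\nu$; as $\im\chi_\mus^\nu$ is a linear subspace it then contains the span $V_\mus^\nu$. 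Thus $\im\chi_\mus^\nu=V_\mus^\nu$ exactly, and $\chi_\mus^\nu$ is onto.

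The one step with real content is injectivity, which I would obtain by a dimension count built on Lemma~\ref{lem:LRres}. That lemma gives $\dim\Hom(W,\bnu)=c_\mus^\nu$, while $\dim W=\prod_{i=1}^N\dim(\bmu^{(i)})$, so the domain $W\otimes\Hom(W,\bnu)$ has dimension $\prod_{i=1}^N\dim(\bmu^{(i)})\cdot c_\mus^\nu$; this is exactly $\dim V_\mus^\nu$ as recorded when the isotypical subspace was introduced (again via Lemma~\ref{lem:LRres}). A surjective linear map between finite-dimensional spaces of equal dimension is bijective, so $\chi_\mus^\nu$ is injective and hence the claimed $G$-module isomorphism.

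The main thing to watch is the bookkeeping in the image step — checking that $\im\chi_\mus^\nu$ equals $V_\mus^\nu$ on the nose rather than merely sitting inside it, and that the two dimension formulas being equated are both legitimate instances of Lemma~\ref{lem:LRres} for the same family. If one would rather not route injectivity through the Littlewood--Richardson count, it can be shown directly: a nonzero element of $\ker\chi_\mus^\nu$ would, by Maschke's theorem for the finite group $G$, generate a submodule of $W\otimes\Hom(W,\bnu)$ isomorphic to $W$; but by Schur's lemma every $G$-embedding of $W$ into $W\otimes\Hom(W,\bnu)$ has the form $\psi\mapsto\psi\otimes\kappa$ for some nonzero $\kappa\in\Hom(W,\bnu)$, and $\chi_\mus^\nu$ carries such a map to $\kappa(\psi)$, which vanishes for all $\psi$ only if $\kappa=0$ — a contradiction. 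Either route completes the proof.
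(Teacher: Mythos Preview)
Your proof is correct. The paper does not actually supply its own proof of this lemma; it simply cites \cite[Proposition~3.1.6]{goodman1998representations} and moves on. Your argument fills in exactly the standard details that reference would provide: equivariance is immediate from the definition of the $G$-action on the domain, surjectivity follows because every copy of $W$ inside $\bnu$ arises as the image of some intertwiner, and injectivity follows from the dimension count $\dim W\cdot c_\mus^\nu=\dim V_\mus^\nu$, which the paper records in the paragraph immediately preceding the lemma (so there is no circularity).

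Your alternative injectivity argument via Schur's lemma is also sound: since $G$ acts trivially on the $\Hom$ factor, $W\otimes\Hom(W,\bnu)$ is $G$-isomorphic to a direct sum of $c_\mus^\nu$ copies of $W$, so $\Hom_G(W,W\otimes\Hom(W,\bnu))\cong\Hom(W,\bnu)$ and every irreducible submodule of the kernel would have to be of the form $W\otimes\kappa$ for some nonzero $\kappa$, on which $\chi_\mus^\nu$ is visibly injective. Either route is fine; the dimension count is shorter given what the paper has already established.
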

\noindent For the purpose of studying operators on the irreducible components of $\bnu$,
it is often convenient to use the above isomorphism to consider the
corresponding operators on
$\bigboxtimes_{i=1}^N\bmu^{(i)}\otimes\Hom(\bigboxtimes_{i=1}^N\bmu^{(i)},\bnu)$.

It is useful to note that $\chi_\mus^\nu$ is an orthogonal transformation with respect 
to $S_{m_1}\times\cdots\times S_{m_N}$-invariant inner products naturally defined on its domain and range, as 
follows.  We already defined the Euclidean inner product applicable to
$\bigotimes_{i=1}^N\bmu^{(i)}$ and $\bnu$ in terms of tableau bases.
An inner product is then naturally induced on $\Hom(\bigotimes_{i=1}^N\bmu^{(i)},\bnu)$, up to scaling, by the condition that intertwiners are
orthogonal if and only if their images are orthogonal.
The scaling is then fixed by the further condition that 
$\langle\iota,\iota\rangle=1$ if and only if 
$\iota:\bigotimes_{i=1}^N\bmu^{(i)}\rightarrow\bnu$  is an isometry.
An inner product is then defined on $\bigotimes_{i=1}^N\bmu^{(i)}\otimes\Hom(\bigotimes_{i=1}^N\bmu^{(i)},\bnu)$ by factoring: 
$\langle\psi\otimes\iota,\psi'\otimes\iota'\rangle=\langle\psi,\psi'\rangle\langle\iota,\iota'\rangle$.
It follows that $\langle\psi\otimes\iota,\psi'\otimes\iota'\rangle=\langle\iota(\psi),\iota'(\psi')\rangle$, 
and then immediately that:
\begin{lemma}
\label{lem:orthogonalchi}
The canonical isomorphism 
$$
\chi_\mus^\nu:\bigotimes_{i=1}^N\bmu^{(i)}\otimes\Hom(\bigotimes_{i=1}^N\bmu^{(i)},\bnu)\rightarrow V_\mus^\nu
$$
is an isometry with respect to the inner product defined above on
$\bigotimes_{i=1}^N\bmu^{(i)}\otimes\Hom(\bigotimes_{i=1}^N\bmu^{(i)},\bnu)$
and the Euclidean inner product on the tableau basis of $\bnu$.
\end{lemma}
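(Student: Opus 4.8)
The plan is to reduce the lemma to a single identity on simple tensors and then chase scalars. Concretely, I will prove
\[
\langle\psi\otimes\iota,\ \psi'\otimes\iota'\rangle \;=\; \langle\iota(\psi),\ \iota'(\psi')\rangle
\]
for all $\psi,\psi'\in\bigotimes_{i=1}^N\bmu^{(i)}$ and all $\iota,\iota'\in\Hom(\bigotimes_{i=1}^N\bmu^{(i)},\bnu)$, where the left side uses the product inner product on the domain and the right side the Euclidean (tableau) inner product on $\bnu$. Granting this, the lemma follows immediately: by Lemma~\ref{lem:canonicaliso} the map $\chi_\mus^\nu$ is a linear bijection onto $V_\mus^\nu$, and since every vector in its domain is a finite sum of simple tensors, sesquilinearity propagates the displayed identity to $\langle\chi_\mus^\nu(u),\chi_\mus^\nu(v)\rangle=\langle u,v\rangle$ for all $u,v$; an inner-product-preserving linear bijection is an isometry.

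First I would build a convenient basis of $\Hom(\bigotimes_{i=1}^N\bmu^{(i)},\bnu)$. By the very definition of the isotypic subspace (and Maschke's theorem), $V_\mus^\nu$ is an orthogonal direct sum of $c_\mus^\nu$ submodules, each $S_{m_1}\times\cdots\times S_{m_N}$-isomorphic to the irreducible module $\bigotimes_{i=1}^N\bmu^{(i)}$; choose module injections $\iota_1,\dots,\iota_k$ (with $k=c_\mus^\nu$) realizing these as their images. For each $j$, the pullback $\langle\iota_j(\,\cdot\,),\iota_j(\,\cdot\,)\rangle$ is an invariant inner product on the irreducible module $\bigotimes_{i=1}^N\bmu^{(i)}$, hence a positive scalar multiple of its tableau inner product by Schur's lemma; after rescaling I may assume each $\iota_j$ is an isometry. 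The two stipulations defining the inner product on $\Hom$ — that intertwiners are orthogonal exactly when their images are, normalized so that isometries have unit norm — then say precisely that $\{\iota_1,\dots,\iota_k\}$ is orthonormal. (This also pins the $\Hom$ inner product down as an honest positive-definite Hermitian form: it is the one for which some, equivalently any, such family is orthonormal, the independence of the choice being a one-line application of Schur's lemma to $(\iota')^\dagger\iota$, which must be scalar.)

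Then the computation is short. Write $\iota=\sum_j a_j\iota_j$ and $\iota'=\sum_l b_l\iota_l$. The left side factors as $\langle\psi,\psi'\rangle\langle\iota,\iota'\rangle=\langle\psi,\psi'\rangle\sum_j\overline{a_j}b_j$. On the right, $\langle\iota(\psi),\iota'(\psi')\rangle=\sum_{j,l}\overline{a_j}b_l\langle\iota_j(\psi),\iota_l(\psi')\rangle$; the off-diagonal terms vanish because $\iota_j(\psi)$ and $\iota_l(\psi')$ lie in the mutually orthogonal images we chose, and for $j=l$ we get $\langle\iota_j(\psi),\iota_j(\psi')\rangle=\langle\psi,\psi'\rangle$ because $\iota_j$ is an isometry. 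Hence the right side is also $\langle\psi,\psi'\rangle\sum_j\overline{a_j}b_j$, which is the displayed identity.

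I expect the only real friction to be the preliminary bookkeeping: turning the informal ``up to scaling, then fixed by the isometry condition'' description of the inner product on $\Hom$ into a genuine inner product, and producing the orthonormal isometry basis $\{\iota_j\}$ compatibly with it. Everything downstream is forced — irreducibility of $\bigotimes_{i=1}^N\bmu^{(i)}$ lets Schur's lemma do the heavy lifting (uniqueness of the invariant inner product up to positive scale, scalarity of intertwiner compositions), and the rest is the bilinear accounting above.
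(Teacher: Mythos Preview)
Your proposal is correct and follows the same route as the paper: establish the identity $\langle\psi\otimes\iota,\psi'\otimes\iota'\rangle=\langle\iota(\psi),\iota'(\psi')\rangle$ on simple tensors and conclude. The paper asserts this identity in a single line without justification, whereas you supply the details (the orthonormal isometry basis of $\Hom$, Schur's lemma for the off-diagonal vanishing and the scalar normalization) that the paper leaves implicit.
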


\subsection{Isomorphism between a sum and product of unitary group representations}

Combining the isotypical decomposition of symmetric group representations with 
Schur--Weyl duality leads to an important intertwiner with respect to the special
unitary group.  Following Harrow \cite[Claim~7.1]{10.5555/1195346}, by Schur--Weyl duality we have
\begin{eqnarray*}
(\cmplx^d)^{\otimes(\ell+m)} &=& (\cmplx^d)^{\otimes\ell}\otimes(\cmplx^d)^{\otimes m}\\
 &\overset{S_\ell\times S_m\times\SU(d)}{\cong}&\bigoplus_{\substack{\lambda\vdash\ell,\mu\vdash m\\\lambdap_1\leq d,\mup_1\leq d}}(V_\lambda\otimes U^{(d)}_\lambda)\otimes(V_\mu\otimes U^{(d)}_\mu).
\end{eqnarray*}
By Schur--Weyl duality and Lemma~\ref{lem:LRres} we also have
\begin{eqnarray*}
(\cmplx^d)^{\otimes(\ell+m)} &\overset{S_{\ell+m}\times\SU(d)}{\cong}& \bigoplus_{\substack{\nu\vdash \ell+m\\\nup_1\leq d}}(V_\nu\otimes U^{(d)}_\nu)\\
&\overset{S_\ell\times S_m\times\SU(d)}{\cong}& \bigoplus_{\substack{\lambda\vdash\ell,\mu\vdash m,\nu\vdash \ell+m\\\lambdap_1\leq d,\mup_1\leq d,\nup_1\leq d}}c_{\lambda\mu}^\nu V_\lambda\otimes V_\mu\otimes U^{(d)}_\nu.
\end{eqnarray*}
Consistency then demands
$$
\bigoplus_{\substack{\lambda\vdash\ell,\mu\vdash m\\\lambdap_1\leq d,\mup_1\leq d}}V_\lambda\otimes V_\mu\otimes U^{(d)}_\lambda\otimes U^{(d)}_\mu\overset{S_\ell\times S_m\times\SU(d)}{\cong}\bigoplus_{\substack{\lambda\vdash\ell,\mu\vdash m\\\lambdap_1\leq d,\mup_1\leq d}} V_\lambda\otimes V_\mu\otimes \bigoplus_{\substack{\nu\vdash \ell+m\\\nup_1\leq d}}c_{\lambda\mu}^\nu U^{(d)}_\nu
$$
which is satisfied if
$$
U^{(d)}_\lambda\otimes U^{(d)}_\mu\overset{\SU(d)}{\cong}\bigoplus_{\substack{\nu\vdash\ell+m\\\nup_1\leq d}} c_{\lambda\mu}^\nu U^{(d)}_\nu.
$$
See \cite[Section~6.1]{fulton1991representation} or \cite[Section~9.2.2]{goodman1998representations} for more discussion and proofs of the above isomorphism.
As with the symmetric group representations this formula can be iterated for products
of arbitrarily many representations, yielding:
\begin{lemma}
\label{lem:LRunitary}
Given $n=\sum_{i=1}^N|\mu^{(i)}|$, we have
$
\bigotimes_{i=1}^N U^{(d)}_{\mu^{(i)}}\overset{\SU(d)}{\cong}\bigoplus_{\nu\vdash n}c_{\mu^{(1)}\cdots\mu^{(N)}}^\nu U^{(d)}_\nu.
$
\end{lemma}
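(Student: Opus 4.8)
The plan is a straightforward induction on $N$, bootstrapping from the two--factor isomorphism $U^{(d)}_\lambda\otimes U^{(d)}_\mu\overset{\SU(d)}{\cong}\bigoplus_{\nu}c_{\lambda\mu}^\nu U^{(d)}_\nu$ established immediately above, together with the recursive definition of the iterated coefficient $c_{\mu^{(1)}\cdots\mu^{(N)}}^\nu$. The base cases $N=1$ (trivial) and $N=2$ (the displayed isomorphism) are immediate, so it suffices to handle the inductive step.

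Assume the lemma holds for any $N-1$ partitions, and set $n'=\sum_{i=1}^{N-1}|\mu^{(i)}|$, so that $n=n'+|\mu^{(N)}|$. First I would rewrite $\bigotimes_{i=1}^N U^{(d)}_{\mu^{(i)}}$ as $\bigl(\bigotimes_{i=1}^{N-1}U^{(d)}_{\mu^{(i)}}\bigr)\otimes U^{(d)}_{\mu^{(N)}}$ and apply the inductive hypothesis to the left factor, obtaining $\bigl(\bigoplus_{\nu^{(N-1)}}c_{\mu^{(1)}\cdots\mu^{(N-1)}}^{\nu^{(N-1)}}U^{(d)}_{\nu^{(N-1)}}\bigr)\otimes U^{(d)}_{\mu^{(N)}}$. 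Since the inner tensor product of $\SU(d)$--modules distributes over direct sums, this becomes $\bigoplus_{\nu^{(N-1)}}c_{\mu^{(1)}\cdots\mu^{(N-1)}}^{\nu^{(N-1)}}\bigl(U^{(d)}_{\nu^{(N-1)}}\otimes U^{(d)}_{\mu^{(N)}}\bigr)$, and applying the two--factor isomorphism to each summand gives $\bigoplus_{\nu^{(N-1)}}\bigoplus_{\nu\vdash n}c_{\mu^{(1)}\cdots\mu^{(N-1)}}^{\nu^{(N-1)}}c_{\nu^{(N-1)}\mu^{(N)}}^{\nu}U^{(d)}_\nu$. Interchanging the two direct sums, the coefficient of $U^{(d)}_\nu$ is $\sum_{\nu^{(N-1)}}c_{\mu^{(1)}\cdots\mu^{(N-1)}}^{\nu^{(N-1)}}c_{\nu^{(N-1)}\mu^{(N)}}^{\nu}$, which is exactly $c_{\mu^{(1)}\cdots\mu^{(N)}}^\nu$ by the recursive definition of the iterated coefficient (equivalently, by the commutativity and associativity of the iterated construction, the result is independent of how the factors are grouped). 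This closes the induction.

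The one point requiring genuine care --- and the nearest thing to an obstacle here --- is the bookkeeping forced by the convention that $U^{(d)}_\sigma=0$ whenever the partition $\sigma$ has more than $d$ rows, set against the fact that $c_{\mu^{(1)}\cdots\mu^{(N)}}^\nu$ is defined as a sum over intermediate partitions $\nu^{(2)},\dots,\nu^{(N-1)}$ that could a priori have more than $d$ rows, so that naively a term of the iteration might pair a genuine module with a zero module. This is harmless: a nonzero Littlewood--Richardson coefficient $c_{\kappa\tau}^{\sigma}$ forces the containment $\kappa\subseteq\sigma$ of Young diagrams, so every nonvanishing term of $c_{\mu^{(1)}\cdots\mu^{(N)}}^\nu$ comes with a chain $\nu^{(2)}\subseteq\nu^{(3)}\subseteq\cdots\subseteq\nu$, and hence no intermediate partition has more rows than $\nu$. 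Consequently, when $\nup_1\leq d$ all the intermediate modules $U^{(d)}_{\nu^{(i)}}$ appearing above are genuine nonzero $\SU(d)$--modules and the two--factor rule applies verbatim, while when $\nup_1>d$ both sides of the asserted isomorphism contribute zero in the $U^{(d)}_\nu$--summand; either way the manipulation is valid. Apart from this, the argument is entirely routine. (An alternative, which I would keep in reserve, is to imitate the $N=2$ derivation directly: compare the two ways of decomposing $(\cmplx^d)^{\otimes n}$ under $S_{m_1}\times\cdots\times S_{m_N}\times\SU(d)$ using generalized Schur--Weyl duality and Lemma~\ref{lem:LRres}, and read off the claimed $\SU(d)$--isomorphism from consistency.)
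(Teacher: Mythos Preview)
Your proof is correct and follows exactly the approach the paper indicates: the paper simply remarks that the two--factor isomorphism ``can be iterated for products of arbitrarily many representations'' and states the lemma without further argument. Your induction spells this out in detail, and your treatment of the $d$-row bookkeeping (via the containment $\nu^{(i)}\subseteq\nu$ forced by nonvanishing Littlewood--Richardson coefficients) is a genuine addition beyond what the paper provides.
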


For the purpose of describing physical implementations of exchange-only
quantum computers, we are interested in an $S_{m_1}\times\cdots\times S_{m_N}\times \SU(d)$-embedding of the subspace
$\bigotimes_{i=1}^N(V_{\mu^{(i)}}\otimes U^{(d)}_{\mu^{(i)}})$ of the Schur--Weyl decomposition of the product of
Hilbert spaces $\prod_{i=1}^N(\cmplx^d)^{\otimes m_i}$ into the 
Schur--Weyl decomposition $\bigoplus_{\nu\vdash n}V_\nu\otimes U^{(d)}_\nu$ of the Hilbert
space $(\cmplx^d)^{\otimes n}$, 
where $\mu^{(i)}\vdash m_i$ and $n=\sum_{i=1}^Nm_i$,
by way of the intertwiners: 
$$
\bigotimes_{i=1}^N(V_{\mu^{(i)}}\otimes U^{(d)}_{\mu^{(i)}})\overset{S_{m_1}\times\cdots\times S_{m_N}\times \SU(d)}{\cong}\bigoplus_{\substack{\nu\vdash n\\\nup_1\leq d}}c_{\mu^{(1)}\cdots\mu^{(N)}}^{\nu}\bigotimes_{i=1}^NV_{\mu^{(i)}}\otimes U^{(d)}_\nu\overset{S_{m_1}\times\cdots\times S_{m_N}\times \SU(d)}{\rightarrow}\bigoplus_{\substack{\nu\vdash n\\\nup_1\leq d}}V_\nu\otimes U^{(d)}_\nu,
$$
where the first is given by Lemma~\ref{lem:LRunitary} and the second is given term by term by Lemma~\ref{lem:LRres}. 
We return to this mapping in the next chapter. 

\section{Littlewood--Richardson rules}
Having introduced the Littlewood--Richardson coefficients by their applications to
representation theory, we now explain how to calculate them, and give some useful
relationships with the partitions upon which their values depend.
It should be noted that the applications of Littlewood--Richardson 
coefficients are many and varied \cite{2020arXiv200404995B}.
The century-spanning  literature pertinent to calculating or constraining their values
is also quite rich.
Here we only summarize the key results needed for our present purposes, 
but some references given below may provide a glimpse of this greater mathematical vista.

\subsection{Definitions}
The approach to calculating the Littlewood--Richardson coefficients given below is
consistent with that originally introduced by Littlewood and Richardson \cite{LR}
and elaborated upon in \cite{fulton1991representation} and elsewhere.
There are many alternative methods, which may have relative advantages.
But the traditional approach given below is sufficient for our purposes. 

To define the Littlewood--Richardson coefficients it is convenient to introduce the Littlewood--Richardson tableau.  First, we require some more basic concepts:
\begin{Definition}
Given Young diagrams $\nu$ and $\lambda$
such that $\nu_i\geq\lambda_i$ for all $i\in\nats$,
the {\it skew diagram} denoted by
$\nu\backslash\lambda$ is obtained by the removal of the
first $\nu_i$ cells from the $i$th row of $\lambda$, for all $i$, with the remaining
cells of $\nu$ left in place.
\end{Definition}
\begin{Definition}
A {\it Littlewood--Richardson sequence} is a sequence of natural numbers,
with repetition, such that within any initial truncation of the sequence a given number
$j$ appears at least as often as any number greater than $j$. 
\end{Definition}
\begin{Definition}
\label{def:LRtableau}
A {\it Littlewood--Richardson tableau} is a skew, semistandard tableau
such that when read from right to
left and top to bottom the entries form a Littlewood--Richardson sequence.
\end{Definition}

The following examples clarify what a Littlewood--Richardson tableau is and is not.
\setydiagrameq
\begin{exmp}
The skew tableau $\ytableaushort{\none\none\none 1,\none 112,2}$ is a Littlewood--Richardson tableau.  It is in fact the only Littlewood--Richardson tableau of this shape with three 1's and two 2's.
\end{exmp}
\begin{exmp}
The skew tableau $\ytableaushort{\none \none\none 1,\none 122,1}$ is not a Littlewood--Richardson tableau because $1,2,2,1,1$ is not a Littlewood--Richardson sequence. 
\end{exmp}
\begin{exmp}
The skew tableau $\ytableaushort{\none \none\none 1,\none 211,2}$ is not a Littlewood--Richardson tableau because it is not semistandard.
\end{exmp}
\setydiagramtext

Before proceeding to the Littlewood--Richardson coefficients we
require another definition:
\begin{Definition}
The ${\it weight}$ of a semistandard tableau is a partition, the $k$th part
of which equals the number of times $k$ appears in the tableau.
\end{Definition}
\noindent Finally, the following definition encapsulates the Littlewood--Richardson rules.
\begin{Definition}
\label{def:LRcoefficient}
Given partitions $\lambda$, $\mu$, and $\nu$,
a {\it Littlewood--Richardson coefficient} $c_{\lambda\mu}^\nu$ is the number
of Littlewood--Richardson tableaux of shape $\nu\backslash\lambda$ and weight $\mu$.
\end{Definition}

\subsection{Identities}

The Littlewood--Richardson coefficients satisfy a number of symmetries \cite{2014arXiv1409.8356G}.  Two symmetries that we need are as follows:
\begin{lemma}
\label{lem:lrsymmetries}
Given partitions $\lambda$, $\mu$, and $\nu$, we have
$c_{\lambda\mu}^\nu=c_{\mu\lambda}^\nu=c_{\lambda'\mu'}^{\nu'}$.
\end{lemma}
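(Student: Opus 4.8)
The plan is to bypass the combinatorial Definition~\ref{def:LRcoefficient}, from which neither symmetry is transparent, and to argue entirely from the representation-theoretic descriptions already in hand: the restriction formula $c_{\lambda\mu}^\nu=\dim\Hom_{S_{|\lambda|}\times S_{|\mu|}}(\Res\,\bnu,\blambda\otimes\bmu)$ (Lemma~\ref{lem:LRres} with $N=2$), the induction formula $\Ind_{S_{|\lambda|}\times S_{|\mu|}}^{S_n}\blambda\otimes\bmu\cong\bigoplus_{\nu}c_{\lambda\mu}^\nu\bnu$ with $n=|\lambda|+|\mu|$ from \cite[Equation~4.41]{fulton1991representation}, and the $S_k$-module isomorphism $\bkappa'\cong\bkappa\otimes\bepsilon_k$ recalled in Section~2.1.2, where $\bepsilon_k$ denotes the sign representation of $S_k$. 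Throughout I would use that conjugation preserves size, so that $|\lambda'|=|\lambda|$, $|\mu'|=|\mu|$, $|\nu'|=|\nu|=n$, and the same Young subgroup $S_{|\lambda|}\times S_{|\mu|}\subset S_n$ is the relevant one for each of $c_{\lambda\mu}^\nu$, $c_{\mu\lambda}^\nu$, and $c_{\lambda'\mu'}^{\nu'}$.

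For $c_{\lambda\mu}^\nu=c_{\mu\lambda}^\nu$ I would appeal to commutativity of the outer tensor product together with the fact that induction depends only on the inclusion of a subgroup and not on a labelling of its factors. Concretely, the flip $v\otimes w\mapsto w\otimes v$ is a vector-space isomorphism $\blambda\otimes\bmu\to\bmu\otimes\blambda$ intertwining $S_{|\lambda|}\times S_{|\mu|}$ with $S_{|\mu|}\times S_{|\lambda|}$, and both Cartesian products are realized as the \emph{same} subgroup of $S_n$ (permutations within two complementary blocks of $\{1,\dots,n\}$), merely listed in the two possible orders; hence $\Ind\,\blambda\otimes\bmu\cong\Ind\,\bmu\otimes\blambda$ as $S_n$-modules, and matching the multiplicity of $\bnu$ on the two sides (using uniqueness of the isotypical decomposition) gives the identity. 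This is just the commutativity of the operation noted in \cite[p.58]{fulton1991representation}.

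For $c_{\lambda\mu}^\nu=c_{\lambda'\mu'}^{\nu'}$ I would tensor everything by sign representations. The key point is that restricting the sign character of $S_n$ to $S_{|\lambda|}\times S_{|\mu|}$ gives $(s,t)\mapsto\epsilon(s)\epsilon(t)$, since a permutation supported on two disjoint sets has sign equal to the product of the signs; thus $\bepsilon_n$ restricts to $\bepsilon_{|\lambda|}\otimes\bepsilon_{|\mu|}$, whence $\Res\,\bnu'\cong\Res(\bnu\otimes\bepsilon_n)\cong(\Res\,\bnu)\otimes(\bepsilon_{|\lambda|}\otimes\bepsilon_{|\mu|})$ as $S_{|\lambda|}\times S_{|\mu|}$-modules. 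Since tensoring by the one-dimensional module $\bepsilon_{|\lambda|}\otimes\bepsilon_{|\mu|}$ is an autoequivalence of $S_{|\lambda|}\times S_{|\mu|}$-modules it preserves $\Hom$-dimensions, and using $(\blambda\otimes\bepsilon_{|\lambda|})\otimes(\bmu\otimes\bepsilon_{|\mu|})\cong\blambda'\otimes\bmu'$ one obtains
\begin{align*}
c_{\lambda'\mu'}^{\nu'} &= \dim\Hom\big(\Res\,\bnu',\ \blambda'\otimes\bmu'\big) \\
&= \dim\Hom\big((\Res\,\bnu)\otimes(\bepsilon_{|\lambda|}\otimes\bepsilon_{|\mu|}),\ (\blambda\otimes\bmu)\otimes(\bepsilon_{|\lambda|}\otimes\bepsilon_{|\mu|})\big) \\
&= \dim\Hom\big(\Res\,\bnu,\ \blambda\otimes\bmu\big) \;=\; c_{\lambda\mu}^\nu .
\end{align*}

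The main obstacle is essentially bookkeeping rather than mathematics: one must keep the three conjugations compatible in size so that a single Young subgroup is in play, and one must correctly identify how the sign representation of $S_n$ decomposes on restriction to it — both are routine once stated carefully. An equally short alternative would be to work in the ring of symmetric functions, where $s_\lambda s_\mu=\sum_\nu c_{\lambda\mu}^\nu s_\nu$ and the fundamental involution $\omega$ satisfies $\omega(s_\kappa)=s_{\kappa'}$ and is a ring automorphism, delivering both identities at once; or one may simply cite \cite{2014arXiv1409.8356G}. I would present the representation-theoretic version, since it uses only machinery already developed in this chapter.
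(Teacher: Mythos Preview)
Your argument is correct. For the first identity $c_{\lambda\mu}^\nu=c_{\mu\lambda}^\nu$, you take exactly the route the paper indicates: commutativity of the outer tensor product together with induction, citing \cite[p.58]{fulton1991representation}.

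For the second identity $c_{\lambda\mu}^\nu=c_{\lambda'\mu'}^{\nu'}$, you take a genuinely different route from the paper. The paper appeals to the involutive ring automorphism $\omega$ on symmetric functions satisfying $\omega(s_\kappa)=s_{\kappa'}$, applied to the Schur-function identity $s_\lambda s_\mu=\sum_\nu c_{\lambda\mu}^\nu s_\nu$ (citing \cite{HANLON19921}). You instead stay entirely inside symmetric-group representation theory, using $\bkappa'\cong\bkappa\otimes\bepsilon$ together with the observation that $\bepsilon_n$ restricts to $\bepsilon_{|\lambda|}\otimes\bepsilon_{|\mu|}$ on the Young subgroup, so that tensoring by sign is a self-inverse autoequivalence preserving $\Hom$-dimensions. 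Both arguments are short and standard; yours has the advantage of using only the machinery already developed in this chapter (the restriction formula of Lemma~\ref{lem:LRres} and the isomorphism $\bkappa'\cong\bkappa\otimes\bepsilon$ from Section~2.1.2), with no detour through the ring of symmetric functions, while the paper's approach makes the symmetry manifest as a single application of a ring automorphism and connects to the broader symmetric-function literature. You yourself note the $\omega$-argument as an alternative, so the two presentations are complementary rather than in tension.
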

\noindent These symmetries can be demonstrated by 
algebraic applications
of the Littlewood--Richardson coefficients \cite{2014arXiv1409.8356G}.
That $c_{\lambda\mu}^\nu=c_{\mu\lambda}^\nu$ follows from the observation
that the operation of taking the exterior tensor product of $\lambda$ with $\mu$, and from it inducing
a representation of $S_{|\lambda|+|\mu|}$, is commutative \cite[p.58]{fulton1991representation}.  That $c_{\lambda\mu}^\nu=c_{\lambda'\mu'}^{\nu'}$ follows from the invariance
of Schur functions with respect to the involutive algebra automorphism that transforms
the associated partition into its conjugate \cite{HANLON19921}.

Yet the Littlewood--Richardson rules themselves are not manifestly symmetric in the ways 
indicated by Lemma~\ref{lem:lrsymmetries}, and proving these symmetries directly by such a combinatorical approach is notoriously nontrivial \cite{2014arXiv1409.8356G}. 
Nevertheless it can be done.
The symmetry $c_{\lambda\mu}^\nu=c_{\mu\lambda}^\nu$ is painstakingly shown to arise from Littlewood--Richardson tableaux in \cite{2016arXiv160305037A},
and $c_{\lambda\mu}^\nu=c_{\lambda'\mu'}^{\nu'}$ is proved via the observation
that there are always as many Littlewood--Richardson tableaux of shape $(\nu\backslash\lambda)'$ and weight $\mu'$ as there are Littlewood--Richardson tableaux of shape
$\nu\backslash\lambda$ and weight $\mu$ in \cite{HANLON19921}.

A useful albeit obvious identity that follows immediately from the preceding results is that each
row of $\nu$ must be greater than that of $\lambda$ or $\mu$ whenever $c_{\lambda\mu}^\nu>0$:
\begin{lemma}
\label{lem:nugeqmax}
Given partitions $\lambda$, $\mu$, and $\nu$,
if $c_{\lambda\mu}^\nu>0$ then we have $\nu_i\geq\max(\lambda_i,\mu_i)$, for all $i\in\nats$.
\end{lemma}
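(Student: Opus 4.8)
The plan is to reduce the statement to the mere requirement that the skew shape $\nu\backslash\lambda$ be well-defined, and then to obtain the inequality involving $\mu$ from the symmetry of the Littlewood--Richardson coefficient already recorded in Lemma~\ref{lem:lrsymmetries}. No genuine combinatorial work is needed; the content is packaged entirely into the definitions of skew diagram and Littlewood--Richardson tableau.

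First I would handle the $\lambda$ half by contrapositive. By Definition~\ref{def:LRcoefficient}, $c_{\lambda\mu}^\nu$ is the number of Littlewood--Richardson tableaux of shape $\nu\backslash\lambda$ and weight $\mu$. Such a tableau cannot exist unless $\nu\backslash\lambda$ is itself a legitimate skew diagram, and by the Definition of skew diagram this presupposes $\nu_i\geq\lambda_i$ for every $i\in\nats$ (otherwise the prescribed removal of cells from some row is impossible). Hence $c_{\lambda\mu}^\nu>0$ forces $\nu_i\geq\lambda_i$ for all $i$; that is, the containment $\lambda\subseteq\nu$ is not an extra hypothesis but a consequence of the coefficient being nonzero. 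Next I would invoke Lemma~\ref{lem:lrsymmetries}, which gives $c_{\lambda\mu}^\nu=c_{\mu\lambda}^\nu$, so the same argument applies verbatim with the roles of $\lambda$ and $\mu$ interchanged, yielding $\nu_i\geq\mu_i$ for all $i$. Combining the two inequalities gives $\nu_i\geq\max(\lambda_i,\mu_i)$, as claimed.

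There is essentially no obstacle to overcome here; the only point requiring a little care is the phrasing, namely that ``$c_{\lambda\mu}^\nu>0$'' already carries the implicit assertion that $\nu\backslash\lambda$ (and, via symmetry, $\nu\backslash\mu$) is a meaningful shape, so both containments $\lambda\subseteq\nu$ and $\mu\subseteq\nu$ come along automatically rather than being imposed. I would phrase the write-up accordingly, citing Definition~\ref{def:LRcoefficient}, the Definition of skew diagram, and Lemma~\ref{lem:lrsymmetries}.
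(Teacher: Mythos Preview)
Your proposal is correct and matches the paper's own proof essentially verbatim: the paper also argues that nonvanishing of $c_{\lambda\mu}^\nu$ forces the skew diagram $\nu\backslash\lambda$ to exist, hence $\nu_i\geq\lambda_i$, and then invokes Lemma~\ref{lem:lrsymmetries} to obtain $\nu_i\geq\mu_i$ by symmetry. There is nothing to add.
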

\begin{proof}
By definition of nonzero $c_{\lambda\mu}^\nu$, the skew diagram $\nu\backslash\lambda$ exists so $\nu_i\geq\lambda_i$ for all $i$.
Further, by symmetry of the Littlewood--Richardson coefficient,
Lemma~\ref{lem:lrsymmetries}, we have $c_{\mu\lambda}^\nu=c_{\lambda\mu}^\nu$,
so $\nu_i\geq\mu_i$ for all $i$.  Thus we have $\nu_i\geq\max(\lambda_i,\mu_i)$.
\end{proof}

Finally, another well-known consequence of the Littlewood--Richardson rules comes
in handy for the part-wise sum of partitions, 
$\lambda+\mu\equiv[\lambda_1+\mu_1,\ldots,\lambda_k+\mu_k,\ldots]$
\cite[Chapter~5, Exercise~2]{fulton_1996}:
\begin{lemma}
\label{lem:LRCartan}
Given any partitions $\lambda$ and $\mu$, it follows that $c_{\lambda\mu}^{\lambda+\mu}=1$.
\end{lemma}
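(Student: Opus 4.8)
The plan is to argue directly from Definition~\ref{def:LRcoefficient}, which identifies $c_{\lambda\mu}^{\lambda+\mu}$ with the number of Littlewood--Richardson tableaux of shape $(\lambda+\mu)\backslash\lambda$ and weight $\mu$. Since row $i$ of $\lambda+\mu$ has $\lambda_i+\mu_i$ cells while row $i$ of $\lambda$ has $\lambda_i$ cells, this skew shape has exactly $\mu_i$ cells in its $i$th row. I will show there is precisely one admissible filling, namely the canonical one in which every cell of row $i$ is assigned the entry $i$; this immediately yields $c_{\lambda\mu}^{\lambda+\mu}=1$.

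First I would verify that the canonical filling is genuinely a Littlewood--Richardson tableau of weight $\mu$. Its weight is $\mu$ since row $i$ contributes exactly $\mu_i$ copies of $i$. It is semistandard: rows are constant, hence weakly increasing, and the occupied cells of any column carry consecutive, hence strictly increasing, row indices. Reading right to left and top to bottom produces the sorted word $1^{\mu_1}2^{\mu_2}3^{\mu_3}\cdots$, for which the Littlewood--Richardson (lattice) condition is immediate because $\mu$ is weakly decreasing: in any prefix the count of a letter $j$ is at least the count of every larger letter.

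The substance of the proof, and the step requiring the most care, is uniqueness. I would show by induction on $i$ that in any Littlewood--Richardson tableau $T$ of shape $(\lambda+\mu)\backslash\lambda$ and weight $\mu$, every entry of row $i$ equals $i$ (the induction ending once $\mu_i=0$, beyond which all rows are empty). For $i=1$: row $1$ is weakly increasing, so its largest entry $a$ occupies the rightmost cell, which is the first cell read; were $a\geq 2$, then after reading that one cell the letter $a$ has count $1$ while $a-1$ has count $0$, contradicting the defining property of a Littlewood--Richardson sequence, so $a=1$ and row $1$ is all $1$'s, using up all $\mu_1$ available $1$'s. For the inductive step, assume rows $1,\dots,i-1$ are the constant rows $1,\dots,i-1$; then the copies of $1,\dots,i-1$ are exhausted, so rows $i,i+1,\dots$ use only letters $\geq i$, and in particular no letter $\geq i$ has appeared in the reading word before we begin row $i$. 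If the largest entry $a$ of row $i$ satisfied $a>i$, then reading its rightmost cell would give $a$ a count of $1$ while $a-1\ (\geq i)$ still has count $0$, again violating the lattice condition; hence row $i$ is constant equal to $i$, completing the induction.

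The only genuine subtlety I anticipate is the bookkeeping in this induction --- tracking, at the instant the first cell of row $i$ is read, precisely which letters have already occurred --- but once the reading order (top to bottom, right to left) and the weight constraint are invoked carefully, the contradiction is immediate, and nothing beyond Definition~\ref{def:LRcoefficient} and the notion of Littlewood--Richardson sequence is required.
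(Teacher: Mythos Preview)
Your proposal is correct and follows essentially the same approach as the paper: exhibit the canonical filling (row $i$ all $i$'s), then prove uniqueness by induction on rows using the observation that the rightmost entry of each row is read first and would violate the lattice condition if too large. Your inductive step is in fact spelled out a bit more carefully than the paper's, which compresses the same argument into a sentence.
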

\begin{proof}
By definition, each row of $(\lambda+\mu)\backslash\lambda$  is the same length
as that of $\mu$.  Therefore, a Littlewood--Richardson tableau of weight $\mu$
can be constructed by filling the $i$th row of $(\lambda+\mu)\backslash\lambda$ with
$\mu_i$ occurrences of $i$, so $c_{\lambda\mu}^{\lambda+\mu}>0$.  
That this is the only Littlewood--Richardson tableau of shape $(\lambda+\mu)\backslash\lambda$ and weight $\mu$ 
can be proved inductively.
The 1st row of $(\lambda+\mu)\backslash\lambda$ must be filled with
$\mu_1$ occurrences of 1, because if it is not then the higher numbers 
that must appear in the rightmost cells to satisfy the definition of a 
semistandard tableau violate the Littlewood--Richardson sequence.
For some $k\in\nats$ and each $1\leq i\leq k$, assume that the $i$th row of
$(\lambda+\mu)\backslash\lambda$ is filled with $\mu_i$ occurrences of $i$.
Then the $(k+1)$th row of $(\lambda+\mu)\backslash\lambda$ must be filled with
$\mu_{k+1}$ occurrences of $k+1$, because if it is not then the higher numbers 
that must appear in the rightmost cells to satisfy the definition of a 
semistandard tableau again violate the Littlewood--Richardson sequence.
\end{proof}

\subsection{Horn inequalities}

In 1962 Horn conjectured that, given $d\times d$ Hermitian matrices $A$ and $B$ with
eigenvalues $\{\alpha_i\}_{i=1}^d$ and $\{\beta_i\}_{i=1}^d$ respectively,
$A+B$ has eigenvalues $\{\gamma_i\}_{i=1}^d$ 
if and only if the latter satisfy
a certain set of 
inequalities with $\{\alpha_i\}_{i=1}^d$ and $\{\beta_i\}_{i=1}^d$ \cite{Horn:conj}.
Horn's conjecture was finally proved in the affirmative by Knutson and Tao in 1999 \cite{1998math......7160K}.
The reason this concerns us here is because, in the process of proving Horn's conjecture,
Knutson and Tao built on earlier work by Klyachko establishing a close relationship between Hermitian eigenvalues and Littlewood--Richardson coefficients \cite{30002131808}.  In particular we have \cite{1999math......8012F,anderson_richmond_yong_2013}
\begin{lemma}[Klyachko--Knutson--Tao]
\label{lem:KnutsonTao}
Given partitions $\lambda$, $\mu$, and $\nu\vdash|\lambda|+|\mu|$
with at most $d$ rows each, we have
$c_{\lambda\mu}^\nu>0$ if and only if there exist
$d\times d$ Hermitian matrices $A$ and $B$ such that
$A$, $B$, and $A+B$ have eigenvalues $\{\lambda_i\}_{i=1}^d$,
$\{\mu_i\}_{i=1}^d$ and $\{\nu_i\}_{i=1}^d$
respectively.
\end{lemma}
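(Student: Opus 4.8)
\emph{Proof proposal.} The plan is to deduce this from two landmark results that sit on either side of the Horn inequalities: Klyachko's description of the Hermitian eigenvalue cone, and the Knutson--Tao saturation theorem. Write $|\lambda|+|\mu|=|\nu|=n$, and note that the hypothesis gives each of $\lambda,\mu,\nu$ at most $d$ parts, so all three index genuine irreducible $\SU(d)$-modules. Introduce the \emph{eigencone} $\Gamma_d$: the set of triples $(\alpha,\beta,\gamma)$ of weakly decreasing real $d$-tuples with $\sum_i\alpha_i+\sum_i\beta_i=\sum_i\gamma_i$ for which there exist $d\times d$ Hermitian matrices $A,B$ of spectra $\alpha,\beta$ with $A+B$ of spectrum $\gamma$. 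The claim is precisely that $(\lambda,\mu,\nu)\in\Gamma_d$ if and only if $c_{\lambda\mu}^\nu>0$.

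For the forward direction, $c_{\lambda\mu}^\nu>0\Rightarrow(\lambda,\mu,\nu)\in\Gamma_d$: by Lemma~\ref{lem:LRunitary} the hypothesis means $U^{(d)}_\nu$ occurs in $U^{(d)}_\lambda\otimes U^{(d)}_\mu$ as an $\SU(d)$-module. One then invokes the classical fact (Heckman's theorem, a consequence of the Kirwan convexity theorem for the moment map of the coadjoint action of $\U(d)$) that whenever $U^{(d)}_\nu\subset U^{(d)}_\lambda\otimes U^{(d)}_\mu$ the normalized highest-weight triple $(\lambda,\mu,\nu)$ lies in $\Gamma_d$. Since $\Gamma_d$ consists by definition of realizable spectral triples, this exhibits the required Hermitian matrices. (Equivalently one may run the Horn recursion directly, checking that any triple with $c_{\lambda\mu}^\nu>0$ obeys every Horn inequality, and then cite Klyachko's theorem that $\Gamma_d$ is exactly the solution set of those inequalities plus the trace identity.)

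For the reverse direction, $(\lambda,\mu,\nu)\in\Gamma_d\Rightarrow c_{\lambda\mu}^\nu>0$, chain three implications. First, Klyachko: membership in $\Gamma_d$ is equivalent to satisfaction of the finitely many Horn inequalities $\sum_{k\in K}\gamma_k\le\sum_{i\in I}\alpha_i+\sum_{j\in J}\beta_j$, indexed by the recursively constructed triples $(I,J,K)$ of subsets with their own associated Littlewood--Richardson coefficient positive, together with the trace identity. Second, again Klyachko (via geometric invariant theory): for \emph{integral} $(\lambda,\mu,\nu)$, satisfaction of the Horn inequalities is equivalent to $c_{N\lambda,\,N\mu}^{N\nu}>0$ for some integer $N\ge1$, i.e.\ the saturated Littlewood--Richardson semigroup is exactly the lattice-point set of the Horn cone. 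Third, the Knutson--Tao saturation theorem: $c_{N\lambda,\,N\mu}^{N\nu}>0$ for some $N\ge1$ forces $c_{\lambda\mu}^\nu>0$. Composing the three yields the implication.

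The main obstacle is the saturation theorem, whose known proofs are all substantially combinatorial. The route I would follow is the honeycomb model of Knutson--Tao: show that $c_{\lambda\mu}^\nu$ equals the number of integral honeycombs with a boundary prescribed by $(\lambda,\mu,\nu)$, that such honeycombs are the lattice points of a rational polytope $H(\lambda,\mu,\nu)$ whose defining hyperplanes have integral coefficients, and that whenever the dilate $H(N\lambda,N\mu,N\nu)=N\cdot H(\lambda,\mu,\nu)$ is nonempty the polytope $H(\lambda,\mu,\nu)$ has a vertex — hence an integral one, hence an integral honeycomb at $N=1$. Ruling out the vertex-free or unbounded configurations is where their analysis of ``gentle loops'' is needed, and I would quote that wholesale. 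Everything else — Heckman's inclusion and Klyachko's two equivalences — is by comparison standard and well documented in the references already cited.
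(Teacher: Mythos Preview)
Your outline is correct and captures the standard route to this result, combining Klyachko's description of the eigencone via the Horn inequalities with the Knutson--Tao saturation theorem. However, the paper does not prove this lemma at all: it is stated as a known result and attributed directly to the literature (Fulton's survey and Anderson--Richmond--Yong), so there is no ``paper's own proof'' to compare against. Your sketch is thus strictly more than what the paper provides, and matches the argument one finds in those references.
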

\noindent This result alone is quite powerful, since there are cases where
constructing the requisite Hermitian matrices with the desired eigenvalues may be easier than
direct application of the Littlewood--Richardson rules, as we demonstrate in Section~4.2.4.

In conjuction with Knutson and Tao's proof of Horn's conjecture, then,
we have the machinery of the Horn inequalities at our disposal. 
Applied to the Littlewood--Richardson coefficients, this takes the following form \cite[Theorem~11]{1999math......8012F}:  
Given partitions $\lambda$, $\mu$, and $\nu\vdash|\lambda|+|\mu|$ with at most
$d$ parts each, we have $c_{\lambda\mu}^\nu>0$ if and only if 
$$
\sum_{k\in K}\nu_k\leq\sum_{i\in I}\lambda_i+\sum_{j\in J}\mu_j
$$
for all $(I,J,K)$ in a set of triples of subsets of $\{1,\ldots,d\}$, as
prescribed in \cite[Theorem~1]{FULTON200023}.
Or equivalently \cite[Equation~(7)]{1999math......8012F}, we have
$$
\sum_{i\in I^c}\lambda_i+\sum_{j\in J^c}\mu_j\leq\sum_{k\in K^c}\nu_k
$$
where $I^c$, $J^c$, and $K^c$ are the complements of $I$, $J$, and $K$ in 
$\{1,\ldots,d\}$.
\jrvc{Give prescription and promote to theorem or lemma?}

In the present work however we will use only a subset of the Horn inequalities,
ones that conveniently constrain individual parts of $\nu$ rather than sums of parts.  Originally discovered by Weyl in the context of Hermitian eigenvalues
\cite{Weyl1912}, one such set of inequalities takes the form \cite{Fulton1997-1998}:
$$
\nu_{i+j-1}\leq\lambda_i+\mu_j,\ i+j-1\leq d.
$$
A related set of inequalities, derivable from the Weyl inequalities and
referred to as their dual \cite{taoblog}, may be considered a subset of
the complement of the Horn inequalities above  \cite[Equation~(11)]{1999math......8012F}:
$$
\lambda_i+\mu_j\leq\nu_{i+j-d},\ i+j-d\leq d.
$$
Setting $k=i+j-1$, the Weyl inequalities may be expressed
$$
\nu_k\leq\min_{i+j=k+1}\{\lambda_i+\mu_j\},
$$
while setting $k=i+j-d$, the dual Weyl inequalities may be expressed
$$
\max_{i+j=k+d}\{\lambda_i+\mu_j\}\leq\nu_k.
$$
In light of Lemma~\ref{lem:KnutsonTao}, the above results can be put in the 
following form:
\begin{lemma}[Weyl]
\label{lem:Weyl}
Given partitions $\lambda$, $\mu$, and $\nu$,
if $c_{\lambda\mu}^\nu>0$ then for all $1\leq k\leq d=\nup_1$, we have
$$\max_{i+j=k+d}\{\lambda_i+\mu_j\}\leq\nu_k\leq\min_{i+j=k+1}\{\lambda_i+\mu_j\},$$
where as indicated the maximum on the left hand side is over the set $\{(i,j)|1\leq i,j\leq d,i+j=k+d\}$ and the minimum on the right hand side is over the set $\{(i,j)|1\leq i,j\leq d,i+j=k+1\}$.
\end{lemma}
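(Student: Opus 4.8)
The plan is to pass from Littlewood--Richardson coefficients to Hermitian eigenvalues using Lemma~\ref{lem:KnutsonTao}, invoke the classical Weyl inequalities (and their dual) for the eigenvalues of a sum of Hermitian matrices, and then repackage the resulting family of inequalities into the single-part $\min$/$\max$ form of the statement via a change of summation index.

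First I would check the hypotheses of Lemma~\ref{lem:KnutsonTao}. Since $c_{\lambda\mu}^\nu>0$, Lemma~\ref{lem:nugeqmax} gives $\nu_i\geq\max(\lambda_i,\mu_i)$ for all $i$, so $\lambda$ and $\mu$ have no more parts than $\nu$, that is at most $d=\nup_1$ parts. Hence Lemma~\ref{lem:KnutsonTao} applies and furnishes $d\times d$ Hermitian matrices $A$ and $B$ whose eigenvalue lists, written in weakly decreasing order and padded with zeros to length $d$, are $\{\lambda_i\}_{i=1}^d$ and $\{\mu_i\}_{i=1}^d$, and for which $A+B$ has eigenvalues $\{\nu_i\}_{i=1}^d$.

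Next I would apply the Weyl inequalities for Hermitian sums recorded just above, namely $\nu_{i+j-1}\leq\lambda_i+\mu_j$ whenever $i,j\geq 1$ and $i+j-1\leq d$ (\cite{Weyl1912,Fulton1997-1998}). Setting $k=i+j-1$: for each fixed $k$ with $1\leq k\leq d$, the pairs $(i,j)$ with $i,j\geq 1$ and $i+j=k+1$ automatically satisfy $i,j\leq k\leq d$, so this index set is a nonempty subset of $\{1,\ldots,d\}^2$ (it contains $(1,k)$ and $(k,1)$), and minimizing over it yields $\nu_k\leq\min_{i+j=k+1}\{\lambda_i+\mu_j\}$. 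For the lower bound I would either cite the dual Weyl inequalities $\lambda_i+\mu_j\leq\nu_{i+j-d}$ (valid for $1\leq i+j-d\leq d$) directly, or derive them by applying the ordinary Weyl inequalities to $-A$ and $-B$, whose ordered eigenvalue lists are $(-\lambda_d,\ldots,-\lambda_1)$ and $(-\mu_d,\ldots,-\mu_1)$, and re-indexing by $i\mapsto d+1-i$ and $j\mapsto d+1-j$. Either way, setting $k=i+j-d$ and maximizing over the set of $(i,j)\in\{1,\ldots,d\}^2$ with $i+j=k+d$ (nonempty for $1\leq k\leq d$, since it contains $(d,k)$ and $(k,d)$) gives $\max_{i+j=k+d}\{\lambda_i+\mu_j\}\leq\nu_k$. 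Combining the two chains produces the claimed double inequality for every $k$ with $1\leq k\leq d$.

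I do not expect a genuine obstacle: the content is entirely contained in Lemma~\ref{lem:KnutsonTao} together with the classical Weyl and dual-Weyl inequalities, and the only real work is the bookkeeping of the substitutions $k=i+j-1$ and $k=i+j-d$ — verifying that the sets over which the $\min$ and $\max$ are taken are exactly $\{(i,j):1\leq i,j\leq d,\ i+j=k+1\}$ and $\{(i,j):1\leq i,j\leq d,\ i+j=k+d\}$ respectively, and that each is nonempty so the $\min$ and $\max$ are well defined. The one subtlety to watch is the direction of the eigenvalue ordering when passing to $-A$ and $-B$: the list both reverses and negates, so the dual inequalities must be read off with the correctly shifted indices.
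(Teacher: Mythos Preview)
Your proposal is correct and follows essentially the same route as the paper: the paper derives the lemma by invoking Lemma~\ref{lem:KnutsonTao} to pass to Hermitian eigenvalues, citing the classical Weyl inequalities $\nu_{i+j-1}\leq\lambda_i+\mu_j$ and their dual $\lambda_i+\mu_j\leq\nu_{i+j-d}$, and then making the substitutions $k=i+j-1$ and $k=i+j-d$ to obtain the $\min$/$\max$ form. Your additional care in verifying via Lemma~\ref{lem:nugeqmax} that $\lambda$ and $\mu$ have at most $d$ parts, and in checking nonemptiness of the index sets, fills in details the paper leaves implicit.
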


\chapter{Exchange-only quantum computing}

The purpose of this chapter is to provide some physics context for mathematicians, 
and perhaps vice-versa.
Our specific aim is to introduce those quantum computing concepts needed to motivate 
relevant mathematical problems.  Broader context can be found, for example, in \cite{mikeandike}.
For the purposes of the present work, quantum computation consists primarily of two kinds of
components: ``qudits", those being the fundamental units of quantum information that generalize
binary digits, and logical gates, those being the fundamental steps in a computation.

A $d$-dimensional qudit may be indentified with a $d$-dimensional Hilbert space. 
We also consider each qudit as a $d$-dimensional $G$-module for a group $G$, that is as a group representation. 
Multiple qudits can then be combined by taking the tensor product.
For example, $n$ qudits of $d$ dimensions each can be expressed as $(\cmplx^d)^{\otimes n}$.
Of course as a Hilbert space this product is isomorphic to $\cmplx^{d^n}$.  What distinguishes
the two expressions is the set of operators we consider on the product, as determined by
the tensor product of the component $G$-modules, which naturally respect the qudit structure.

A gate is a unitary operator on one or more qudits.
If the qudits are denoted by $\bmu^{(1)}\ldots\bmu^{(n)}$,
the relevant gates are elements of $\SU(\bmu^{(1)}\otimes\cdots\otimes\bmu^{(n)})$.
We focus on the special unitary group, rather than the unitary group, because
elements of the latter differ by the former only by scalar ``phase" factors which 
typically have no bearing on computations.
The reason for the irrelevance of such scalar factors is that, strictly speaking, 
quantum physical states, and thus the associated logical states, are given by rays in a projective Hilbert space.
Of vital importance, however, is that every element of the relevant special unitary group be
implementable on the given qudits, using whatever operations are permitted by the underlying physical system.  We call this essential property ``universality", and developing its mathematical 
definition occupies much of this and the next chapter.

We could instead ask for just ``approximate universality" \cite{mikeandike}, meaning the physically implementable 
operators are dense in the desired unitary operators.  But in our case, the implementable operators are 
generated by a representation of a Lie algebra,
which proves to be isomorphic to the special linear, special orthogonal, or symplectic algebra.
The implementable operators thus form a matrix Lie group $G$, which has the 
property that if any sequence of its elements converges to a matrix $A$,
then $A$ is either in $G$ or is not invertible \cite{Hall}.  Since every unitary operator
is invertible, we conclude in our case that approximate universality implies exact
universality, and so we need only consider the latter.  

\section{Logical qudits}

Often a qudit can be identified with a subspace of a Hilbert space corresponding to a quantum physical
system, as indicated above.  Such a qudit is sometimes called a ``physical qudit", and textbook examples include the spin of 
a particle such as an electron.  Mathematically, a $d$-dimensional physical qudit is also commonly identified with the 
defining representation of $\SU(d)$.

More generally, a qudit need not be identified with a subspace of such a Hilbert space, but rather
with a ``subsystem" of the Hilbert space.  A subsystem refers to a tensor factor of a
subspace of a Hilbert space: if $\cH=\bigoplus_i S_i\otimes B_i$, then each factor space $S_i$ or $B_i$ is a subsystem \cite{DFS}.
A measurement of the state of such a qudit, say $S_i$, would then be associated with a ``logical observable" of the 
form $O\otimes\one$, where $O$ is a Hermitian operator corresponding to an observable of interest and $\one$ is the 
identity on $B_i$.  A logical gate, however, is a unitary operator of the form $U\otimes A$, where 
$U$ is the gate of interest and $A$ is arbitrary.

\subsection{Noiseless subsystems}

While in principle any subsystem of the Hilbert space may serve as a qudit, some choices are better than others.  
For example, there are typically many physical sources of noise which may introduce errors into the encoded information.  
Ideally, then, a subsystem can be chosen on which some such noise acts as the identity.  That is, if $\cH=\bigoplus_i S_i\otimes B_i$, and the noise acts as $\bigoplus_i \one\otimes N_i$, then any $S_i$ is a good choice of qudit.  
Such a subsystem, or qudit, is called ``noiseless" or ``decoherence-free" \cite{DFS}.

A common type of noise is called ``collective" noise \cite{PhysRevA.63.042307}.  In this context, the relevant Hilbert space is assumed to be a product 
of $d$-dimensional physical qudits, which in order to distinguish from logical qudits we refer to as ``$d$-state systems".  
Collective noise is noise that affects many $d$-state systems at the same time, in the same way.  Typically, collective noise 
is further assumed to act as an operator in $\SU(d)$.  For example, if the $d$-state systems are particles with spin, 
collective noise might simultaneously flip all the spins.  This could be the case if the source of noise is electromagnetic 
radiation with wavelength long compared to the distance between the particles.  Therefore, we are motivated to find a 
collective-decoherence-free subsystem (CDFS).

Finding every physically relevant CDFS in a Hilbert space is readily achieved by Schur--Weyl decomposition.
Given $\ell$ copies of the $d$-state system $V$, the resulting Hilbert space is
$$
V^{\otimes\ell}\overset{S_\ell\times\SU(d)}{\cong}\bigoplus_{\lambda\vdash\ell,\lambdap_1\leq d}V_\lambda\otimes U^{(d)}_\lambda.
$$
As first proposed by Zanardi and Rasetti \cite{Zanardi},
each $V_\lambda$ can be used as a collective-decoherence-free logical qudit, which we call a CDFS qudit.
Operating on $d$-state systems by a swap then corresponds to acting on $V_\lambda$ by a transposition. 
Collective noise meanwhile acts on the $d$-state systems by simultaneous matrix multiplication,
and on $U_\lambda\otimes V_\lambda$ as a nontrivial
operation on $U_\lambda$ times the identity on $V_\lambda$.

It is assumed that the only operations on the $d$-state systems utilized for computation on the CDFS qudits
are those generated by quantum physical exchange-interactions, by which the states of
$d$-state systems are interchanged.  The exchange-interactions may be identified with transpositions
permuting a set of numbers labeling the $d$-state systems. 
Exchange-interactions among the $d$-state systems comprising a single logical qudit then generate
gates on this logical qudit, and exchange-interactions among $d$-state systems of multiple logical
qudits generate gates on these combined logical qudits.
In addition to the protection it provides from collective noise, a CDFS qudit may also be a convenient choice for ease of control in physical systems where the exchange interaction is readily implemented.

\subsection{Physical and logical bases}

Thinking in terms of certain bases may help clarify the above concepts, as well as bridge the gap to 
common physics formulations when considering practical applications.  For this purpose we use 
Dirac's bra-ket notation, whereby a vector labeled by $v$ is denoted by the ``ket" $\ket{v}$.  
A basis of a logical qudit may be 
identified with the quantum analogs of logical values in a many-valued logic: 0, 1, and (possibly) 
so forth.  The associated basis vectors are denoted $\ket{0_L}$, $\ket{1_L}$, $\ket{2_L}$, up to
$\ket{(D-1)_L}$ for a $D$-dimensional qudit.  Since the logical qudits at issue are symmetric group 
representations, which already have a convenient basis in the form of Young tableaux, it is natural 
to identify the logical basis vectors with the Young tableaux.   This identification is 
consistent with literature on the topic, for example \cite{PhysRevA.99.042331}. 

Physical qudit, or $d$-state system, basis vectors may be similarly labeled, without the $L$ subscript.  Tensor products of such 
basis vectors, e.g. $\ket{1}\otimes\ket{0}\otimes\ket{2}$, are commonly denoted by a string of such labels in a 
single ket, e.g. $\ket{102}$. These form a basis of the Hilbert space consisting of a product of such 
$d$-state sytems.  So the $d^n$ basis vectors of the Hilbert space $(\cmplx^d)^{\otimes \ell}$ have the form 
$\ket{b_1\cdots b_\ell}$ where $b_i\in\{0,\ldots,d-1\}$.  We call this the physical basis.  

Since the Hilbert space $(\cmplx^d)^{\otimes \ell}$ is also a representation of $S_\ell\times\SU(d)$, each 
physical basis vector can be acted upon by elements of $S_\ell\times\SU(d)$.  In particular, 
transpositions act on physical basis vectors by transposing $d$-state system labels:
$$
(i j)\ket{b_1\cdots b_i\cdots b_j\cdots b_\ell}=\ket{b_1\cdots b_j\cdots b_i\cdots b_\ell}.
$$
By Schur--Weyl duality, each tensor product of a Young tableau in $V_\lambda$ with a Weyl 
tableau in $U^{(d)}_\lambda$ maps by $S_\ell\times\SU(d)$-intertwiner to a linear combination of 
physical basis vectors in $(\cmplx^d)^{\otimes \ell}$ which is unique up to overall scalar factor.

In the examples below we consider 2-state systems and,
in order to make contact with relevant physics literature,
further associate the 2-state systems with spin-1/2 particles such as electrons.
For the purpose of illustrating the correspondence between the mathematical
and physical ideas, 
an operator of particular interest on the Hilbert space $(\cmplx^2)^{\otimes\ell}$ is 
the $z$-component of the collective spin operator, $S_z$.  
By convention, $S_z$ acts on the Hilbert space as a diagonal operator in the physical basis,
for which each eigenvalue equals half the number of 0's labeling the basis vector minus
half the number of 1's labeling the basis vector. So for example 
$S_z\ket{010}=\frac{1}{2}\ket{010}$ and $S_z\ket{111}=-\frac{3}{2}\ket{111}$.
By further convention $S_z$ is proportional, by a factor of $\frac{i}{2}$, 
to an element in the fundamental representation of $\su(2)$. 
Therefore by Schur--Weyl duality $S_z$ acts trivially on $V_\lambda$, 
for each $\lambda\vdash\ell$,
and nontrivially on $U_\lambda$.  
For each such $\lambda$, $S_z$ is also diagonal in the Weyl tableau
basis, for which each eigenvalue equals one half the number of 1's in the 
tableau minus one half
the number of 2's in the tableau. 
\setydiagrameq 
So for example $S_z\ytab{1 1,2}=\frac{1}{2}\ytab{1 1,2}$ and $S_z\ytab{2 2 2}=-\frac{3}{2}\ytab{2 2 2}$.
\setydiagramtext
Another physical quantity of note is the collective spin magnitude $S$, 
which for a given $U_\lambda$ equals the maximum eigenvalue of $S_z$.  
It follows that a collective spin magnitude is uniquely associated with each 
$U_\lambda$ by the equation $S=\frac{1}{2}(\lambda_1-\lambda_2)$.

Our goal is to map tableaux to physical basis vectors, in order to provide insight into noiseless 
subsystems, as well as to make connections with potential physical applications.
We consider three examples of noiseless subsystems: two logical qubits and a logical qutrit.
The existence of all three of these encodings have been previously noted \cite{PhysRevA.63.042307}.  
The two logical qubits have also been given previously in terms of the explicit 
Hilbert space basis used below, and in fact have been rather extensivey 
investigated \cite{divincenzo:qc2000b,PhysRevA.99.042331,Hsieh2003}.  In contrast, 
the logical qutrit and higher dimensional qudits have not received
as much attention in the literature, but have potential advantages suggested in the next subsection.

\begin{exmp}

Consider encoding a logical qubit on three 2-state systems.  The Schur--Weyl decomposition of
the corresponding Hilbert space is:
\setydiagrameq
\begin{equation}
(\cmplx^2)^{\otimes 3}\overset{S_3\times SU(2)}{\cong}(V_\subydiagrameq{3}\otimes U_\subydiagrameq{3})\oplus(V_\subydiagrameq{2,1}\otimes U_\subydiagrameq{2,1}).
\end{equation}
\setydiagramtext
The only term with a two-dimensional symmetric group representation is:
\setydiagrameq
$$
V_\subydiagrameq{2,1}\otimes U_\subydiagrameq{2,1} \cong \text{span}\left\{\ytableaushort{1 3,2}\otimes\ytableaushort{1 1,2},\ \ytableaushort{1 2,3}\otimes\ytableaushort{1 1,2},\ \ytableaushort{1 3,2}\otimes\ytableaushort{1 2,2},\ \ytableaushort{1 2,3}\otimes\ytableaushort{1 2,2}\right\}.
$$
\setydiagramtext
By Young's orthogonal form the action of neighboring transpositions on the supporting Young tableaux is given by:
\setydiagrameq
\begin{eqnarray*}
(1\ 2).\ytableaushort{1 3,2 } &=& -\ytableaushort{1 3,2 }\\
(2\ 3).\ytab{1 3,2 } &=& \frac{1}{2}\,\ytab{1 3,2 }+\frac{\sqrt{3}}{2}\,\ytab{1 2,3 }\\
(1\ 2).\ytab{1 2,3 } &=& \phantom{-}\ytab{1 2,3 }\\
(2\ 3).\ytab{1 2,3 } &=& -\frac{1}{2}\,\ytab{1 2,3 }+\frac{\sqrt{3}}{2}\,\ytab{1 3,2 }
\end{eqnarray*}

Our general strategy for mapping the tableau basis to the physical basis by the Schur--Weyl isomorphism is to first deduce the image of one of the tableau basis vectors.  The image of every other
tableau basis vector is then uniquely determined by consistency with the symmetric group action
as given by the equations above.  In the present case, antisymmetry with respect to 1 and 2 of the 
Young tableau $\ytab{1 3,2}$ requires that it map to a Hilbert space vector proportional to 
$\ket{01?}-\ket{10?}$, where the final bit is to be determined.  Equivariance with respect to the 
$S_z$ operator then requires that $\ytab{1 3,2}\otimes\ytab{1 1,2}$ map 
to $a(\ket{010}-\ket{100})$, where we are free to choose $a\in\cmplx$.
For quantum mechanical purposes we choose $a=\frac{1}{\sqrt{2}}$ in order to normalize the vector.
Then by the second equation above, we have
$$
\ytab{1 2,3}\otimes\ytab{1 1,2} \,\mapsto\, \frac{2}{\sqrt{3}}((2\ 3).(\frac{1}{\sqrt{2}}(\ket{010}-\ket{100}))-\frac{1}{2}(\frac{1}{\sqrt{2}}(\ket{010}-\ket{100}))).
$$
Proceeding in this way, we arrive at:
\setydiagrameq
\begin{eqnarray*}
|0_L\rangle|\tfrac{1}{2},+\tfrac{1}{2}\rangle &\equiv& \ytableaushort{1 3,2}\otimes\ytableaushort{1 1,2} \,\mapsto\, \frac{1}{\sqrt{2}}(|010\rangle-|100\rangle) \\
|1_L\rangle|\tfrac{1}{2},+\tfrac{1}{2}\rangle &\equiv& \ytableaushort{1 2,3}\otimes\ytableaushort{ 1  1, 2} \,\mapsto\, \frac{1}{\sqrt{6}}(2|001\rangle-|100\rangle-|010\rangle)\\
|0_L\rangle|\tfrac{1}{2},-\tfrac{1}{2} \rangle &\equiv& \ytableaushort{1 3,2}\otimes\ytableaushort{ 1  2, 2} \,\mapsto\, \frac{1}{\sqrt{2}}(|101\rangle-|011\rangle)\\
|1_L\rangle|\tfrac{1}{2},-\tfrac{1}{2}\rangle &\equiv& \ytableaushort{1 2,3}\otimes\ytableaushort{1 2,2} \,\mapsto\, \frac{1}{\sqrt{6}}(2|110\rangle-|011\rangle-|101\rangle),
\end{eqnarray*}
\setydiagramtext
where by convention we have identified each Young tableau with a logical basis vector,
and each Weyl tableau with an eigenvector of the spin labeled by the total spin (given by the maximum eigenvalue of $S_z$), as well as the relevant eigenvalue of $S_z$. 
We then observe that flipping all the bits on the right hand side, which is a simple example
of collective noise, preserves the corresponding logical basis vectors.
\end{exmp}

\begin{exmp}
Consider encoding a logical qubit on four 2-state systems.
By Schur--Weyl duality we have
\setydiagrameq
\begin{equation*}
(\cmplx^2)^{\otimes 4}\overset{S_4\times SU(2)}{\cong}(V_\subydiagrameq{4}\otimes U_\subydiagrameq{4})\oplus(V_\subydiagrameq{2,2}\otimes U_\subydiagrameq{2,2})\oplus(V_\subydiagrameq{3,1}\otimes U_\subydiagrameq{3,1}).
\end{equation*}
\setydiagramtext
To encode a logical qubit, we again use the only term with a two-dimensional symmetric group representation:
\setydiagrameq
\begin{equation*}
V_\subydiagrameq{2,2}\otimes U_\subydiagrameq{2,2} \cong \text{span}\left\{\ytableaushort{1 3,2 4}\otimes\ytableaushort{1 1,2 2},\ \ytableaushort{1 2,3 4}\otimes\ytableaushort{1 1,2 2}\right\}.
\end{equation*} 
The action of neighboring transpositions on the Young tableaux is given by:
\begin{eqnarray*}
(1\ 2).\ytableaushort{1 3,2 4} &=& -\ytableaushort{1 3,2 4}\\
(2\ 3).\ytab{1 3,2 4} &=& \frac{1}{2}\,\ytab{1 3,2 4}+\frac{\sqrt{3}}{2}\,\ytab{1 2,3 4}\\
(3\ 4).\ytab{1 3,2 4} &=& -\ytab{1 3,2 4}\\
(1\ 2).\ytab{1 2,3 4} &=& \phantom{-}\ytab{1 2,3 4}\\
(2\ 3).\ytab{1 2,3 4} &=& -\frac{1}{2}\,\ytab{1 2,3 4}+\frac{\sqrt{3}}{2}\,\ytab{1 3,2 4}\\
(3\ 4).\ytab{1 2,3 4} &=& \phantom{-}\ytab{1 2,3 4}
\end{eqnarray*}

Since $\ytab{1 3,2 4}$ is antisymmetric with respect to transpositions of 1 and 2 and of 3 and 4,
the only possible Hilbert space vector to which $\ytab{1 3,2 4}\otimes\ytab{1 1,2 2}$ can map
is, up to scalar factor, 
$$(\ket{01}-\ket{10})(\ket{01}-\ket{10}).$$
Choosing a scalar factor to normalize that vector,
and then deducing the image of $\ytab{1 2,3 4}\otimes\ytab{1 1,2 2}$,
we obtain
\begin{eqnarray*}
\ket{0_L}\ket{0,0} &\equiv& \ytab{1 3,2 4}\otimes\ytab{1 1,2 2}\,\mapsto\,\frac{1}{2}(\ket{0101}-\ket{0110}-\ket{1001}+\ket{1010})\\
\ket{1_L}\ket{0,0} &\equiv& \ytab{1 2,3 4}\otimes\ytab{1 1,2 2}\,\mapsto\,\frac{1}{\sqrt{12}}(2\ket{0011}+2\ket{1100}-\ket{0101}-\ket{0110}-\ket{1001}-\ket{1010})
\end{eqnarray*}
We observe that this time flipping the bits on the right hand side preserves each vector.
\end{exmp}

\begin{exmp}
Consider encoding a logical qutrit on four 2-state systems.
\setydiagramtext
Referring to the Schur--Weyl duality expression of Example~6,
the only term with a three-dimensional symmetric group representation is:
\begin{align*}
V_\subydiagrameq{3,1}\otimes\U_\subydiagrameq{3,1} \cong \Span\Big\{
& \ytab{1 3 4,2}\otimes\ytab{1 1 1,2},\ \ytab{1 3 4,2}\otimes\ytab{1 1 2,2},\ \ytab{1 3 4,2}\otimes\ytab{1 2 2,2},\\
& \ytab{1 2 4,3}\otimes\ytab{1 1 1,2},\ \ytab{1 2 4,3}\otimes\ytab{1 1 2,2},\ \ytab{1 2 4,3}\otimes\ytab{1 2 2,2},\\
& \ytab{1 2 3,4}\otimes\ytab{1 1 1,2},\ \ytab{1 2 3,4}\otimes\ytab{1 1 2,2},\ \ytab{1 2 3,4}\otimes\ytab{1 2 2,2}\Big\}
\end{align*}
The action of neighboring transpositions on the Young tableaux is:
\setydiagrameq
\begin{eqnarray*}
(12).\ytableaushort{1 3 4,2} &=& -\ytableaushort{1 3 4,2}\\
(23).\ytableaushort{1 3 4,2} &=& \frac{1}{2}\,\ytableaushort{1 3 4,2}+\frac{\sqrt{3}}{2}\,\ytableaushort{1 2 4,3}\\
(34).\ytab{1 3 4,2} &=& \ytab{1 3 4,2}\\
(12).\ytab{1 2 4,3} &=& \ytab{1 2 4,3}\\
(23).\ytab{1 2 4,3} &=& -\frac{1}{2}\,\ytab{1 2 4,3}+\frac{\sqrt{3}}{2}\,\ytab{1 3 4,2}\\
(34).\ytableaushort{1 2 4,3} &=& \frac{1}{3}\,\ytableaushort{1 2 4,3}+\frac{\sqrt{8}}{3}\,\ytableaushort{1 2 3,4}\\
(12).\ytab{1 2 3,4} &=& \ytab{1 2 3,4}\\
(23).\ytab{1 2 3,4} &=& \ytab{1 2 3,4}\\
(34).\ytab{1 2 3,4} &=& -\frac{1}{3}\,\ytableaushort{1 2 3,4}+\frac{\sqrt{8}}{3}\,\ytableaushort{1 2 4,3}.
\end{eqnarray*}

As in the previous examples, antisymmetry with respect to transposition of 1 and 2 demands that
$\ytab{1 3 4,2}$ map to a vector with first two physical qubit factors having the form 
$\ket{01}-\ket{10}$.  The remaining factors are determined by the symmetry with respect to 
transposition of 3 and 4, together with the spin as given by each Weyl tableau as before.
Then normalizing and solving for the remaining vectors in accordance with the above
transposition actions, we obtain:
\begin{eqnarray*}
|0_L\rangle|1,+1\rangle &\equiv& \ytableaushort{1 3 4,2}\otimes\ytab{1 1 1,2}\,\mapsto\,\frac{1}{\sqrt{2}}(|0100\rangle-|1000\rangle)\\
|1_L\rangle|1,+1\rangle &\equiv& \ytableaushort{1 2 4,3}\otimes\ytab{1 1 1,2}\,\mapsto\,\frac{1}{\sqrt{6}}(2|0010\rangle-|1000\rangle-|0100\rangle)\\
|2_L\rangle|1,+1\rangle &\equiv& \ytableaushort{1 2 3,4}\otimes\ytab{1 1 1,2}\,\mapsto\,\frac{1}{\sqrt{12}}(3|0001\rangle-|1000\rangle-|0100\rangle-|0010\rangle)\\
|0_L\rangle|1,\phantom{+}0\rangle &\equiv& \ytableaushort{1 3 4,2}\otimes\ytab{1 1 2,2}\,\mapsto\,\frac{1}{2}(\ket{0101}+\ket{0110}-\ket{1001}-\ket{1010})\\
|1_L\rangle|1,\phantom{+}0\rangle &\equiv& \ytableaushort{1 2 4,3}\otimes\ytab{1 1 2,2}\,\mapsto\,\frac{1}{\sqrt{12}}(2\ket{0011}-2\ket{1100}+\ket{0110}-\ket{1001}-\ket{0101}+\ket{1010})\\
|2_L\rangle|1,\phantom{+}0\rangle &\equiv& \ytableaushort{1 2 3,4}\otimes\ytab{1 1 2,2}\,\mapsto\,\frac{1}{\sqrt{6}}(\ket{0011}-\ket{1100}+\ket{0101}-\ket{1010}-\ket{0110}+\ket{1001})\\
|0_L\rangle|1,-1\rangle &\equiv& \ytableaushort{1 3 4,2}\otimes\ytab{1 2 2,2}\,\mapsto\,\frac{1}{\sqrt{2}}(|1011\rangle-|0111\rangle)\\
|1_L\rangle|1,-1\rangle &\equiv& \ytableaushort{1 2 4,3}\otimes\ytab{1 2 2,2}\,\mapsto\,\frac{1}{\sqrt{6}}(2|1101\rangle-|0111\rangle-|1011\rangle)\\
|2_L\rangle|1,-1\rangle &\equiv& \ytableaushort{1 2 3,4}\otimes\ytab{1 2 2,2}\,\mapsto\,\frac{1}{\sqrt{12}}(3|1110\rangle-|0111\rangle-|1011\rangle-|1101\rangle)
\end{eqnarray*}
Again, flipping the bits on the right hand side preserves the corresponding logical qudits.
\end{exmp}

Related to the above is another relatively simple encoding, that of a 
logical ``ququint" on five 2-state systems. This ququint may be considered as a logical qutrit direct-summed with a logical qubit, in the following sense.
There exists a natural $S_4$-module isomorphism from $V_\subydiagramtext{2,2}\oplus V_\subydiagramtext{3,1}$
to $V_\subydiagram{3,2}$ given by:
\setydiagrameq
\begin{eqnarray*}
\ytab{1 3,2 4} &\mapsto& \ytab{1 3 5,2 4}\\
\ytab{1 2,3 4} &\mapsto& \ytab{1 2 5,3 4}\\
\ytab{1 3 4,2} &\mapsto& \ytab{1 3 4,2 5}\\
\ytab{1 2 4,3} &\mapsto& \ytab{1 2 4,3 5}\\
\ytab{1 2 3,4} &\mapsto& \ytab{1 2 3,4 5}.
\end{eqnarray*}
In this way the $\ket{0_L}$ and $\ket{1_L}$ states of this logical ququint are related to the
$\ket{0_L}$ and $\ket{1_L}$ of the logical qubit of Example~6, and the $\ket{2_L}$, $\ket{3_L}$, and
$\ket{4_L}$ states of the logical ququint are related to the $\ket{0_L}$, $\ket{1_L}$, and $\ket{2_L}$
states of the logical qutrit of Example~7.

\setydiagramtext

\subsection{Coding efficiency}

\ytableausetup{smalltableaux}
\begin{table}[h!]
\label{tab:efficiency}
\centering
{
  \begin{tabular}{|c|c|c|c|}
  \hline
  $|\lambda|\backslash d$ & 2 & 3 & 4 \\
  \hline
  3 &  \s{ \\ \ydiagram{2,1}\\D=2, E=0.33}  &  \s{\ydiagram{2,1}\\D=2, E=0.21} &  \s{\ydiagram{2,1}\\D=2, E=0.17} \\
  \hline
  4 &  \s{ \\ \ydiagram{3,1}\\D=3, E=0.40} &  \s{\ydiagram{3,1}\\D=3, E=0.25} &  \s{\ydiagram{3,1}\\D=3, E=0.20} \\
  \hline
  5 &  \s{ \\ \ydiagram{3,2}\\D=5, E=0.46} &  \s{\ydiagram{3,2}\\D=5, E=0.29} &  \s{\ydiagram{3,2}\\D=5, E=0.23} \\
  \hline
  6 &  \s{ \\ \ydiagram{4,2}\\D=9, E=0.53} &  \s{\ydiagram{4,2}\\D=9, E=0.33} &  \s{\ydiagram{4,2}\\D=9, E=0.26} \\
  \hline
  7 &  \s{\ydiagram{4,3}\\D=14, E=0.54} &  \s{\ydiagram{4,2,1}\\D=35, E=0.46} &  \s{ \\ \ydiagram{4,2,1}\\D=35, E=0.37} \\
  \hline
  8 &  \s{\ydiagram{5,3}\\D=28, E=0.60} &  \s{\ydiagram{4,3,1}\\D=70, E=0.48} &  \s{ \\ \ydiagram{4,3,1}\\D=70, E=0.38} \\
  \hline
  9 &  \s{\ydiagram{6,3}\\D=48, E=0.62} &  \s{\ydiagram{4,3,2}\\D=168, E=0.52} &  \s{ \\ \ydiagram{4,3,1,1}\\D=216, E=0.43} \\
  \hline
 10 &  \s{\ydiagram{6,4}\\D=90, E=0.65} &  \s{\ydiagram{5,3,2}\\D=450, E=0.56} &  \s{ \\ \ydiagram{5,3,1,1}\\D=567, E=0.46} \\
  \hline
 11 &  \s{\ydiagram{7,4}\\D=165, E=0.67} &   \s{\ydiagram{5,4,2}\\D=990, E=0.57} &  \s{ \\ \ydiagram{5,3,2,1}\\D=2310, E=0.51} \\
  \hline
 12 &   \s{\ydiagram{7,5}\\D=297, E=0.68} &  \s{\ydiagram{6,4,2}\\D=2673, E=0.60}  &  \s{ \\ \ydiagram{5,4,2,1}\\D=5775, E=0.52} \\
  \hline
 13 &  \s{\ydiagram{8,5}\\D=572, E=0.70} &  \s{\ydiagram{6,4,3}\\D=6435, E=0.61} &  \s{ \\ \ydiagram{6,4,2,1}\\D=17160, E=0.54} \\
  \hline
  \end{tabular}
\caption{For $|\lambda|$ $d$-state systems,
the Young diagram labeling the irreducible representation of maximum dimension that admits
universality (by Theorem~\ref{thm:iff1partitionu}) is shown, along with $D=\dim(\blambda)$ (from \cite[p.414--419]{james_1984}) and the 
resulting coding efficiency $E$.
}
}
\end{table}
\ytableausetup{nosmalltableaux}

Clearly $\ydiagram{2,1}$ gives a more efficient encoding of a logical qubit
than $\ydiagram{2,2}$ since it requires one fewer physical qubit for the job.
Also $\ydiagram{3,1}$ encodes a logical qutrit more efficiently than 
$\ydiagram{2,2}$ encodes a logical qubit, 
since the former yields more dimensions for the cost of the same number of physical
qubits.  But, how to compare the logical qubit given by $\ydiagram{2,1}$ with the
logical qutrit given by $\ydiagram{3,1}$~?
One approach is to appeal to classical information theory, according to which
the information content of a trit, in bits, is $\log_2(3)$.
By analogy, this may also be considered a measure of the quantum information,
in qubits, associated with a qutrit.
More generally, the number of qudits of dimension $d$ associated with a quDit of
dimension $D$ is $\log_d(D)$.
By this measure, the information in physical qudits encoding a logical qudit given by
$\blambda$ is $\log_d(\dim(\blambda))$.  Dividing by the number of physical
qudits encoding the logical qudit, that is the size of $|\lambda|$,
then gives a measure of the efficiency of the encoding: $E(d,\lambda)=\log_d(\dim(\blambda))/|\lambda|$ \cite{PhysRevA.63.042307}.
Conveniently, this quantity is invariant under conversion to $d$-state systems of 
different dimensions, since the conversion factors cancel.

We can then ask how to optimize the coding efficiency.  For fixed $|\lambda|$
and $d$, $E(d,\lambda)$ is optimized by choosing the shape of $\lambda$ that maximizes
$\dim(\blambda)$.  Although, as we discuss in Section~4.1, the set of acceptable
$\lambda$ should be restricted to those that allow universal computation.
Since the lower bound of both the maximum dimension and typical dimension of $\blambda$ grows
superpolynomially in $|\lambda|$ \cite{Vershik1985}, for fixed $d$ we further expect the optimal
$E(d,\lambda)$ to trend upwards with the number of $d$-state systems, $|\lambda|$.
For fixed $|\lambda|$, meanwhile, we see in Table~\ref{tab:efficiency} that $d=2$ optimizes
$\max_{\lambda\vdash\ell}E(d,\lambda)$ up to $|\lambda|=13$. 

Of course, other factors may determine the best encoding for a given application.
An obvious constraint is that imposed by the available resources, such as the
number and dimension of $d$-state systems on hand.  
Another consideration is that quantum circuits taking advantage of qudits with dimension greater than two
are less well-studied than those designed for qubits. 
Still the latter can be applied to a $2^n$-dimensional subspace of the qudit. 
For related reasons, a large qudit dimension
may be relatively advantageous in terms of circuit complexity or algorithm 
efficiency, independent of the coding efficiency \cite{Lanyon2009,10.1145/3307650.3322253,2020arXiv200800959W}.
One might therefore consider optimizing $\dim(\blambda)$ or $\dim(\blambda)/|\lambda|$.
As evidenced by Table~\ref{tab:efficiency}, the physical dimension $d$ that optimizes these quantities
grows (weakly) with $|\lambda|$.

Having indicated that some qudits may be preferable to others,
we should remark how, technologically, such a choice might be physically implemented.
The general idea is to physically constrain the states of the $d$-state systems
so that, in effect, they map by Schur--Weyl duality to a vector in one representation space
and no other.
A general method touched upon in Subsection~4.4.2 of this thesis
is to constrain the physical system in such a way that its lowest energy state corresponds to 
the ``highest weight vector".  This can be used to effectively construct a single-row Young diagram one cell at a time
by ``Cartan product", which can then be added in the same way to the first row of a multi-row Young diagram.
For 2-state systems corresponding to particles with spin, the above constraint can be realized by a magnetic field.
Another approach, for particles with spin, is to prepare pairs of particles into
spin-0 ``singlet" states by various means \cite{divincenzo:qc2000b,HRLa},
which correspond to
vertically stacked pairs of cells in a 2-row Young diagram 
(as can be verified by considering the eigenvalues corresponding to spin for the associated
Weyl tableaux).

\section{Encoded universality}

The existence of a noiseless subsystem encoding a logical qudit as described previously does not guarantee the construction of every 
unitary operator on the logical qudit by exchange-only interactions.  
It happens to be the case that every unitary operator can be so constructed on the logical qubits and logical
qutrit considered above, and for example on the subsystem given by $\ydiagram{4,2,1}$.  However it is not possible to construct every unitary
operator from exchange-only interactions on the following subsystems:
\setydiagrameq
$$
\ydiagram{3,1,1},\ \ydiagram{3,2,1},\ \ydiagram{4,1,1}.
$$
\setydiagramtext
In Section~4.2.1 we derive conditions for an encoding to admit exchange-only unitary operators.
Here we explore what precisely we require of such constructions.

\subsection{Single logical qudit}

Given an encoded logical qudit represented by the subsystem $\blambda$,
we would like to construct any unitary operator in $\SU(\blambda)$ from
exchange-only operations.  By this we mean, operators of the form
$\exp(itH)$, where $t\in\rls$ and $iH\in\g_{|\lambda|}$ may be referred to as a Hamiltonian in the context of time-evolution.  This implies an operational 
definition of universality suggested by Kempe et al., that $\rho_\lambda(\g_{|\lambda|})$
has a subalgebra isomorphic to $\su(\blambda)$.  
Equivalently, we may say $\blambda$ is universal on itself if there exists
a monomorphism from $\su(\blambda)$ into $\rho_\lambda(\g_{|\lambda|})$.

\subsection{Multiple logical qudits}

In the case of multiple logical qudits, defining universality is more 
complicated because there are more choices to be made.  We can find guidance in
the properties of the underlying physical systems on the one hand, and our
computational requirements on the other.  Consider two logical qudits
associated with the representations $\blambda$ and $\bmu$.
Combining them corresponds to taking the tensor product 
$(V_\lambda\otimes U_\lambda)\otimes(V_\mu\otimes U_\mu)$.
In the quantum mechanical description of the corresponding physical system,
this product can be identified with a subspace of the Hilbert space
$\cH=(\cmplx^d)^{\otimes n}$, where $n=|\lambda|+|\mu|$.
Since both the product and $\cH$ are representations of $S_\ell\times S_m\times\SU(d)$, which corresponds to operations on the same $n$ physical particles
in either case, $S_\ell\times S_m\times\SU(d)$ should act ``the same way"
in both representations and the relevant map between them should be the
$S_\ell\times S_m\times\SU(d)$-intertwiner discussed in Subsection~2.3.5.
Then, in accordance with the above, the relevant subspace of $\cH$, to 
which the intertwiner maps, is given
by the terms in its Schur--Weyl decomposition determined by the nonzero
Littlewood--Richardson coefficients: $\bigoplus_{c_{\lambda\mu}^\nu>0}V_\nu\otimes U_\nu\subset\cH$.
As before, the relevant computational spaces are confined to the subsystems
given by the symmetric group representations $V_\nu$.

For the purposes of computation we require a correspondence between each vector of 
a subspace of some $\bnu$ and a unique vector of $\blambda\otimes\bmu$. 
We assume such correspondence is given by a linear map from $\blambda\otimes\bmu$ to $\bnu$.  We further require a correspondence between each element $U$
of $\SU(\blambda\otimes\bmu)$ and an element of $\exp(i\rho_\nu(\g_n))$, 
that transforms 
a vector of $\bnu$ corresponding to some $\psi\in\lambda\otimes\mu$ 
into a vector of $\bnu$ corresponding to $U\psi\in\lambda\otimes\mu$.  
Thus
given a set of relevant partitions $\cP\subset\{\nu|c_{\lambda\mu}^\nu>0\}$,
we require a map
$$
\Phi:\SU(\blambda\otimes\bmu)\rightarrow\exp(i(\bigoplus_{\nu\in\cP}\rho_\nu)(\g_n))
$$
such that for all $U\in\SU(\blambda\otimes\bmu)$ and $\nu\in\cP$, there exists
$\tau,\tau'\in L(\blambda\otimes\bmu,\bnu)$
such that for all 
$\psi\in\blambda\otimes\bmu$, we have
$$
\Phi(U)\tau(\psi)=\tau'(U\psi).
$$

Since, again, $\blambda\otimes\bmu$ and $\bnu$ are both representations of 
$S_{1,\ldots,\ell}\times S_{\ell+1,\ldots,n}$, corresponding to permutations
of $n$ physical particles, physical consistency demands that associated
observables have equal expectation values on both representations.
In particular, for every observable given by a permutation $a\in S_{1,\ldots,\ell}\times S_{\ell+1,\ldots,n}$,
we have
$$
\tr(a U\psi\psi^\dagger U^\dagger)=\tr(a \Phi(U)\tau(\psi)\tau(\psi)^\dagger\Phi(U)^\dagger)=\tr(a \tau'(U\psi)\tau'(U\psi)^\dagger).
$$
From this we deduce that $\tau$ and $\tau'$ must be equivariant
with respect to $S_\ell\times S_m$; that is,
$$
\tau,\tau'\in\Hom_{S_\ell\times S_m}(\blambda\otimes\bmu,\bnu).
$$

Having thus argued for the compatibility of $\Phi$ with certain $S_\ell\times S_m$-intertwiners, the question arises as to which intertwiners.
First there is the question of the $\bnu$ into which the intertwiner
maps.  Physically, distinct $\bnu$ correspond to different spins,
or analogous parameters quantified by eigenvalues of operators in $\SU(d)$.
Meanwhile distinct intertwiners in $\Hom(\blambda\otimes\bmu,\bnu)$
may correspond to different spins of individual qudits.
Subsection~4.4.2 discusses how such parameters might be constrained.
But in general such control may not be achievable, and one needs to allow
for all possibilities, including superposition states.
We are thus led to suggest the following condition for universality:
For all $\nu\in\cP$, $\iota\in\Hom(\blambda\otimes\bmu,\bnu)$, and $U\in\SU(\blambda\otimes\bmu)$, there exists $\iota'\in\Hom(\blambda\otimes\bmu,\bnu)$
such that for all $\psi\in\blambda\otimes\bmu$, we have
$$
\Phi(U)\iota(\psi)=\iota'(U\psi).
$$

For consistency with the definition of Kempe et al. \cite{PhysRevA.63.042307} this condition can be
expressed in Lie-algebraic terms. 
Using the canonical isomorphism, and defining $T_\nu\in\Aut(\Hom(\blambda\otimes\bmu,\bnu))$
such that $T\iota=\iota'$ and $u\in\su(\blambda\otimes\bmu)$ such that
$\exp(iu)=U$, and $t_\nu\in\End(\Hom(\blambda\otimes\bmu,\bnu))$ such that $e^{it_\nu}=T_\nu$, we have
\begin{eqnarray*}
\Phi(U) &=& \bigoplus_{\nu\in\cP}\chi^\nu(U\otimes T_\nu)(\chi^\nu)^{-1}\\
&=& \bigoplus_{\nu\in\cP}\chi^\nu(U\otimes\one)(\one\otimes T_\nu)(\chi^\nu)^{-1}\\
&=& \bigoplus_{\nu\in\cP}\chi^\nu(e^{iu}\otimes\one)(\one\otimes e^{it_\nu})(\chi^\nu)^{-1}\\
&=& \bigoplus_{\nu\in\cP}\chi^\nu(\exp(i(u\otimes\one)))\exp(\one\otimes t_\nu)(\chi^\nu)^{-1}\\
&=& \bigoplus_{\nu\in\cP}\chi^\nu(\exp(i(u\otimes\one+\one\otimes t_\nu)))(\chi^\nu)^{-1}\\
&=& \exp(i\phi(u)),
\end{eqnarray*} 
where $\chi^\nu=\chi_{\lambda\mu}^\nu$ and $\phi(u)=\bigoplus_{\nu\in\cP}\chi^\nu(u\otimes\one+\one\otimes t_\nu)(\chi^\nu)^{-1}$.
Therefore the above condition for universality is equivalent to the statement
that there exists a map $\phi:\su(\blambda\otimes\bmu)\rightarrow\bigoplus_{\nu\in\cP}\rho_\nu(\g_n)$ such that 
for all $\nu\in\cP$, $\iota\in\Hom(\blambda\otimes\bmu,\bnu)$, and $u\in\su(\blambda\otimes\bmu)$, there exists $\iota_0\in\Hom(\blambda\otimes\bmu,\bnu)$
such that for all $\psi\in\blambda\otimes\bmu$, we have
$$
\phi(u)\iota(\psi)=\iota(u\psi)+\iota_0(\psi).
$$
Similar arguments apply to the case of arbitrarily many logical qudits.

\chapter{Universality}

Having introduced and motivated the concept of universality for the purposes of exchange-only computation, here we give it a precise mathematical definition.  Proving universality, 
however, is nontrivial, and we spend much of this chapter deriving conditions for it.  
Following this we show how, in cases where universality fails, simple modifications can make 
it unavoidable.  Along the way we take an important detour to prove the equivalence of two 
alternative definitions of universality.

\section{Defining universality}

As previously discussed, encoded universality on logical qudits labeled by partitions
$\{\mu^{(i)}\}_{i=1}^N$ can be expressed by the existence of a linear injective map from the special unitary
algebra $\su(\bigotimes_{i=1}^N\bmu^{(i)})$ into the image of a representation $\bigoplus_\nu\rho_\nu$ 
of the algebra of transpositions $\g_n$.  We require that the map be compatible
with intertwiners between the representations $\bigotimes_{i=1}^N\bmu^{(i)}$ and $\left(\bigoplus_\nu\rho_\nu\right)(\g_n)$.  In this section we propose two such definitions, followed by some exploration
of equivalences.

For this purpose, it is convenient to define two relevant sets of partitions, 
given a family of partitions $\{\mu^{(i)}\}_{i=1}^N$, namely:
$$
\cP_\mus\equiv\{\nu|\Hom(\bigotimes_{i=1}^N\bmu^{(i)},\bnu)\neq\{0\}\}
$$
and
$$
\cP^{(d)}_\mus\equiv\{\nu|\Hom(\bigotimes_{i=1}^N\bmu^{(i)},\bnu)\neq\{0\},\nup_1\leq d\}.
$$
Then a minimal definition of universality is: 
\begin{Definition}
A family of partitions $\{\mu^{(i)}\}_{i=1}^N$
is 
{\it weakly universal} on a nonempty set 
$\cP\subset\cP_\mus$ if
there exists a map
$$\phi:\su\left(\bigotimes_{i=1}^N\bmu^{(i)}\right)\rightarrow\left(\bigoplus_{\nu\in \cP}\rho_\nu\right)(\g_n)$$
such that for every $\nu\in \cP$, $\iota\in\Hom(\bigotimes_{i=1}^N\bmu^{(i)},\bnu)$, and $u\in\su(\bigotimes_{i=1}^N\bmu^{(i)})$, there exists $\iota_0\in\Hom(\bigotimes_{i=1}^N\bmu^{(i)},\bnu)$ such that for all $\psi\in\bigotimes_{i=1}^N \bmu^{(i)}$ we have
$$\phi(u)\iota(\psi)=\iota(u\psi)+\iota_0(\psi).$$
\end{Definition}
\noindent A simpler and seemingly stronger definition for universality is:
\begin{Definition}
\label{def:strongu}
A family of partitions $\{\mu^{(i)}\}_{i=1}^N$
is {\it strongly universal}, or just
{\it universal}, on a nonempty set 
$\cP\subset\cP_\mus$ if
there exists a map
$$\phi:\su\left(\bigotimes_{i=1}^N\bmu^{(i)}\right)\rightarrow\left(\bigoplus_{\nu\in \cP}\rho_\nu\right)(\g_n)$$
such that for every $\nu\in \cP$, $\iota\in\Hom(\bigotimes_{i=1}^N\bmu^{(i)},\bnu)$, $\psi\in\bigotimes_{i=1}^N \bmu^{(i)}$, and $u\in\su(\bigotimes_{i=1}^N\bmu^{(i)})$, we have
$$\phi(u)\iota(\psi)=\iota(u\psi).$$
If this condition holds then $\cP$ is said to {\it admit universality} of $\{\mu^{(i)}\}_{i=1}^N$.
If $\cP_\mus^{(d)}\subset \cP$ then $\{\mu^{(i)}\}_{i=1}^N$ is 
said to be {\it $d$-universal}.
\end{Definition}
\noindent We show in Section~4.3 that the above two definitions are equivalent, thus justifying our
focus henceforth on the simpler definition of strong universality, and of referring to it simply as
``universality".
Note that universality is invariant with respect to the order of the partitions in the 
family $\{\mu^{(i)}\}_{i=1}^N$, and so could have been defined in terms of a multiset of
partitions instead.  However, for the purpose of proving various results we find it convenient
to order the partitions, albeit arbitrarily.

As discussed in Chapter~3, a desired property of universality is that the projection $\Pi$ of the representation onto the
relevant isotypical subspaces $\bigoplus_{\nu\in \cP}V_{\mu^{(1)}\cdots\mu^{(N)}}^\nu$  contain a subalgebra isomorphic to the relevant special unitary
algebra.  
Therefore the universality-witnessing map should be 
a Lie algebra monomorphism when so projected.
This property could have been specified in the definitions of
universality above 
but there is no need, as it can be inferred from the strong-universality
witnessing map as given:
\begin{lemma}
\label{lem:monomorphism}
Let $\phi$ be a map witnessing the universality of $\{\mu^{(i)}\}_{i=1}^N$ on a nonempty
$\cP\subset \cP_\mus$.  Then the map $\tilde{\phi}$ formed
by projecting $\phi$
onto the
isotypical subspaces $\bigoplus_{\nu\in \cP}V_{\mu^{(1)}\cdots\mu^{(N)}}^\nu$
is a Lie algebra monomorphism.
\end{lemma}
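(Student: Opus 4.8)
The plan is to show that $\tphi$ is exactly the operator obtained by transporting the assignment $u\mapsto u\otimes\one$ through the canonical isomorphisms $\chi^\nu\equiv\chi_\mus^\nu$ of Lemma~\ref{lem:canonicaliso}; once this identification is made, both the Lie-algebra-homomorphism property and the injectivity become formal.

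First I would record that for each $u\in\su(\bigotimes_{i=1}^N\bmu^{(i)})$ the linear operator $\phi(u)$ preserves every isotypical subspace $V_\mus^\nu$ with $\nu\in\cP$. Since $\bigoplus_{\nu\in\cP}\rho_\nu$ maps $\g_n$ into block-diagonal operators, $\phi(u)$ preserves each summand $\bnu$; and $V_\mus^\nu\subset\bnu$ is the image of $\chi^\nu$, hence is spanned by vectors $\iota(\psi)$ with $\iota\in\Hom(\bigotimes_{i=1}^N\bmu^{(i)},\bnu)$ and $\psi\in\bigotimes_{i=1}^N\bmu^{(i)}$, each of which the defining relation $\phi(u)\iota(\psi)=\iota(u\psi)$ of a universality witness returns to $V_\mus^\nu$. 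Thus ``projecting $\phi$ onto the isotypical subspaces'' simply restricts $\phi(u)$ to the invariant subspace $\bigoplus_{\nu\in\cP}V_\mus^\nu$, and transporting through $\chi^\nu$ gives, on simple tensors,
$$
(\chi^\nu)^{-1}\tphi(u)\chi^\nu(\psi\otimes\iota)=(\chi^\nu)^{-1}\iota(u\psi)=u\psi\otimes\iota=(u\otimes\one)(\psi\otimes\iota).
$$
Since simple tensors span, $\tphi(u)=\bigoplus_{\nu\in\cP}\chi^\nu(u\otimes\one_\nu)(\chi^\nu)^{-1}$, where $\one_\nu$ is the identity on $\Hom(\bigotimes_{i=1}^N\bmu^{(i)},\bnu)$.

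From this formula the homomorphism property is automatic: $u\mapsto u\otimes\one_\nu$ is a Lie algebra homomorphism into $\gl(\bigotimes_{i=1}^N\bmu^{(i)}\otimes\Hom(\bigotimes_{i=1}^N\bmu^{(i)},\bnu))$, conjugation by the vector-space isomorphism $\chi^\nu$ is a Lie algebra automorphism of the ambient $\gl$, and a direct sum of homomorphisms with block-diagonal images is again a homomorphism; hence $\tphi([u,v])=[\tphi(u),\tphi(v)]$ and, in particular, $\tphi$ is linear. For injectivity, suppose $\tphi(u)=0$. Picking any $\nu\in\cP$ --- possible since $\cP$ is nonempty --- we get $u\otimes\one_\nu=0$; but $\cP\subset\cP_\mus$ forces $\Hom(\bigotimes_{i=1}^N\bmu^{(i)},\bnu)\neq\{0\}$, so $\one_\nu\neq 0$ and tensoring with it cannot annihilate a nonzero operator, whence $u=0$. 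Therefore $\tphi$ is a Lie algebra monomorphism.

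The subtlety to watch, rather than a genuine obstacle, is that $\phi$ itself is not assumed linear --- only each individual operator $\phi(u)$ is --- so the argument must be organized around the compressions $\tphi(u)$ and the relation $\phi(u)\iota(\psi)=\iota(u\psi)$, never around $\phi$ directly. Once the block-diagonal description $\tphi(u)=\bigoplus_{\nu\in\cP}\chi^\nu(u\otimes\one_\nu)(\chi^\nu)^{-1}$ is in hand, everything else is routine.
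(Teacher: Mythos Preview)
Your proof is correct and arrives at the same conclusion, but the organization differs from the paper's. The paper works directly with the defining relation $\tphi(u)\iota(\psi)=\iota(u\psi)$ and verifies linearity, bracket preservation, and injectivity by separate computations on vectors of the form $\iota(\psi)$, using that these span $V_\mus^\nu$; injectivity in the paper comes from Schur's lemma applied to a single nonzero $\iota$. You instead first establish the closed-form expression $\tphi(u)=\bigoplus_{\nu\in\cP}\chi^\nu(u\otimes\one_\nu)(\chi^\nu)^{-1}$ and then read off all three properties structurally from the fact that $u\mapsto u\otimes\one_\nu$ is a homomorphism and $\chi^\nu$ is an isomorphism. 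Your route is a bit more conceptual and in effect anticipates Lemma~\ref{lem:canonicalstrong}, which the paper proves afterward; the paper's route is slightly more elementary in that it never needs to name the operator $u\otimes\one_\nu$. Both arguments handle the subtlety you flag---that $\phi$ itself is not assumed linear---in the same way, by working only with the compressions to the isotypical subspaces.
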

\begin{proof}
Given such a universality-witnessing map $\phi:\su(\bigotimes_{i=1}^N\bmu^{(i)})\rightarrow\left(\bigoplus_{\nu\in \cP}\rho_\nu\right)(\g_n)$, define:
\begin{eqnarray*}
\tilde{\phi}:\su\left(\bigotimes_{i=1}^N\bmu^{(i)}\right) &\rightarrow& \left(\bigoplus_{\nu\in \cP}\rho_\nu\right)(\g_n)\Pi,\\
u &\mapsto& \phi(u)\Pi.
\end{eqnarray*}
Note that $\tilde{\phi}$ is also a witnessing map since the action of the elements in its image
on the isotypical subspaces is unchanged. Thus for every
$\nu\in \cP$, for every $S_{m_1}\times\cdots\times S_{m_N}$-intertwiner $\iota:\bigotimes_{i=1}^N \bmu^{(i)}\rightarrow \bnu$, $\psi$ in $\bigotimes_{i=1}^N \bmu^{(i)}$, and every $u$ in $\su(\bigotimes_{i=1}^N\bmu^{(i)})$, $\tilde{\phi}$ still
satisfies
$\tphi(u)\iota(\psi)=\iota(u\psi)$.
Now we show that $\tphi(u)$ applied to $\iota(\psi)$ is linear in $u$.
For any $u,v\in\su(\bigotimes_{i=1}^N\bmu^{(i)})$ and $a,b\in\cmplx$ we have
\begin{eqnarray*}
\tphi(au+bv)\iota(\psi) &=& \iota((au+vb)\psi)\\
&=& a\iota(u\psi)+b\iota(v\psi)\\
&=& a\tphi(u)\iota(\psi)+b\tphi(v)\iota(\psi)\\
&=& (a\tphi(u)+b\tphi(v))\iota(\psi).
\end{eqnarray*}
Since by the universal property of the tensor product, we have
$$\bigotimes_{i=1}^N \bmu^{(i)}\otimes\Hom\left(\bigotimes_{i=1}^N \bmu^{(i)},\bnu\right)=\Span\left\{\psi\otimes\iota|\iota\in\Hom\left(\bigotimes_{i=1}^N \bmu^{(i)},\bnu\right),\psi\in \bigotimes_{i=1}^N \bmu^{(i)}\right\},$$
application of the canonical
isomorphism $\psi\otimes\iota\mapsto\iota(\psi)$ yields
$$V_{\mu^{(1)}\cdots\mu^{(N)}}^\nu=\Span\left\{\iota(\psi)|\iota\in\Hom\left(\bigotimes_{i=1}^N \bmu^{(i)},\bnu\right),\psi\in \bigotimes_{i=1}^N \bmu^{(i)}\right\}.$$ 
The linearity in $u$ of $\tphi(u)$ applied to $V_{\mu^{(1)}\cdots\mu^{(N)}}^\nu$ then follows from the
linearity in $u$ of $\tphi(u)$ applied to $\iota(\psi)$ proven above.
Next we establish that $\tphi$ 
preservers the Lie bracket.
For each $\iota\in\Hom(\bigotimes_{i=1}^N \bmu^{(i)},\bnu)$
and $\psi\in \bigotimes_{i=1}^N \bmu^{(i)}$, we have
\begin{eqnarray*}
\tphi([u,v])\iota(\psi) &=& \iota([u,v]\psi)\\
&=& \iota(uv\psi)-\iota(vu\psi)\\
&=& \tphi(u)\iota(v\psi)-\tphi(v)\iota(u\psi)\\
&=& \tphi(u)\tphi(v)\iota(\psi)-\tphi(v)\tphi(u)\iota(\psi)\\
&=& [\tphi(u),\tphi(v)]\iota(\psi).
\end{eqnarray*}
Again using the linearity
of the action of $\tphi(u)$
for all $u\in\su\left(\bigotimes_{i=1}^N\bmu^{(i)}\right)$,
we conclude that $\tphi([u,v])=[\tphi(u),\tphi(v)]$ on $\bigoplus_{\nu\in \cP}V_{\mus}^\nu$.
Finally, the homomorphism $\tphi$ must be a monomorphism since by Schur's Lemma every nonzero $\iota$  is
injective:
if $\tphi(u)=\tphi(u')$ and thus for all $\psi\in \bigotimes_{i=1}^N\bmu^{(i)}$,
$\tphi(u)\iota(\psi)=\tphi(u')\iota(\psi)$,
then by definition of $\tphi$, $\iota(u\psi)=\iota(u'\psi)$,
and by injectivity of $\iota$, we have $u\psi=u'\psi$ and thus $u=u'$.
\end{proof}

The nontrivial $\Hom$-set condition in the definitions of
universality can be written in terms of nonvanishing Littlewood--Richardson coefficients:
\begin{lemma}
\label{lem:frobenius}
Given partitions $\mu^{(1)},\ldots,\mu^{(N)}$, and $\nu$ we have
$$\left\{\nu\left|\Hom\left(\bigotimes_{i=1}^N\bmu^{(i)},\bnu\right)\neq\{0\}\right.\right\}=\{\nu|c_{\mu^{(1)}\cdots\mu^{(N)}}^\nu>0\}.$$
\end{lemma}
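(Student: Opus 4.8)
The plan is to read off the claim directly from Lemma~\ref{lem:LRres}, which is the restriction form of the Littlewood--Richardson rule. That lemma states that the multiplicity of $\bmu^{(1)}\otimes\cdots\otimes\bmu^{(N)}$ in the restriction of $\bnu$ from $S_n$ to $S_{m_1}\times\cdots\times S_{m_N}$ equals $c_{\mu^{(1)}\cdots\mu^{(N)}}^\nu$; equivalently, by Schur's lemma (as recorded in Subsection~2.3.3), $\dim\Hom_{S_{m_1}\times\cdots\times S_{m_N}}(\bigotimes_{i=1}^N\bmu^{(i)},\bnu)=c_{\mu^{(1)}\cdots\mu^{(N)}}^\nu$. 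Since a finite-dimensional complex vector space is the zero space if and only if its dimension is zero, this gives $\Hom(\bigotimes_{i=1}^N\bmu^{(i)},\bnu)\neq\{0\}$ precisely when $c_{\mu^{(1)}\cdots\mu^{(N)}}^\nu>0$, which is the asserted set equality. Here, following the convention of Section~2.3, the unlabeled $\Hom$ denotes $\Hom_{S_{m_1}\times\cdots\times S_{m_N}}$.

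First I would dispense with the degenerate case in which $\nu$ is not a partition of $n=\sum_{i=1}^N m_i$. On the right-hand side, $c_{\mu^{(1)}\cdots\mu^{(N)}}^\nu$ is assembled from Littlewood--Richardson coefficients $c_{\lambda\mu}^\kappa$ with $|\kappa|=|\lambda|+|\mu|$, so a telescoping count of cells shows every such iterated coefficient vanishes unless $|\nu|=n$; on the left-hand side the relevant modules live over $S_n$, so we restrict attention to $\nu\vdash n$ throughout (or simply note both sides contain no such $\nu$). With that understood, the proof is the one-line invocation above.

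If a self-contained argument is preferred over citing Lemma~\ref{lem:LRres}, I would instead retrace its proof: apply Frobenius reciprocity, $\Hom_{S_{m_1}\times\cdots\times S_{m_N}}(\bigotimes_{i=1}^N\bmu^{(i)},\bnu)\cong\Hom_{S_n}(\bnu,\Ind_{S_{m_1}\times\cdots\times S_{m_N}}^{S_n}\bigotimes_{i=1}^N\bmu^{(i)})$, expand the induced module via the iterated Littlewood--Richardson formula of Subsection~2.3.2 as $\bigoplus_\nu c_{\mu^{(1)}\cdots\mu^{(N)}}^\nu\bnu$, and apply Schur's lemma to the last $\Hom$-space. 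Either way there is no real obstacle: the content is entirely in the Littlewood--Richardson machinery already established, and this lemma merely translates the ``$\Hom$-set nonvanishing'' hypothesis appearing in the definitions of weak and strong universality into an arithmetic statement about Littlewood--Richardson coefficients, which is what makes the later combinatorial analysis possible.
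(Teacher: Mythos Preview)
Your proposal is correct and essentially identical to the paper's proof: both simply invoke Lemma~\ref{lem:LRres} to get $\dim\Hom(\bigotimes_{i=1}^N\bmu^{(i)},\bnu)=c_{\mu^{(1)}\cdots\mu^{(N)}}^\nu$ and observe that a vector space is nonzero if and only if its dimension is positive. Your additional remarks on the degenerate case $|\nu|\neq n$ and the alternative via Frobenius reciprocity are fine but not needed, as the paper does not bother with them either.
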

\begin{proof}
By Lemma~\ref{lem:LRres}, we have
$$\dim(\Hom(\bmu^{(1)}\otimes\cdots\otimes\bmu^{(N)},\bnu))=c_{\mu^{(1)}\cdots\mu^{(N)}}^\nu.$$
Therefore we have
$$\Hom(\bmu^{(1)}\otimes\cdots\otimes\bmu^{(N)},\bnu)\neq\{0\}$$
if and only if
$$c_{\mu^{(1)}\cdots\mu^{(N)}}^\nu>0.$$
\end{proof}

Universality and weak universality can be characterized in terms of the 
canonical isomorphism between each product space $\bigotimes_{i=1}^N\bmu^{(i)}\otimes\Hom(\bigotimes_{i=1}^N\bmu^{(i)},\bnu)$ and the isotypical subspace $V_{\mu^{(1)}\cdots\mu^{(N)}}^\nu$,
as formalized by the next two lemmas.
\begin{lemma}
\label{lem:canonicalweak}
A family of partitions $\{\mu^{(i)}\}_{i=1}^N$ is weakly universal on a non-empty set
$\cP\subset\cP_\mus$ 
if and only if
for all $u\in\su(\bigotimes_{i=1}^N\bmu^{(i)})$ and $\nu\in\cP$ there exists $t_{\nu,u}\in\End(\Hom(\bigotimes_{i=1}^N\bmu^{(i)},\bnu)$ such that
$$\bigoplus_{\nu\in\cP}\chi_{\mu^{(1)}\cdots\mu^{(N)}}^\nu(u\otimes\one+\one\otimes t_{\nu,u})(\chi_{\mu^{(1)}\cdots\mu^{(N)}}^\nu)^{-1}\in\bigoplus_{\nu\in\cP}\rho_\nu(\g_n)\Pi.$$
\end{lemma}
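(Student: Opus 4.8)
The plan is to read both conditions off from the canonical isomorphism $\chi^\nu\equiv\chi_\mus^\nu$ of Lemma~\ref{lem:canonicaliso}. Write $M=\bigotimes_{i=1}^N\bmu^{(i)}$ and $H_\nu=\Hom(M,\bnu)$, so that $\chi^\nu\colon M\otimes H_\nu\to V_\mus^\nu$ is the linear isomorphism $\psi\otimes\iota\mapsto\iota(\psi)$ and, by bijectivity, $(\chi^\nu)^{-1}(\iota(\psi))=\psi\otimes\iota$; recall also from the proof of Lemma~\ref{lem:monomorphism} that $V_\mus^\nu=\Span\{\iota(\psi)\mid\iota\in H_\nu,\ \psi\in M\}$. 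The single computation underlying both directions is that, for any $t_\nu\in\End(H_\nu)$ and any $u$,
$$
\chi^\nu\bigl(u\otimes\one+\one\otimes t_\nu\bigr)(\chi^\nu)^{-1}\bigl(\iota(\psi)\bigr)=\chi^\nu\bigl(u\psi\otimes\iota+\psi\otimes t_\nu\iota\bigr)=\iota(u\psi)+(t_\nu\iota)(\psi).
$$

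For the ``only if'' direction, I would take a map $\phi$ witnessing weak universality on $\cP$ and fix $u$ and $\nu\in\cP$. For each $\iota\in H_\nu$ the relation $\phi(u)\iota(\psi)=\iota(u\psi)+\iota_0(\psi)$ (holding for all $\psi$) pins down $\iota_0$ uniquely as the element $\psi\mapsto\phi(u)\iota(\psi)-\iota(u\psi)$ of $H_\nu$, and one checks at once that $\iota\mapsto\iota_0$ is linear, hence equals some $t_{\nu,u}\in\End(H_\nu)$. Since $V_\mus^\nu$ is spanned by the vectors $\iota(\psi)$, each of which $\phi(u)$ sends into $\iota(u\psi)+\iota_0(\psi)\in V_\mus^\nu$, the operator $\phi(u)$ leaves every $V_\mus^\nu$ invariant; consequently $\phi(u)\Pi$ agrees with $\phi(u)$ on each $V_\mus^\nu$ and vanishes elsewhere, which by the displayed computation makes it exactly $\bigoplus_{\nu\in\cP}\chi^\nu(u\otimes\one+\one\otimes t_{\nu,u})(\chi^\nu)^{-1}$. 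As $\phi(u)\in\left(\bigoplus_{\nu\in\cP}\rho_\nu\right)(\g_n)$, this operator lies in $\left(\bigoplus_{\nu\in\cP}\rho_\nu\right)(\g_n)\Pi$.

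For the ``if'' direction, given the endomorphisms $t_{\nu,u}$ supplied by the hypothesis together with elements $x_u\in\g_n$ realizing $\bigoplus_{\nu\in\cP}\chi^\nu(u\otimes\one+\one\otimes t_{\nu,u})(\chi^\nu)^{-1}=\left(\bigoplus_{\nu\in\cP}\rho_\nu\right)(x_u)\Pi$, I would simply define $\phi(u)\equiv\left(\bigoplus_{\nu\in\cP}\rho_\nu\right)(x_u)$. Since $\iota(\psi)\in V_\mus^\nu$ is fixed by $\Pi$ and only the $\nu$-summand of the direct sum acts nontrivially on it, the displayed computation gives $\phi(u)\iota(\psi)=\iota(u\psi)+(t_{\nu,u}\iota)(\psi)$, so $\iota_0\equiv t_{\nu,u}\iota\in H_\nu$ verifies the weak-universality relation.

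The main obstacle, such as it is, will be the bookkeeping around $\Pi$: the definition of weak universality constrains $\phi(u)$ as a member of the image $\left(\bigoplus_{\nu\in\cP}\rho_\nu\right)(\g_n)$ yet only determines its action on the isotypical subspaces $V_\mus^\nu$, and the ``$\Pi$'' in the statement is precisely the device recording this. The step that makes the two directions meet is the observation that $\phi(u)$ (and, in the converse, the $\phi(u)$ one constructs) preserves each $V_\mus^\nu$, which rests on the spanning description of $V_\mus^\nu$; everything else is the universal property of the tensor product and the fact that $\chi^\nu$ is an isomorphism.
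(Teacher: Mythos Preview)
Your proposal is correct and follows essentially the same approach as the paper: both directions hinge on the single computation $\chi^\nu(u\otimes\one+\one\otimes t_\nu)(\chi^\nu)^{-1}\iota(\psi)=\iota(u\psi)+(t_\nu\iota)(\psi)$, together with the spanning of $V_\mus^\nu$ by vectors $\iota(\psi)$ and the fact that $\Pi$ fixes them. If anything, your argument is slightly more careful than the paper's in explicitly verifying that $\iota\mapsto\iota_0$ is linear (so that $t_{\nu,u}$ is a genuine endomorphism), a step the paper elides.
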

\begin{proof}
First observe that for each $\nu\in \cP$, $\iota\in\Hom(\bigotimes_{i=1}^N\bmu^{(i)},\bnu)$, 
a given 
$$t_{\nu,u}\in\End(\Hom(\bigotimes_{i=1}^N\bmu^{(i)},\bnu),$$
and for all
$\psi\in\bigotimes_{i=1}^N \bmu^{(i)}$ and $u\in\su(\bigotimes_{i=1}^N\bmu^{(i)})$, we have
\begin{align}
&\bigoplus_{\nu\in\cP}\chi_{\mus}^\nu(u\otimes\one+\one\otimes t_{\nu,u})(\chi_{\mus}^\nu)^{-1}\iota(\psi)\\
&= \chi_{\mus}^\nu(u\otimes\one+\one\otimes t_{\nu,u})(\chi_{\mus}^\nu)^{-1}\iota(\psi)\notag\\
&=  \chi_{\mus}^\nu(u\otimes\one+\one\otimes t_{\nu,u})(\chi_{\mus}^\nu)^{-1}\chi_\mus^\nu(\psi\otimes\iota)\notag\\
&= \chi_{\mus}^\nu(u\otimes\one+\one\otimes t_{\nu,u})(\psi\otimes\iota)\notag\\
&= \chi_{\mus}^\nu(u\psi\otimes\iota+\psi\otimes t_{\nu,u}\iota)\notag\\
&= \iota(u\psi)+t_{\nu,u}\iota(\psi).\tag{*}
\end{align}

If $\{\mu^{(i)}\}_{i=1}^N$ is weakly universal on $\cP$,
then there exists a weak-universality-witnessing
function
$$\phi:\su\left(\bigotimes_{i=1}^N\bmu^{(i)}\right)\rightarrow\bigoplus_{\nu\in\cP}\rho_\nu(\g_n)$$
such that $\phi(u)\iota(\psi)=\iota(u\psi)+\iota_0(\psi)$.
Since $\iota(\psi)\in V_{\mu^{(i)}\cdots\mu^{(N)}}^\nu$,
and therefore $\iota(\psi)=\Pi\iota(\psi)$,
we also have
that $\phi(u)\Pi\iota(\psi)=\iota(u\psi)+\iota_0(\psi)$.
Now noting that $\iota$ depends only on $\nu$, $u$, and $\iota$, we can choose $t_{\nu,u}$ such that
$\iota_0=t_{\nu,u}\iota$.
Since furthermore a basis
for $\bigoplus_{\nu\in\cP}V_{\mu^{(i)}\cdots\mu^{(N)}}^\nu$ can be constructed in terms of vectors of the form $\iota(\psi)$,
it follows from Equation~(*) 
that
$$\bigoplus_{\nu\in\cP}\chi_{\mus}^\nu(u\otimes\one+\one\otimes t_{\nu,u})(\chi_{\mus}^\nu)^{-1}=\phi(u)\Pi$$
and thus $\bigoplus_{\nu\in\cP}\chi_{\mus}^\nu(u\otimes\one+\one\otimes t_{\nu,u})(\chi_{\mus}^\nu)^{-1}\in\left(\bigoplus_{\nu\in\cP}\rho_\nu\right)(\g_n)\Pi$.

Conversely,
if for all $u\in\su(\bigotimes_{i=1}^N\bmu^{(i)})$, we have
$$\bigoplus_{\nu\in\cP}\chi_{\mus}^\nu(u\otimes\one+\one\otimes t_{\nu,u})(\chi_{\mus}^\nu)^{-1}\in\bigoplus_{\nu\in\cP}\rho_\nu(\g_n)\Pi,$$ 
then there exists $x_u\in\g_n$ such that
$$\left(\bigoplus_{\nu\in\cP}\rho_\nu\right)(x_u)\Pi=\bigoplus_{\nu\in\cP}\chi_{\mus}^\nu(u\otimes\one+\one\otimes t_{\nu,u})(\chi_{\mus}^\nu)^{-1}.$$
Define
\begin{eqnarray*}
\phi:\su\left(\bigotimes_{i=1}^N\bmu^{(i)}\right) &\rightarrow& \left(\bigoplus_{\nu\in\cP}\rho_\nu\right)(\g_n),\\
u &\mapsto& \left(\bigoplus_{\nu\in\cP}\rho_\nu\right)(x_u).
\end{eqnarray*}
Then for all $\iota\in\Hom(\bigotimes_{i=1}^N\bmu^{(i)},\bnu)$ and $\psi\in\bigotimes_{i=1}^N \bmu^{(i)}$, we have
\begin{eqnarray*}
\phi(u)\iota(\psi) &=&  \left(\bigoplus_{\nu\in\cP}\rho_\nu\right)(x_u)\iota(\psi)\\
&=&  \left(\bigoplus_{\nu\in\cP}\rho_\nu\right)(x_u)\Pi\iota(\psi)\\
&=& \bigoplus_{\nu\in\cP}\chi_{\mus}^\nu(u\otimes\one+\one\otimes t_{\nu,u})(\chi_{\mus}^\nu)^{-1}\iota(\psi)\\
&=& \iota(u\psi)+\iota_0(\psi),
\end{eqnarray*}
where $\iota_0=t_{\nu,u}\iota$.
Thus $\{\mu^{(i)}\}_{i=1}^N$ is weakly universal on $\cP$.
\end{proof}

\begin{lemma}
\label{lem:canonicalstrong}
A family of partitions $\{\mu^{(i)}\}_{i=1}^N$ is universal on a set
$$\cP\subset\cP_\mus$$
if and only if
for all $u\in\su(\bigotimes_{i=1}^N\bmu^{(i)})$, we have $\bigoplus_{\nu\in\cP}\chi_{\mus}^\nu(u\otimes\one)(\chi_{\mus}^\nu)^{-1}\in\bigoplus_{\nu\in\cP}\rho_\nu(\g_n)\Pi$,
where $\Pi$ is the projection onto $\bigoplus_{\nu\in\cP}V_{\mu^{(i)}\cdots\mu^{(N)}}^\nu$.
\end{lemma}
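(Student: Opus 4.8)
The plan is to mirror the proof of Lemma~\ref{lem:canonicalweak}, specialized to the case in which all the endomorphisms $t_{\nu,u}$ are taken to be zero, so that the extra term $\iota_0$ disappears. The only computation needed is the specialization of Equation~(*) from that proof: for every $u\in\su(\bigotimes_{i=1}^N\bmu^{(i)})$, every $\nu\in\cP$, every $\iota\in\Hom(\bigotimes_{i=1}^N\bmu^{(i)},\bnu)$, and every $\psi\in\bigotimes_{i=1}^N\bmu^{(i)}$,
$$
\bigoplus_{\nu\in\cP}\chi_{\mus}^\nu(u\otimes\one)(\chi_{\mus}^\nu)^{-1}\,\iota(\psi)=\chi_{\mus}^\nu(u\otimes\one)(\psi\otimes\iota)=\iota(u\psi),
$$
which uses only the defining property $\chi_{\mus}^\nu(\psi\otimes\iota)=\iota(\psi)$ of the canonical isomorphism (Lemma~\ref{lem:canonicaliso}) together with $(\chi_{\mus}^\nu)^{-1}\iota(\psi)=\psi\otimes\iota$.

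For the forward direction, suppose $\{\mu^{(i)}\}_{i=1}^N$ is universal on $\cP$, witnessed by $\phi\colon\su(\bigotimes_{i=1}^N\bmu^{(i)})\to(\bigoplus_{\nu\in\cP}\rho_\nu)(\g_n)$ with $\phi(u)\iota(\psi)=\iota(u\psi)$ for all admissible $\nu,\iota,\psi,u$. Since each $\iota(\psi)$ lies in the isotypical subspace $V_{\mus}^\nu$, it is fixed by $\Pi$, so $\phi(u)\Pi\,\iota(\psi)=\iota(u\psi)$ as well. By the universal property of the tensor product together with the canonical isomorphism, the vectors of the form $\iota(\psi)$ span $\bigoplus_{\nu\in\cP}V_{\mus}^\nu$ (exactly as already spelled out in the proof of Lemma~\ref{lem:monomorphism}), while $\phi(u)\Pi$ annihilates the orthogonal complement; comparing with the displayed identity above forces $\phi(u)\Pi=\bigoplus_{\nu\in\cP}\chi_{\mus}^\nu(u\otimes\one)(\chi_{\mus}^\nu)^{-1}$. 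Since $\phi(u)=(\bigoplus_{\nu\in\cP}\rho_\nu)(x)$ for some $x\in\g_n$, the operator $\bigoplus_{\nu\in\cP}\chi_{\mus}^\nu(u\otimes\one)(\chi_{\mus}^\nu)^{-1}=\phi(u)\Pi$ lies in $\bigoplus_{\nu\in\cP}\rho_\nu(\g_n)\Pi$.

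For the converse, assume that for each $u$ the operator $\bigoplus_{\nu\in\cP}\chi_{\mus}^\nu(u\otimes\one)(\chi_{\mus}^\nu)^{-1}$ lies in $\bigoplus_{\nu\in\cP}\rho_\nu(\g_n)\Pi$, so there is $x_u\in\g_n$ with $(\bigoplus_{\nu\in\cP}\rho_\nu)(x_u)\Pi=\bigoplus_{\nu\in\cP}\chi_{\mus}^\nu(u\otimes\one)(\chi_{\mus}^\nu)^{-1}$; define $\phi(u)=(\bigoplus_{\nu\in\cP}\rho_\nu)(x_u)$. Then using $\Pi\iota(\psi)=\iota(\psi)$ and the displayed identity, $\phi(u)\iota(\psi)=\phi(u)\Pi\iota(\psi)=\iota(u\psi)$, so $\phi$ witnesses strong universality on $\cP$. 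The only point requiring care is the one already handled in Lemma~\ref{lem:monomorphism}, namely that the vectors $\iota(\psi)$ span the relevant isotypical subspace so that agreement of two operators on all such vectors (plus agreement on the orthogonal complement, where both vanish) implies equality; everything else is routine bookkeeping with the canonical isomorphism, which is why I would in fact present this simply as the proof of Lemma~\ref{lem:canonicalweak} with $t_{\nu,u}=0$.
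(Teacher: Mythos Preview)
Your proposal is correct and takes essentially the same approach as the paper, which simply states that the proof proceeds identically to that of Lemma~\ref{lem:canonicalweak} in the special case $\iota_0=0$ and $t_{\nu,u}=0$. You have unpacked the details more fully, but the argument is the same.
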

\begin{proof} 
The proof proceeds identically to that of Lemma~\ref{lem:canonicalweak} but in the special case of
$\iota_{0}=0$ and $t_{\nu,u}=0$.
\end{proof}

\section{Proving universality}

In this section we derive various conditions for
the universality of a family of nontrivial partitions.
Necessary and sufficient conditions for the universality of a singleton comes very readily from Marin's work, but 
the case of multiple partitions is more challenging.  The latter requires taking a close 
look at certain problematical partitions, already identified as having special significance 
in Marin's analysis -- namely hooks and conjugates.  The problem with hooks and 
self-conjugate partitions is that their associated representations may be too ``small" to 
contain the needed special unitary algebras.  The problem with conjugate pairs is that 
the representation associated with one of each pair determines the other, which may be too 
constraining for universality.  We verify below that indeed hooks generally break universality.
On the other hand, we derive precise conditions under which conjugates
admit universality.

\subsection{Some general conditions}

First we derive various sufficient conditions for universality which
have wide applicability.  The first two are straightforward implications of the 
universality definitions and their properties introduced in the last section.
With these tools in hand, the remaining results of this subsection are essentially
corollaries of Marin's work on the algebra of transpositions.

We begin with a condition which, while not proved necessary for universality, is satisfied in every case
of universality considered in this thesis:
\begin{lemma}
\label{lem:suVsubthenu}
Given a family of partitions $\{\mu^{(i)}\}_{i=1}^N$ and nonempty $\cP\subset\cP_\mus$, suppose that
$\bigoplus_{\nu\in \cP}\su\left(V_\mus^\nu\right)$ is a Lie subalgebra of $\left(\bigoplus_{\nu\in \cP}\rho_\nu\right)(\g_n)\Pi$.
Then $\{\mu^{(i)}\}_{i=1}^N$ is universal on $\cP$.
\end{lemma}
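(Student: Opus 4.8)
The plan is to reduce the statement immediately to the canonical-isomorphism characterization of universality in Lemma~\ref{lem:canonicalstrong}. By that lemma it suffices to prove that for every $u\in\su(\bigotimes_{i=1}^N\bmu^{(i)})$ the operator $\bigoplus_{\nu\in\cP}\chi_\mus^\nu(u\otimes\one)(\chi_\mus^\nu)^{-1}$ lies in $\left(\bigoplus_{\nu\in\cP}\rho_\nu\right)(\g_n)\Pi$. The strategy is to check that each summand $\chi_\mus^\nu(u\otimes\one)(\chi_\mus^\nu)^{-1}$ is already an element of $\su(V_\mus^\nu)$, so that the full operator lies in $\bigoplus_{\nu\in\cP}\su(V_\mus^\nu)$, and then to invoke the hypothesis of the lemma.

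First I would observe that for each $\nu\in\cP$ the operator $u\otimes\one$ on $\bigotimes_{i=1}^N\bmu^{(i)}\otimes\Hom(\bigotimes_{i=1}^N\bmu^{(i)},\bnu)$ is traceless, since $\tr(u\otimes\one)=\tr(u)\cdot\dim(\Hom(\bigotimes_{i=1}^N\bmu^{(i)},\bnu))=0$, and skew-Hermitian with respect to the product inner product (defined in Section~2.3.4), since $(u\otimes\one)^\dagger=u^\dagger\otimes\one=-(u\otimes\one)$ using $u\in\su(\bigotimes_{i=1}^N\bmu^{(i)})$; hence $u\otimes\one\in\su\bigl(\bigotimes_{i=1}^N\bmu^{(i)}\otimes\Hom(\bigotimes_{i=1}^N\bmu^{(i)},\bnu)\bigr)$. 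By Lemma~\ref{lem:orthogonalchi}, $\chi_\mus^\nu$ is an isometry from that space onto $V_\mus^\nu$ (with the Euclidean tableau inner product on $\bnu$), i.e. a unitary identification of inner product spaces; conjugation by a unitary preserves tracelessness and skew-Hermiticity and is a Lie algebra isomorphism, so $\chi_\mus^\nu(u\otimes\one)(\chi_\mus^\nu)^{-1}\in\su(V_\mus^\nu)$, where we use the convention of Section~2.1.4 that $\su(V_\mus^\nu)\subset\su(\bnu)$ acts as zero on the orthogonal complement of $V_\mus^\nu$.

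Summing over $\nu\in\cP$ gives $\bigoplus_{\nu\in\cP}\chi_\mus^\nu(u\otimes\one)(\chi_\mus^\nu)^{-1}\in\bigoplus_{\nu\in\cP}\su(V_\mus^\nu)$, which by hypothesis is contained in $\left(\bigoplus_{\nu\in\cP}\rho_\nu\right)(\g_n)\Pi$. Lemma~\ref{lem:canonicalstrong} then yields the universality of $\{\mu^{(i)}\}_{i=1}^N$ on $\cP$. There is no substantial obstacle: the argument is a short composition of Lemma~\ref{lem:canonicalstrong}, Lemma~\ref{lem:orthogonalchi}, and the hypothesis. The only point requiring care is verifying that conjugation by $\chi_\mus^\nu$ carries $\su$ of the domain into $\su(V_\mus^\nu)$, which is exactly where the isometry property of $\chi_\mus^\nu$ (Lemma~\ref{lem:orthogonalchi}) enters.
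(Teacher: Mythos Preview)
Your proof is correct and follows essentially the same route as the paper's: both arguments boil down to showing that $\chi_\mus^\nu(u\otimes\one)(\chi_\mus^\nu)^{-1}\in\su(V_\mus^\nu)$ for each $\nu\in\cP$, then invoking the hypothesis. The only difference is packaging: the paper constructs the witnessing map $\phi$ explicitly and verifies the intertwiner condition by hand, whereas you invoke Lemma~\ref{lem:canonicalstrong}, which encapsulates precisely that construction; your version is arguably tidier and even supplies a bit more justification (via Lemma~\ref{lem:orthogonalchi}) for why conjugation by $\chi_\mus^\nu$ lands in $\su(V_\mus^\nu)$, a point the paper leaves implicit.
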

\begin{proof}
We proceed by constructing a map witnessing the universality of $\{\mu^{(i)}\}_{i=1}^N$
as a direct sum over $\nu$ of maps to each $\rho_\nu(\g_n)$.
For each $\nu\in \cP$, define the map $\varphi_\nu:\su(\bigotimes_{i=1}^N\bmu^{(i)})\rightarrow\su\left(\bigotimes_{i=1}^N\bmu^{(i)}\otimes\Hom(\bigotimes_{i=1}^N\bmu^{(i)},\bnu)\right)$ 
by $\varphi_\nu(u)=u\otimes \one$.
Recalling the canonical isomorphism (Lemma~\ref{lem:canonicaliso}) 
\begin{eqnarray*}
\chi_\mus^\nu:\bigotimes_{i=1}^N\bmu^{(i)}\otimes\Hom(\bigotimes_{i=1}^N\bmu^{(i)},\bnu)&\rightarrow&V_\mus^\nu,\\
\psi\otimes\iota&\mapsto&\iota(\psi),
\end{eqnarray*}
define
$
\phi_\nu:\su(\bigotimes_{i=1}^N\bmu^{(i)}) \rightarrow \su\left(V_\mus^\nu\right)
$
by $\phi_\nu(u)=\chi_\mus^\nu\varphi_\nu(u)(\chi_\mus^\nu)^{-1}$.
Since by hypothesis
$\bigoplus_{\nu\in \cP}\su\left(V_\mus^\nu\right)\subset\left(\bigoplus_{\nu\in \cP}\rho_\nu\right)(\g_n)\Pi$,
we can further define
\begin{eqnarray*}
\tilde{\phi}:\su(\bigotimes_{i=1}^N\bmu^{(i)}) &\rightarrow& \left(\bigoplus_{\nu\in \cP}\rho_\nu\right)(\g_n)\Pi,\\
u &\mapsto& \bigoplus_{\nu\in \cP}\phi_\nu(u).
\end{eqnarray*}
Then we have $\tilde{\phi}(u)\iota(\psi)=\phi_\nu(u)\iota(\psi)=\iota(u\psi)$.
The existence of this map implies that for each $u\in\su(\bigotimes_{i=1}^N\bmu^{(i)})$, there exists $x_u\in\g_n$ such that
$(\bigoplus_{\nu\in \cP}\rho_\nu)(x_u)\Pi=\tilde{\phi}(u)$.
Therefore we can define
\begin{eqnarray*}
\phi:\su(\bigotimes_{i=1}^N\bmu^{(i)}) &\rightarrow& \left(\bigoplus_{\nu\in \cP}\rho_\nu\right)(\g_n)\\
u &\mapsto& \left(\bigoplus_{\nu\in \cP}\rho_\nu\right)(x_u)
\end{eqnarray*}
so that $\phi(u)\Pi=\tilde{\phi}(u)$.
Then noting that $\iota(\psi)=\Pi\iota(\psi)$, we have
\begin{eqnarray*}
\phi(u)\iota(\psi) &=& \phi(u)\Pi\iota(\psi)\\
&=& \tilde{\phi}(u)\iota(\psi)\\
&=& \iota(u\psi).
\end{eqnarray*}
Thus $\phi$ witnesses the universality of $\{\mu^{(i)}\}_{i=1}^N$ on $\cP$.
\end{proof}

Among the simplest cases to prove universal on a set $\cP$ are those for which $\cP$ 
excludes hooks, self-conjugate partitions and conjugate pairs.  Then,
according to Marin's Theorem,
the associated sum over $\nu$ of representations of $\g'_n$ equals
the corresponding sum over $\nu$ of $\sla(\bnu)$,
thus allowing plenty of room, so to speak, for the desired unitary algebra.
The following two lemmas formalize this idea.
\begin{lemma}
\label{lem:manysumuniversal}
Given nonempty $\cP\subset\cP_\mus$,
suppose that
$$\left(\bigoplus_{\nu\in P}\rho_\nu\right)(\g'_n)=\bigoplus_{\nu\in P}\sla(\bnu).$$
Then $\{\mu^{(i)}\}_{i=1}^N$ is universal on $P$.
\end{lemma}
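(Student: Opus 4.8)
The plan is to reduce immediately to Lemma~\ref{lem:suVsubthenu}, which already does the work of building an explicit universality-witnessing map out of the canonical isomorphisms $\chi_\mus^\nu$; by that lemma it suffices to show that $\bigoplus_{\nu\in\cP}\su(V_\mus^\nu)$ is a Lie subalgebra of $\left(\bigoplus_{\nu\in\cP}\rho_\nu\right)(\g_n)\Pi$, where $\Pi$ is the projection onto $\bigoplus_{\nu\in\cP}V_\mus^\nu$. That $\bigoplus_{\nu\in\cP}\su(V_\mus^\nu)$ is a Lie algebra is automatic: the summands $\su(V_\mus^\nu)$ are themselves Lie algebras and, living in the mutually orthogonal blocks $\bnu$ of $\bigoplus_{\nu\in\cP}\bnu$, they commute across distinct $\nu$. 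So the only real content is the containment.

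First I would unwind the thesis convention for $\su$ of a subspace: $\su(V_\mus^\nu)$ consists of the operators on $\bnu$ that act as $\su(V_\mus^\nu)$ on $V_\mus^\nu\subset\bnu$ and as $0$ on its orthogonal complement, so in particular $\su(V_\mus^\nu)\subset\su(\bnu)\subset\sla(\bnu)$. Summing over $\nu\in\cP$ gives $\bigoplus_{\nu\in\cP}\su(V_\mus^\nu)\subset\bigoplus_{\nu\in\cP}\sla(\bnu)$, which by hypothesis equals $\left(\bigoplus_{\nu\in\cP}\rho_\nu\right)(\g'_n)$. Since $\g'_n=[\g_n,\g_n]$ is a subalgebra of $\g_n$, we have $\left(\bigoplus_{\nu\in\cP}\rho_\nu\right)(\g'_n)\subset\left(\bigoplus_{\nu\in\cP}\rho_\nu\right)(\g_n)$, whence $\bigoplus_{\nu\in\cP}\su(V_\mus^\nu)\subset\left(\bigoplus_{\nu\in\cP}\rho_\nu\right)(\g_n)$.

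It remains to absorb the projector $\Pi$ appearing on the right. Here the key observation is that any $Y\in\bigoplus_{\nu\in\cP}\su(V_\mus^\nu)$ annihilates $\ker\Pi=\bigoplus_{\nu\in\cP}(V_\mus^\nu)^{\perp}$ — the orthogonal complements taken inside the respective $\bnu$ — again by the convention for $\su$ of a subspace; hence $Y\Pi=Y$. Combining this with the inclusion of the previous paragraph, $Y=Y\Pi\in\left(\bigoplus_{\nu\in\cP}\rho_\nu\right)(\g_n)\Pi$. Thus $\bigoplus_{\nu\in\cP}\su(V_\mus^\nu)$ is a Lie algebra contained in $\left(\bigoplus_{\nu\in\cP}\rho_\nu\right)(\g_n)\Pi$, and Lemma~\ref{lem:suVsubthenu} yields that $\{\mu^{(i)}\}_{i=1}^N$ is universal on $\cP$.

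The argument carries no genuine obstacle; the one point demanding care is reading the hypothesis correctly. The equality $\left(\bigoplus_{\nu\in\cP}\rho_\nu\right)(\g'_n)=\bigoplus_{\nu\in\cP}\sla(\bnu)$ asserts that the \emph{image} of $\g'_n$ under the direct-sum representation — the diagonally embedded algebra $\{(\rho_\nu(x))_{\nu\in\cP}:x\in\g'_n\}$ — is \emph{all} of $\bigoplus_{\nu\in\cP}\sla(\bnu)$, not merely contained in it; it is exactly this surjectivity onto the full direct sum that lets one realize each block-diagonal collection $(Y_\nu)_{\nu\in\cP}$ with $Y_\nu\in\su(V_\mus^\nu)$ by a single element of $\g'_n\subset\g_n$, and one must resist conflating this image with the a priori larger $\bigoplus_{\nu\in\cP}\rho_\nu(\g'_n)$. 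All of Marin's structure theory that the proof actually needs is already packaged inside this hypothesis and inside Lemma~\ref{lem:suVsubthenu}.
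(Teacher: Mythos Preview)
Your proof is correct and follows essentially the same route as the paper: show $\bigoplus_{\nu\in\cP}\su(V_\mus^\nu)\subset\bigoplus_{\nu\in\cP}\sla(\bnu)$, invoke the hypothesis, and apply Lemma~\ref{lem:suVsubthenu}. Your version is in fact slightly more careful than the paper's, which does not explicitly address the projector $\Pi$ or the passage from $\g'_n$ to $\g_n$; your closing remark about distinguishing the diagonal image $\left(\bigoplus_{\nu\in\cP}\rho_\nu\right)(\g'_n)$ from $\bigoplus_{\nu\in\cP}\rho_\nu(\g'_n)$ is also a worthwhile clarification that the paper leaves implicit.
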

\begin{proof}
For each $\nu\in\cP$, $V_{\mu^{(1)}\cdots\mu^{(N)}}^\nu$ is a subspace of $\bnu$. 
It follows that $\su(V_{\mu^{(1)}\cdots\mu^{(N)}}^\nu)$ is a subalgebra of
$\su(\bnu)$, and thus of $\sla(\bnu)$. 
Taking the direct sum, $\bigoplus_{\nu\in\cP}\su(V_{\mu^{(1)}\cdots\mu^{(N)}}^\nu)$
is a subalgebra of $\bigoplus_{\nu\in\cP}\sla(\bnu)$.
Then by hypothesis, we have $\bigoplus_{\nu\in\cP}\su(V_{\mu^{(1)}\cdots\mu^{(N)}}^\nu)\subset\left(\bigoplus_{\nu\in\cP}\rho_\nu\right)(\g'_n)$.  Therefore
by Lemma~\ref{lem:suVsubthenu}, $\{\mu^{(i)}\}_{i=1}^N$ is universal on $\cP$.
\end{proof}

\begin{lemma}
\label{lem:ifnohooksnorconjthenu}
Given a family of partitions $\{\mu^{(i)}\}_{i=1}^N$,
if
a nonempty
$P\subset\cP_\mus$
contains no hooks, self-conjugate partitions, nor conjugate pairs,
then $\{\mu^{(i)}\}_{i=1}^N$ is universal on $P$.
\end{lemma}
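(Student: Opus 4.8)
The plan is to verify the hypothesis of Lemma~\ref{lem:manysumuniversal}, namely that $\left(\bigoplus_{\nu\in P}\rho_\nu\right)(\g'_n)=\bigoplus_{\nu\in P}\sla(\bnu)$; granting this identity, Lemma~\ref{lem:manysumuniversal} immediately gives universality of $\{\mu^{(i)}\}_{i=1}^N$ on $P$, so the whole argument reduces to proving it.

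First I would classify the partitions in $P$. Since an improper partition is by definition either a hook or trivial, and $P$ contains no hooks, every improper $\nu\in P$ is trivial; and since $P$ contains no self-conjugate partitions, every proper $\nu\in P$ satisfies $\nu\neq\nu'$. If $\nu\in P$ is trivial, then $\dim(\bnu)=1$, so $\rho_\nu(\g'_n)=0=\sla(\bnu)$ (this is the remark $\rho_{[n]}(\g'_n)=\rho_{[1^n]}(\g'_n)=0$ recorded after Corollary~\ref{cor:marin}). If $\nu\in P$ is proper and non-self-conjugate, then $\nu$ is not a deep hook, so Lemma~\ref{lem:marin} gives $\rho_\nu(\g'_n)=\sla(\bnu)$ as an honest equality. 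Hence $\rho_\nu(\g'_n)=\sla(\bnu)$ for every $\nu\in P$. Next I would globalize over $P$: the image $\left(\bigoplus_{\nu\in P}\rho_\nu\right)(\g'_n)$ is automatically a Lie subalgebra of the product $\bigoplus_{\nu\in P}\rho_\nu(\g'_n)$, obtained by composing with the projection onto each factor. Because $P$ contains neither a conjugate pair nor two hooks, Corollary~\ref{cor:marin} makes this inclusion an isomorphism; since these Lie algebras are finite-dimensional, a subalgebra isomorphic to the ambient algebra must equal it, so $\left(\bigoplus_{\nu\in P}\rho_\nu\right)(\g'_n)=\bigoplus_{\nu\in P}\rho_\nu(\g'_n)=\bigoplus_{\nu\in P}\sla(\bnu)$. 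Lemma~\ref{lem:manysumuniversal} then applies and the proof is complete.

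I expect the only point needing care — more a bookkeeping remark than a genuine obstacle — to be upgrading the ``$\cong$'' of Corollary~\ref{cor:marin} to the literal set-equality of subalgebras of $\bigoplus_{\nu\in P}\gl(\bnu)$ demanded by Lemma~\ref{lem:manysumuniversal}: this is why I would first observe explicitly that $\left(\bigoplus_{\nu\in P}\rho_\nu\right)(\g'_n)$ sits inside the product of the component images, and then invoke the dimension count. One should also keep in mind the harmless edge case in which $P$ contains a trivial partition, where the corresponding $\sla(\bnu)$ vanishes and both sides agree trivially.
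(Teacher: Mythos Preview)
Your proof is correct and follows essentially the same route as the paper: invoke Corollary~\ref{cor:marin} to split the image as a direct sum, then Lemma~\ref{lem:marin} to identify each summand with $\sla(\bnu)$, and conclude via Lemma~\ref{lem:manysumuniversal}. Your version is simply more explicit about the trivial-partition edge case and about upgrading the isomorphism of Corollary~\ref{cor:marin} to a literal equality by the dimension count, both of which the paper leaves tacit.
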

\begin{proof}
By Corollary~\ref{cor:marin} we have
$$\left(\bigoplus_{\nu\in P}\rho_\nu\right)(\g'_n)=\bigoplus_{\nu\in P}(\rho_\nu(\g'_n)),$$
and by Lemma~\ref{lem:marin} we have
$$\left(\bigoplus_{\nu\in P}\rho_\nu\right)(\g'_n)=\bigoplus_{\nu\in P}\sla(\bnu).$$
Therefore by Lemma~\ref{lem:manysumuniversal} $\{\mu^{(i)}\}_{i=1}^N$ is universal on $P$.
\end{proof}

The next lemma is a step towards generalizing the above idea.  
Suppose we have proved universality on hooks, conjugate pairs, or self-conjugates,
and wish to show universality on a larger set containing these.  The following
implication of Marin's Theorem indicates when universality is monotone:
\begin{lemma}
\label{lem:additiveu}
Given a family of partitions $\{\mu^{(i)}\}_{i=1}^N$, and a subset $\cP\subset\cP_\mus$, 
if $\{\mu^{(i)}\}_{i=1}^N$ is universal on the subset 
of hooks of $\cP$, and on each pair of conjugate proper partitions in $\cP$, 
and on each remaning singleton in $\cP$, 
then $\{\mu^{(i)}\}_{i=1}^N$ is universal on $\cP$.
\end{lemma}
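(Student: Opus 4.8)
The plan is to build a universality-witnessing map on $\cP$ out of the per-block hypotheses, using the reformulation of universality in Lemma~\ref{lem:canonicalstrong} together with the mutual independence of the relevant blocks supplied by Corollary~\ref{cor:marin2}.

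First I would split $\cP$ into blocks exactly as in Corollary~\ref{cor:marin2}: the set of all hooks in $\cP$ forms one block, each conjugate pair of proper partitions both of whose members lie in $\cP$ forms one block, and every remaining partition of $\cP$ forms a singleton block. Write $\cP=\bigsqcup_a B_a$ for this partition, and let $\Pi_a$ denote the projection of $\bigoplus_{\nu\in\cP}\bnu$ onto $\bigoplus_{\nu\in B_a}V_\mus^\nu$; since distinct $\bnu$ are orthogonal we have $\Pi=\sum_a\Pi_a$. By the hypothesis of the lemma, $\{\mu^{(i)}\}_{i=1}^N$ is universal on each $B_a$.

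Now fix $u\in\su(\bigotimes_{i=1}^N\bmu^{(i)})$. For each $a$, universality on $B_a$ together with Lemma~\ref{lem:canonicalstrong} furnishes an element $x_a\in\g_n$ with $\left(\bigoplus_{\nu\in B_a}\rho_\nu\right)(x_a)\Pi_a=\bigoplus_{\nu\in B_a}\chi_\mus^\nu(u\otimes\one)(\chi_\mus^\nu)^{-1}$. Because our block decomposition is precisely the one appearing in Corollary~\ref{cor:marin2}, that corollary tells us the natural restriction map $\left(\bigoplus_{\nu\in\cP}\rho_\nu\right)(\g_n)\to\bigoplus_a\left(\bigoplus_{\nu\in B_a}\rho_\nu\right)(\g_n)$ is an isomorphism; hence there exists a single $x\in\g_n$ with $\left(\bigoplus_{\nu\in B_a}\rho_\nu\right)(x)=\left(\bigoplus_{\nu\in B_a}\rho_\nu\right)(x_a)$ for every $a$. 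Multiplying by $\Pi_a$ and summing over $a$,
$$\left(\bigoplus_{\nu\in\cP}\rho_\nu\right)(x)\,\Pi=\sum_a\left(\bigoplus_{\nu\in B_a}\rho_\nu\right)(x_a)\,\Pi_a=\bigoplus_{\nu\in\cP}\chi_\mus^\nu(u\otimes\one)(\chi_\mus^\nu)^{-1}.$$
As $u$ was arbitrary, the criterion of Lemma~\ref{lem:canonicalstrong} is satisfied, so $\{\mu^{(i)}\}_{i=1}^N$ is universal on $\cP$.

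The crux is the combining step: it works precisely because the isomorphism of Corollary~\ref{cor:marin2} is realized by the natural block-restriction map, which is what lets one prescribe an element of $\g_n$ on each block independently. That is Marin's independence principle -- distinct irreducible representations of $\g_n$ constrain each other only through conjugation and through jointly being hooks -- so the real difficulty is inherited from Corollary~\ref{cor:marin2} rather than arising here. Two minor bookkeeping points: the traceless operators $u\otimes\one$ cause no conflict with the one-dimensional center of $\g_n$ (by Lemma~\ref{lem:reductive} the relevant images already lie in $\rho_\nu(\g'_n)$ up to a harmless central scalar), and any trivial partitions that happen to lie in $\cP$ contribute only zero blocks and are dealt with vacuously.
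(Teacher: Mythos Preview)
Your proof is correct and follows essentially the same approach as the paper: both decompose $\cP$ into the hook block, conjugate-pair blocks, and remaining singletons, then invoke Corollary~\ref{cor:marin2} to glue the per-block data into a single element of $\left(\bigoplus_{\nu\in\cP}\rho_\nu\right)(\g_n)$. The only cosmetic difference is that the paper works directly with the defining condition $\phi(u)\iota(\psi)=\iota(u\psi)$ and sums the per-block witnessing maps, whereas you route through the equivalent criterion of Lemma~\ref{lem:canonicalstrong}; the underlying argument is the same.
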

\begin{proof}
Let $\Gamma=\{\nu\in\cP|\nu_2=1\}$,
$\cQ_{\neq}=\{\nu\in\cP\backslash\Gamma|\nu'\in\cP,\nu\neq\nu'\}$,
$\cQ_<=\{\nu\in\cQ_{\neq}|\nu'<\nu\}$,
and $\cR=\cP\backslash(\Gamma\cup\cQ_{\neq})$.
Universality on $\Gamma$ implies the existence of a witnessing function
$\phi_\Gamma:\su(\bigotimes_{i=1}^N\bmu^{(i)})\rightarrow\left(\bigoplus_{\nu\in\Gamma}\rho_\nu\right)(\g_n)$, universality on each conjugate pair implies for each $\nu\in\cQ_<$ the existence
of a witnessing function $\phi_{\nu\nu'}:\su(\bigotimes_{i=1}^N\bmu^{(i)})\rightarrow(\rho_\nu\oplus\rho_{\nu'})(\g_n)$, and universality on each remaining partition
implies for each $\nu\in\cR$ the existence of a witnessing function $\phi_\nu:\su(\bigotimes_{i=1}^N\bmu^{(i)})\rightarrow\rho_\nu(\g_n)$.
We wish to define a ``total" witnessing function.
This is made possible by
Corollary~\ref{cor:marin2}, which states
$$
\left(\bigoplus_{\nu\in\Gamma}\rho_\nu\right)(\g_n)\oplus\bigoplus_{\nu\in\cQ_<}(\rho_\nu\oplus\rho_{\nu'})(\g_n)\oplus\bigoplus_{\nu\in\cR}\rho_\nu(\g_n)=\left(\bigoplus_{\nu\in\cP}\rho_\nu\right)(\g_n).
$$
Therefore we can define the sum of the above witnessing functions in terms of the above equation 
to obtain
\begin{eqnarray*}
\phi:\su(\bigotimes_{i=1}^N\bmu^{(i)}) &\rightarrow& \left(\bigoplus_{\nu\in\cP}\rho_\nu\right)(\g_n)\\
u &\mapsto& \phi_\Gamma(u)\oplus\bigoplus_{\nu\in\cQ_<}\phi_{\nu\nu'}(u)\oplus\bigoplus_{\nu\in\cR}\phi_\nu(u).
\end{eqnarray*}
Then given $\nu\in\cP$, $\iota\in\Hom(\bigotimes_{i=1}^N\bmu^{(i)},\bnu)$, and $\psi\in\bigotimes_{i=1}^N\bmu^{(i)}$, 
we have $\phi(u)\iota(\psi)=\varphi(u)\iota(\psi)$, where $\varphi$ is one of $\phi_\Gamma$, 
$\phi_{\nu\nu'}$, or $\phi_\nu$ depending on whether $\nu$ is in $\Gamma$, $\cQ_{\neq}$, or $\cR$,
respectively.  Because each of the latter is a universality-witnessing function,
by definition we have $\varphi(u)\iota(\psi)=\iota(u\psi)$.
Thus $\phi$ witnesses the universality of $\{\mu^{(i)}\}_{i=1}^N$ on $\cP$.
\end{proof}

The following two theorems are important results in their own right, 
for their direct applicability
to quantum information.
The first, a consequence of
Marin's Lemma~\ref{lem:marin}, gives necessary and sufficient conditions for encoded universality on a 
single logical qudit.
\begin{theorem}
\label{thm:iff1partitionu}
A partition $\mu\vdash m$ is universal on itself if and only if 
one of the following holds:
\begin{enumerate}[label=(\roman{*})]
\item $\mu_2>1$ and $\mu\neq\mu'$,
\item $\mu=\ydiagram{2,2}$,
\item $\mu\in\{[m],[1^m],[m-1,1],[2,1^{m-2}]\}$.
\end{enumerate}
\end{theorem}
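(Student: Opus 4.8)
The plan is to reduce ``universal on itself'' to a purely algebraic containment and then read the answer off Marin's structure theorem, Lemma~\ref{lem:marin}. For a single partition $\mu\vdash m$ the only relevant target is $\bmu$ itself, and $\Hom(\bmu,\bmu)=\cmplx\Id$ by Schur's Lemma, so Definition~\ref{def:strongu} (equivalently Lemma~\ref{lem:canonicalstrong}) collapses: taking $\iota=\Id$ forces any witnessing map $\phi$ to satisfy $\phi(u)=u$ as an operator on $\bmu$, while $\phi(u)$ must lie in $\rho_\mu(\g_m)$. Hence $\mu$ is universal on itself if and only if $\su(\bmu)\subseteq\rho_\mu(\g_m)$, and conversely the inclusion map witnesses universality when this holds.

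Next I would upgrade this to the condition $\sla(\bmu)\subseteq\rho_\mu(\g'_m)$. Since $\rho_\mu:\cmplx S_m\to\gl(\bmu)$ is $\cmplx$-linear, $\rho_\mu(\g_m)$ is a \emph{complex} Lie subalgebra of $\gl(\bmu)$, so it contains $\su(\bmu)$ if and only if it contains its complex span $\sla(\bmu)$. By Lemma~\ref{lem:reductive}, $\rho_\mu(\g_m)$ equals $\rho_\mu(\g'_m)$ or $\cmplx\Id\oplus\rho_\mu(\g'_m)$, and since $\rho_\mu(\g'_m)$ consists of traceless operators, any traceless element of $\cmplx\Id\oplus\rho_\mu(\g'_m)$ already lies in $\rho_\mu(\g'_m)$; therefore $\sla(\bmu)\subseteq\rho_\mu(\g_m)\iff\sla(\bmu)\subseteq\rho_\mu(\g'_m)$. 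Thus the theorem becomes: $\sla(\bmu)\subseteq\rho_\mu(\g'_m)$ exactly when $\mu$ is trivial, proper and non-self-conjugate, equal to $\ydiagram{2,2}$, or a shallow hook $[m-1,1]$ or $[2,1^{m-2}]$.

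Then I would run the case analysis via Lemma~\ref{lem:marin}. If $\mu$ is trivial, $\dim\bmu=1$ and both sides are $\{0\}$. If $\mu$ is proper with $\mu\neq\mu'$, then $\rho_\mu(\g'_m)=\sla(\bmu)$ outright, giving case (i). If $\mu=\ydiagram{2,2}$, then $\dim\bmu=2$ and $\rho_\mu(\g'_m)=\osp(\bmu)=\spa(\bmu)\cong\spa(2,\cmplx)=\sla(2,\cmplx)=\sla(\bmu)$, giving case (ii). If $\mu$ is a shallow hook, then $\dim\bmu=m-1$ by Lemma~\ref{lem:hookdimension}, and Lemma~\ref{lem:marin} gives $\rho_\mu(\g'_m)\cong\sla(m-1,\cmplx)$; since $\rho_\mu(\g'_m)\subseteq\sla(\bmu)$ and both have complex dimension $(m-1)^2-1$, equality holds. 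Universality fails in the two remaining cases: for a deep hook, $\dim\bmu=\binom{m-1}{r}>m-1$ by Lemma~\ref{lem:hookdimension} while $\rho_\mu(\g'_m)\cong\sla(m-1,\cmplx)$ has strictly smaller dimension than $\sla(\bmu)$; and for a proper self-conjugate $\mu\neq\ydiagram{2,2}$ one notes $\dim\bmu\geq 3$ (the only two-dimensional symmetric-group irreps are $V_{[2,1]}$ and $V_{[2,2]}$, and $[2,1]$ is a hook, so any other proper partition has $\dim\geq m-1\geq 4$), whence $\rho_\mu(\g'_m)=\osp(\bmu)$ is a proper subalgebra of $\sla(\bmu)$, since $\dim_\cmplx\so(D,\cmplx)$ and $\dim_\cmplx\spa(D,\cmplx)$ are both smaller than $D^2-1$ for $D\geq 3$ (equivalently, the standard representation of $\sla(\bmu)$ is not self-dual and preserves no bilinear form). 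Finally, the trivial partitions of $m$ together with the two shallow hooks are precisely $\{[m],[1^m],[m-1,1],[2,1^{m-2}]\}$, so items (i)--(iii) enumerate exactly the partitions with $\sla(\bmu)\subseteq\rho_\mu(\g'_m)$.

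Most of this is bookkeeping once Lemma~\ref{lem:marin} is available; the step needing the most care is the proper self-conjugate case, where the real point is not just that $\osp(\bmu)\subsetneq\sla(\bmu)$ but that no copy of $\su(\bmu)$ can hide inside $\osp(\bmu)$. Passing to complex spans reduces this obstruction to the failure of $\sla(\bmu)\subseteq\osp(\bmu)$, and then the small-dimension census — showing $\ydiagram{2,2}$ is the unique proper self-conjugate partition with $\dim\bmu\leq 2$ — is what singles it out as the lone exception (ii).
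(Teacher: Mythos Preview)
Your proof is correct and rests on the same core tool as the paper---Marin's classification, Lemma~\ref{lem:marin}---but is considerably more thorough. The paper's own proof is two sentences: it invokes Lemma~\ref{lem:marin} to assert $\rho_\mu(\g'_m)=\sla(\bmu)$ and then Lemma~\ref{lem:manysumuniversal} to conclude universality, which covers only the ``if'' direction and leaves the reader to supply the case split and the ``only if'' direction. Your explicit reduction of Definition~\ref{def:strongu} to the containment $\sla(\bmu)\subseteq\rho_\mu(\g'_m)$ (via Schur's Lemma and Lemma~\ref{lem:reductive}), together with the failure arguments for deep hooks (the dimension count $\dim\rho_\mu(\g'_m)=(m-1)^2-1<\dim\sla(\bmu)$) and for proper self-conjugates other than $[2,2]$ (the proper inclusion $\osp(\bmu)\subsetneq\sla(\bmu)$ once $\dim\bmu\geq 3$), fills in precisely what the paper's proof omits.
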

\begin{proof}
By Lemma~\ref{lem:marin}, we have $\rho_\mu(\g'_m)=\sla(\bmu)$.
Therefore by Lemma~\ref{lem:manysumuniversal}, $\mu$ is universal on $\mu$.
\end{proof}

The final theorem of this subsection states that every family of partitions
of two rows is 2-universal.  An equivalent statement was 
originally proved by Kempe et al. in \cite{PhysRevA.63.042307}.  It is also a near-immediate
consequence of Marin's work, particularly Lemma~\ref{lem:marindeq2}.
For the sake of thoroughness, we state and prove this result in the context of the
present formalism:
\begin{theorem}[Kempe--Bacon--Lidar--Whaley]
\label{thm:kempe}
Let $d\in\{1,2\}$ and let $\{\mu^{(i)}\}_{i=1}^N$ be a family of partitions
such that $\mu^{(i)\prime}_{\ 1}\leq d$.  Then $\{\mu^{(i)}\}_{i=1}^N$ is 
universal on any nonempty set $\cP\subset\cP_\mus^{(d)}$ and in particular, is d-universal.
\end{theorem}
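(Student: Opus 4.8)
The plan is to reduce to Lemma~\ref{lem:manysumuniversal}: it suffices to show that, for the given $\cP$, $\left(\bigoplus_{\nu\in\cP}\rho_\nu\right)(\g'_n)=\bigoplus_{\nu\in\cP}\sla(\bnu)$, and the force of the hypothesis $\mu^{(i)\prime}_{\ 1}\leq d\leq 2$ is that this is almost immediate from Marin's work. First I would note that, since $\cP\subset\cP^{(d)}_\mus$ and $d\leq 2$, every $\nu\in\cP$ has $\nup_1\leq 2$, i.e. at most two rows. For such a $\nu$ there are two cases: if $\nu$ is the single row $[n]$ then it is trivial, so $\rho_\nu(\g'_n)=\{0\}=\sla(\bnu)$ since $\dim(\bnu)=1$; if $\nu$ has exactly two rows then $\nup_1=2$ and Lemma~\ref{lem:marindeq2} gives $\rho_\nu(\g'_n)=\sla(\bnu)$ directly. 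So in all cases $\rho_\nu(\g'_n)=\sla(\bnu)$, and only the passage to the direct sum could cause trouble.

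To handle the direct sum I would invoke Corollary~\ref{cor:marin} (or, equivalently, Lemma~\ref{lem:additiveu}), which requires $\cP$ to be free of multiple hooks and of nontrivial conjugate pairs. All elements of $\cP$ have size $n$ and at most two rows, and the only size-$n$ hook with at most two rows is the shallow hook $[n-1,1]$, so $\cP$ contains at most one hook. If $\nu$ and $\nu'$ both lay in $\cP$ with $\nu\neq\nu'$, then $\nu$ has at most two rows and, since $\nu'$ does too, at most two columns; the only non-self-conjugate partitions fitting in a $2\times 2$ box are $[2]$ and $[1,1]=[2]'$, so the sole conjugate pair available is $\{[2],[1,1]\}$ --- a pair of \emph{trivial} partitions, hence not a nontrivial conjugate pair (and in any event it forces $n=2$, which makes every $\mu^{(i)}$ trivial so that $\su\!\left(\bigotimes_{i=1}^N\bmu^{(i)}\right)=\{0\}$ and universality is vacuous). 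With these two configurations excluded, Corollary~\ref{cor:marin} gives an isomorphism which, together with the always-valid inclusion $\left(\bigoplus_{\nu\in\cP}\rho_\nu\right)(\g'_n)\subseteq\bigoplus_{\nu\in\cP}\rho_\nu(\g'_n)$, forces the equality $\left(\bigoplus_{\nu\in\cP}\rho_\nu\right)(\g'_n)=\bigoplus_{\nu\in\cP}\rho_\nu(\g'_n)=\bigoplus_{\nu\in\cP}\sla(\bnu)$. Lemma~\ref{lem:manysumuniversal} then gives universality on $\cP$.

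For the final clause, $d$-universality is universality on the particular set $\cP=\cP^{(d)}_\mus$, which is nonempty: the part-wise sum $\sigma=\mu^{(1)}+\cdots+\mu^{(N)}$ has at most $d$ rows and $c_\mus^{\sigma}\geq 1$ by iterating Lemma~\ref{lem:LRCartan}, so $\sigma\in\cP^{(d)}_\mus$. I do not anticipate any genuine obstacle --- as advertised, this is a near-immediate consequence of Lemma~\ref{lem:marindeq2}. The only delicate point is the middle paragraph: one must use the two-row bound to rule out exactly the two phenomena (several distinct hooks; a nontrivial conjugate pair) that can shrink the image of a direct sum of $\g'_n$-representations below the direct sum of the images, and to observe that the one genuine conjugate pair the bound allows, $\{[2],[1,1]\}$, is harmless because it forces the encoded space to be one-dimensional.
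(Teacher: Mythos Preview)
Your proposal is correct and follows essentially the same route as the paper's own proof: both invoke Corollary~\ref{cor:marin} (after checking that two-row partitions admit at most one hook and no nontrivial conjugate pair) together with Lemma~\ref{lem:marindeq2} to obtain $\left(\bigoplus_{\nu\in\cP}\rho_\nu\right)(\g'_n)=\bigoplus_{\nu\in\cP}\sla(\bnu)$, and then conclude via Lemma~\ref{lem:manysumuniversal}. Your version is slightly more explicit in a few places---handling the trivial summand $[n]$, analyzing the degenerate pair $\{[2],[1,1]\}$, and verifying nonemptiness of $\cP^{(d)}_\mus$ for the $d$-universal clause---but these are refinements of the same argument rather than a different approach.
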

\begin{proof}
If $d=1$ then $\su(\bigotimes_{i=1}^N\bmu^{(i)})=0$ and universality is 
trivially satisfied.  If $d=2$, it is immediate from the definitions 
that there are no nontrivial conjugate pairs $\nu$ and $\nu'$ in $\cP$ such that
$\nu'\neq\nu$ and $\nup_1\leq 2$, and there is at most one hook $\nu$ in $\cP$ such that 
$\nup_1\leq 2$.  Therefore by Corollary~\ref{cor:marin} we have 
$$
\left(\bigoplus_{\nu\in\cP}\rho_\nu\right)(\g'_{|\nu|})=\bigoplus_{\nu\in\cP}\rho_\nu(\g'_{|\nu|}).
$$
It furthermore follows from Lemma~\ref{lem:marindeq2} that 
$$
\left(\bigoplus_{\nu\in\cP}\rho_\nu\right)(\g'_{|\nu|})=\bigoplus_{\nu\in\cP}\sla(\bnu).
$$
Therefore by Lemma~\ref{lem:manysumuniversal}, $\{\mu^{(i)}\}_{i=1}^N$
is universal on $\cP$.
\end{proof}

The remainder of Section~4.2 may be considered an endeavor to generalize Theorem~\ref{thm:kempe}
for $d>2$, or alternatively to generalize Theorem~\ref{thm:iff1partitionu} for more than one partition. 

\subsection{Conjugates}

We have seen in the previous subsection 
that a set
$\cP\subset\cP_\mus$ that excludes hooks and conjugates
always admits universality of $\{\mu^{(i)}\}_{i=1}^N$.
Conjugates, however, generally do not.
Conjugate pairs may not admit universality because $\rho_\nu(\g_n)$ and
$\rho_{\nu'}(\g_n)$ depend on each other in such a way that they may not
simultaneously satisfy the intertwiner condition of Definition~\ref{def:strongu}. 
As for self-conjugate $\nu\in\cP$, in this case $\rho_\nu(\g_n)$ is typically a proper 
subalgebra of $\sla(\bnu)$ that may not admit a satisfactory image of 
$\su(\bigotimes_{i=1}^N\bmu^{(i)})$.

The key to finding cases in which conjugates admit universality is to consider how the alternating intertwiner $M_\nu$ between $\nu$ and $\nu'$ maps isotypical subspaces.
Our strategy for constructing the needed unitary operators on 
$V_\mus^\nu$ and on $V_\mus^{\nu'}$ is to use the 
spaces $V_\musp^\nu$ and $V_\musp^{\nu'}$ as scratch space,
in a sense to be made clear after we establish the relationship between these 
isotypical
subspaces imposed by $M_\nu$.
To proceed with the analysis and track how $M_\nu$ maps a given vector, we show
that the local Jucys--Murphy elements can identify the isotypical subspace to which
a given vector belongs:

\begin{lemma}
\label{lem:localJMiso}
Given a family of natural numbers $\{m_i\}_{i=1}^N$, a partition $\nu\vdash\sum_{i=1}^Nm_i$,
and a simultaneous eigenvector $e$ of the local Jucys--Murphy elements in $\cmplx\prod_{i=1}^N S_{m_i}$, 
the corresponding
local Jucys--Murphy eigenvalues uniquely determine the 
$\prod_{i=1}^N S_{m_i}$-isotypical subspace of $\nu$ that contains $e$.
\end{lemma}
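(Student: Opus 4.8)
The plan is to reduce the claim to a factorwise application of the uniqueness of Jucys--Murphy content vectors (Lemma~\ref{lem:JMunique}). First I would fix notation: by Maschke's theorem $\bnu=\bigoplus_{\kappa^{(i)}\vdash m_i}V^\nu_{\kappa^{(1)}\cdots\kappa^{(N)}}$ as a $\prod_{i=1}^N S_{m_i}$-module, and by the canonical isomorphism (Lemma~\ref{lem:canonicaliso}) each summand is $\prod_{i=1}^N S_{m_i}$-isomorphic to $\bigotimes_{i=1}^N\bkappa^{(i)}\otimes\Hom(\bigotimes_{i=1}^N\bkappa^{(i)},\bnu)$, with $s\in\prod_{i=1}^N S_{m_i}$ acting by $s.(\psi\otimes\iota)=(s.\psi)\otimes\iota$. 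Transporting the action along this isomorphism and using the defining property of the local Jucys--Murphy elements --- that $X^{(i)}_k$ acts as the identity on every factor $\bkappa^{(j)}$ with $j\neq i$, as the ordinary Jucys--Murphy element on $\bkappa^{(i)}$, and (by the form of the action above) as the identity on the $\Hom$ factor --- one sees that the operators $X^{(i)}_k$ restrict to commuting semisimple operators on $V^\nu_{\kappa^{(1)}\cdots\kappa^{(N)}}$, whose joint eigenspaces are spanned by vectors $(T^{(1)}\otimes\cdots\otimes T^{(N)})\otimes\iota$ where $T^{(i)}$ is a Young tableau of shape $\kappa^{(i)}$ and $\iota\in\Hom(\bigotimes_{i=1}^N\bkappa^{(i)},\bnu)$; on such a vector $X^{(i)}_k$ has eigenvalue $c_{T^{(i)}}(k)-r_{T^{(i)}}(k)$ (Lemma~\ref{lem:JucysMurphy}), so the sub-tuple $(x^{(i)}_1,\ldots,x^{(i)}_{m_i})$ of eigenvalues of $X^{(i)}_1,\ldots,X^{(i)}_{m_i}$ is exactly the content vector of $T^{(i)}$.

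Next I would write the given joint eigenvector in the decomposition above as $e=\sum_{\kappa^{(i)}\vdash m_i}e_{\kappa^{(1)}\cdots\kappa^{(N)}}$. Since each local Jucys--Murphy element lies in $\cmplx\prod_{i=1}^N S_{m_i}$, it preserves every $\prod_{i=1}^N S_{m_i}$-submodule, hence every isotypical summand; comparing components of $X^{(i)}_k e=x^{(i)}_k e$ across the direct sum then shows that each nonzero $e_{\kappa^{(1)}\cdots\kappa^{(N)}}$ is itself a joint eigenvector of all the $X^{(i)}_k$ with the same eigenvalue tuple $(x^{(i)}_k)_{i,k}$ as $e$. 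By the first step, for each such $\kappa$ and each $i$ the sub-tuple $(x^{(i)}_1,\ldots,x^{(i)}_{m_i})$ is the content vector of a Young tableau of shape $\kappa^{(i)}$.

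Finally I would invoke Lemma~\ref{lem:JMunique}, together with the remark following it that a content vector of a size-$m_i$ tableau determines that tableau (and hence its shape) uniquely: the sub-tuple $(x^{(i)}_1,\ldots,x^{(i)}_{m_i})$ pins down a single partition $\mu^{(i)}\vdash m_i$, independently of which isotypical summand it is observed in. Hence every nonzero $e_{\kappa^{(1)}\cdots\kappa^{(N)}}$ has $\kappa^{(i)}=\mu^{(i)}$ for all $i$, so $e=e_{\mu^{(1)}\cdots\mu^{(N)}}\in V^\nu_{\mu^{(1)}\cdots\mu^{(N)}}$, which is therefore the isotypical subspace singled out by the eigenvalues. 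The only point requiring real care is the first step: one must transport the $\prod_{i=1}^N S_{m_i}$-action along the canonical isomorphism accurately enough to conclude that the local Jucys--Murphy elements act, in the tensor-product-of-tableaux basis, exactly as ordinary Jucys--Murphy elements do factor by factor; granting that, the remainder is bookkeeping plus the known uniqueness of content vectors.
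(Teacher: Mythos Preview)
Your proposal is correct and follows essentially the same approach as the paper: transport the action along the canonical isomorphism so that the local Jucys--Murphy elements act factorwise on products of tableaux, then invoke Lemma~\ref{lem:JMunique} to recover the shapes $\mu^{(i)}$ from the content vectors. Your write-up is in fact a bit more explicit than the paper's in spelling out why only one isotypical component of $e$ can be nonzero.
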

\begin{proof}
Given particular $\{\mu^{(i)}\vdash m_i\}_{i=1}^N$, each simultaneous eigenvector in
$\bigotimes_{i=1}^N\bmu^{(i)}$ of the local Jucys--Murphy elements
corresponds to a product of tableaux, one from each $\mu^{(i)}$, by property
of the Jucys--Murphy elements.  Given such a product, the $\bigotimes_{i=1}^N\blambda^{(i)}$ space to which it belongs
is uniquely determined to be $\bigotimes_{i=1}^N\bmu^{(i)}$  by the shapes of the tableaux. 
In turn, each such product of tableaux is
uniquely determined by its local Jucys--Murphy eigenvalues.  Therefore, given a simultaneous eigenvector
in $\bigoplus_{\mu^{(1)}\vdash m_1,\ldots,\mu^{(N)}\vdash m_N}\bigotimes_{i=1}^N\bmu^{(i)}\otimes\Hom(\bigotimes_{i=1}^N\bmu^{(i)},\bnu)$
of the local Jucys--Murphy elements, its local Jucys--Murphy eigenvalues uniquely determine the
$\bigotimes_{i=1}^N\bmu^{(i)}\otimes\Hom(\bigotimes_{i=1}^N\bmu^{(i)},\bnu)$ subspace to which it belongs.

Recall that $\nu$ is partitioned into $\prod_{i=1}^NS_{m_i}$-isotypical subspaces:
$$
\nu=\bigoplus_{\mu^{(1)}\vdash m_1,\ldots,\mu^{(N)}\vdash m_N}V_\mus^\nu,
$$
onto which there exist canonical isomorphisms:
$$
\bigoplus_{\mu^{(1)},\ldots,\mu^{(N)}}\chi_{\mu^{(1)}\cdots\mu^{(N)}}^\nu:\bigoplus_{\mu^{(1)},\ldots,\mu^{(N)}}\bigotimes_{i=1}^N\bmu^{(i)}\otimes\Hom(\bigotimes_{i=1}^N\bmu^{(i)},\bnu)\rightarrow\bigoplus_{\mu^{(1)},\ldots,\mu^{(N)}}V_{\mu^{(1)}\cdots\mu^{(N)}}^\nu
$$
where $\mu^{(i)}\vdash m_i$.
Since 
$\bigoplus_{\mu^{(1)}\vdash m_1,\ldots,\mu^{(N)}\vdash m_N}\chi_{\mu^{(1)}\cdots\mu^{(N)}}^\nu$
is an $\prod_{i=1}^NS_{m_i}$-module isomorphism, isomorphic Jucys--Murphy eigenvectors
must have equal eigenvalues.  It follows that those eigenvalues also uniquely determine the
$\prod_{i=1}^NS_{m_i}$-isotypical subspace containing a given simultaneous eigenvector of the local
Jucys--Murphy elements.

\end{proof}

We are prepared to prove the crucial fact that $M_\nu V_\mus^\nu=V_\musp^{\nu'}$:
\begin{lemma}
\label{lem:MnuV}
Given a family of partitions $\{\mu^{(i)}\}_{i=1}^N$, and $\nu\vdash\sum_{i=1}^N|\mu^{(i)}|$, the alternating interwiner $M_\nu$ between $\nu$
and $\nu'$ also restricts to an isomorphism between the
$\bigotimes_{i=1}^N\bmu^{(i)}$-isotypical subspace $V_\mus^\nu$ of $\nu$
and the $\bigotimes_{i=1}^N\bmu^{(i)\prime}$-isotypical subspace $V_\musp^{\nu'}$
of $\nu'$.
\end{lemma}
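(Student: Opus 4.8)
The plan is to use the local Jucys--Murphy elements together with the defining property of the alternating intertwiner. Recall that $M_\nu\colon\bnu\to\bnu'$ is a vector-space isomorphism, hence injective, and that $M_\nu(s.v)=\epsilon(s)\,s.(M_\nu v)$ for every $s\in S_n$ and $v\in\bnu$, extended linearly to $\cmplx S_n$. The first step is the elementary but decisive observation that each local Jucys--Murphy element $X^{(i)}_k=\sum_{j=1}^{k-1}(n_{i-1}+j\ n_{i-1}+k)$ is a sum of transpositions, every one of which has sign $-1$; applying the intertwiner relation termwise therefore yields $M_\nu\bigl(X^{(i)}_k.v\bigr)=-\,X^{(i)}_k.(M_\nu v)$ for all admissible $i$ and $k$ and all $v\in\bnu$.

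Next I would transport this to eigenvectors. The restriction of $\bnu$ to $\prod_{i=1}^N S_{m_i}$ splits into the isotypical subspaces $V_\mus^\nu$, and each $V_\mus^\nu$ is a sum of copies of $\bigotimes_{i=1}^N\bmu^{(i)}$; carrying the tableau-product eigenbasis of $\bigotimes_{i=1}^N\bmu^{(i)}$ through the canonical isomorphism $\chi_\mus^\nu$ of Lemma~\ref{lem:canonicaliso} (and noting that the local Jucys--Murphy elements act on $\bigotimes_{i=1}^N\bmu^{(i)}\otimes\Hom(\bigotimes_{i=1}^N\bmu^{(i)},\bnu)$ as ordinary Jucys--Murphy elements on the tensor factors and as the identity on $\Hom$), one sees that $V_\mus^\nu$ is spanned by simultaneous eigenvectors of the $X^{(i)}_k$ whose eigenvalue tuples are exactly the content vectors of products of tableaux of shapes $\mu^{(1)},\dots,\mu^{(N)}$. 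If $e\in V_\mus^\nu$ is such an eigenvector with $X^{(i)}_k.e=x^{(i)}_k e$, then by the first step $X^{(i)}_k.(M_\nu e)=-x^{(i)}_k (M_\nu e)$, and $M_\nu e\neq 0$ by injectivity; hence $M_\nu e$ is again a simultaneous local Jucys--Murphy eigenvector, now with every eigenvalue negated. Negating a content amounts to interchanging the row and column index of each cell, so the negated tuple $(-x^{(i)}_k)$ is precisely the content vector of the product of the \emph{conjugate} tableaux, whose shapes are $\mu^{(1)\prime},\dots,\mu^{(N)\prime}$. Applying Lemma~\ref{lem:localJMiso} inside $\bnu'$, this places $M_\nu e$ in $V_\musp^{\nu'}$. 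Since such eigenvectors span $V_\mus^\nu$, we get $M_\nu\bigl(V_\mus^\nu\bigr)\subseteq V_\musp^{\nu'}$.

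Finally I would promote this inclusion to an equality by a dimension count, using that $M_\nu$ restricted to $V_\mus^\nu$ is injective. By Lemma~\ref{lem:LRres} we have $\dim V_\mus^\nu=c_\mus^\nu\prod_{i=1}^N\dim(\bmu^{(i)})$ and $\dim V_\musp^{\nu'}=c_\musp^{\nu'}\prod_{i=1}^N\dim(\bmu^{(i)\prime})$. Conjugate partitions label representations of equal dimension, so $\dim(\bmu^{(i)})=\dim(\bmu^{(i)\prime})$ for each $i$; and applying the Littlewood--Richardson symmetry $c_{\lambda\mu}^\nu=c_{\lambda'\mu'}^{\nu'}$ of Lemma~\ref{lem:lrsymmetries} to every factor in the definition of $c_\mus^\nu$ (the sum over each intermediate partition $\nu^{(j)}$ being matched by the sum over $\nu^{(j)\prime}$) gives $c_\mus^\nu=c_\musp^{\nu'}$. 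Therefore the two isotypical subspaces have the same dimension, and the injection $M_\nu|_{V_\mus^\nu}\colon V_\mus^\nu\to V_\musp^{\nu'}$ is onto.

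I expect the main obstacle to be bookkeeping rather than anything conceptual: one must be careful in the second step that it really is the \emph{local} Jucys--Murphy elements (those internal to each block) that separate the isotypical subspaces, and in the last step that the chain of Littlewood--Richardson coefficients defining $c_\mus^\nu$ gets conjugated consistently, so that no spurious summation mismatch appears. An alternative to the eigenvector argument, which avoids the Jucys--Murphy machinery, is to note directly that for any subspace $W\subseteq\bnu$ carrying a $\prod_i S_{m_i}$-isomorphism onto $\bigotimes_i\bmu^{(i)}$, its composite with $M_\nu$ is an alternating intertwiner of $\bigotimes_i\bmu^{(i)}$ onto $M_\nu(W)$, hence exhibits $M_\nu(W)$ as a copy of $\bigotimes_i\bmu^{(i)}\otimes\bepsilon\cong\bigotimes_i\bmu^{(i)\prime}$ sitting inside $\bnu'$; summing over all such $W$ again gives $M_\nu(V_\mus^\nu)\subseteq V_\musp^{\nu'}$, and the dimension count finishes as before.
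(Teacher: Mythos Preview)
Your proposal is correct and follows essentially the same route as the paper: both arguments observe that $M_\nu$ anticommutes with the local Jucys--Murphy elements (since each is a sum of transpositions of sign $-1$), deduce that $M_\nu$ sends a simultaneous local-JM eigenvector to one with negated eigenvalues, invoke Lemma~\ref{lem:localJMiso} to place the image in $V_\musp^{\nu'}$, and finish with a dimension count. Your version is actually more explicit than the paper's on the final step---the paper simply asserts $\dim(V_\mus^\nu)=\dim(V_\musp^{\nu'})$, whereas you spell out the Littlewood--Richardson symmetry $c_\mus^\nu=c_\musp^{\nu'}$ and the equality $\dim(\bmu^{(i)})=\dim(\bmu^{(i)\prime})$---and your alternative at the end (composing with $M_\nu$ to realize the image as a copy of $\bigotimes_i\bmu^{(i)}\otimes\bepsilon$) is a clean way to bypass the Jucys--Murphy machinery entirely, though the paper does not take that route.
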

\begin{proof}
Given any local Jucys--Murphy element $X$, we have $M_\nu^{-1} X M_\nu=-X$, by definition
of $M_\nu$ and the odd parity of $X$.  It follows that, for each
simultaneous eigenvector $e$ in $\nu$ of the local \JM elements, $M_\nu e$ is
also an eigenvector of the local \JM elements with eigenvalues being the
negation of those of $e$.
We also have that $e=\iota(\psi)$ for some $\iota\in\Hom(\bigotimes_{i=1}^N\bmu^{(i)},\bnu)$ and $\psi\in\bigotimes_{i=1}^N\bmu^{(i)}$. Then by equivariance of the intertwiner with respect to
the local {\JM} elements, $\psi$ must have the same local \JM
eigenvalues as $e$.
Then because conjugating tableaux in each $\mu^{(i)}$ negates their
\JM eigenvalues, and the products of these tableaux form the basis supporting
$\psi$, there exists $\psi'\in\bigotimes_{i=1}^N\bmu^{(i)\prime}$ with \JM eigenvalues
being the negation of those of $\psi$.  Thus for an intertwiner
$\iota'\in\Hom(\bigotimes_{i=1}^N\bmu^{(i)\prime},\bnu')$, $\iota'(\psi')$ in $V_\musp^{\nu'}$ has the same eigenvalues as that of $M_\nu e$.  By Lemma~\ref{lem:localJMiso} it follows that
$M_\nu e$ is in $V_{\lambda'\mu'}^{\nu'}$.
Thus we have $M_\nu V_\mus^\nu\subset V_\musp^{\nu'}$.
Further since $M_\nu $ is injective and $\dim(V_\mus^\nu)=\dim(V_\musp^{\nu'})$, we conclude that $M_\nu V_\mus^\nu=V_\musp^{\nu'}$.
\end{proof}

\begin{figure}
\begin{center}
\begin{tikzpicture}
\matrix (m) [
    matrix of nodes,
    row sep=0.4cm,
    column sep=1cm,
    every node/.style={text width=4.2cm, text depth=0.5cm},
    every node/.append code={
       \ifnum\pgfmatrixcurrentrow=1\relax
          \pgfkeysalso{align=center}
       \else
           \pgfkeysalso{align=center,draw}
       \fi
    }
]{
{$\nu\phantom{'}=\nu'$} \\
$V_{\mu^{(1)}\cdots\mu^{(N)}}^\nu=V_{\mu^{(1)}\cdots\mu^{(N)}}^{\nu'}$ \\
$V_{\mu^{(1)\prime}\cdots\mu^{(N)\prime}}^\nu=V_{\mu^{(1)\prime}\cdots\mu^{(N)\prime}}^{\nu'}$  \\
};
\draw[->] (m-2-1) -- (m-3-1);
\draw[->] (m-3-1) -- (m-2-1);
\node[rectangle,draw,fit=(m-1-1)(m-3-1)]{};
\end{tikzpicture}
\end{center}
\caption{Schematic diagram of $M_\nu$ mapping from isotypical subspaces of $\nu=\nu'$ to themselves.} 
\label{fig:muneqmupnueqnup}
\end{figure}
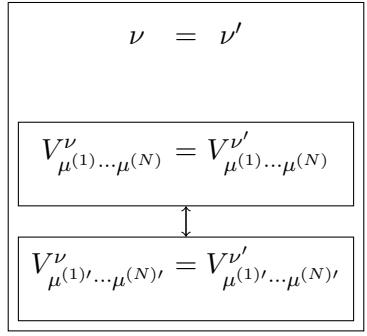

In the case of $\nu=\nu'$, $M_\nu$ is a vector space automorphism on $\nu$ and in particular
on the internal direct sum $V_\mus^\nu\oplus V_\musp^\nu$. 
Then $\rho_\nu(x)$ is constrained by Lemma~\ref{lem:isoMorphism} to be $M_\nu$-invariant,
which may prohibit some unitary operators on $\bnu$ and in particular on $V_\mus^\nu$. 
But if $\{\mu^{(i)}\}_{i=1}^N\neq\{\mu^{(i)\prime}\}_{i=1}^N$  
then we are saved because $M_\nu$ does not map
$V_\mus^\nu$ to itself (Figure~\ref{fig:muneqmupnueqnup}).  Instead, $M_\nu$ maps vectors in $V_\musp^\nu$ to $V_\mus^\nu$ and vice versa.  Again, the action on $V_\musp^\nu$ is for our purposes arbitrary.
This proves to be enough freedom to generate any desired unitary operator on $V_\mus^\nu$. 
\begin{lemma}
\label{lem:noncononselfcon}
Given a family of partitions $\{\mu^{(i)}\vdash m_i\}_{i=1}^N$, 
and proper partition $\nu\vdash \sum_{i=1}^Nm_i$
such that $\{\mu^{(i)}\}_{i=1}^N\neq\{\mu^{(i)\prime}\}_{i=1}^N$,
$\nu=\nu'$, and $\Hom(\bigotimes_{i=1}^N\bmu^{(i)},\bnu)\neq 0$,
we have that $\{\mu^{(i)}\}_{i=1}^N$ is universal on $\nu$.
\end{lemma}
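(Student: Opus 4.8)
The plan is to deduce the statement from Lemma~\ref{lem:suVsubthenu} applied to the singleton $\cP=\{\nu\}$: writing $\Pi$ for the orthogonal projection of $\bnu$ onto $V_\mus^\nu$, and viewing each element of $\su(V_\mus^\nu)$ as an operator on $\bnu$ supported on $V_\mus^\nu$, it suffices to prove $\su(V_\mus^\nu)\subseteq\rho_\nu(\g_n)\Pi$. Since $\nu$ is proper and self-conjugate it is neither trivial nor a hook, so Marin's Lemma~\ref{lem:marin} gives the literal equality $\rho_\nu(\g'_n)=\osp(\bnu)$; together with $\g'_n\subseteq\g_n$ and Lemma~\ref{lem:osp} this yields $\rho_\nu(\g_n)=\osp(\bnu)$. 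Hence the goal reduces to showing that every $a\in\su(V_\mus^\nu)$ extends to some $Y\in\osp(\bnu)$ with $Y\Pi=a$.

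The geometric input is Lemma~\ref{lem:MnuV}. Because $\nu=\nu'$, the alternating intertwiner $M_\nu$ is an automorphism of $\bnu$ carrying $V_\mus^\nu$ onto $V_\musp^\nu$ and (applying the lemma with the roles of $\mu^{(i)}$ and $\mu^{(i)\prime}$ exchanged) $V_\musp^\nu$ back onto $V_\mus^\nu$. The hypothesis $\{\mu^{(i)}\}_{i=1}^N\neq\{\mu^{(i)\prime}\}_{i=1}^N$ forces $\bigotimes_{i=1}^N\bmu^{(i)}\not\cong\bigotimes_{i=1}^N\bmu^{(i)\prime}$ as $\prod_{i=1}^N S_{m_i}$-modules, so by Lemma~\ref{lem:isotypicalorthogonality} the subspaces $V_\mus^\nu$ and $V_\musp^\nu$ are orthogonal. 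Consequently the nondegenerate bilinear form $B$ with Gram matrix $M_\nu$ in the tableau basis --- the form preserved by $\osp(\bnu)$ --- vanishes identically on each of $V_\mus^\nu$ and $V_\musp^\nu$ and restricts to a perfect pairing between them; so $W:=V_\mus^\nu\oplus V_\musp^\nu$ is a split (hyperbolic) subspace for $B$, and $\bnu=W\oplus W^{\perp_B}$ is a $B$-orthogonal decomposition.

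With this structure the extension is routine linear algebra. Given $a\in\su(V_\mus^\nu)$, let $a^\dagger\in\End(V_\musp^\nu)$ be its adjoint relative to the pairing $B|_W$, and set $Y$ equal to $a$ on $V_\mus^\nu$, to $-a^\dagger$ on $V_\musp^\nu$, and to $0$ on $W^{\perp_B}$. Using the isotropy of $V_\mus^\nu$ and $V_\musp^\nu$, the defining property of $a^\dagger$, and the (anti)symmetry of $B$, one checks $B(Yv,w)+B(v,Yw)=0$ for all $v,w\in\bnu$, and $\tr(Y)=\tr(a)-\tr(a^\dagger)=0$; hence $Y\in\osp(\bnu)$. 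Since $\Pi$ has image $V_\mus^\nu$, on which $Y$ acts as $a$, we get $Y\Pi=a$. Letting $a$ range over $\su(V_\mus^\nu)$ yields $\su(V_\mus^\nu)\subseteq\osp(\bnu)\Pi=\rho_\nu(\g_n)\Pi$, and Lemma~\ref{lem:suVsubthenu} finishes the proof. This realizes the ``scratch space'' idea announced before the lemma: the form $B$ forces a shadow copy $-a^\dagger$ of $a$ onto $V_\musp^\nu$, but that copy is invisible to the computation carried out on $V_\mus^\nu$.

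The one genuinely delicate point, and the step I would write out in full, is the isotropy of $V_\mus^\nu$ for $B$ --- equivalently, that $M_\nu$ moves $V_\mus^\nu$ entirely off itself into the orthogonal subspace $V_\musp^\nu$. This is precisely where both hypotheses $\nu=\nu'$ and $\{\mu^{(i)}\}\neq\{\mu^{(i)\prime}\}$ enter, through Lemma~\ref{lem:MnuV} and Lemma~\ref{lem:isotypicalorthogonality} respectively. Everything downstream is uniform in whether $\osp(\bnu)$ is the orthogonal or the symplectic algebra, since in either case a split subspace carries the Levi-type embedding $\gl(V_\mus^\nu)\hookrightarrow\osp(W)\subseteq\osp(\bnu)$ that the construction of $Y$ instantiates, so no case split on the parity of $|\nu|-b(\nu)$ is required.
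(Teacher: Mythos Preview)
Your proof is correct and follows essentially the same approach as the paper's. Both arguments use Lemma~\ref{lem:MnuV} together with the orthogonality of distinct isotypical components to see that $M_\nu$ swaps $V_\mus^\nu$ and $V_\musp^\nu$, and then construct, for each $a\in\su(V_\mus^\nu)$, the companion operator $-a^\dagger$ on $V_\musp^\nu$ (the paper writes this as $u'=-M_{12}^{-1}u^\tp M_{12}$, which is exactly your $-a^\dagger$ in matrix form) so that the block-diagonal sum lies in $\osp(\bnu)$; both then invoke Lemma~\ref{lem:suVsubthenu}. Your version is phrased in the language of the bilinear form $B$ and the Levi embedding $\gl(V_\mus^\nu)\hookrightarrow\osp(W)$, while the paper carries out the same computation with explicit $2\times2$ block matrices. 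If anything your write-up is slightly more careful in one respect: you explicitly extend $Y$ by zero on $W^{\perp_B}$ and check the $\osp$ condition there, whereas the paper verifies the condition only on $W$ and leaves the extension implicit.
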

\begin{proof}
By Lemma~\ref{lem:MnuV}, we have $M_\nu V_\mus^\nu=V_\musp^\nu$
and $M_\nu V_\musp^\nu=V_\mus^\nu$,
so $M_\nu $ fixes the internal direct sum $V_\mus^\nu\oplus V_\musp^\nu$. 
Furthermore by the orthogonality of distinct $\prod_{i=1}^NS_{m_i}$-isotypical
subspaces (Lemma~\ref{lem:isotypicalorthogonality}), $M_\nu V_\mus^\nu$ is orthogonal to $V_\mus^\nu$
and $M_\nu V_\musp^\nu$ is orthogonal to $V_\musp^\nu$.
Therefore on $V_\mus^\nu\oplus V_\musp^\nu$ we can write
$$
M_\nu =\left(\begin{array}{c|c} 0 & M_{12} \\\hline M_{21} & 0 \end{array}\right)
$$
where by properties of the alternating intertwiner, $M^\tp_{21} = M^{-1}_{21} = \pm M_{12}$,
with the sign depending on $\nu$. 
Let $u\in\su(V_\mus^\nu)$. 
Our goal is to establish that $u$ is in $\rho_\nu(\g_n)\Pi$.
Recalling that membership of any $X$ in $\rho_\nu(\g_n)$ requires
$M_\nu XM_\nu^{-1}=-X^\tp$, we wish to solve the following equation 
for $u'\in\su(V_\musp^\nu)$:
$$
\left(\begin{array}{c|c} 0 & M_{12} \\\hline M_{21} & 0 \end{array}\right)
\left(\begin{array}{c|c} u & 0 \\\hline 0 & u' \end{array}\right)
\left(\begin{array}{c|c} 0 & M^{-1}_{21} \\\hline M^{-1}_{12} & 0 \end{array}\right)
=-\left(\begin{array}{c|c} u^\tp & 0 \\\hline 0 & (u')^\tp \end{array}\right).
$$
This reduces to the system
\begin{eqnarray*}
M_{12}u'M^{-1}_{12} &=& -u^\tp \\
M_{21}uM^{-1}_{21} &=& -(u')^\tp,
\end{eqnarray*}
which admits the solution
$u'=-M^{-1}_{12}u^\tp M_{12}$.
Finally noting that $(u\oplus u')\Pi=u\oplus 0$,
we conclude that $\su(V_\mus^\nu)$ is a subalgebra of $\rho_\nu(\g_n)\Pi$
and, by Lemma~\ref{lem:suVsubthenu}, we have that $\{\mu^{(i)}\}_{i=1}^N$ is universal on $\nu$.
\end{proof}

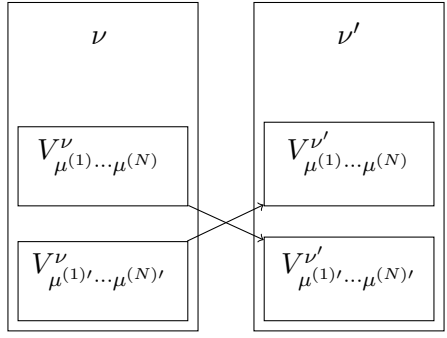
\begin{figure}
\begin{center}
\begin{tikzpicture}
\matrix (m) [
    matrix of nodes,
    row sep=0.4cm,
    column sep=1cm,
    every node/.style={text width=2cm, text depth=0.5cm},
    every node/.append code={
       \ifnum\pgfmatrixcurrentrow=1\relax
          \pgfkeysalso{align=center}
       \else
           \pgfkeysalso{align=center,draw}
       \fi
    }
]{
{$\nu\phantom{'}$} &
{$\nu'$} \\
$V_{\mu^{(1)}\cdots\mu^{(N)}}^\nu\phantom{'}$ &
$V_{\mu^{(1)}\cdots\mu^{(N)}}^{\nu'}$ \\
$V_{\mu^{(1)\prime}\cdots\mu^{(N)\prime}}^\nu\phantom{'}$ &
$V_{\mu^{(1)\prime}\cdots\mu^{(N)\prime}}^{\nu'}$ \\
};
\draw[->] (m-2-1) -- (m-3-2);
\draw[->] (m-3-1) -- (m-2-2);
\node[rectangle,draw,fit=(m-1-1)(m-3-1)]{};
\node[rectangle,draw,fit=(m-1-2)(m-3-2)]{};
\end{tikzpicture}
\end{center}
\caption{Schematic diagram of $M_\nu$ mapping from isotypical subspaces of $\nu$ to isotypical subspaces of $\nu'$.}
\label{fig:muneqmupnuneqnup}
\end{figure}

In the case of proper $\nu\neq\nu'$, since $\rho_\nu(\g_n)=\sla(\bnu)$, we can find
an $x\in\g_n$ to give any desired operator on $\nu$.  However, the resulting action on 
$\nu'$ is then constrained by $M_\nu$.  Fortunately we are not concerned with the action on
all of $\nu$ and $\nu'$, rather for universality of $\{\mu^{(i)}\}_{i=1}^N$ we are
only concerned with the action on the isotypical subspaces $V_\mus^\nu$
and $V_\mus^{\nu'}$.
Furthermore the action on these subspaces are not directly dependent on each other,
provided that $\{\mu^{(i)}\}_{i=1}^N\neq\{\mu^{(i)\prime}\}_{i=1}^N$ (Figure~\ref{fig:muneqmupnuneqnup}).
Instead, as proven above, the action on $V_\mus^{\nu'}$ is determined, via
$M_\nu$, by the action on $V_\musp^\nu$, and the action on 
$V_\mus^\nu$ is determined, via $M_{\nu'}=M_\nu^{-1}$, by the action on $V_\musp^{\nu'}$.
The actions on $V_\musp^\nu$ and on $V_\musp^{\nu'}$ are,
for our purposes, arbitrary, since they are unconstrained by the requirements of universality. 
Therefore, since we can find an $x\in\g_n$ to give any desired action on $\nu$, we can
in particular find an $x$ to give the desired action on $V_\mus^\nu$ while
simultaneously acting on $V_\musp^\nu$ in such a way that its transformation
by $M_\nu$ gives the desired action on $V_\mus^{\nu'}$:
\begin{lemma}
\label{lem:ifnoncononpaircon}
Given a family of partitions $\{\mu^{(i)}\}_{i=1}^N$ and a proper partition $\nu\in\cP_\mus$
such that $\{\mu^{(i)\prime}\}_{i=1}^N\neq\{\mu^{(i)}\}_{i=1}^N$, 
$\nu'\neq\nu$, and $\nu'\in\cP_\mus$, we have that $\{\mu^{(i)}\}_{i=1}^N$ is universal on $\{\nu,\nu'\}$.
\end{lemma}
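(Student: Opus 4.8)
The plan is to apply Lemma~\ref{lem:suVsubthenu} with $\cP=\{\nu,\nu'\}$: it suffices to show that $\su(V_\mus^\nu)\oplus\su(V_\mus^{\nu'})$ is a Lie subalgebra of $\left(\rho_\nu\oplus\rho_{\nu'}\right)(\g_n)\Pi$, where $\Pi$ is the projection onto $V_\mus^\nu\oplus V_\mus^{\nu'}$; and since this direct sum is visibly a Lie algebra, only the inclusion must be checked. Spelled out, given $u\in\su(V_\mus^\nu)$ and $w\in\su(V_\mus^{\nu'})$ I must produce $x\in\g_n$ with $\rho_\nu(x)|_{V_\mus^\nu}=u$ and $\rho_{\nu'}(x)|_{V_\mus^{\nu'}}=w$; since $\Pi$ annihilates every other isotypical summand of $\bnu\oplus\bnu'$ before $\rho_\nu(x)\oplus\rho_{\nu'}(x)$ acts, such an $x$ yields $\left(\rho_\nu(x)\oplus\rho_{\nu'}(x)\right)\Pi$ equal to the operator that is $u$ on $V_\mus^\nu$, $w$ on $V_\mus^{\nu'}$, and $0$ elsewhere.

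First I would assemble the structural facts. Since $\nu$ is proper and $\nu\neq\nu'$, Lemma~\ref{lem:marin} gives the equality $\rho_\nu(\g'_n)=\sla(\bnu)$, so every traceless operator on $\bnu$ equals $\rho_\nu(x)$ for some $x\in\g'_n\subset\g_n$; and for any such $x$, Lemma~\ref{lem:isoMorphism} forces $\rho_{\nu'}(x)^\tp=-M_\nu\rho_\nu(x)M_\nu^{-1}$. By Lemma~\ref{lem:MnuV} the alternating intertwiner restricts to isomorphisms $M_\nu V_\mus^\nu=V_\musp^{\nu'}$ and $M_\nu V_\musp^\nu=V_\mus^{\nu'}$, and by Lemma~\ref{lem:isotypicalorthogonality} all the isotypical subspaces in sight are mutually orthogonal. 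The hypothesis $\{\mu^{(i)\prime}\}_{i=1}^N\neq\{\mu^{(i)}\}_{i=1}^N$ is used precisely here: it guarantees $V_\musp^\nu\neq V_\mus^\nu$, so the conjugate isotypical subspaces $V_\musp^\nu$ and $V_\musp^{\nu'}$ are available as independent ``scratch space'' in the sense of the discussion preceding the lemma.

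Next I would carry out a block-matrix construction paralleling the proof of Lemma~\ref{lem:noncononselfcon}. Write $\bnu=V_\mus^\nu\oplus V_\musp^\nu\oplus W_\nu$ with $W_\nu$ the orthogonal complement of the first two summands, and let $X\in\gl(\bnu)$ be block-diagonal for this decomposition with block $u$ on $V_\mus^\nu$, block $0$ on $W_\nu$, and block $B$ on $V_\musp^\nu$ still to be chosen. Since $M_\nu$ is orthogonal (Lemma~\ref{lem:orthogonalform}) and carries $V_\mus^\nu,V_\musp^\nu,W_\nu$ onto $V_\musp^{\nu'},V_\mus^{\nu'},W_{\nu'}$, the relation $\rho_{\nu'}(x)=-M_\nu X^\tp M_\nu^{-1}$ makes $\rho_{\nu'}(x)$ block-diagonal for the decomposition $\bnu'=V_\mus^{\nu'}\oplus V_\musp^{\nu'}\oplus W_{\nu'}$, with its block on $V_\mus^{\nu'}$ equal to $-M_{21}B^\tp M_{21}^{-1}$, where $M_{21}:=M_\nu|_{V_\musp^\nu}:V_\musp^\nu\to V_\mus^{\nu'}$. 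Setting this block equal to $w$ and solving for $B$ exactly as in Lemma~\ref{lem:noncononselfcon} produces an explicit traceless $B$ (in fact $B\in\su(V_\musp^\nu)$). Then $X$ is traceless, hence lies in $\sla(\bnu)=\rho_\nu(\g'_n)$, so there is $x\in\g_n$ with $\rho_\nu(x)=X$; by construction $\rho_\nu(x)|_{V_\mus^\nu}=u$ and $\rho_{\nu'}(x)|_{V_\mus^{\nu'}}=w$, which gives the desired inclusion and, via Lemma~\ref{lem:suVsubthenu}, the conclusion.

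The main obstacle is the bookkeeping in the third step: one must verify that with $X$ block-diagonal the operator $\rho_{\nu'}(x)=-\left(M_\nu\rho_\nu(x)M_\nu^{-1}\right)^\tp$ is again block-diagonal with its blocks permuted as claimed by $M_\nu$, and that the transpose in Lemma~\ref{lem:isoMorphism} (taken in the tableau basis) interacts correctly with the isotypical decomposition — the same point that is handled tacitly in the proof of Lemma~\ref{lem:noncononselfcon}. Everything else is a routine unwinding of the definition of universality and of the canonical isomorphisms $\chi_\mus^\nu$.
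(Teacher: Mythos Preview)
Your proposal is correct and follows essentially the same route as the paper: use $\rho_\nu(\g'_n)=\sla(\bnu)$ from Lemma~\ref{lem:marin} to freely choose a block-diagonal $X$ with prescribed block $u$ on $V_\mus^\nu$ and free block $B$ on $V_\musp^\nu$, then use Lemma~\ref{lem:isoMorphism} together with $M_\nu V_\musp^\nu=V_\mus^{\nu'}$ from Lemma~\ref{lem:MnuV} to solve for $B$ so that $\rho_{\nu'}(x)$ has the desired block $w$ on $V_\mus^{\nu'}$, and conclude by Lemma~\ref{lem:suVsubthenu}. One small quibble: the orthogonality of $M_\nu$ follows from its explicit form as a signed permutation of tableaux (the lemma defining $M_\lambda T=w(T)T'$), not from Lemma~\ref{lem:orthogonalform}, which concerns Young's orthogonal form of the permutations.
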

\begin{proof}
By Lemma~\ref{lem:marin}, we have $\rho_\nu(\g_n)=\sla(\bnu)$.  Since $V_\mus^\nu\oplus V_\musp^\nu\subset\bnu$, 
it follows that $\su(V_\mus^\nu)\oplus\su(V_\musp^\nu)\oplus\bzero\subset\rho_\nu(\g_n)$ 
where $\bzero$ acts on the orthogonal complement of $V_\mus^\nu\oplus V_\musp^\nu$ in $\bnu$.
Thus for each $u\oplus u'\in\su(V_\mus^\nu)\oplus\su(V_\musp^\nu)$, there exists $x\in\g_n$
such that $\rho_\nu(x)=u\oplus u'\oplus 0$.
Then by Lemma~\ref{lem:isoMorphism}, we have
\begin{eqnarray*}
\rho_{\nu'}(x) &=& M_\nu^{-1}(u\oplus u'\oplus 0)M_\nu\\
&=& -u^\tp\oplus(u')^\tp\oplus 0.
\end{eqnarray*}
Since by Lemma~\ref{lem:MnuV} $M_\nu V_\mus^\nu=V_\musp^{\nu'}$ and $M_\nu V_\musp^\nu=V_\mus^{\nu'}$, we have that
$-u^\tp$ in the expression above fixes $V_\musp^{\nu'}$ in $\bnu'$, while
$-(u')^\tp$ in the expression above fixes $V_\mus^{\nu'}$ in $\bnu'$.
Therefore $(\rho_\nu\oplus\rho_{\nu'})(x)\Pi=u\oplus(-u')^\tp$.
Since $u'$ can be chosen independently of $u$, such that $(-u')^\tp$ is any element in $\su(V_\mus^{\nu'})$,
we conclude that 
$$\su(V_\mus^\nu)\oplus\su(V_\mus^{\nu'})\subset(\rho_\nu+\rho_{\nu'})(\g_n)\Pi.$$
Therefore by Lemma~\ref{lem:suVsubthenu}, we have that $\{\mu^{(i)}\}_{i=1}^N$ is universal on $\{\nu,\nu'\}$.
\end{proof}

Combining this argument with the previous result on self-conjugate $\nu$, we derive a 
condition for universality on a more general set of partitions:

\begin{lemma}
\label{lem:muneqmupuniversal}
Given a family of partitions $\{\mu^{(i)}\}_{i=1}^N$ such that 
$\{\mu^{(i)}\}_{i=1}^N\neq\{\mu^{(i)\prime}\}_{i=1}^N$,
and a set of proper partitions $\cP\subset\cP_\mus$, we have that
$\{\mu^{(i)}\}_{i=1}^N$ is universal on $\cP$.
\end{lemma}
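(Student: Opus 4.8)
The plan is to decompose $\cP$ into conjugate pairs and singletons, establish universality of $\{\mu^{(i)}\}_{i=1}^N$ on each piece using the lemmas of this subsection together with Marin's Lemma~\ref{lem:marin}, and then glue the pieces together using Lemma~\ref{lem:additiveu}. Throughout, note that every $\nu\in\cP\subset\cP_\mus$ has $|\nu|=n:=\sum_{i=1}^N|\mu^{(i)}|$, so the relevant algebra of transpositions is $\g_n$ in every case. Since, moreover, every partition in $\cP$ is proper, $\cP$ contains no hooks; hence the subset of hooks of $\cP$ appearing in the statement of Lemma~\ref{lem:additiveu} is empty, and universality on it is vacuous.

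First I would treat the conjugate pairs. For each $\nu\in\cP$ with $\nu\neq\nu'$ and $\nu'\in\cP$, both $\nu$ and $\nu'$ are proper and lie in $\cP_\mus$, and by hypothesis $\{\mu^{(i)\prime}\}_{i=1}^N\neq\{\mu^{(i)}\}_{i=1}^N$; so Lemma~\ref{lem:ifnoncononpaircon} gives universality of $\{\mu^{(i)}\}_{i=1}^N$ on $\{\nu,\nu'\}$. Next I would treat the remaining singletons, i.e. those $\nu\in\cP$ that are either self-conjugate or whose conjugate is not in $\cP$. If $\nu=\nu'$, then $\nu$ is proper, $\Hom(\bigotimes_{i=1}^N\bmu^{(i)},\bnu)\neq 0$ because $\nu\in\cP_\mus$, and $\{\mu^{(i)}\}_{i=1}^N\neq\{\mu^{(i)\prime}\}_{i=1}^N$ by hypothesis, so Lemma~\ref{lem:noncononselfcon} gives universality on $\nu$. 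If instead $\nu\neq\nu'$ but $\nu'\notin\cP$, then $\nu$ is a proper non-self-conjugate partition, so Lemma~\ref{lem:marin} gives $\rho_\nu(\g'_n)=\sla(\bnu)$, and Lemma~\ref{lem:manysumuniversal} applied to the singleton $\{\nu\}$ gives universality on $\{\nu\}$.

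Finally, having established universality of $\{\mu^{(i)}\}_{i=1}^N$ on the (empty) subset of hooks of $\cP$, on each conjugate pair of proper partitions in $\cP$, and on each remaining singleton in $\cP$, Lemma~\ref{lem:additiveu} yields universality on all of $\cP$, as desired. The only points needing care are purely combinatorial bookkeeping: checking that the three cases above exhaust $\cP$ (with hooks automatically excluded by properness), and verifying that the hypotheses of Lemmas~\ref{lem:noncononselfcon} and~\ref{lem:ifnoncononpaircon} hold --- the condition $\{\mu^{(i)}\}_{i=1}^N\neq\{\mu^{(i)\prime}\}_{i=1}^N$ is inherited verbatim from the hypothesis of the present lemma, and membership in $\cP_\mus$ supplies the nonvanishing $\Hom$-conditions. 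Accordingly I expect no genuine obstacle; the substantive work has already been carried by the earlier lemmas, and this argument is essentially an assembly step.
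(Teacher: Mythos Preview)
Your proposal is correct and follows essentially the same approach as the paper: decompose $\cP$ via Lemma~\ref{lem:additiveu}, handle self-conjugate partitions with Lemma~\ref{lem:noncononselfcon}, and handle conjugate pairs with Lemma~\ref{lem:ifnoncononpaircon}. In fact you are slightly more careful than the paper, which does not explicitly spell out the case of a proper non-self-conjugate $\nu$ with $\nu'\notin\cP$; your treatment of that singleton via Lemma~\ref{lem:marin} and Lemma~\ref{lem:manysumuniversal} is exactly right and fills in what the paper leaves implicit.
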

\begin{proof}
By Lemma~\ref{lem:noncononselfcon}, $\{\mu^{(i)}\}_{i=1}^N$ is universal on every self-conjugate partition in $\cP$,
and by Lemma~\ref{lem:ifnoncononpaircon} $\{\mu^{(i)}\}_{i=1}^N$ is universal on every pair of conjugate partitions in $\cP$.
Therefore by Lemma~\ref{lem:additiveu}, we have that $\{\mu^{(i)}\}_{i=1}^N$ is universal on $\cP$.
\end{proof}

\begin{figure}
\begin{center}
\begin{tikzpicture}
\matrix (m) [
    matrix of nodes,
    row sep=0.4cm,
    column sep=1cm,
    every node/.style={text width=4cm, text depth=0.5cm},
    every node/.append code={
       \ifnum\pgfmatrixcurrentrow=1\relax
          \pgfkeysalso{align=center}
       \else
           \pgfkeysalso{align=center,draw}
       \fi
    }
]{
{$\nu\phantom{'}=\nu'$} \\
$V_{\mu^{(1)}\cdots\mu^{(N)}}^\nu=V_{\mu^{(1)\prime}\cdots\mu^{(N)\prime}}^{\nu'}$ \\
|[draw=none]|  \\
};
\draw [->] (m-2-1.south)arc(-160:160:1.65);
\node[rectangle,draw,fit=(m-1-1)(m-3-1)]{};
\end{tikzpicture}
\end{center}
\caption{Schematic diagram of $M_\nu$ mapping from the isotypical subspace $V_\mus^\nu=V_{\musp}^{\nu'}$ of $\nu=\nu'$ to itself.} 
\label{fig:mueqmupnueqnup}
\end{figure}

However the strategies above fail when $\{\mu^{(i)}\}_{i=1}^N=\{\mu^{(i)\prime}\}_{i=1}^N$.  For example if we also have that
$\nu=\nu'$, then $M_\nu$ is an automorphism on $V_\mus^\nu$ (Figure~\ref{fig:mueqmupnueqnup}).
The condition that $\rho_\nu(x)$ be $M_\nu$-invariant on $V_\mus^\nu$ is 
then too constraining to allow unitary operators, in general.
\begin{lemma}
\label{lem:mueqmupnueqnup}
Given a family of self-conjugate partitions $\{\mu^{(i)}\}_{i=1}^N$ such that 
$\dim(\bigotimes_{i=1}^N\bmu^{(i)})>2$, 
and a self-conjugate, proper $\nu$, it follows that 
$\{\mu^{(i)}\}_{i=1}^N$ is not universal on $\nu$.
\end{lemma}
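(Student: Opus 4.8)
The plan is to argue by contradiction: a witness of universality on $\nu$ would produce a nondegenerate $\sla(D,\cmplx)$-invariant bilinear form on a direct sum of copies of the standard representation, where $D:=\dim(\bigotimes_{i=1}^N\bmu^{(i)})$, and no such form exists once $D\geq 3$.

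First I would set up the data. Suppose $\{\mu^{(i)}\}_{i=1}^N$ is universal on $\nu$. Then $\nu\in\cP_\mus$, so $H:=\Hom(\bigotimes_{i=1}^N\bmu^{(i)},\bnu)\neq\{0\}$; put $c:=\dim H\geq 1$, let $\chi:=\chi_\mus^\nu:\bigotimes_{i=1}^N\bmu^{(i)}\otimes H\to V_\mus^\nu$ be the canonical isomorphism (so $\dim V_\mus^\nu=Dc$), and let $\Pi$ be the projection of $\bnu$ onto $V_\mus^\nu$. By Lemma~\ref{lem:canonicalstrong}, $\chi(u\otimes\one)\chi^{-1}\in\rho_\nu(\g_n)\Pi$ for every $u\in\su(\bigotimes_{i=1}^N\bmu^{(i)})$. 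Since $\nu=\nu'$, Lemma~\ref{lem:osp} gives $\rho_\nu(\g_n)\subseteq\osp(\bnu)$, i.e.\ every $X$ in the image satisfies $X^\tp M_\nu+M_\nu X=0$. Since moreover $\mu^{(i)\prime}=\mu^{(i)}$ for all $i$ and $\nu'=\nu$, Lemma~\ref{lem:MnuV} gives $M_\nu V_\mus^\nu=V_\mus^\nu$; and because $M_\nu$ is a signed permutation of the tableau basis it is orthogonal, hence also preserves $W:=(V_\mus^\nu)^\perp$, so $M_\nu=M_1\oplus M_2$ relative to $\bnu=V_\mus^\nu\oplus W$ with $M_1\in\GL(V_\mus^\nu)$.

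Next comes the block computation. Fix $u$ and choose $X\in\rho_\nu(\g_n)$ with $X\Pi=\chi(u\otimes\one)\chi^{-1}$. Relative to $\bnu=V_\mus^\nu\oplus W$, the operator $\chi(u\otimes\one)\chi^{-1}$ vanishes on $W$ and carries $V_\mus^\nu$ into itself, so $X$ has vanishing lower-left block and upper-left block $A:=\chi(u\otimes\one)\chi^{-1}$; reading off the $(1,1)$-entry of $X^\tp M_\nu+M_\nu X=0$ yields $A^\tp M_1+M_1A=0$. Conjugating by $\chi$ (an isometry by Lemma~\ref{lem:orthogonalchi}, so $\chi^{-1}=\chi^\tp$ in the real tableau picture) and setting $\hat M:=\chi^\tp M_1\chi$, an invertible operator on $\bigotimes_{i=1}^N\bmu^{(i)}\otimes H$, this becomes $(u\otimes\one_c)^\tp\hat M+\hat M(u\otimes\one_c)=0$ for every $u\in\su(\bigotimes_{i=1}^N\bmu^{(i)})$. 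The left-hand side is $\cmplx$-linear in $u$ and $\su(\bigotimes_{i=1}^N\bmu^{(i)})$ spans $\sla(\bigotimes_{i=1}^N\bmu^{(i)})=\sla(D,\cmplx)$ over $\cmplx$, so in fact $(A\otimes\one_c)^\tp\hat M+\hat M(A\otimes\one_c)=0$ for every $A\in\sla(D,\cmplx)$.

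Finally I would read off the contradiction. This identity says exactly that the nondegenerate bilinear form $(x,y)\mapsto x^\tp\hat M y$ on $\cmplx^D\otimes\cmplx^c$ is invariant under the representation $A\mapsto A\otimes\one_c$ of $\sla(D,\cmplx)$, equivalently that $\hat M$ is an isomorphism of $\sla(D,\cmplx)$-modules $V^{\oplus c}\to(V^*)^{\oplus c}$, where $V=\cmplx^D$ is the standard representation. But $D=\dim(\bigotimes_{i=1}^N\bmu^{(i)})>2$ forces $D\geq 3$, and for $D\geq 3$ the standard module $V$ is irreducible with $V\not\cong V^*$; hence $V^{\oplus c}$ and $(V^*)^{\oplus c}$ share no irreducible constituent, $\Hom_{\sla(D,\cmplx)}(V^{\oplus c},(V^*)^{\oplus c})=\{0\}$, and $\hat M$ cannot be invertible, a contradiction. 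Therefore $\{\mu^{(i)}\}_{i=1}^N$ is not universal on $\nu$. The step I expect to require the most care is the block computation: it is there that the self-conjugacy of the family, via $M_\nu V_\mus^\nu=V_\mus^\nu$, and the $M_\nu$-constraint on $\rho_\nu(\g_n)$ must be combined to pin $\chi(u\otimes\one)\chi^{-1}$ down to $\{A:A^\tp M_1+M_1A=0\}$; after that, the argument rests only on the standard fact that the defining module of $\sla(D,\cmplx)$ is non-self-dual for $D\geq 3$.
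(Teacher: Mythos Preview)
Your argument is correct and follows the same overall route as the paper: use Lemma~\ref{lem:canonicalstrong} to put $\chi(u\otimes\one)\chi^{-1}$ into $\rho_\nu(\g_n)\Pi$, use self-conjugacy of the $\mu^{(i)}$ and of $\nu$ together with Lemma~\ref{lem:MnuV} to see that $M_\nu$ preserves $V_\mus^\nu$, and then extract from the $\osp$-constraint the relation $A^\tp M_1+M_1A=0$ on the isotypical block. The paper compresses this last step, writing directly that $M_\nu\chi(u\otimes\one)\chi^{-1}M_\nu^{-1}=-\chi(u^\tp\otimes\one)\chi^{-1}$; your explicit block computation makes transparent why the off-block pieces of $X$ do not interfere.

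Where you genuinely differ is in the endgame. The paper derives the contradiction concretely: the relation forces $u\otimes\one$ to be similar to $-u^\tp\otimes\one$, and the specific choice $u=\diag(-1,-1,2,0,\ldots,0)$ has an eigenvalue multiset not invariant under negation. You instead read the relation $(A\otimes\one_c)^\tp\hat M+\hat M(A\otimes\one_c)=0$ as saying that $\hat M$ is an $\sla(D,\cmplx)$-module isomorphism $V^{\oplus c}\to(V^*)^{\oplus c}$, which is impossible since the standard module is non-self-dual for $D\geq 3$. These are two faces of the same fact: the paper's explicit $u$ is precisely a witness that $V\not\cong V^*$. Your formulation is cleaner conceptually and avoids picking coordinates; the paper's is more elementary and self-contained.
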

\begin{proof}

Universality implies that for all $u\in\su(\bigotimes_{i=1}^N\bmu^{(i)})$, we have that
$\chi_\mus^\nu(u\otimes\one)(\chi_\mus^\nu)^{-1}\in\rho_\nu(\g_n)$.
Then Lemma~\ref{lem:isoMorphism} implies that
$$M_\nu \chi_\mus^\nu(u\otimes\one)(\chi_\mus^\nu)^{-1}M_\nu^{-1}=-\chi_\mus^\nu(u^\tp\otimes\one)(\chi_\mus^\nu)^{-1},$$
where we have used that $\chi_\mus^\nu$ is an orthogonal transformation
(by Lemma~\ref{lem:orthogonalchi}).
Choose $u=\diag(-1,-1,2,0,\ldots,0)$, where the number of trailing zeros
equals $\dim(\bigotimes_{i=1}^N\bmu^{(i)})-3$.
By the above relation, $u$ is similar to its negation.
But similar matrices have the same eigenvalues while $u$ and $-u$ do not:
Contradiction.  Therefore, $\{\mu^{(i)}\}_{i=1}^N$ is not universal on $\nu$.
\end{proof}

\begin{figure}
\begin{center}
\begin{tikzpicture}
\matrix (m) [
    matrix of nodes,
    row sep=0.4cm,
    column sep=1cm,
    every node/.style={text width=4cm, text depth=0.5cm},
    every node/.append code={
       \ifnum\pgfmatrixcurrentrow=1\relax
          \pgfkeysalso{align=center}
       \else
           \pgfkeysalso{align=center,draw}
       \fi
    }
]{
{$\nu\phantom{'}$} &
{$\nu'$} \\
$V_{\mu^{(1)}\cdots\mu^{(N)}}^\nu=V_{\mu^{(1)\prime}\cdots\mu^{(N)\prime}}^\nu$ &
$V_{\mu^{(1)}\cdots\mu^{(N)}}^{\nu'}=V_{\mu^{(1)\prime}\cdots\mu^{(N)\prime}}^{\nu'}$ \\
|[draw=none]| & |[draw=none]| \\
};
\draw[->] (m-2-1) -- (m-2-2);
\node[rectangle,draw,fit=(m-1-1)(m-3-1)]{};
\node[rectangle,draw,fit=(m-1-2)(m-3-2)]{};
\end{tikzpicture}
\end{center}
\caption{Schematic diagram of $M_\nu$ mapping from the isotypical subspace $V_\mus^\nu=V_\musp^\nu$ of $\nu$ to the isotypical subspace
$V_\mus^{\nu'}=V_\musp^{\nu'}$ of $\nu'$.}
\label{fig:mueqmupnuneqnup}
\end{figure}

In the case that $\{\mu^{(i)}\}_{i=1}^N=\{\mu^{(i)\prime}\}_{i=1}^N$ and $\nu\neq\nu'$, the action on $V_\mus^{\nu'}=V_\musp^{\nu'}$ is directly determined, via $M_\nu$, by the action on $V_\mus^\nu=V_\musp^\nu$ (Figure~\ref{fig:mueqmupnuneqnup}).
This proves again to be too constraining for both $\rho_\nu(x)$ and
$\rho_{\nu'}(x)$ to simultaneously satisfy the intertwining condition required for 
universality:
\begin{lemma}
\label{lem:mueqmupnuneqnup}
Given a family of self-conjugate partitions $\{\mu^{(i)}\}_{i=1}^N$
such that $\dim(\bigotimes_{i=1}^N\bmu^{(i)})>2$,
let $\cP$ be a set of partitions such that for some $\nu\in \cP$
where $\nu\neq\nu'$, we also have $\nu'\in \cP$.  Then $\{\mu^{(i)}\}_{i=1}^N$ is not
universal on $\cP$.
\end{lemma}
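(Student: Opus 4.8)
The plan is to argue by contradiction, adapting the eigenvalue obstruction used in the proof of Lemma~\ref{lem:mueqmupnueqnup}. There the obstruction came from $M_\nu$ being an automorphism of $V_\mus^\nu$; here $\nu\neq\nu'$, so instead I would use $M_\nu$ to rigidly tie the unitary action demanded on $V_\mus^\nu$ to the one demanded on $V_\mus^{\nu'}$, and then show the two demands cannot be met simultaneously.

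So suppose $\{\mu^{(i)}\}_{i=1}^N$ is universal on $\cP$. By Lemma~\ref{lem:canonicalstrong}, for each $u\in\su(\bigotimes_{i=1}^N\bmu^{(i)})$ there is $x_u\in\g_n$ such that, writing $\Pi_\lambda$ for the orthogonal projection of $\blambda$ onto $V_\mus^\lambda$, we have $\rho_\lambda(x_u)\Pi_\lambda=A_\lambda$ for every $\lambda\in\cP$, where $A_\lambda:=\chi_\mus^\lambda(u\otimes\one)(\chi_\mus^\lambda)^{-1}$ is regarded as an operator on $\blambda$ that acts on $V_\mus^\lambda$ via the canonical isomorphism (Lemma~\ref{lem:canonicaliso}) and vanishes on the orthogonal complement. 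In particular this holds for $\lambda=\nu$ and for $\lambda=\nu'$.

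Next I would use the hypothesis $\{\mu^{(i)}\}_{i=1}^N=\{\mu^{(i)\prime}\}_{i=1}^N$ as multisets, so that $V_\musp^{\nu'}=V_\mus^{\nu'}$; then Lemma~\ref{lem:MnuV} gives $M_\nu V_\mus^\nu=V_\mus^{\nu'}$, and since $M_\nu$ is a signed permutation in the tableau bases (hence orthogonal) we get $M_\nu\Pi_\nu M_\nu^{-1}=\Pi_{\nu'}$. Conjugating Marin's relation $M_\nu\rho_\nu(x_u)M_\nu^{-1}=-\rho_{\nu'}(x_u)^\tp$ (Lemma~\ref{lem:isoMorphism}) by $\Pi_{\nu'}$ on both sides, and simplifying with $\Pi_{\nu'}\rho_{\nu'}(x_u)^\tp\Pi_{\nu'}=A_{\nu'}^\tp$ (the transpose of $\rho_{\nu'}(x_u)\Pi_{\nu'}=A_{\nu'}$, using $\Pi_{\nu'}^\tp=\Pi_{\nu'}$) and $\Pi_{\nu'}M_\nu\rho_\nu(x_u)M_\nu^{-1}\Pi_{\nu'}=M_\nu\Pi_\nu\rho_\nu(x_u)\Pi_\nu M_\nu^{-1}=M_\nu A_\nu M_\nu^{-1}$, I would arrive at
$$
M_\nu A_\nu M_\nu^{-1}=-A_{\nu'}^\tp .
$$
Thus $A_\nu$ is similar to $-A_{\nu'}^\tp$. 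But $A_\nu$ is conjugate (via $\chi_\mus^\nu$) to $u\otimes\one$, hence has the same set of eigenvalues as $u$, while $-A_{\nu'}^\tp$ is similar to $-A_{\nu'}$, hence to $-u\otimes\one$, so it has the eigenvalues of $-u$. Equality of spectra would force the eigenvalue set of $u$ to be closed under negation; choosing $u=\diag(-1,-1,2,0,\ldots,0)$ (traceless, with as many trailing zeros as $\dim(\bigotimes_{i=1}^N\bmu^{(i)})-3\geq 0$ requires, available since $\dim(\bigotimes_{i=1}^N\bmu^{(i)})>2$) gives eigenvalue set $\{-1,2\}$, together with $0$ if the dimension exceeds $3$, which is not closed under negation — a contradiction. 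Hence $\{\mu^{(i)}\}_{i=1}^N$ is not universal on $\cP$.

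I expect the only delicate point to be the projection bookkeeping, and in particular the identity $M_\nu\Pi_\nu M_\nu^{-1}=\Pi_{\nu'}$: this is exactly where the equality $\{\mu^{(i)}\}_{i=1}^N=\{\mu^{(i)\prime}\}_{i=1}^N$ is indispensable. Without it, $M_\nu$ would carry $V_\mus^\nu$ into the ``scratch'' subspace $V_\musp^{\nu'}\neq V_\mus^{\nu'}$, decoupling the constraints on $V_\mus^\nu$ and $V_\mus^{\nu'}$ — precisely the freedom exploited in Lemma~\ref{lem:ifnoncononpaircon}. Everything else is formal manipulation together with the elementary eigenvalue observation already used for the self-conjugate case in Lemma~\ref{lem:mueqmupnueqnup}.
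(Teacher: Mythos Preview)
Your proposal is correct and follows essentially the same approach as the paper: assume universality, use the relation $M_\nu\rho_\nu(x)M_\nu^{-1}=-\rho_{\nu'}(x)^\tp$ together with $M_\nu V_\mus^\nu=V_\mus^{\nu'}$ (which holds precisely because all $\mu^{(i)}$ are self-conjugate) to force a similarity between the action of $u$ and that of $-u^\tp$, then reach a contradiction with $u=\diag(-1,-1,2,0,\ldots,0)$. Your version is in fact somewhat more careful than the paper's, which works with a single intertwiner $\iota$ and jumps rather quickly from the $M_\nu$ relation to ``$u$ and $-u^\tp$ are similar''; your projection bookkeeping via $M_\nu\Pi_\nu M_\nu^{-1}=\Pi_{\nu'}$ makes that step explicit.
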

\begin{proof}
Assume that $\{\mu^{(i)}\}_{i=1}^N$ is universal on $\cP$.
Let $u$ be an arbitrary member of $\su(\bigotimes_{i=1}^N\bmu^{(i)})$.
(We choose a specific one later to arrive at a contradiction.)
By definition of universality there exists a witnessing function
$$\phi:\su(\bigotimes_{i=1}^N\bmu^{(i)})\rightarrow\left(\bigoplus_{\nu\in \cP}\rho_\nu\right)(\g_n)$$
and we can choose
$\iota\in\Hom(\bigotimes_{i=1}^N\bmu^{(i)},\bnu)$ and $\iota'\in\Hom(\bigotimes_{i=1}^N\bmu^{(i)},\bnu')$
such that for all $\psi\in\bigotimes_{i=1}^N\bmu^{(i)}$, we have
\begin{eqnarray*}
\phi(u)\iota(\psi) &=& \iota(u\psi),\\
\phi(u)\iota'(\psi) &=& \iota'(u\psi).
\end{eqnarray*}
Therefore there exists $x\in\g_n$ such that 
\begin{eqnarray*}
\rho_\nu(x)\iota(\psi) &=& \iota(u\psi)\\
\rho_{\nu'}(x)\iota'(\psi) &=& \iota'(u\psi)
\end{eqnarray*}
from which follows
\begin{eqnarray*}
u &=& \iota^{-1}\rho_\nu(x)\iota\\
u &=& (\iota')^{-1}\rho_{\nu'}(x)\iota'.
\end{eqnarray*}
By Lemma~\ref{lem:isoMorphism}, we have $M_\nu^{-1}\rho_{\nu'}(x)M_\nu=-\rho_\nu^\tp(x)$.  This implies 
that $u$ and $-u^{\tp}$ are related by a similarity transformation.
Let $u=\diag(-1,-1,2,0,\ldots,0)$, where the number of trailing zeros
equals $\dim(\bigotimes_{i=1}^N\bmu^{(i)})-3$.  Then $u$ and $-u^\tp$ have
unequal sets of eigenvalues and so cannot be similar: Contradiction.
\end{proof}

The above cases of non-universality assume $\dim(\bigotimes_{i=1}^N\bmu^{(i)})>2$.
For $\dim(\bigotimes_{i=1}^N\bmu^{(i)})\leq 2$, universality is in fact unavoidable.
Since the only self-conjugate, 2-dimensional representations are given by $\ydiagram{2,1}$
and $\ydiagram{2,2}$, and since the only self-conjugate, 1-dimensional representation is
$\ydiagram{1}$, $\dim(\bigotimes_{i=1}^N\bmu^{(i)})\leq 2$ implies that at most one of
the $\mu^{(i)}$, say $\mu^{(j)}$, equals $\ydiagram{2,1}$ or $\ydiagram{2,2}$, with
every other $\mu^{(i)}$ being $\ydiagram{1}$.
Then the problem of constructing a universality-witnessing map on $\su(\bigotimes_{i=1}^N\bmu^{(i)})$ reduces to the problem of constructing a universality-witnessing map on $\su(\bmu^{(j)})$, which is guaranteed 
because $\mu^{(j)}$ is universal on itself.

\subsection{Hooks}
Alongside self-conjugates and conjugate pairs, hooks play a special role in the representation theory of transpositions.  Of course we are particularly interested in the question of whether hooks
admit universality.  
We have already seen in the case of a single partition (Theorem~\ref{thm:iff1partitionu}) that
shallow hooks are universal on themselves, while deep hooks are not.
In this section we consider whether hooks admit universality of multiple partitions. 
We begin by determining which hooks admit nontrivial intertwiners from 
a product of representations.
\begin{lemma}
\label{lem:2hooks}
Given partitions $\lambda$ and $\mu$,
$\{\nu|c_{\lambda\mu}^{\nu}>0\}$ contains an improper partition if and only if $\lambda$ and $\mu$ are improper.
In that case, $\{\nu|c_{\lambda\mu}^{\nu}>0\}$ contains exactly two improper partitions:
$$\nu=[\lambda_1+\mu_1,1^{|\lambda|+|\mu|-\lambda_1-\mu_1}],$$
and
$$\nu=[\lambda_1+\mu_1-1,1^{|\lambda|+\mu|-\lambda_1-\mu_1+1}].$$
Moreover, we have $c_{\lambda\mu}^\nu=1$ in both cases.
\end{lemma}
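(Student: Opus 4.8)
The plan is to treat the biconditional one direction at a time and then pin down the two improper $\nu$'s together with their multiplicities by a direct application of the Littlewood--Richardson rule.

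For the forward direction, suppose some improper $\nu$ satisfies $c_{\lambda\mu}^\nu>0$. By Lemma~\ref{lem:nugeqmax} we have $\nu_i\geq\max(\lambda_i,\mu_i)$ for all $i$, and in particular $\nu_2\geq\lambda_2$ and $\nu_2\geq\mu_2$. Since $\nu$ improper means $\nu_2\leq 1$, this forces $\lambda_2\leq 1$ and $\mu_2\leq 1$, i.e.\ both $\lambda$ and $\mu$ are improper. This direction is immediate.

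For the converse, write $\lambda=[p,1^a]$ and $\mu=[q,1^b]$ with $p=\lambda_1\geq 1$, $q=\mu_1\geq 1$, $a=|\lambda|-p\geq 0$, $b=|\mu|-q\geq 0$. Any improper $\nu$ of the correct size has the form $\nu=[\nu_1,1^c]$ with $\nu_1+c=p+q+a+b$, and for $c_{\lambda\mu}^\nu$ to be possibly nonzero we need $\nu_i\geq\lambda_i$ for all $i$, i.e.\ $\nu_1\geq p$ and $c\geq a$. The key observation is that the skew diagram $\nu\backslash\lambda$ then splits into two disjoint pieces sitting in pairwise distinct rows and columns: a horizontal strip of $\nu_1-p$ cells at the right end of row $1$ (the ``arm'') and a vertical strip of $c-a$ cells at the bottom of column $1$ (the ``leg''), so no semistandard constraint links them, and a reading word of $\nu\backslash\lambda$ is simply the arm read right-to-left followed by the leg read top-to-bottom. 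Analyzing the Littlewood--Richardson tableaux of shape $\nu\backslash\lambda$ and weight $\mu$ directly from Definition~\ref{def:LRcoefficient}: the very first entry of the reading word is the largest entry of the arm, and the lattice property forces it to be $1$, hence (by weak increase along row $1$) the entire arm consists of $1$'s, unless the arm is empty. The leg is strictly increasing and so contains at most one $1$; comparing the total number of $1$'s with $\mu_1=q$ leaves exactly two options: either $\nu_1-p=q$ with the leg equal to $(2,3,\dots,b+1)$, giving $\nu=[p+q,1^{a+b}]$, or $\nu_1-p=q-1$ with the leg equal to $(1,2,\dots,b+1)$, giving $\nu=[p+q-1,1^{a+b+1}]$ (the empty-arm case $\nu_1=p$ occurs only when $q=1$ and then falls into the second family). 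In each case the tableau is uniquely determined — the arm is all $1$'s and strictness forces the leg to be consecutive — so $c_{\lambda\mu}^\nu=1$; the two partitions are distinct since their first parts differ by $1$; and since $|\nu|$ is fixed, $\nu_1$ determines $c$, so these are the only improper $\nu$ with nonzero coefficient.

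I expect the only real friction to be the routine bookkeeping: verifying that the degenerate cases — $a=0$ or $b=0$ (a single row), $p=1$ or $q=1$ (a single column), and the empty-arm situation — all fit the uniform description above without exceptions. The main conceptual step, the arm/leg decomposition of $\nu\backslash\lambda$ together with the forcing of the arm to be all $1$'s, is short. I would work entirely with the combinatorial Littlewood--Richardson rule of Section~2.4 rather than with Lemma~\ref{lem:KnutsonTao}, since here the number of rows of $\nu$ is unbounded and the Hermitian-eigenvalue reformulation offers no simplification.
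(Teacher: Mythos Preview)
Your proof is correct and follows essentially the same line as the paper's: the forward direction is the contrapositive via Lemma~\ref{lem:nugeqmax}, and the converse analyzes the skew shape $\nu\backslash\lambda$ as the disjoint union of a horizontal arm and a vertical leg, forcing the arm to be all $1$'s and the leg to contain at most one $1$, which pins down $\nu_1\in\{\lambda_1+\mu_1,\lambda_1+\mu_1-1\}$ and the unique filling in each case. Your version is somewhat more explicit about the degenerate cases (empty arm, single-row or single-column $\lambda$ or $\mu$), but the argument is the same.
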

\begin{proof}\ \\
\noindent $\Rightarrow$
We prove the contrapositive. Suppose $\lambda$ or $\mu$ is proper.
Then $\lambda_2>1$ or $\mu_2>1$.
Then by Lemma~\ref{lem:nugeqmax}, for all $\nu\vdash|\lambda|+|\mu|$
such that $c_{\lambda\mu}^\nu>0$,
$\nu_2>1$ and therefore no such $\nu$ is improper.\\
\noindent $\Leftarrow$
Assume $\lambda$ and $\mu$ are improper.
Let $\nu\vdash|\lambda|+|\mu|$ be improper
such that $\nu_i\geq\max(\lambda_i,\mu_i)$.
Following the Littlewood--Richardson rules for determining $c_{\lambda\mu}^\nu$,
consider the skew diagram $\nu\backslash\lambda$,
which we note is the disjoint union of
$[\nu_1-\lambda_1]$ and $[1^{\nup_1-\lambdap_1}]$.
As in Definition~\ref{def:LRcoefficient}, its weight is determined by $\mu$.
In particular, it must contain $\mu_1$ 1's.
Further, $[\nu_1-\lambda_1]$ must contain only 1's in order to form 
a Littlewood--Richardson sequence as required by Definition~\ref{def:LRtableau},
and $[1^{\nup_1-\lambdap_1}]$ must contain no more than one 1 to
satisfy the definition of a semistandard tableau.
It follows that $c_{\lambda\mu}^\nu=1$ if and only if
$\nu=[\lambda_1+\mu_1,1^{|\lambda|+|\mu|-\lambda_1-\mu_1}]$ or
$\nu=[\lambda_1+\mu_1-1,1^{|\lambda|+\mu|-\lambda_1-\mu_1+1}]$.
\end{proof}
\noindent A visual demonstration of the above proof may provide some clarity.
\begin{exmp}
Let $\lambda=\mu=\ydiagram{2,1}$.  Then the only possible hooks that can result from
their product are $\nu=\ydiagram{4,1,1}$,
\setydiagrameq
$$
\ydiagram{4,1,1} - \ydiagram{2,1}  = \ydiagram{2+2,0,1}  \mapsto \ytableaushort{\none\none 11,\none,2},\\
$$
\setydiagramtext
and $\nu=\ydiagram{3,1,1,1}$,
\setydiagrameq
$$
\ydiagram{3,1,1,1}  - \ydiagram{2,1} = \ydiagram{2+1,0,1,1}  \mapsto\ \ytableaushort{\none\none 1,\none,1,2},
$$
\setydiagramtext
where above we have illustrated $\nu\backslash\lambda$ and the only permissible Littlewood--Richardson 
tableaux.
\end{exmp}
\noindent From this we can conclude that shallow hooks do not admit nontrivial intertwiners
from products of nontrivial representations.
\begin{lemma}
Given partitions $\lambda$ and $\mu$,
if $\dim(\blambda)>1$ and $\dim(\bmu)>1$, then $\{\nu|c_{\lambda\mu}^{\nu}>0\}$
does not contain the hooks $[n-1,1]$ or $[2,1^{n-2}]$.
\end{lemma}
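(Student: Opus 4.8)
The plan is to obtain this as a direct corollary of Lemma~\ref{lem:2hooks}. The first step is to translate the dimension hypotheses: since a partition labels a one-dimensional representation precisely when it is trivial (i.e. $[k]$ or $[1^k]$), and since an improper partition is by definition either trivial or a hook, the assumptions $\dim(\blambda)>1$ and $\dim(\bmu)>1$ say exactly that $\lambda$ and $\mu$ are nontrivial. I would then split into two cases according to whether $\lambda$ and $\mu$ are proper or improper.

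In the first case, where at least one of $\lambda,\mu$ is proper, the ``$\Rightarrow$'' direction of Lemma~\ref{lem:2hooks} (equivalently its contrapositive) immediately gives that $\{\nu\mid c_{\lambda\mu}^\nu>0\}$ contains \emph{no} improper partition; in particular it contains neither of the shallow hooks $[n-1,1]$ and $[2,1^{n-2}]$. In the second case, where $\lambda$ and $\mu$ are both improper, nontriviality forces them to be nontrivial hooks, so $\lambda=[\lambda_1,1^{|\lambda|-\lambda_1}]$ with $2\le\lambda_1\le|\lambda|-1$ and likewise $2\le\mu_1\le|\mu|-1$. Lemma~\ref{lem:2hooks} tells us that the only improper $\nu$ with $c_{\lambda\mu}^\nu>0$ have first part $\lambda_1+\mu_1$ or $\lambda_1+\mu_1-1$. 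With $n=|\lambda|+|\mu|$, the lower bounds give $\lambda_1+\mu_1\ge 4$, so neither candidate first part is $2$, hence neither $\nu$ is $[2,1^{n-2}]$; and the upper bounds give $\lambda_1+\mu_1\le n-2$, so neither candidate first part is $n-1$, hence neither $\nu$ is $[n-1,1]$. This finishes the argument.

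I do not expect a genuine obstacle here, since everything is powered by Lemma~\ref{lem:2hooks}. The only point demanding a little care is extracting from the dimension hypotheses that $\lambda$ and $\mu$ are \emph{nontrivial} hooks, so that their first parts obey the strict-enough inequalities $2\le\lambda_1\le|\lambda|-1$ (and similarly for $\mu$); one then just has to observe that a shallow-hook conclusion would require $\lambda_1+\mu_1\in\{2,3,n-1,n\}$, which is precisely the set of values ruled out by those inequalities.
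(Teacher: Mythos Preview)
Your proof is correct and follows essentially the same approach as the paper: both rest entirely on Lemma~\ref{lem:2hooks} together with the bounds $2\le\lambda_1\le|\lambda|-1$ and $2\le\mu_1\le|\mu|-1$ coming from nontriviality, then check that the two candidate first parts $\lambda_1+\mu_1$ and $\lambda_1+\mu_1-1$ avoid $2$ and $n-1$. The only cosmetic difference is that you make the proper/improper case split explicit, whereas the paper folds it into a single invocation of Lemma~\ref{lem:2hooks}.
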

\begin{proof}
The one-dimensional representations ruled out by hypothesis are
associated with single-row or single-column partitions of the form $[m]$ and $[1^m]$.
Thus we have that $2\leq\lambda_1\leq|\lambda|-1$ and $2\leq\mu_1\leq|\mu|-1$.  
So it follows that $3<\lambda_1+\mu_1<n-1$ and thus $2<\lambda_1+\mu_1-1<n-2$.
Then by Lemma~\ref{lem:2hooks}, if $\nu\in\{\nu|c_{\lambda\mu}^\nu>0\}$
is a hook then either $3<\nu_1<n-1$ or $2<\nu_1<n-2$.  In either case, we have
$\nu_1\neq n-1$ and $\nu_1\neq 2$.  Thus $[n-1,1]$ and $[2,1^{n-2}]$ are excluded.
\end{proof}

So shallow hooks cannot admit universality of multiple nontrivial partitions.
On the other hand, in all but a few cases, the universality of a family of multiple nontrivial
hooks on
deep hooks can be ruled out by dimensional arguments.  Essentially, given such a
family
$\{\mu^{(i)}\}_{i=1}^N$, the dimension of the relevant computational space
$\dim(\bigotimes_{i=1}^N\bmu^{(i)})$ is too large relative to
that of the space $V_{[n-1,1]}$, where $n=\sum_{i=1}^N|\mu^{(i)}|$.
To apply this argument to a family of arbitrarily many partitions,
first we need this following generalization of Lemma~\ref{lem:2hooks}.
\begin{lemma}
\label{lem:hookfamilies}
Given families of partitions $\{\mu^{(i)}\}_{i=1}^N$ and $\{\nu^{(i)}\}_{i=1}^N$ such that $\prod_{i=1}^{N-1}c_{\nu^{(i)}\mu^{(i+1)}}^{\nu^{(i+1)}(d)}>0$ and $\nu^{(N)}$ is a hook,
we have that every $\mu^{(i)}$ and $\nu^{(i)}$ is improper.
\end{lemma}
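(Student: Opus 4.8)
The plan is to run a downward induction on the chain index, using Lemma~\ref{lem:2hooks} as the only substantive input. First I would fix the standing convention used for the iterated Littlewood--Richardson coefficient (Section~2.3.2), namely that $\nu^{(1)}=\mu^{(1)}$; then the hypothesis $\prod_{i=1}^{N-1}c_{\nu^{(i)}\mu^{(i+1)}}^{\nu^{(i+1)}(d)}>0$ unpacks into $c_{\nu^{(i)}\mu^{(i+1)}}^{\nu^{(i+1)}(d)}>0$ for each $1\le i\le N-1$, and in particular the ordinary Littlewood--Richardson coefficient $c_{\nu^{(i)}\mu^{(i+1)}}^{\nu^{(i+1)}}$ is positive for each such $i$, since restricting the allowed number of rows to $d$ can only decrease these multiplicities. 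With this in hand, the goal reduces to showing that $\nu^{(k)}$ is improper for every $1\le k\le N$ and that $\mu^{(k)}$ is improper for every $2\le k\le N$; the remaining assertion about $\mu^{(1)}=\nu^{(1)}$ then follows from the convention.

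The induction itself is short. For the base case $k=N$, the partition $\nu^{(N)}$ is a hook by hypothesis, hence improper. For the inductive step, suppose $\nu^{(k+1)}$ is improper for some $1\le k\le N-1$. Since $c_{\nu^{(k)}\mu^{(k+1)}}^{\nu^{(k+1)}}>0$, the set $\{\nu\mid c_{\nu^{(k)}\mu^{(k+1)}}^{\nu}>0\}$ contains the improper partition $\nu^{(k+1)}$, so the forward direction of Lemma~\ref{lem:2hooks} forces both $\nu^{(k)}$ and $\mu^{(k+1)}$ to be improper. This closes the induction and proves the lemma.

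The genuine combinatorial content, that an improper Littlewood--Richardson product forces improper factors, is already packaged in Lemma~\ref{lem:2hooks}, which is itself a quick consequence of the bound $\nu_i\ge\max(\lambda_i,\mu_i)$ from Lemma~\ref{lem:nugeqmax}; so no real obstacle remains. The only points requiring care are bookkeeping: that \emph{improper} is used throughout in the broad sense of being either a hook or a trivial (one-row or one-column) partition, so that the induction is not derailed when an intermediate $\nu^{(k+1)}$ degenerates (Lemma~\ref{lem:2hooks} is stated for arbitrary improper $\nu$, so this case is covered); and the identification $\nu^{(1)}=\mu^{(1)}$, without which the conclusion that every $\mu^{(i)}$ is improper would not even be well posed, since $\mu^{(1)}$ does not otherwise appear among the hypotheses. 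If preferred, the downward induction could be rephrased as a direct iteration of Lemma~\ref{lem:2hooks} across the links $c_{\nu^{(k)}\mu^{(k+1)}}^{\nu^{(k+1)}}$, but the inductive formulation is the cleanest.
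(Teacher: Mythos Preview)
Your proposal is correct and takes essentially the same approach as the paper: a downward iteration along the chain $c_{\nu^{(i)}\mu^{(i+1)}}^{\nu^{(i+1)}}>0$, using the forward direction of Lemma~\ref{lem:2hooks} at each step to propagate impropriety from $\nu^{(k+1)}$ back to $\nu^{(k)}$ and $\mu^{(k+1)}$. Your version is slightly more careful in making the convention $\nu^{(1)}=\mu^{(1)}$ explicit and noting that $c^{(d)}>0\Rightarrow c>0$, but the argument is the same.
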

\begin{proof}
By Lemma~\ref{lem:2hooks}, since $\nu^{(N)}$ is a hook and $c_{\nu^{(N-1)}\mu^{(N)}}^{\nu^{(N)}(d)}>0$,
it follows that $\nu^{(N-1)}$ and $\mu^{(N)}$ are improper.
Similarly, since $\nu^{(N-1)}$ is improper and $c_{\nu^{(N-2)},\mu^{(N-1)}}^{\nu^{(N-1)}}>0$, it follows that $\nu^{(N-2)}$ and $\mu^{(N-1)}$
are improper.  Iterating this argument, every $\mu^{(i)}$ and $\nu^{(i)}$ is also improper.
\end{proof}
\noindent The promised dimensional argument 
requires some arithmetic,
particularly the following inequality between a product and a sum as succinctly proved in \cite{Ecker}. 
\begin{lemma}
\label{lem:prodgtsum}
Given integers $N>2$ and $m_i>1$ for all $1\leq i\leq N-1$, we have $\prod_{i=1}^{N-1}m_i\geq\sum_{i=1}^{N-1}m_i$.
\end{lemma}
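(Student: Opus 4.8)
The plan is to prove this by induction, after the cosmetic reindexing $k := N-1$, so that the claim reads: for every integer $k \geq 2$ and all integers $m_1,\dots,m_k \geq 2$, one has $\prod_{i=1}^{k} m_i \geq \sum_{i=1}^{k} m_i$. I would induct on $k$.

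For the base case $k = 2$, the key identity is $m_1 m_2 - m_1 - m_2 = (m_1 - 1)(m_2 - 1) - 1$, which is $\geq 0$ since each $m_i - 1 \geq 1$. For the inductive step, assuming the bound for some $k \geq 2$, I would estimate $\prod_{i=1}^{k+1} m_i = m_{k+1} \prod_{i=1}^{k} m_i \geq m_{k+1} \sum_{i=1}^{k} m_i$ using the hypothesis, and then it suffices to check that $m_{k+1} \sum_{i=1}^{k} m_i \geq m_{k+1} + \sum_{i=1}^{k} m_i$, equivalently $(m_{k+1} - 1)\sum_{i=1}^{k} m_i \geq m_{k+1}$. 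Since $k \geq 2$ and each $m_i \geq 2$, we have $\sum_{i=1}^{k} m_i \geq 4$, so the left-hand side is at least $4(m_{k+1} - 1) \geq m_{k+1}$ (the last inequality amounting to $3 m_{k+1} \geq 4$, true as $m_{k+1} \geq 2$), which completes the induction.

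A slicker, induction-free alternative I could present instead: for each $j$ the product of the $k - 1$ factors other than $m_j$ is at least $2^{k-1}$, so $\prod_{i=1}^{k} m_i \geq 2^{k-1} m_j$; summing over $j$ gives $\sum_{j=1}^{k} m_j \leq (k / 2^{k-1}) \prod_{i=1}^{k} m_i \leq \prod_{i=1}^{k} m_i$, where the final step uses the elementary fact that $k \leq 2^{k-1}$ for all $k \geq 1$. Either way there is no genuine obstacle here, the argument being pure arithmetic; the only points that need care are that the hypothesis $m_i > 1$ is actually used (it enters the base case and the final estimate of the inductive step, and the inequality visibly fails when, say, all $m_i = 1$), and that the hypothesis $N > 2$ is precisely what guarantees $k \geq 2$ and hence $\sum_{i=1}^{k} m_i \geq 4$ in the step.
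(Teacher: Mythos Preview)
Your proof is correct and takes essentially the same approach as the paper: induction on the number of factors with the two-factor case as base. The paper's inductive step is phrased slightly differently (it applies the base case directly to $\prod_{i=1}^{N-2}m_i$ and $m_{N-1}$ rather than your estimate via $\sum_{i=1}^{k}m_i\geq 4$), but the argument is the same in spirit and your version is, if anything, more carefully justified.
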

\begin{proof}
Clearly the result holds for $N=3$, which we take as our base case,
and proceed to prove that it holds for $N>3$ by induction.  
Suppose the result is true for $N-1>2$. 
Then we have:
$$m_1+m_2+\ldots+m_{N-2}+m_{N-1}\leq m_1m_2\cdots m_{N-2}+m_{N-1}.$$
Applying the base case to the right hand side, we conclude that 
$$m_1+m_2+\ldots+m_{N-1}\leq m_1m_2\cdots m_{N-1}.$$ 
\end{proof}
Finally we obtain:
\begin{lemma}
\label{lem:hookdimsmall}
Given a family of partitions $\{\mu^{(i)}\}_{i=1}^N$ such that 
$N>1$ and for all $i$ $\dim(\bmu^{(i)})>1$, suppose 
that $\{\mu^{(i)}\}_{i=1}^N$ is not one of $(\ydiagram{2,1},\ydiagram{2,1})$,
$(\ydiagram{2,1},\ydiagram{3,1})$, $(\ydiagram{2,1},\ydiagram{2,1,1})$,
or $(\ydiagram{2,1},\ydiagram{2,1},\ydiagram{2,1})$, up to reordering.
Then $\{\mu^{(i)}\}_{i=1}^N$ is not universal on any hook.
\end{lemma}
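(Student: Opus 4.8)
The plan is to suppose that $\{\mu^{(i)}\}_{i=1}^N$ is universal on some hook $\nu\vdash n$, where $n=\sum_{i=1}^N m_i$, and to derive a contradiction from a dimension count that is arranged so as to succeed outside precisely the four listed families.

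I would first clear away the case in which some $\mu^{(i)}$ is proper: then Lemma~\ref{lem:hookfamilies}, applied to the iterated Littlewood--Richardson decomposition of $c_{\mu^{(1)}\cdots\mu^{(N)}}^\nu$ (which by Lemma~\ref{lem:frobenius} controls membership in $\cP_\mus$), shows that $\cP_\mus$ contains no hook at all, so the family is vacuously not universal on any hook. Hence I may assume every $\mu^{(i)}$ is improper; with $\dim(\bmu^{(i)})>1$ this makes each $\mu^{(i)}$ a nontrivial hook, so $m_i\geq 3$ and, by Lemma~\ref{lem:hookdimension}, $\dim(\bmu^{(i)})\geq m_i-1\geq 2$. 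Set $D=\dim(\bigotimes_{i=1}^N\bmu^{(i)})=\prod_{i=1}^N\dim(\bmu^{(i)})$, so $D\geq\prod_{i=1}^N(m_i-1)$.

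The heart of the argument is the bound $D\leq n-1$. Since $\nu$ is a hook, Lemma~\ref{lem:marin} and Lemma~\ref{lem:reductive} give that $\rho_\nu(\g_n)$ is isomorphic to $\sla(V_{[n-1,1]})$ or to $\cmplx\oplus\sla(V_{[n-1,1]})$, the $\cmplx$ summand being central. By Lemma~\ref{lem:monomorphism} and Lemma~\ref{lem:canonicalstrong}, universality on $\{\nu\}$ produces a Lie algebra monomorphism of the simple, hence perfect, algebra $\su(\bigotimes_{i=1}^N\bmu^{(i)})\cong\su(D,\cmplx)$ onto operators on $V_\mus^\nu$ that are restrictions of elements of $\rho_\nu(\g_n)$ stabilising $V_\mus^\nu$; a routine Levi--Malcev argument upgrades this to an embedding of $\su(D,\cmplx)$ as a subalgebra of $\rho_\nu(\g_n)$, and perfectness then forces it into the derived subalgebra, a copy of $\sla(V_{[n-1,1]})$. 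Composing with the defining action of $\sla(V_{[n-1,1]})$ on the $(n-1)$-dimensional space $V_{[n-1,1]}$ yields a faithful complex representation of $\su(D,\cmplx)$ of dimension $n-1$; since the smallest nontrivial complex representation of $\su(D,\cmplx)$ has dimension $D$, we conclude $D\leq n-1$. I should emphasise that it is this representation-theoretic bound, and not the crude estimate $\dim_\rls\su(D,\cmplx)=D^2-1\leq\dim_\rls\rho_\nu(\g_n)\leq 2(n-1)^2$ (which fails to exclude, e.g., the family $(\ydiagram{2,1},\ydiagram{4,1})$), that makes the count sharp.

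It then remains to check that $D\geq n$ unless $\{\mu^{(i)}\}_{i=1}^N$ is one of the four exceptions, contradicting $D\leq n-1$. Writing $k_i=m_i-1\geq 2$, so that $n=\sum_i k_i+N$, it suffices to prove $\prod_i k_i\geq\sum_i k_i+N$, and this holds unless $(k_i)$ is, up to reordering, $(2,2)$, $(2,3)$, or $(2,2,2)$: for $N=2$ the inequality reads $(k_1-1)(k_2-1)\geq 3$; for $N=3$ it follows from a short case split on how many $k_i$ equal $2$; and for $N\geq 4$ one uses $\prod_i k_i\geq 2^{N-1}k_{\max}\geq N(k_{\max}+1)\geq\sum_i k_i+N$, invoking $2^N\geq 3N$ and the product-versus-sum estimate of Lemma~\ref{lem:prodgtsum}. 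Translating back via $D\geq\prod_i(m_i-1)$, the count can only fail if $(m_i)$ is a reordering of $(3,3)$, $(3,4)$, or $(3,3,3)$; but for those sizes every nontrivial hook is shallow, so the $\mu^{(i)}$ are completely determined and the family is one of $(\ydiagram{2,1},\ydiagram{2,1})$, $(\ydiagram{2,1},\ydiagram{3,1})$, $(\ydiagram{2,1},\ydiagram{2,1,1})$, or $(\ydiagram{2,1},\ydiagram{2,1},\ydiagram{2,1})$. In any remaining case $D\geq\prod_i(m_i-1)\geq n>n-1$, contradicting the bound $D\leq n-1$, so $\{\mu^{(i)}\}_{i=1}^N$ is not universal on any hook. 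The main obstacle is the middle step: turning the a priori merely set-theoretic witnessing map into a genuine embedding of the compact form $\su(D,\cmplx)$, and so extracting the sharp inequality $D\leq n-1$ rather than the far weaker $D^2\leq 2(n-1)^2+1$.
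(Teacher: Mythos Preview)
Your overall architecture matches the paper's: reduce to all $\mu^{(i)}$ being hooks via Lemma~\ref{lem:hookfamilies}, extract the bound $D\leq n-1$ from universality on a hook, and show combinatorially that $D>n-1$ outside the four exceptions. Your combinatorial count in terms of $k_i=m_i-1$ is equivalent to the paper's three-case split on $N$ and on whether some $\mu^{(i)}\neq\ydiagram{2,1}$; yours is arguably tidier.

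The substantive difference is in how $D\leq n-1$ is obtained, and here you work harder than necessary. You dismiss the direct dimension comparison as giving only the real bound $D^2-1\leq 2(n-1)^2$, and then invoke a Levi--Malcev lift plus the minimal faithful representation of $\su(D)$. But the paper gets the sharp bound without any lifting: since $\rho_\nu(\g_n)$ is a \emph{complex} vector space and the condition of Lemma~\ref{lem:canonicalstrong}, namely $\chi_\mus^\nu(u\otimes\one)(\chi_\mus^\nu)^{-1}\in\rho_\nu(\g_n)\Pi$, is $\cmplx$-linear in $u$, it automatically extends from $\su(D)$ to all of $\sla(D,\cmplx)=\su(D)\oplus i\,\su(D)$. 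The $\cmplx$-linear map $X\mapsto\Pi X\Pi$ on $\rho_\nu(\g'_n)\cong\sla(n-1,\cmplx)$ then has image containing a copy of $\sla(D,\cmplx)$, whence $D^2-1\leq(n-1)^2-1$ and $D\leq n-1$. So your ``main obstacle'' dissolves. This also sidesteps a loose end in your Levi--Malcev sketch: the elements of $\rho_\nu(\g_n)$ furnished by universality are not a priori known to \emph{stabilise} $V_\mus^\nu$, only to have the right $\Pi X\Pi$-compression, so the restriction map you want to lift through is not obviously defined on them.
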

\begin{proof}
Suppose $\{\mu^{(i)}\}_{i=1}^N$ is universal on a hook $\nu\vdash n\equiv\sum_{i=1}^N|\mu^{(i)}|$.  Then by Lemma~\ref{lem:hookfamilies}, $\{\mu^{(i)}\}_{i=1}^N$ must consist of
hooks.  
We show that this implies $\dim(\bigotimes_{i=1}^N\bmu^{(i)})>n-1$.
It is convenient to separately consider three cases.
\begin{enumerate}[leftmargin=0pt,label={}]
\item Case $N=2$:
Let $\ell=|\mu^{(1)}|$ and $m=|\mu^{(2)}|$.
By hypothesis, we have $\ell\geq 3$ and $m\geq 5$. 
Then $\ell>2\left(\frac{m-1}{m-2}\right)$, because for $m\geq 5$ the right hand side
is less than 3.  Rearranging, we have $(\ell-1)(m-1)>\ell+m-1$.
So we have by Lemma~\ref{lem:hookdimension},
\begin{eqnarray*}
\dim(\bmu^{(1)}\otimes\bmu^{(2)}) &\geq& (\ell-1)(m-1)\\
&>& \ell+m-1\\
&>& n-1.
\end{eqnarray*}
\item Case $N>2, \mu^{(i)}=\ydiagram{2,1}$ for all $1\leq i\leq N$:
Since $\dim(V_\subydiagramtext{2,1})=2$, we have $\dim(\bigotimes_{i=1}^N\bmu^{(i)})=2^N$.
Excluding $N=3$ by hypothesis, and noting that $n=3N$, we therefore have
$\dim(\bigotimes_{i=1}^N\bmu^{(i)})>n-1$.
\item Case $N>2, \mu^{(j)}\neq\ydiagram{2,1}$ for some $1\leq j\leq N$:
Let $m_i=|\mu^{(i)}|-1$ and without loss of generality, since order does not matter, consider $j=N$. 
Therefore we have $m_N>2$. 
By Lemma~\ref{lem:hookdimension}, it follows that
$$
\dim\left(\bigotimes_{i=1}^N\bmu^{(i)}\right) \geq \prod_{i=1}^Nm_i.
$$
Then using that $\prod_{i=1}^{N-1}m_i \geq \sum_{i=1}^{N-1}m_i>N-1$ by Lemma~\ref{lem:prodgtsum},
and that $\prod_{i=1}^{N-1}m_i\geq 4$ while $m_N/(m_N-2)\leq 3$,
we obtain
\begin{eqnarray*}
\dim\left(\bigotimes_{i=1}^N\bmu^{(i)}\right) &\geq& \prod_{i=1}^{N-1}m_i+(m_N-2)\prod_{i=1}^{N-1}m_i+\prod_{i=1}^{N-1}m_i\\
&>& \sum_{i=1}^{N-1}m_i+m_N+N-1\\
&>& \sum_{i=1}^Nm_i+N-1\\
&>& \sum_{i=1}^N|\mu^{(i)}|-1\\
&>& n-1.
\end{eqnarray*}
\end{enumerate}
If $\{\mu^{(i)}\}_{i=1}^N$ is universal on the hook $\nu\vdash n$,
then by Lemma~\ref{lem:monomorphism}, $\su(\bigotimes_{i=1}^N\bmu^{(i)})$ is isomorphic to a 
subalgebra of $\Pi\rho_\nu(\g'_n)\Pi$, where $\Pi$ is the projection onto
$V_{\mu^{(1)}\cdots\mu^{(N)}}^\nu$.  Complexifying, we have $\sla(\bigotimes_{i=1}^N\bmu^{(i)})\subset\Pi\rho_\nu(\g'_n)\Pi$.  Thus it follows that 
$\dim(\rho_\nu(\g'_n))\geq\dim(\sla(\bigotimes_{i=1}^N\bmu^{(i)}))$.
Then by the above result, we have 
\begin{eqnarray*}
\dim(\rho_\nu(\g'_n)) &>& \dim(\sla(n-1,\cmplx))\\
&>& \dim(\sla(V_{[n-1,1]})).
\end{eqnarray*}
  But by
Lemma~\ref{lem:marin}, we have $\rho_\nu(\g'_n)\cong\sla(V_{[n-1,1]})$:
Contradiction.

\end{proof}

Of the four cases that remain, two of them can be ruled out by considering associated representations of the special unitary algebra:
\begin{lemma}
\label{lem:smallhooks}
There is no hook on which $(\ydiagram{2,1},\ydiagram{3,1})$ or $(\ydiagram{2,1},\ydiagram{2,1,1})$ is universal.
\end{lemma}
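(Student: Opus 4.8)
The plan is a localised dimension count on the relevant isotypical subspace; the crude bound used in Lemma~\ref{lem:hookdimsmall} is unavailable here because for the deep hooks in question $\dim\rho_\nu(\g'_7)=\dim\sla(V_{[6,1]})=35=\dim\sla(\bmu^{(1)}\otimes\bmu^{(2)})$. Write $\mu^{(1)}=[2,1]$ and $\mu^{(2)}\in\{[3,1],[2,1,1]\}$, so $n:=|\mu^{(1)}|+|\mu^{(2)}|=7$ and $\dim(\bmu^{(1)}\otimes\bmu^{(2)})=2\cdot 3=6$. By Lemma~\ref{lem:2hooks} the only hooks $\nu$ with $c_{\mu^{(1)}\mu^{(2)}}^\nu>0$ are $[5,1,1]$ and $[4,1,1,1]$ when $\mu^{(2)}=[3,1]$, and $[4,1,1,1]$ and $[3,1^4]$ when $\mu^{(2)}=[2,1,1]$; each is a deep hook $[7-r,1^r]$ of size $7$ (with $r\in\{2,3,4\}$), and $c_{\mu^{(1)}\mu^{(2)}}^\nu=1$, so the isotypical subspace $V_{\mu^{(1)}\mu^{(2)}}^\nu\subseteq\bnu$ is irreducible of dimension $6$. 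It therefore suffices to fix one such $\nu=[7-r,1^r]$ and show $(\mu^{(1)},\mu^{(2)})$ is not universal on it.

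By Lemma~\ref{lem:marindeq2} we have $\rho_{[6,1]}(\g'_7)=\sla(V_{[6,1]})$, and by Lemma~\ref{lem:marin} together with Marin's exterior-power description of hook modules we may identify $\bnu\cong\wedge^rV_{[6,1]}$ as $S_7$-modules so that $\rho_\nu(\g'_7)=\Delta_r\!\big(\sla(V_{[6,1]})\big)$, where $\Delta_r(X)(v_1\wedge\cdots\wedge v_r)=\sum_i v_1\wedge\cdots\wedge Xv_i\wedge\cdots\wedge v_r$. I would then branch $V_{[6,1]}$ to $S_3\times S_4$: since $\cmplx^7=\cmplx^3\oplus\cmplx^4$ as $S_3\times S_4$-modules and $V_{[6,1]}$ is the complement of the all-ones line, $V_{[6,1]}\cong A\oplus B\oplus C$ with $A\cong V_{[2,1]}$ carrying the trivial $S_4$-action ($\dim A=2$), $B\cong V_{[3,1]}$ carrying the trivial $S_3$-action ($\dim B=3$), and $C$ the trivial $S_3\times S_4$-module ($\dim C=1$). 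Then $\wedge^rV_{[6,1]}=\bigoplus_{i+j+k=r}\wedge^iA\wedge\wedge^jB\wedge\wedge^kC$, and using $\wedge^2V_{[2,1]}\cong\sgn_{S_3}$, $\wedge^2V_{[3,1]}\cong V_{[2,1,1]}$, $\wedge^3V_{[3,1]}\cong\sgn_{S_4}$ one checks these summands are pairwise non-isomorphic as $S_3\times S_4$-modules, so this is the $S_3\times S_4$-isotypical decomposition of $\wedge^rV_{[6,1]}$. A short case analysis then identifies the unique summand isomorphic to $\bmu^{(1)}\otimes\bmu^{(2)}$: it always has exactly one tensor factor from $A$, i.e. it is $A\wedge Q$ with $Q$ a three-dimensional wedge of copies of $B$ and $C$ --- explicitly $Q=B$ for $\nu=[5,1,1]$, $Q=B\wedge C$ for $\nu=[4,1,1,1]$ with $\mu^{(2)}=[3,1]$, $Q=\wedge^2B$ for $\nu=[4,1,1,1]$ with $\mu^{(2)}=[2,1,1]$, and $Q=\wedge^2B\wedge C$ for $\nu=[3,1^4]$.

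Next I would compute the compression $\Pi\rho_\nu(\g_7)\Pi$ to $V_{\mu^{(1)}\mu^{(2)}}^\nu=A\wedge Q$, $\Pi$ the orthogonal projection onto that summand. For $X\in\sla(V_{[6,1]})$ and a decomposable $a\wedge q$ with $a\in A$, $q\in Q$: in $\Delta_r(X)(a\wedge q)=(Xa)\wedge q+a\wedge(X\!\cdot\! q)$, the components of $Xa$ lying in $B$ or $C$, and the $A$-components of $X$ applied to the vectors making up $q$, produce terms lying in summands $\wedge^iA\wedge\wedge^jB\wedge\wedge^kC$ with $i\neq 1$, hence orthogonal to $A\wedge Q$. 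Thus $\Pi\Delta_r(X)\Pi$ acts on $A\wedge Q$ as $X_{AA}\otimes\one_Q+\one_A\otimes Y$, with $X_{AA}\in\gl(A)$ the $A$-block of $X$ and $Y\in\gl(Q)$ the induced derivation; so $\Pi\rho_\nu(\g'_7)\Pi$ is contained in $\{X_{AA}\otimes\one_Q+\one_A\otimes Y:X_{AA}\in\gl(A),\ Y\in\gl(Q)\}$, which has dimension $\dim\gl(A)+\dim\gl(Q)-1=4+9-1=12$. Since the center of $\g_7$ acts on $\bnu$ as a scalar (Lemma~\ref{lem:reductive}) and $\one_{A\wedge Q}$ already lies in that space, $\dim_{\cmplx}\Pi\rho_\nu(\g_7)\Pi\le 12$. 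But if $(\mu^{(1)},\mu^{(2)})$ were universal on $\nu$, then by Lemma~\ref{lem:canonicalstrong} each $\chi_{\mu^{(1)}\mu^{(2)}}^\nu(u\otimes\one)(\chi_{\mu^{(1)}\mu^{(2)}}^\nu)^{-1}$ ($u\in\su(\bmu^{(1)}\otimes\bmu^{(2)})$) lies in $\rho_\nu(\g_7)\Pi$, and being supported on $V_{\mu^{(1)}\mu^{(2)}}^\nu$ it in fact lies in $\Pi\rho_\nu(\g_7)\Pi$; complexifying, and using that $c_{\mu^{(1)}\mu^{(2)}}^\nu=1$ makes $v\mapsto\chi^\nu(v\otimes\one)(\chi^\nu)^{-1}$ injective, this embeds $\sla(\bmu^{(1)}\otimes\bmu^{(2)})$, of dimension $35$, into a space of dimension at most $12$ --- a contradiction. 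Hence neither family is universal on any hook.

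The step I expect to be most delicate is the branching analysis: determining, for each of the four deep hooks, exactly which wedge-component of $\wedge^rV_{[6,1]}$ carries the copy of $\bmu^{(1)}\otimes\bmu^{(2)}$, and verifying that the wedge-component decomposition of $\wedge^rV_{[6,1]}$ is its $S_3\times S_4$-isotypical decomposition, so that $\Pi$ is honestly the projection onto one wedge-component. Given that, the compression formula $X_{AA}\otimes\one_Q+\one_A\otimes Y$ and the final dimension count are routine.
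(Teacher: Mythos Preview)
Your proof is correct, and takes a genuinely different route from the paper's.

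The paper argues abstractly: it observes that $\rho_\nu(\g'_7)\cong\sla(V_{[6,1]})\cong\sla(6,\cmplx)$ and that $\bnu=\wedge^rV_{[6,1]}$ is an \emph{irreducible} $\sla(6,\cmplx)$-module. Were $(\mu^{(1)},\mu^{(2)})$ universal on $\nu$, the complexified witnessing map would embed $\sla(\bmu^{(1)}\otimes\bmu^{(2)})\cong\sla(6,\cmplx)$ as a Lie subalgebra of $\rho_\nu(\g'_7)\cong\sla(6,\cmplx)$; by simplicity this embedding is onto, so all of $\rho_\nu(\g'_7)$ would fix the $6$-dimensional isotypical subspace $V_{\mu^{(1)}\mu^{(2)}}^\nu$, contradicting irreducibility of the $15$- or $20$-dimensional module $\bnu$. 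This is short and uses no branching computation, only the two structural facts that $\sla(6)$ is simple and that exterior powers of its standard module are irreducible.

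Your approach instead computes the compressed algebra $\Pi\rho_\nu(\g_7)\Pi$ concretely: you branch $V_{[6,1]}$ to $S_3\times S_4$, locate $V_{\mu^{(1)}\mu^{(2)}}^\nu$ as a specific wedge component $A\wedge Q$, and show the compression lands in the $12$-dimensional space $\gl(A)\otimes\one_Q+\one_A\otimes\gl(Q)$. This is more labor-intensive (four cases of branching to verify) but yields quantitatively sharper information about what exchange interactions can actually do on the isotypical subspace, and avoids invoking Lie-theoretic facts about irreducibility of $\wedge^r\cmplx^6$. The paper's argument trades that explicitness for brevity.
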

\begin{proof}
Let the pair of partitions $(\lambda,\mu)$ be one of $(\ydiagram{2,1},\ydiagram{3,1})$ or $(\ydiagram{2,1},\ydiagram{2,1,1})$.
The only hooks $\nu$ for which $c_{\lambda\mu}^\nu>0$ are $\nu=\ydiagram{5,1,1}$ for
$(\lambda,\mu)=(\ydiagram{2,1},\ydiagram{3,1})$,
$\nu=\ydiagram{4,1,1,1}$ for $(\lambda,\mu)=(\ydiagram{2,1},\ydiagram{3,1})$ or $(\ydiagram{2,1},\ydiagram{2,1,1})$,
and $\nu=\ydiagram{3,1,1,1,1}$ for $(\lambda,\mu)=(\ydiagram{2,1},\ydiagram{2,1,1})$.
We proceed by showing that each of the above $\bnu$ can be identified with an irreducible representation
of $\sla(6,\cmplx)$, and then prove that universality of $(\lambda,\mu)$ on $\nu$ implies that
each such irreducible representation of $\sla(6,\cmplx)$ has a proper subrepresentation of $\sla(6,\cmplx)$: a contradiction.
First, by Lemma~\ref{lem:marin}, we have $\rho_\nu(\g_7)\cong\rho_\subydiagramtext{6,1}(\g_7)\cong\sla(6,\cmplx)$,
where $\ydiagram{6,1}$ can be identified with the standard 6-dimensional irreducible representation
of $\sla(6,\cmplx)$.
Since each $\nu$ above is an exterior power of $\ydiagram{6,1}$, it is also an irreducible
representation of $\sla(6,\cmplx)$.
Thus there exists an irreducible representation $\Phi_\nu:\sla(6,\cmplx)\rightarrow\rho_\nu(\g_n)$
that is also an isomorphism.
This representation has dimension 15 for $\nu=\ydiagram{5,1,1}$
and $\nu=\ydiagram{3,1,1,1,1}$, and 20 for $\nu=\ydiagram{4,1,1,1}$.

Assume $(\lambda,\mu)$
is universal on $\nu\in\left\{\ydiagram{5,1,1},\ydiagram{4,1,1,1},\ydiagram{3,1,1,1,1}\right\}$.
By definition this implies the existence of a witnessing function
$\phi_\nu:\su(\blambda\otimes\bmu)\rightarrow\rho_\nu(\g_7)$ such that for all
$u\in\su(\blambda\otimes\bmu)$, $\iota\in\Hom(\blambda\otimes\bmu,\bnu)$,
and $\psi\in\blambda\otimes\bmu$, $\phi_\nu(u)\iota(\psi)=\iota(u\psi)$.
The latter condition implies that members of the image of $\phi_\nu$ fix the isotypical subspace $V_{\lambda\mu}^\nu$.
By Lemma~\ref{lem:monomorphism}, $\phi_\nu$ is injective.
By complexification we obtain the map $\phi_\nu^\cmplx:\sla(\blambda\otimes\bmu)\rightarrow\rho_\nu(\g_7)$, the image of which again fixes
$V_{\lambda\mu}^\nu$, and again it is injective.
Because $\rho_\nu(\g_7)=\Phi_\nu(\sla(6,\cmplx))$, we have $\phi_\nu^\cmplx(\sla(\blambda\otimes\bmu))\subset\Phi_\nu(\sla(6,\cmplx))$.
We also have $\dim(\blambda\otimes\bmu)=6$ from which follows $\sla(\blambda\otimes\bmu)\cong\sla(6,\cmplx)$ and $\Span(\phi_\nu^\cmplx(\sla(\blambda\otimes\bmu)))\cong\sla(6,\cmplx)$.  Recalling
that $\Phi_\nu(\sla(6,\cmplx))\cong\sla(6,\cmplx)$, 
we must have $\Span(\phi_\nu^\cmplx(\sla(\blambda\otimes\bmu)))=\Phi_\nu(\sla(6,\cmplx))$.  It follows that the subspace $V_{\lambda\mu}^\nu$ of $\nu$
is invariant with respect to the action of $\sla(6,\cmplx)$ as determined by $\Phi_\nu$.
However Lemma~\ref{lem:2hooks} implies $c_{\lambda\mu}^\nu=1$ and thus $\dim(V_{\lambda\mu}^\nu)=\dim(\blambda\otimes\bmu)=6$,
while $\dim(\bnu)>6$.  Thus the representation $(\Phi_\nu,\nu)$ of $\sla(6,\cmplx)$ has
a proper subrepresentation, in contradiction with its irreducibility.
\end{proof}

So, when the goal is to assemble partitions that  admit universality, 
hooks are generally to be avoided.
This fact motivates conditions by which to exclude hooks from $\cP_\mus^{(d)}$:
\begin{lemma}
\label{lem:nohooks}
Given a family of multiple nontrivial partitions $\{\mu^{(i)}\}_{i=1}^N$ of at most $d$ rows,
we have that $\cP_\mus^{(d)}$ contains a hook if and only if $\max\{\mu^{(i)}_2\}= 1$
and $\sum_{i=1}^N\mu^{(i)\prime}_1<N+d$.
\end{lemma}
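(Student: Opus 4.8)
The plan is to translate the assertion into a question about iterated products of hooks and then track a single numerical invariant, the first-column length $\nup_1$ (equivalently, the number of rows). First I would recall that $\cP_\mus^{(d)}=\{\nu : c_\mus^\nu>0,\ \nup_1\le d\}$ by Lemma~\ref{lem:frobenius}, and that the condition $\max\{\mu^{(i)}_2\}=1$ is, for \emph{nontrivial} $\mu^{(i)}$, equivalent to each $\mu^{(i)}$ being improper: the alternative $\mu^{(i)}_2=0$ describes a single row, hence a trivial partition, which is excluded. So each $\mu^{(i)}$ is then a genuine hook $\mu^{(i)}=[\mu^{(i)}_1,1^{\mu^{(i)\prime}_1-1}]$ with $\mu^{(i)}_1\ge 2$, and the claim becomes: there is a hook $\nu$ with $c_\mus^\nu>0$ and $\nup_1\le d$ if and only if every $\mu^{(i)}$ is a hook and $\sum_i\mu^{(i)\prime}_1<N+d$.

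The technical core is a bound on how $\nup_1$ grows along a chain $\nu^{(1)}=\mu^{(1)},\nu^{(2)},\ldots,\nu^{(N-1)},\nu^{(N)}=\nu$ with each $c_{\nu^{(k)}\mu^{(k+1)}}^{\nu^{(k+1)}}>0$ realizing $c_\mus^\nu>0$, when $\nu$ is a hook. By Lemma~\ref{lem:nugeqmax}, once an intermediate partition becomes proper it stays proper, so every $\nu^{(k)}$ in such a chain is improper; thus each step is a product of two improper partitions, and Lemma~\ref{lem:2hooks} pins down $\nu^{(k+1)}$: it is one of exactly two hooks, each of coefficient $1$, whose numbers of single-cell rows are $b$ and $b+1$, where $b=((\nu^{(k)})'_1-1)+(\mu^{(k+1)\prime}_1-1)$. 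Hence $(\nu^{(k+1)})'_1-1\ge((\nu^{(k)})'_1-1)+(\mu^{(k+1)\prime}_1-1)$, with equality for the first of the two. Iterating from $(\nu^{(1)})'_1-1=\mu^{(1)\prime}_1-1$ gives $\nup_1-1\ge\sum_{i=1}^N(\mu^{(i)\prime}_1-1)$, i.e. $\nup_1\ge\sum_i\mu^{(i)\prime}_1-N+1$; moreover the chain that always takes the first option attains equality, keeps every $\nu^{(k)}$ a genuine hook (first row $\ge 2$, at least one single-cell row), and has $c_\mus^\nu\ge 1$.

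Granting this, both directions are short. For the "only if" direction: if $\cP_\mus^{(d)}$ contains a hook $\nu$, then $c_\mus^\nu>0$ with $\nu$ a hook forces every $\mu^{(i)}$ to be improper, hence a hook, by Lemma~\ref{lem:hookfamilies}, so $\max\{\mu^{(i)}_2\}=1$; and $\nup_1\le d$ together with the lower bound above gives $\sum_i\mu^{(i)\prime}_1-N+1\le d$, i.e. $\sum_i\mu^{(i)\prime}_1<N+d$. For the "if" direction: assume each $\mu^{(i)}$ is a nontrivial hook and $\sum_i\mu^{(i)\prime}_1\le N+d-1$, and run the all-first-option chain of Lemma~\ref{lem:2hooks} to obtain $\nu$. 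Then $\nu$ is a genuine hook (its first row is $\sum_i\mu^{(i)}_1\ge 2N\ge 2$ and it has single-cell rows), $c_\mus^\nu>0$, and $\nup_1=\sum_i\mu^{(i)\prime}_1-N+1\le d$, so $\nu\in\cP_\mus^{(d)}$.

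The one delicate point is the bookkeeping in the inductive step: reading off from Lemma~\ref{lem:2hooks} precisely how the first-row and first-column lengths of a product of two hooks depend on those of the factors, and verifying that along the all-first-option chain the intermediate partitions remain genuine (nontrivial) hooks, so that Lemma~\ref{lem:2hooks} can be reapplied with coefficient $1$ at every step. The rest is elementary arithmetic together with the observation — from Lemma~\ref{lem:nugeqmax} — that no contributions to $c_\mus^\nu$ are lost by restricting attention to chains that pass only through improper partitions.
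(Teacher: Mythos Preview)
Your proof is correct and follows essentially the same approach as the paper's: both arguments iterate Lemma~\ref{lem:2hooks} along a chain $\nu^{(1)}=\mu^{(1)},\ldots,\nu^{(N)}=\nu$, track the first-column length via $(\nu^{(k+1)})'_1\in\{(\nu^{(k)})'_1+\mu^{(k+1)\prime}_1-1,\ (\nu^{(k)})'_1+\mu^{(k+1)\prime}_1\}$, invoke Lemma~\ref{lem:hookfamilies} to force all $\mu^{(i)}$ improper, and take the minimal-growth chain to construct the hook with $\nup_1=\sum_i\mu^{(i)\prime}_1-N+1$. You are more explicit than the paper about deriving the lower bound $\nup_1\ge\sum_i\mu^{(i)\prime}_1-N+1$ and about checking that the intermediate $\nu^{(k)}$ remain genuine hooks, but the underlying argument is the same.
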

\begin{proof}
If every $\mu^{(i)}$ is a hook, then by Lemma~\ref{lem:2hooks}, for each hook $\nu^{(i)}$ there are
exactly two hooks $\nu^{(i+1)}$ such that $c_{\nu^{(i)}\mu^{(i+1)}}^{\nu^{(i+1)}}>0$,
one satisfying $\nu^{(i+1)\prime}_1=\nu^{(i)\prime}+\mu^{(i+1)\prime}_1-1$
and the other satisfying $\nu^{(i+1)\prime}_1=\nu^{(i)\prime}+\mu^{(i+1)\prime}_1$.
Therefore if $\nu^{(1)}=\mu^{(1)}$, by iteration there exists a hook $\nu^{(N)}$ such that
$\prod_{i=1}^{N-1}c_{\nu^{(i)}\mu^{(i+1)}}^{\nu^{(i+1)}(d)}>0$ satisfying
$\nu^{(N)\prime}_1=\sum_{i=1}^N\mu^{(i)\prime}_1-N+1$.
So if $\sum_{i=1}^N\mu^{(i)\prime}_1<N+d$, we have $\sum_{i=1}^N\mu^{(i)\prime}_1-N+1\leq d$,
thus $\nu^{(N)\prime}_1\leq d$ and in turn $\nu^{(N)}\in\cP_\mus^{(d)}$.

Conversely if any $\mu^{(i)}$ is not a hook then by Lemma~\ref{lem:hookfamilies}, $\cP_\mus^{(d)}$ contains
no hooks.  Or if $\sum_{i=1}^N\mu^{(i)\prime}_1\geq N+d$, then any hook $\nu$ such that
$c_\mus^\nu>0$ must satisfy $\nup_1\geq d+1$ and thus $\nu^{(N)}\not\in\cP_\mus^{(d)}$.
\end{proof}

\subsection{Multiple nontrivial partitions}

In this section we consider a family of multiple nontrivial partitions,
and in particular a family of two such partitions,
with the aim of deriving 
necessary and sufficient conditions for universality. 
It already follows 
from Lemma~\ref{lem:muneqmupuniversal} that if at least one of the partitions is proper, and at least one of the partitions is not self-conjugate, then universality is guaranteed.
We now examine the other cases -- when all partitions are hooks, and when all
partitions are self-conjugate  -- in some detail.

To begin with, in the last section we found that if the partitions are hooks 
then
universality on hooks is ruled out in all but two cases.
Below we show that whenever these remaining two cases map by intertwiner to hooks, they fail to be $d$-universal.  We conclude that hooks generally contraindicate universality: 
\begin{lemma}
\label{lem:ifhooknotu}
Given a family $\{\mu^{(i)}\}_{i=1}^N$ of partitions such that for all $i$ $\dim(\bmu^{(i)})>1$ and $N>1$,
if $\cP_\mus^{(d)}$ contains a hook then $\{\mu^{(i)}\}_{i=1}^N$ is not $d$-universal.
\end{lemma}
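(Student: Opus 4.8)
The statement combines all the hook-related work of the previous subsection. The plan is to split into the two cases dictated by Lemma~\ref{lem:hookdimsmall}: either $\{\mu^{(i)}\}_{i=1}^N$ is one of the four exceptional small families $(\ydiagram{2,1},\ydiagram{2,1})$, $(\ydiagram{2,1},\ydiagram{3,1})$, $(\ydiagram{2,1},\ydiagram{2,1,1})$, $(\ydiagram{2,1},\ydiagram{2,1},\ydiagram{2,1})$, or it is not. First I would handle the generic case: if $\{\mu^{(i)}\}_{i=1}^N$ is not one of those four families, then Lemma~\ref{lem:hookdimsmall} says directly that it is not universal on any hook. Since $\cP_\mus^{(d)}$ is assumed to contain a hook $\nu$, and $d$-universality by Definition~\ref{def:strongu} would in particular require universality on every element of $\cP_\mus^{(d)}$, and hence on the singleton $\{\nu\}$ (a witnessing map for $\cP$ restricts to one for any nonempty subset, because one can postcompose with the projection onto the relevant summands — this restriction property follows from the definition and from Corollary~\ref{cor:marin2}), we get a contradiction. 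So $d$-universality fails.

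Next I would handle the four exceptional families. Two of them, $(\ydiagram{2,1},\ydiagram{3,1})$ and $(\ydiagram{2,1},\ydiagram{2,1,1})$, are dispatched by Lemma~\ref{lem:smallhooks}, which says there is no hook on which either is universal; combined with the same restriction-to-a-subset argument as above, neither can be $d$-universal when $\cP_\mus^{(d)}$ contains a hook. That leaves $(\ydiagram{2,1},\ydiagram{2,1})$ and $(\ydiagram{2,1},\ydiagram{2,1},\ydiagram{2,1})$, which Lemma~\ref{lem:hookdimsmall} explicitly excludes and which are not covered by Lemma~\ref{lem:smallhooks}. For these I would argue directly. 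By Lemma~\ref{lem:2hooks}, for $(\ydiagram{2,1},\ydiagram{2,1})$ the only improper $\nu$ with $c_{\lambda\mu}^\nu>0$ are $\ydiagram{4,1,1}$ and $\ydiagram{3,1,1,1}$, each with multiplicity $1$; similarly for three copies, iterating Lemma~\ref{lem:2hooks} gives a small finite list of hooks, all with the relevant Littlewood--Richardson coefficient equal to $1$, so $\dim V_\mus^\nu = \dim(\bigotimes_i\bmu^{(i)})$. The point is that $\dim(\bigotimes_i\bmu^{(i)}) = 2^N$ (with $N=2$ or $N=3$), while $\dim(\bnu) > 2^N$ for each such hook $\nu$, and $\rho_\nu(\g'_n)\cong\sla(V_{[n-1,1]})$ by Lemma~\ref{lem:marin}. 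Then the argument at the end of Lemma~\ref{lem:hookdimsmall} applies verbatim: universality would force $\sla(\bigotimes_i\bmu^{(i)})$ to embed as a subalgebra fixing the proper subspace $V_\mus^\nu$ of the irreducible $\sla(6,\cmplx)$-module $\bnu$ (identifying $\rho_\nu(\g_n)$ with an irreducible representation of $\sla(6,\cmplx)$ as in the proof of Lemma~\ref{lem:smallhooks}), contradicting irreducibility. One checks the small finite list of $\nu$'s by hand; for $N=2$ the hooks are $\ydiagram{4,1,1}$ (dim $15$) and $\ydiagram{3,1,1,1}$ (dim $15$ or rather its exterior-power dimension $>4$), and for $N=3$ one gets hooks of $9$ boxes whose dimensions again strictly exceed $8$.

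The main obstacle is the bookkeeping in the two genuinely exceptional families: I need to enumerate, via repeated application of Lemma~\ref{lem:2hooks}, exactly which hooks $\nu$ lie in $\cP_\mus^{(d)}$, confirm $c_\mus^\nu=1$ in every case so that $\dim V_\mus^\nu$ equals $\dim(\bigotimes_i\bmu^{(i)})$ rather than something larger, and verify the strict dimension inequality $\dim(\bnu)>\dim(\bigotimes_i\bmu^{(i)})$ for each. Once that finite check is in place, the contradiction-with-irreducibility machinery is exactly that of Lemmas~\ref{lem:smallhooks} and~\ref{lem:hookdimsmall}, so no new idea is needed — only care that the restriction-to-a-subset step is legitimate, which it is because by Corollary~\ref{cor:marin2} a witnessing map on $\cP$ projects to a witnessing map on any subset $\cP'\subset\cP$ containing the hook.
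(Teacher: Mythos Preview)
Your plan for the generic case and for the pairs $(\ydiagram{2,1},\ydiagram{3,1})$ and $(\ydiagram{2,1},\ydiagram{2,1,1})$ matches the paper. The gap is in your treatment of the remaining two exceptional families.

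For $(\ydiagram{2,1},\ydiagram{2,1})$ you claim ``the argument at the end of Lemma~\ref{lem:hookdimsmall} applies verbatim'' and then invoke the irreducibility mechanism of Lemma~\ref{lem:smallhooks}. Neither works here. The dimension argument in Lemma~\ref{lem:hookdimsmall} needs $\dim(\bigotimes_i\bmu^{(i)})>n-1$; with $N=2$ and $n=6$ this reads $4>5$, which is false. The irreducibility argument in Lemma~\ref{lem:smallhooks} needs $\dim(\blambda\otimes\bmu)=n-1$ so that the image of $\sla(\blambda\otimes\bmu)$ fills all of $\rho_\nu(\g'_n)\cong\sla(n-1,\cmplx)$ (incidentally $\sla(5,\cmplx)$ here, not $\sla(6,\cmplx)$); with $4\neq 5$ the image is only a proper subalgebra, and you cannot conclude that $V_{\lambda\mu}^\nu$ is invariant under the full $\sla(5,\cmplx)$. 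Concretely, $V_{[4,1,1]}\cong\Lambda^2\cmplx^5$ restricted to the standard $\sla(4,\cmplx)\subset\sla(5,\cmplx)$ splits as $\Lambda^2\cmplx^4\oplus\cmplx^4$, so there \emph{is} a $4$-dimensional $\sla(4,\cmplx)$-invariant subspace carrying the standard action---no contradiction arises. (Your argument does go through for $N=3$, where $8=n-1$, but that still leaves $N=2$ unhandled.)

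The paper avoids this difficulty by not attacking the hooks at all for these two families. Instead it observes that both families are self-conjugate, and that the hypothesis ``$\cP_\mus^{(d)}$ contains a hook'' together with Lemma~\ref{lem:nohooks} forces $d>N$. Hence $d\geq 3$ when $N=2$ and $d\geq 4$ when $N=3$, which guarantees that the self-conjugate proper partitions $\ydiagram{3,2,1}$ (respectively $\ydiagram{3,3,3}$) lie in $\cP_\mus^{(d)}$. Then Lemma~\ref{lem:mueqmupnueqnup} immediately yields non-$d$-universality. This is both simpler and avoids the delicate hook analysis entirely.
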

\begin{proof}
If $\cP_\mus^{(d)}$ contains a hook then by Lemma~\ref{lem:hookfamilies}, $\{\mu^{(i)}\}_{i=1}^N$
consists entirely of hooks. 
By Lemma~\ref{lem:hookdimsmall}, unless $\{\mu^{(i)}\}_{i=1}^N$ is one of $(\ydiagram{2,1},\ydiagram{3,1})$,
$(\ydiagram{2,1},\ydiagram{2,1,1})$, $(\ydiagram{2,1},\ydiagram{2,1})$,
or $(\ydiagram{2,1},\ydiagram{2,1},\ydiagram{2,1})$,
then $\{\mu^{(i)}\}_{i=1}^N$ is not $d$-universal.
If $\{\mu^{(i)}\}_{i=1}^N$ equals $(\ydiagram{2,1},\ydiagram{3,1})$ or
$(\ydiagram{2,1},\ydiagram{2,1,1})$, then by Lemma~\ref{lem:smallhooks} $\{\mu^{(i)}\}_{i=1}^N$ is not $d$-universal.
For the remaining two cases, note that by Lemma~\ref{lem:nohooks}, $N+d>\sum_{i=1}^N\mu^{(i)\prime}_1\geq 2N$, 
hence $d>N\geq 2$.
Therefore if $\{\mu^{(i)}\}_{i=1}^N=(\ydiagram{2,1},\ydiagram{2,1})$, 
it follows by the Littlewood--Richardson rules that $\cP_\mus^{(d)}$ contains $\ydiagram{3,2,1}$, 
and if $\{\mu^{(i)}\}_{i=1}^N=(\ydiagram{2,1},\ydiagram{2,1},\ydiagram{2,1})$ 
then it follows by the Littlewood--Richardson rules that $\cP_\mus^{(d)}$ contains $\ydiagram{3,3,3}$.  
In either case, by Lemma~\ref{lem:mueqmupnueqnup}, $\{\mu^{(i)}\}_{i=1}^N$ is not $d$-universal.
\end{proof}

Combining the above implications of hooks on universality, with previously derived implications of conjugates
on universality, we obtain necessary and sufficient conditions:
\begin{theorem}
\label{thm:iffumulti}
A family of nontrivial partitions $\{\mu^{(i)}\}_{i=1}^N$ where $N>1$ is $d$-universal if and only if
\begin{enumerate}[label=(\roman{*})]
\item $P_\mus^{(d)}$ has no hooks, and
\item if $\{\mu^{(i)\prime}\}_{i=1}^N=\{\mu^{(i)}\}_{i=1}^N$ then $P_\mus^{(d)}$ has no conjugates.
\end{enumerate}
\end{theorem}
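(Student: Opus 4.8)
The plan is to assemble this characterization from the structural lemmas of Sections~4.2.1--4.2.3; essentially no new computation is needed, so the work is in organizing three preliminary reductions and then a short case analysis. First I would record that, unwinding Definition~\ref{def:strongu}, $\{\mu^{(i)}\}_{i=1}^N$ is $d$-universal if and only if it is universal on $\cP_\mus^{(d)}$ itself. One direction is the definition applied with $\cP=\cP_\mus^{(d)}$ (here, as throughout this subsection, each $\mu^{(i)}$ has at most $d$ rows, so $\sum_i\mu^{(i)}\in\cP_\mus^{(d)}$ by Lemma~\ref{lem:LRCartan} and $\cP_\mus^{(d)}$ is nonempty); for the converse, if $\phi$ witnesses universality on some $\cP\supseteq\cP_\mus^{(d)}$ then, choosing for each $u$ an $x_u\in\g_n$ with $(\bigoplus_{\nu\in\cP}\rho_\nu)(x_u)=\phi(u)$, the map $u\mapsto(\bigoplus_{\nu\in\cP_\mus^{(d)}}\rho_\nu)(x_u)$ is a well-defined witness on $\cP_\mus^{(d)}$. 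The same device shows universality on any set descends to any nonempty subset. Second, I would note that a nontrivial partition has at least two rows and at least two columns, so applying Lemma~\ref{lem:nugeqmax} to each factor $c_{\nu^{(k-1)}\mu^{(k)}}^{\nu^{(k)}}$ of a nonzero $c_\mus^\nu$ gives $\nu_1\ge 2$ and $\nu_2\ge 1$ for every $\nu\in\cP_\mus^{(d)}$; hence $\cP_\mus^{(d)}$ contains no trivial partition, and condition (i) is equivalent to ``$\cP_\mus^{(d)}$ consists of proper partitions.'' Third, since $\dim(\bmu^{(i)})\ge 2$ for nontrivial $\mu^{(i)}$ and $N>1$, we have $\dim(\bigotimes_{i=1}^N\bmu^{(i)})\ge 2^N\ge 4>2$, so the dimension hypotheses of Lemmas~\ref{lem:mueqmupnueqnup}, \ref{lem:mueqmupnuneqnup}, and \ref{lem:ifhooknotu} are automatic.

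For the ``if'' direction I would argue as follows. Assuming (i) and (ii), the second reduction makes $\cP_\mus^{(d)}$ a nonempty set of proper partitions. If some $\mu^{(i)}$ is not self-conjugate, i.e. $\{\mu^{(i)\prime}\}_{i=1}^N\neq\{\mu^{(i)}\}_{i=1}^N$, then Lemma~\ref{lem:muneqmupuniversal} gives universality on $\cP_\mus^{(d)}$ outright. If instead every $\mu^{(i)}$ is self-conjugate, then the hypothesis of (ii) is satisfied, so $\cP_\mus^{(d)}$ contains no self-conjugate partition and no conjugate pair; together with the absence of hooks this matches the hypothesis of Lemma~\ref{lem:ifnohooksnorconjthenu}, which again gives universality on $\cP_\mus^{(d)}$. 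In both cases, by the first reduction, $\{\mu^{(i)}\}_{i=1}^N$ is $d$-universal.

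For the ``only if'' direction I would start from $d$-universality, which by the first reduction means universality on $\cP_\mus^{(d)}$ and on all its nonempty subsets. If $\cP_\mus^{(d)}$ contained a hook, Lemma~\ref{lem:ifhooknotu} (its hypotheses $N>1$ and $\dim(\bmu^{(i)})>1$ hold) would contradict $d$-universality, proving (i). For (ii), suppose every $\mu^{(i)}$ is self-conjugate; by (i) and the second reduction $\cP_\mus^{(d)}$ is all proper. A self-conjugate $\nu\in\cP_\mus^{(d)}$ would contradict universality on $\{\nu\}$ by Lemma~\ref{lem:mueqmupnueqnup}, and a conjugate pair $\nu\neq\nu'$ in $\cP_\mus^{(d)}$ would contradict universality on $\cP_\mus^{(d)}$ by Lemma~\ref{lem:mueqmupnuneqnup}. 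Hence $\cP_\mus^{(d)}$ has no conjugates, which is (ii).

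The hard part here is not mathematical depth but getting the bookkeeping airtight: making sure the trichotomy (some $\mu^{(i)}$ proper / all $\mu^{(i)}$ hooks / all $\mu^{(i)}$ self-conjugate) is exhausted by lemmas whose hypotheses genuinely apply --- in particular that the small leftover hook families such as $([2,1],[2,1])$ and $([2,1],[2,1],[2,1])$ are already absorbed inside Lemma~\ref{lem:ifhooknotu}, and that $\cP_\mus^{(d)}$ never contains a trivial partition so that (i) can be read as ``all proper.'' No genuinely new obstacle is expected; the theorem is the capstone that repackages Marin's structure theorem and the alternating-intertwiner analysis into a clean arithmetic criterion.
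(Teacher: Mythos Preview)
Your proof is correct and follows essentially the same approach as the paper's: the same case split on whether $\{\mu^{(i)\prime}\}_{i=1}^N=\{\mu^{(i)}\}_{i=1}^N$, invoking Lemma~\ref{lem:muneqmupuniversal} or Lemma~\ref{lem:ifnohooksnorconjthenu} in the forward direction, and Lemmas~\ref{lem:ifhooknotu}, \ref{lem:mueqmupnueqnup}, \ref{lem:mueqmupnuneqnup} in the reverse. Your three preliminary reductions (descent to subsets, exclusion of trivial $\nu$, and the dimension bound $\dim(\bigotimes_i\bmu^{(i)})>2$) make explicit several points the paper's proof uses tacitly---in particular, that Lemma~\ref{lem:muneqmupuniversal} needs $\cP$ to consist of \emph{proper} partitions, not merely to lack hooks---so your bookkeeping is if anything tighter.
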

\begin{proof}
Assume these conditions are true.  If $\{\mu^{(i)\prime}\}_{i=1}^N\neq\{\mu^{(i)}\}_{i=1}^N$, then since $\cP_\mus^{(d)}$ 
contains no hooks it admits the universality of $\{\mu^{(i)}\}_{i=1}^N$ by Lemma~\ref{lem:muneqmupuniversal}.
If $\{\mu^{(i)\prime}\}_{i=1}^N=\{\mu^{(i)}\}_{i=1}^N$, then $\cP_\mus^{(d)}$ contains neither hooks nor conjugates and 
therefore admits the universality of $\{\mu^{(i)}\}_{i=1}^N$ by Lemma~\ref{lem:ifnohooksnorconjthenu}.

Conversely, if the first condition fails then $\cP_\mus^{(d)}$ contains a hook and therefore fails to admit the universality 
of $\{\mu^{(i)}\}_{i=1}^N$ by Lemma~\ref{lem:ifhooknotu}.
If the second condition fails, then $\{\mu^{(i)\prime}\}_{i=1}^N=\{\mu^{(i)}\}_{i=1}^N$ and $\cP_\mus^{(d)}$ contains a 
self-conjugate partition or a pair of conjugate partitions.  Furthermore the existence of multiple nontrivial partitions in 
$\{\mu^{(i)}\}_{i=1}^N$  implies $\dim(\bigotimes_{i=1}^N\bmu^{(i)})>2$.  Therefore if $\cP_\mus^{(d)}$ contains a 
self-conjugate partition it fails to admit the universality of $\{\mu^{(i)}\}_{i=1}^N$ by Lemma~\ref{lem:mueqmupnueqnup}, and if 
$\cP_\mus^{(d)}$ contains a pair of conjugate pairs then it fails to admit the universality of $\{\mu^{(i)}\}_{i=1}^N$ by 
Lemma~\ref{lem:mueqmupnuneqnup}. 
\end{proof}

Application of the above conditions for universality, as presented, is typically not efficient, 
because the conditions require 
checking whether Littlewood--Richardson coefficients vanish for a number of partitions that grows
faster than a polynomial in the total number of $d$-state systems.  A shortcut exists for the first condition 
in the form of Lemma~\ref{lem:nohooks}, which gives simple conditions on the partitions in $\{\mu^{(i)}\}_{i=1}^N$ 
to rule out hooks in $\cP_\mus^{(d)}$.  However we do not have a comparable shortcut for the second condition, 
whereby one can rule out conjugates in $\cP_\mus^{(d)}$, given a family of self-conjugate partitions.

Fortunately the problem of efficiently determining universality becomes tractable in the case of two partitions.  
In this case we can 
derive necessary and sufficient conditions, on the pair
$(\lambda,\mu)=(\lambda',\mu')$,
for the set $\cP_{\lambda\mu}^{(d)}$ to be devoid of conjugates.
First we simply derive a sufficient condition for the first row of every partition in $\cP_{\lambda\mu}^{(d)}$ 
to be longer than $d$.  This condition is obviously sufficient to exclude conjugates from 
$\cP_{\lambda\mu}^{(d)}$, since they do not ``fit" its definition that every column be no longer than $d$.  
It may not be so obvious, however, that this condition is also necessary.  
We prove it to be thus by explicitly producing a conjugate pair in $\cP_{\lambda\mu}^{(d)}$,
under the assumption that the aforementioned condition is violated.
This construction
makes use of Knutson and Tao's seminal result on the relationship between 
Littlewood--Richardson coefficients and the eigenvalues of Hermitian matrices.

\begin{lemma}
\label{lem:iffnoconjugates}
Given self-conjugate partitions $\lambda$ and $\mu$, 
$\cP_{\lambda\mu}^{(d)}=\{\nu|c_{\lambda\mu}^{\nu(d)}>0\}$ contains no conjugates
if and only if $\max_{i+j=1+d}\{\lambda_i+\mu_j\}>d$.
\end{lemma}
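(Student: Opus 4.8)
The plan is to prove the two implications separately, after recording one reduction that makes the equivalence transparent. Since $\lambda=\lambda'$ and $\mu=\mu'$, Lemma~\ref{lem:lrsymmetries} gives $c_{\lambda\mu}^{\nu'}=c_{\lambda'\mu'}^{\nu}=c_{\lambda\mu}^\nu$ for every partition $\nu$, so $\{\nu:c_{\lambda\mu}^\nu>0\}$ is closed under conjugation. Consequently $\cP_{\lambda\mu}^{(d)}=\{\nu:c_{\lambda\mu}^\nu>0,\ \nu'_1\le d\}$ contains a conjugate (a self-conjugate partition, or both members of a conjugate pair) if and only if it contains a partition $\nu$ with at most $d$ rows and at most $d$ columns, i.e. with $\nu\subseteq(d^d)$ and $c_{\lambda\mu}^\nu>0$. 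Throughout I write $D=\max_{i+j=1+d}\{\lambda_i+\mu_j\}$, the maximum over $1\le i,j\le d$, with the convention that a part is $0$ once it runs past the length of its partition.

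For the forward direction — in contrapositive form, $D\le d$ implies $\cP_{\lambda\mu}^{(d)}$ has a conjugate — the plan is to exhibit an explicit $\nu\subseteq(d^d)$ with $c_{\lambda\mu}^\nu>0$. Taking the pair $(1,d)$ in the maximum gives $\lambda_1\le\lambda_1+\mu_d\le D\le d$, and since $\lambda$ is self-conjugate this forces $\lambda\subseteq(d^d)$; likewise $\mu\subseteq(d^d)$. Define $\nu$ to be the weakly decreasing rearrangement of the $d$-tuple $(\lambda_i+\mu_{d+1-i})_{i=1}^d$: each entry is of the form $\lambda_i+\mu_{d+1-i}$ with $i+(d+1-i)=1+d$, hence lies in $[0,D]\subseteq[0,d]$, and the entries sum to $|\lambda|+|\mu|$, so $\nu$ is a partition of $|\lambda|+|\mu|$ inside the $d\times d$ box. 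Now the Hermitian matrices $A=\diag(\lambda_1,\dots,\lambda_d)$ and $B=\diag(\mu_d,\mu_{d-1},\dots,\mu_1)$ have spectra $\{\lambda_i\}$ and $\{\mu_i\}$, and $A+B=\diag(\lambda_1+\mu_d,\dots,\lambda_d+\mu_1)$ has spectrum $\{\nu_i\}$, so Lemma~\ref{lem:KnutsonTao} yields $c_{\lambda\mu}^\nu>0$ and hence $\nu\in\cP_{\lambda\mu}^{(d)}$. Since $\nu\subseteq(d^d)$, the conjugate $\nu'$ also fits in the box, and by the reduction $c_{\lambda\mu}^{\nu'}=c_{\lambda\mu}^\nu>0$, so $\nu'\in\cP_{\lambda\mu}^{(d)}$ as well; thus $\cP_{\lambda\mu}^{(d)}$ contains the conjugate(s) $\nu$ (and $\nu'$), completing the contrapositive.

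For the reverse direction, $D>d$ implies no conjugates, I would show that every $\nu$ with $c_{\lambda\mu}^\nu>0$ and at most $d$ rows in fact satisfies $\nu_1>d$ — then its conjugate has more than $d$ rows and cannot also lie in $\cP_{\lambda\mu}^{(d)}$, and $\nu$ cannot be self-conjugate, so $\cP_{\lambda\mu}^{(d)}$ has no conjugates. Let $r=\nu'_1\le d$ be the number of rows of $\nu$; by Lemma~\ref{lem:nugeqmax}, $\lambda$ and $\mu$ also have at most $r$ rows. Applying the dual Weyl inequality of Lemma~\ref{lem:Weyl} (whose parameter is here the number of rows $r$ of $\nu$) with $k=1$ gives $\nu_1\ge\max\{\lambda_i+\mu_j : 1\le i,j\le r,\ i+j=r+1\}$. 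The remaining step is the monotonicity claim that this last maximum is at least $D$: picking a pair $(a,b)$ with $a+b=1+d$ that achieves $D$, if both $a,b\le r$ one replaces $b$ by $r+1-a\le b$ (legal since $r\le d$) and uses $\mu_{r+1-a}\ge\mu_b$; if $a>r$ then $\lambda_a=0$, so $D=\mu_b\le\mu_1\le\lambda_r+\mu_1$, which is one of the terms of the maximum; the case $b>r$ is symmetric. Hence $\nu_1\ge D>d$, as needed. I expect this monotonicity bookkeeping, together with the index conventions in Lemma~\ref{lem:Weyl}, to be the only mildly delicate point; the substance of the argument is the clean choice of $\nu$ from the anti-diagonal pairings $\lambda_i+\mu_{d+1-i}$ in the forward direction, which simultaneously makes the box bound automatic from the hypothesis and makes $c_{\lambda\mu}^\nu>0$ visible through commuting diagonal matrices and Lemma~\ref{lem:KnutsonTao}.
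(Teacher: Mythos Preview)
Your proof is correct and follows essentially the same route as the paper: both directions hinge on the same two ideas --- the explicit anti-diagonal construction $\nu=\sort(\lambda_i+\mu_{d+1-i})$ verified via commuting diagonal matrices and Lemma~\ref{lem:KnutsonTao}, and the dual Weyl bound $\nu_1\ge\max_{i+j=1+d}\{\lambda_i+\mu_j\}$ to force $\nu_1>d$. Your treatment is in fact slightly more careful than the paper's in the ``$D>d$'' direction: the paper invokes Lemma~\ref{lem:Weyl} directly with the ambient parameter $d$, whereas that lemma is stated with $d=\nu'_1$; your monotonicity argument (reducing from $i+j=1+d$ to $i+j=1+r$ with $r=\nu'_1\le d$) fills that small gap explicitly.
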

\begin{proof}
By Weyl's inequalities in Lemma~\ref{lem:Weyl}, given any partition $\nu$ such that 
$c_{\lambda\mu}^\nu>0$, we have
$$\max_{i+j=1+d}{\lambda_i+\mu_j}\leq\nu_1.$$
Therefore
if $\max_{i+j=1+d}\{\lambda_i+\mu_j\}>d$ then 
for any partition $\nu$ such that $c_{\lambda\mu}^\nu>0$, and
in particular for all $\nu\in \cP_{\lambda\mu}^{(d)}$, we have $\nu_1>d$.
Then $(\nu')'_{\ 1}>d$, from which it follows by definition that
$c_{\lambda\mu}^{\nu'(d)}=0$.  Thus we have $\nu'\not\in \cP_{\lambda\mu}^{(d)}$.

Conversely, assume $\max_{i+j=1+d}\{\lambda_i+\mu_j\}\leq d$. We show that $\cP_{\lambda,mu}^{(d)}$ has a conjugate pair, as follows.
Let $\nu=\sort(\lambda_1+\mu_d,\lambda_2+\mu_{d-1},\ldots,\lambda_d+\mu_1)$,
where here ``sort" denotes the operation of arranging the sequence
into weakly decreasing order. 
Thus for all $i$, we have $\nu_i\leq d$.
By Lemma~\ref{lem:KnutsonTao}, it follows that $c_{\lambda\mu}^\nu>0$
because there exist $d\times d$ Hermitian 
matrices $A$ and $B$ such that
$A$, $B$, and $A+B$ have eigenvalues $\{\lambda_i\}_{i=1}^d$,
$\{\mu_i\}_{i=1}^d$ and $\{\nu_i\}_{i=1}^d$
respectively; namely:
\begin{eqnarray*}
A &=& \diag(\lambda_1,\lambda_2,\ldots,\lambda_d)\\
B &=& \diag(\mu_d,\mu_{d-1},\ldots,\mu_1).
\end{eqnarray*}
Furthermore we have $\nu\in \cP_{\lambda\mu}^{(d)}$ since $\nup_1\leq d$. 
Now by the Littlewood--Richardson identity of Lemma~\ref{lem:lrsymmetries}, we have $c_{\lambda'\mu'}^{\nu'}=c_{\lambda\mu}^\nu$ and thus, since $\lambda$ and $\mu$ are self-conjugate, 
$c_{\lambda\mu}^{\nu'}>0$.
Finally since by 
construction $\nu_1\leq d$, we have
$(\nu')'_{\ 1}\leq d$ so $c_{\lambda\mu}^{\nu'(d)}>0$ and $\nu'\in \cP_{\lambda\mu}^{(d)}$.
\end{proof}

It seems somewhat of a coincidence that the case for which conjugate partitions in the set $\cP_{\lambda\mu}^{(d)}$ prohibit universality -- that of a self-conjugate pair $(\lambda,\mu)=(\lambda',\mu')$ -- is also a case for 
which there exist arithmetically simple necessary and sufficient conditions on $(\lambda,\mu)$ 
by which to exclude such conjugate partitions from $\cP_{\lambda\mu}^{(d)}$.
When $(\lambda,\mu)\neq(\lambda',\mu')$, the conjugate-excluding condition 
$\max_{i+j=1+d}\{\lambda_i+\mu_j\}>d$ of Lemma~\ref{lem:iffnoconjugates}
is no longer a necessary one, as can be verified by the counterexample 
$\lambda=\mu=\ydiagram{4,1}$ with $d=4$.
In this more general case it does not appear possible to derive, from Weyl's inequalities, 
necessary and sufficient conditions by which to exclude conjugates. 
So we are fortunate that such conditions are not needed for universality.

With all of the pieces now in place, it is rather straightforward to derive the promised 
necessary and sufficient conditions for universality.  What follows are two conditions 
which can be simply stated in terms of shapes, as in Theorem~\ref{thm:iffumulti}:  Given partitions $\lambda$ and $\mu$, 
each of which has more than one row and more than one column but no more than $d$ rows, 
$(\lambda,\mu)$ is universal on $\cP_{\lambda\mu}^{(d)}$ if and only if 
\begin{enumerate}[label=(\roman{*})]
\item $\cP_{\lambda\mu}^{(d)}$ is devoid of hooks, and
\item if $\lambda$ and $\mu$ are both self-conjugate then $\cP_{\lambda\mu}^{(d)}$ is also devoid of 
conjugates.  
\end{enumerate}
The purpose of stating the theorem below in arithmetic terms is to 
compactly express the conditions on $\lambda$ and $\mu$ without reference to $\cP_{\lambda\mu}^{(d)}$, 
thus obviating the inefficient task of calculating Littlewood--Richardson 
coefficients when applying the theorem. 

\begin{theorem}
\label{thm:iff2partitionsu}
Given partitions $\lambda$ and $\mu$ such that $\dim(\blambda)>1$ and
$\dim(\bmu)>1$, $(\lambda,\mu)$ is $d$-universal if and only if the following conditions are satisfied:
\begin{enumerate}[label=(\roman{*})]
\item if $\max\{\lambda_2,\mu_2\}=1$ then $\lambdap_1+\mup_1>d+1$; 
\item if $(\lambda,\mu)=(\lambda',\mu')$ then $\max_{i+j=1+d}\{\lambda_i+\mu_j\}>d$.
\end{enumerate}
\end{theorem}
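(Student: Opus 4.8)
The plan is to derive Theorem~\ref{thm:iff2partitionsu} as the $N=2$ specialization of Theorem~\ref{thm:iffumulti}, with the two combinatorial conditions of that theorem rewritten in arithmetic form by means of Lemma~\ref{lem:nohooks} and Lemma~\ref{lem:iffnoconjugates}. Since $\dim(\blambda)>1$ and $\dim(\bmu)>1$, the pair $(\lambda,\mu)$ is a family of two nontrivial partitions; assuming, as is natural in this context, that $\lambda$ and $\mu$ have at most $d$ rows (so that $\lambda+\mu\in\cP_{\lambda\mu}^{(d)}$ by Lemma~\ref{lem:LRCartan} and the set is nonempty), Theorem~\ref{thm:iffumulti} gives: $(\lambda,\mu)$ is $d$-universal if and only if (a)~$\cP_{\lambda\mu}^{(d)}$ contains no hooks, and (b)~if $\lambda=\lambda'$ and $\mu=\mu'$ then $\cP_{\lambda\mu}^{(d)}$ contains no conjugates. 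It then suffices to show that (a) is equivalent to condition (i) and (b) to condition (ii).

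For (a) I would apply Lemma~\ref{lem:nohooks} with $N=2$: it says $\cP_{\lambda\mu}^{(d)}$ contains a hook exactly when $\max\{\lambda_2,\mu_2\}=1$ and $\lambdap_1+\mup_1<d+2$. Taking the contrapositive, $\cP_{\lambda\mu}^{(d)}$ is hook-free if and only if $\max\{\lambda_2,\mu_2\}=1$ forces $\lambdap_1+\mup_1\geq d+2$, i.e.\ $\lambdap_1+\mup_1>d+1$, which is precisely condition (i).

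For (b), when $\lambda=\lambda'$ and $\mu=\mu'$ both partitions are self-conjugate, so Lemma~\ref{lem:iffnoconjugates} applies verbatim and states that $\cP_{\lambda\mu}^{(d)}$ contains no conjugates if and only if $\max_{i+j=1+d}\{\lambda_i+\mu_j\}>d$; this is condition (ii). If instead at least one of $\lambda,\mu$ is not self-conjugate, the antecedent of (b) is false, so (b) holds vacuously, matching the vacuous truth of condition (ii). I would remark for completeness that this range of cases includes the conjugate pair $\mu=\lambda'\neq\lambda$, which is already universal on the (hook-free) set $\cP_{\lambda\mu}^{(d)}$ by Lemma~\ref{lem:muneqmupuniversal} since $(\lambda,\mu)\neq(\lambda',\mu')$, so that no further arithmetic constraint beyond (i) is needed there.

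Combining the two equivalences yields the theorem. I do not anticipate a genuine obstacle: the analytic content is carried entirely by Theorem~\ref{thm:iffumulti}, Lemma~\ref{lem:nohooks}, and Lemma~\ref{lem:iffnoconjugates} (with Knutson--Tao's input entering only through the proof of Lemma~\ref{lem:iffnoconjugates}). The points requiring care are organizational: reading ``$(\lambda,\mu)=(\lambda',\mu')$'' as the conjunction $\lambda=\lambda'$ and $\mu=\mu'$ (so that the conjugate-pair case falls under Lemma~\ref{lem:muneqmupuniversal} rather than condition (ii)), correctly negating the ``contains a hook'' criterion of Lemma~\ref{lem:nohooks}, and observing that when $\lambda$ or $\mu$ has more than $d$ rows the set $\cP_{\lambda\mu}^{(d)}$ is empty and both sides of the biconditional hold trivially.
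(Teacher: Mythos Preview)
Your proposal is correct and essentially matches the paper's argument. The only difference is organizational: the paper's proof bypasses Theorem~\ref{thm:iffumulti} and invokes the underlying lemmas (Lemma~\ref{lem:ifhooknotu}, Lemma~\ref{lem:mueqmupnueqnup}, Lemma~\ref{lem:mueqmupnuneqnup}, Lemma~\ref{lem:muneqmupuniversal}, Lemma~\ref{lem:ifnohooksnorconjthenu}) directly, whereas you factor through Theorem~\ref{thm:iffumulti} and then translate its two conditions into arithmetic form via Lemma~\ref{lem:nohooks} and Lemma~\ref{lem:iffnoconjugates}. Since Theorem~\ref{thm:iffumulti} is itself proved from exactly those lemmas, the content is identical; your route is simply a bit more modular. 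Your handling of the edge cases (the reading of $(\lambda,\mu)=(\lambda',\mu')$ as the conjunction $\lambda=\lambda'$, $\mu=\mu'$; the conjugate-pair case $\mu=\lambda'\neq\lambda$ falling under Lemma~\ref{lem:muneqmupuniversal}; the implicit hypothesis $\lambdap_1,\mup_1\le d$) is accurate.
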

\begin{proof}
We prove each direction of the biconditional in turn:

\noindent  $\Rightarrow:$  We prove the contrapositive.  Assume (i) or (ii) is false.
If (i) is false, then $\max\{\lambda_2,\mu_2\}=1$ and $\lambdap_1+\mup_1\leq d+1$.  
Then by Lemma~\ref{lem:nohooks}, $\cP_{\lambda\mu}^{(d)}$ contains a hook,
and therefore by Lemma~\ref{lem:ifhooknotu} $\cP_{\lambda\mu}^{(d)}$ does not admit the universality of $(\lambda,\mu)$.

If (ii) is false then $(\lambda,\mu)=(\lambda',\mu')$ and $\max_{i+j=1+d}\{\lambda_i+\mu_j\}\leq d$.  Then by Lemma~\ref{lem:iffnoconjugates}, $\cP_{\lambda\mu}^{(d)}$ contains self-conjugate
partitions or conjugate pairs, and so by Lemma~\ref{lem:mueqmupnueqnup} or Lemma~\ref{lem:mueqmupnuneqnup}, $\cP_{\lambda\mu}^{(d)}$ does not
admit universality.

\noindent $\Leftarrow:$  Assume (i) and (ii) are true.
By (i) and Lemma~\ref{lem:nohooks}, $\cP_{\lambda\mu}^{(d)}$ does not contain any hooks.
By (ii), either $(\lambda,\mu)\neq(\lambda',\mu')$, 
or $(\lambda,\mu)=(\lambda',\mu')$
and $\max_{i+j=1+d}\{\lambda_i+\mu_j\}>d$.  In the former case, $\cP_{\lambda\mu}^{(d)}$
admits universality by Lemma~\ref{lem:muneqmupuniversal}.  In the latter case, $\cP_{\lambda\mu}^{(d)}$ has no
conjugates by Lemma~\ref{lem:iffnoconjugates}, and thus admits universality by Lemma~\ref{lem:ifnohooksnorconjthenu}.
\end{proof}

\subsection{Multitudinous nontrivial partitions}

Having obtained arithmetically simple necessary and sufficient conditions for the universality of a single logical qudit (Theorem~\ref{thm:iff1partitionu}) and two logical qudits (Theorem~\ref{thm:iff2partitionsu}),
it is naturally of interest to ask whether we can derive similarly simple necessary and 
sufficient conditions for the universality of arbitrarily many logical qudits.
First recall that by Lemma~\ref{lem:ifhooknotu}, if $\cP_\mus^{(d)}$ contains a hook then it does not admit
universality of a family $\{\mu^{(i)}\}_{i=1}^N$ of multiple nontrivial partitions.
This justifies generalizing the first condition of Theorem~\ref{thm:iff2partitionsu} from two 
partitions to arbitrarily many partitions, which can be done using the hook-avoiding 
conditions given by Lemma~\ref{lem:nohooks}.

Generalizing the second condition of Theorem~\ref{thm:iff2partitionsu} is more challenging.
One approach is to generalize Lemma~\ref{lem:iffnoconjugates} by iteratively applying Weyl's
inequalities, resulting in an expression of nested maxima.  Such an approach
indeed yields sufficient conditions for universality, but not necessary 
conditions.

For example, consider the generalized condition of Lemma~\ref{lem:iffnoconjugates} for three partitions:  
$$\max_{j+m=1+d}\{\max_{k+\ell=j+d}\{\kappa_k+\lambda_\ell\}+\mu_m\}>d.$$
Let $\kappa=\ydiagram{2,2}$, $\lambda=\ydiagram{2,1}$, and $\mu=\ydiagram{2,1}$, with $d=3$.  In that case the left hand side above equals $d$, thus failing
the inequality.  Yet, by direct computation, 
$\cP_{\kappa\lambda\mu}^{(3)}$
is in
fact devoid of conjugates and hooks, and thus $(\ydiagram{2,2},\ydiagram{2,1},\ydiagram{2,1})$
is 3-universal. 

It appears, then, that the Weyl inequalities are not enough to obtain 
necessary and sufficient conditions to exclude conjugates.  This suggests
utilization of a larger subset of the Horn inequalities.  But whether
an efficient algorithm to determine universality can be obtained in this way,
or at all, is left as an open problem.

We can, however, readily obtain useful sufficient conditions for the universality of arbitrarily many logical qudits.  
In particular, for sufficiently many
logical qudits of dimension greater than 1, universality is guaranteed.
To prove this, 
we need to show that
a sufficiently large family $\{\mu^{(i)}\}_{i=1}^N$ of partitions rules out 
conjugates in $\cP_\mus^{(d)}$.  First we show that a sufficiently large partition in $\cP_\mus^{(d)}$ 
rules out its own conjugate in $\cP_\mus^{(d)}$.  Clearly if the number $n$ being partitioned is too large relative to the bound $d$ on the number of rows, then there is not enough ``room" 
for both a partition and its conjugate in $\cP_\mus^{(d)}$.  This can also be understood as a 
simple corollary of Weyl's inequality on two partitions:
\begin{lemma}
\label{cor:ifd2noconjugates}
Given partitions $\lambda$ and $\mu$,
if $|\lambda|+|\mu|>d^2$ and $c_{\lambda\mu}^{\nu(d)}>0$ then we have
$c_{\lambda\mu}^{\nu'(d)}=0$.
\end{lemma}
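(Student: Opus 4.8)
The plan is to show that the hypothesis $|\lambda|+|\mu|>d^{2}$ forces the first part of $\nu$ to exceed $d$; since the number of rows of $\nu'$ equals $\nu_{1}$, this disqualifies $\nu'$ from the set $\cP^{(d)}_{\lambda\mu}$ outright, so that $c_{\lambda\mu}^{\nu'(d)}=0$ regardless of whether the plain coefficient $c_{\lambda\mu}^{\nu'}$ vanishes.

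First I would unwind the hypothesis $c_{\lambda\mu}^{\nu(d)}>0$. By the definition of $\cP^{(d)}_{\lambda\mu}$ together with Lemma~\ref{lem:frobenius}, this says simultaneously that $c_{\lambda\mu}^{\nu}>0$, so that $\nu\vdash|\lambda|+|\mu|$, and that $\nu$ has at most $d$ rows, i.e.\ $\nu'_1\le d$.

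Next comes the one substantive step. Since $\nu$ has at most $d$ nonzero parts and each is at most $\nu_{1}$, we have $|\nu|\le d\,\nu_{1}$, whence
\[
\nu_{1}\;\ge\;\frac{|\nu|}{d}\;=\;\frac{|\lambda|+|\mu|}{d}\;>\;\frac{d^{2}}{d}\;=\;d .
\]
Equivalently, were $\nu_{1}\le d$, then $\nu$ would fit inside a $d\times d$ box and $|\nu|\le d^{2}$ would contradict the hypothesis; this averaging is precisely what one gets by summing the Weyl inequalities of Lemma~\ref{lem:Weyl} for $k=1,\dots,d$, which is the sense in which the statement is a corollary of Weyl's inequality.

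Finally, conjugation is an involution that interchanges rows with columns, so the number of rows of $\nu'$ is $(\nu')'_{1}=\nu_{1}>d$. Hence $\nu'\notin\cP^{(d)}_{\lambda\mu}$, and by Lemma~\ref{lem:frobenius} and the definition of the superscript-$(d)$ convention this means $c_{\lambda\mu}^{\nu'(d)}=0$. There is no genuine obstacle here: the proof is short, and the only care needed is bookkeeping among a partition's number of rows, the first part of its conjugate, and the $(d)$ convention; the arithmetic heart is the elementary bound $|\nu|\le d\,\nu_{1}$.
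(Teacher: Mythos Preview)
Your proof is correct, and the endgame matches the paper's exactly: show $\nu_1>d$, hence $(\nu')'_1=\nu_1>d$, hence $c_{\lambda\mu}^{\nu'(d)}=0$. The difference lies in how $\nu_1>d$ is obtained. The paper argues by contraposition through the dual Weyl inequality of Lemma~\ref{lem:Weyl}: if $\max_{i+j=1+d}\{\lambda_i+\mu_j\}\le d$, then summing the $d$ inequalities $\lambda_k+\mu_{d+1-k}\le d$ gives $|\lambda|+|\mu|\le d^2$; contrapositively the max exceeds $d$, and Weyl then forces $\nu_1>d$. Your argument is more elementary: from $\nu'_1\le d$ alone you get the box bound $|\nu|\le d\,\nu_1$, and the conclusion drops out by division. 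This bypasses Weyl entirely and, incidentally, does not need the auxiliary fact (implicit in the paper's summation) that $\lambda$ and $\mu$ themselves have at most $d$ rows. One small quibble: your parenthetical that the averaging ``is precisely what one gets by summing the Weyl inequalities for $k=1,\dots,d$'' is not literally accurate---summing those bounds does not produce $|\nu|\le d\,\nu_1$---but the remark is inessential and the argument stands without it.
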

\begin{proof}
Suppose that $\max\limits_{i+j=1+d}\{\lambda_i+\mu_j\}\leq d$.
Then we have
\begin{eqnarray*}
\lambda_1+\mu_d &\leq&  d\\
\lambda_2+\mu_{d-1} &\leq&  d\\
&\vdots& \\
\lambda_d+\mu_1 &\leq&  d.
\end{eqnarray*}
Summing the above yields $|\lambda|+|\mu|\leq d^2$.
Contrapositively, if $|\lambda|+|\mu|>d^2$ then
$\max\limits_{i+j=1+d}\{\lambda_i+\mu_j\}>d$, 
and in turn 
by Weyl's inequality we have $\nu_1>d$.  Therefore it follows that
$(\nu')'_{\ 1}>d$, and by definition $c_{\lambda\mu}^{\nu'(d)}=0$.
\end{proof}

The above result obviously generalizes to arbitrarily many partitions:
\begin{lemma}
\label{lem:manynoconj}
Let $\{\mu^{(i)}\vdash m_i\}_{i=1}^N$ be a family of partitions satisfying
$\mup_1\leq d$.
If $\sum_{i=1}^Nm_i>d^2$ then $P=\{\nu\vdash\sum_{i=1}^Nm_i|\nup_1\leq d,\Hom(\bigotimes_{i=1}^N\bmu^{(i)},\bnu)\neq\{0\}\}$
contains no conjugates.
\end{lemma}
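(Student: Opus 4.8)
The plan is to reduce the many-partition statement to the two-partition case already established in Lemma~\ref{cor:ifd2noconjugates}, by grouping the family into two blocks and invoking associativity of the Littlewood--Richardson operation. First I would observe that since $\sum_{i=1}^N m_i > d^2$, we may split $\{1,\ldots,N\}$ into two nonempty sets $A$ and $B$ (for instance $A=\{1\}$, $B=\{2,\ldots,N\}$, noting $N\geq 2$ when the sum exceeds $d^2>1$) and form the ``block partitions'' obtained by iterated induction: by Lemma~\ref{lem:LRunitary} and the associativity/commutativity of the operation of tensoring symmetric group representations and inducing (as discussed in Section~2.3.2), any $\nu$ with $c_{\mu^{(1)}\cdots\mu^{(N)}}^\nu > 0$ arises as $c_{\lambda\mu}^\nu > 0$ for some $\lambda\vdash\sum_{i\in A}m_i$ and $\mu\vdash\sum_{i\in B}m_i$ with $\lambdap_1\leq d$, $\mup_1\leq d$, appearing in the respective block decompositions.

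Then, since $|\lambda|+|\mu| = \sum_{i=1}^N m_i > d^2$, Lemma~\ref{cor:ifd2noconjugates} applies directly to this pair $(\lambda,\mu)$: it gives $c_{\lambda\mu}^{\nu'(d)} = 0$, i.e.\ $\nup_1 > d$. Hence no $\nu$ in $P$ can have its conjugate $\nu'$ also in $P$, which is exactly the claim. The argument is essentially bookkeeping: translate the $\Hom$-nonvanishing condition into nonvanishing of $c_{\mu^{(1)}\cdots\mu^{(N)}}^\nu$ via Lemma~\ref{lem:frobenius}, regroup, apply the two-block result, translate back.

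I expect the only subtle point — the ``main obstacle'' such as it is — to be making the regrouping step airtight: one must be careful that $\{\nu \mid c_{\mu^{(1)}\cdots\mu^{(N)}}^\nu > 0\}$ really does coincide with $\bigcup_{\lambda,\mu} \{\nu \mid c_\lambda^{\,\cdots}\text{ and }c_\mu^{\,\cdots}\text{ and }c_{\lambda\mu}^\nu \text{ all }>0\}$, where the union is over $\lambda$ appearing in $\bigotimes_{i\in A}\bmu^{(i)}$ and $\mu$ appearing in $\bigotimes_{i\in B}\bmu^{(i)}$. This is precisely the content of the definition $c_{\mu^{(1)}\cdots\mu^{(k)}}^\nu \equiv c_{\mu^{(1)}\mu^{(2)}}^{\nu^{(2)}}\cdots c_{\nu^{(k-1)}\mu^{(k)}}^{\nu}$ together with its stated invariance under reordering of the lower indices, so citing Section~2.3.2 suffices; but one should also note the row-bound $\lambdap_1\leq d$ is automatic whenever $c^{\lambda}_{\cdots}>0$ fails to force it — in fact it does not need to be imposed, since if $\lambdap_1 > d$ we simply discard that term as it cannot contribute a $\nu$ with $\nup_1\leq d$ by Lemma~\ref{lem:nugeqmax}. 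With that caveat handled, the proof is short.
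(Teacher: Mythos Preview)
Your approach is essentially the paper's: both reduce to Lemma~\ref{cor:ifd2noconjugates} by grouping all but one of the $\mu^{(i)}$ into a single intermediate partition and applying the two-partition result to the last step (the paper uses the pair $(\nu^{(N-1)},\mu^{(N)})$ from the iterated-coefficient chain, while you use the symmetric split $(\mu^{(1)},\,\text{block for }\{2,\ldots,N\})$). One small slip: the claim that $\sum_i m_i>d^2$ forces $N\geq 2$ is false (a single row of length $d^2+1$ has $\mu'_1=1\leq d$), but the $N=1$ case is trivial and the paper's chain form tacitly assumes $N\geq 2$ as well.
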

\begin{proof}
By Lemma~\ref{lem:frobenius}, we have
$$P=\{\nu|c_{\mu^{(1)}\mu^{(2)}}^{\nu^{(2)}}c_{\nu^{(2)}\mu^{(3)}}^{\nu^{(3)}}\cdots c_{\nu^{(N-2)}\mu^{(N-1)}}^{\nu^{(N-1)}}c_{\nu^{(N-1)}\mu^{(N)}}^{\nu}>0\}.$$
Given
$$c_{\mu^{(1)}\mu^{(2)}}^{\nu^{(2)}}c_{\nu^{(2)}\mu^{(3)}}^{\nu^{(3)}}\cdots c_{\nu^{(N-2)}\mu^{(N-1)}}^{\nu^{(N-1)}}c_{\nu^{(N-1)}\mu^{(N)}}^{\nu}>0,$$
by the Littlewood--Richardson rules we have $|\nu^{(N-1)}|=\sum_{i=1}^{N-1}|\mu^{(i)}|$ 
Thus by hypothesis we have
$$|\nu^{(N-1)}|+|\mu^{(N)}|>d^2.$$
Then by Lemma~\ref{cor:ifd2noconjugates} we have
$c_{\nu^{(N-1)}\mu^{(N)}}^{\nu'(d)}=0$.
It follows that $P$ contains no conjugates.
\end{proof}

Finally we arrive at the result that universality is inevitable 
as the number of logical qudits is scaled up:
\begin{theorem}
Given a family of nontrivial partitions $\{\mu^{(i)}\}_{i=1}^N$ of at most $d$ rows, 
if $\max\{\mu^{(i)}_2\}>1$ or $\sum_{i=1}^N\mu^{(i)\prime}_1\geq N+d$ and $\sum_{i=1}^N|\mu^{(i)}|>d^2$
then $\{\mu^{(i)}\}_{i=1}^N$ is $d$-universal. 
\label{thm:ifnohookslargenu}
\end{theorem}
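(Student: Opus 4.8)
The plan is to apply Theorem~\ref{thm:iffumulti}, which says a family of nontrivial partitions $\{\mu^{(i)}\}_{i=1}^N$ with $N>1$ is $d$-universal if and only if (i) $\cP_\mus^{(d)}$ has no hooks, and (ii) if $\{\mu^{(i)\prime}\}_{i=1}^N=\{\mu^{(i)}\}_{i=1}^N$ then $\cP_\mus^{(d)}$ has no conjugates. So I need to verify both conditions under the stated hypotheses. I should first dispense with the edge case $N=1$: if $N=1$ then $\mu^{(1)}$ is a nontrivial partition of at most $d$ rows with $|\mu^{(1)}|>d^2$, and the hypothesis $\max\{\mu^{(i)}_2\}>1$ or $\sum\mu^{(i)\prime}_1\geq N+d$ forces $\mu^{(1)}$ to be either proper or a deep hook or... actually I should just check that Theorem~\ref{thm:iff1partitionu} applies --- but wait, a single deep hook is \emph{not} universal on itself. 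However, $|\mu^{(1)}|>d^2\geq 4$ combined with the hypothesis rules out the problematic cases: the condition $\sum_{i=1}^N\mu^{(i)\prime}_1\geq N+d$ with $N=1$ reads $\mu^{(1)\prime}_1\geq 1+d$, which for a hook $[m-r,1^r]$ means $r+1\geq d+1$, i.e. $r\geq d$, contradicting $\mup_1\leq d$ unless $r=d$, a shallow-ish situation; and if instead $\max\{\mu^{(1)}_2\}>1$ then $\mu^{(1)}$ is proper. Either way Theorem~\ref{thm:iff1partitionu} gives universality. I may alternatively just assume $N>1$ is the intended regime and handle $N=1$ briefly.

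For $N>1$: To check condition (i), I invoke Lemma~\ref{lem:nohooks}, which states that $\cP_\mus^{(d)}$ contains a hook if and only if $\max\{\mu^{(i)}_2\}=1$ \emph{and} $\sum_{i=1}^N\mu^{(i)\prime}_1<N+d$. The hypothesis is precisely the negation of this conjunction --- namely $\max\{\mu^{(i)}_2\}>1$ or $\sum_{i=1}^N\mu^{(i)\prime}_1\geq N+d$ --- so $\cP_\mus^{(d)}$ contains no hook, and condition (i) holds. This is the easy half.

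For condition (ii): I only need to worry about the case $\{\mu^{(i)\prime}\}_{i=1}^N=\{\mu^{(i)}\}_{i=1}^N$, and here the second hypothesis $\sum_{i=1}^N|\mu^{(i)}|>d^2$ is what does the work. By Lemma~\ref{lem:manynoconj}, if $\sum_{i=1}^N|\mu^{(i)}|>d^2$ then $\cP_\mus^{(d)}=\{\nu\mid\nup_1\leq d,\ \Hom(\bigotimes_{i=1}^N\bmu^{(i)},\bnu)\neq\{0\}\}$ (using Lemma~\ref{lem:frobenius} to match the two descriptions of the set) contains no conjugates at all --- neither self-conjugate partitions nor conjugate pairs. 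Hence condition (ii) is vacuously satisfied whenever its hypothesis on $\{\mu^{(i)}\}_{i=1}^N$ holds, and in fact it holds regardless. With both (i) and (ii) verified, Theorem~\ref{thm:iffumulti} yields that $\{\mu^{(i)}\}_{i=1}^N$ is $d$-universal.

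The proof is essentially a bookkeeping exercise assembling Lemma~\ref{lem:nohooks}, Lemma~\ref{lem:manynoconj}, Lemma~\ref{lem:frobenius}, and Theorem~\ref{thm:iffumulti}. I do not anticipate a genuine obstacle; the only point requiring a little care is the $N=1$ boundary case, where Theorem~\ref{thm:iffumulti} does not literally apply (it assumes $N>1$) and one must instead check that the hypotheses push $\mu^{(1)}$ into the universal cases of Theorem~\ref{thm:iff1partitionu}. If the intended reading is $N>1$, even that evaporates. A secondary cosmetic point: I should make sure the logical structure ``$\max\{\mu^{(i)}_2\}>1$ or $\sum\mu^{(i)\prime}_1\geq N+d$, \emph{and} $\sum|\mu^{(i)}|>d^2$'' is parsed correctly --- the first disjunction handles hooks, the conjunct $\sum|\mu^{(i)}|>d^2$ handles conjugates --- so I will state this parenthetically at the outset to avoid ambiguity.
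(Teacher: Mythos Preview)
Your approach is correct and essentially the same as the paper's: both invoke Lemma~\ref{lem:nohooks} to exclude hooks from $\cP_\mus^{(d)}$ and Lemma~\ref{lem:manynoconj} to exclude conjugates. The only cosmetic difference is that the paper concludes directly via Lemma~\ref{lem:ifnohooksnorconjthenu} (no hooks, no conjugates $\Rightarrow$ universal), whereas you route through the iff criterion Theorem~\ref{thm:iffumulti}; the paper's choice is marginally cleaner since Lemma~\ref{lem:ifnohooksnorconjthenu} does not require the case split on whether $\{\mu^{(i)}\}=\{\mu^{(i)\prime}\}$, and it sidesteps the $N>1$ restriction in Theorem~\ref{thm:iffumulti} that prompted your edge-case digression (though, as you noticed, Lemma~\ref{lem:nohooks} itself is stated for ``multiple'' partitions, so the paper also tacitly assumes $N>1$).
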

\begin{proof}
By Lemma~\ref{lem:nohooks}, $\cP_\mus^{(d)}$ contains no hooks, and by 
Lemma~\ref{lem:manynoconj}, $\cP_\mus^{(d)}$ contains no conjugates.
Therefore by Lemma~\ref{lem:ifnohooksnorconjthenu}, $\{\mu^{(i)}\}_{i=1}^N$ is universal on $\cP_\mus^{(d)}$, and
thus by definition $d$-universal.
\end{proof}
\noindent This theorem has a corollary which makes the inevitability more obvious,
and may be useful when only the number of logical qudits is known:
\begin{corollary}
\label{cor:ifmanyu}
Given a family of nontrivial partitions $\{\mu^{(i)}\}_{i=1}^N$ of at most $d$ rows, 
if $N>d^2/3$ then $\{\mu^{(i)}\}_{i=1}^N$ is $d$-universal.
\end{corollary}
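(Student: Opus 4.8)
The plan is to derive Corollary~\ref{cor:ifmanyu} as an immediate consequence of Theorem~\ref{thm:ifnohookslargenu}, by checking that the single hypothesis $N > d^2/3$ forces both of that theorem's hypotheses: namely (a) $\max\{\mu^{(i)}_2\}>1$ or $\sum_{i=1}^N\mu^{(i)\prime}_1\geq N+d$, and (b) $\sum_{i=1}^N|\mu^{(i)}|>d^2$. So there is essentially no new idea here, just a short arithmetic bookkeeping using the fact that the $\mu^{(i)}$ are nontrivial.

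First I would dispose of condition (b). Since each $\mu^{(i)}$ is nontrivial, $\dim(\bmu^{(i)})>1$, and the smallest nontrivial partition is $\ydiagram{2,1}\vdash 3$; hence $|\mu^{(i)}|\geq 3$ for every $i$, so $\sum_{i=1}^N|\mu^{(i)}|\geq 3N>d^2$ using $N>d^2/3$. This also shows the hypothesis is non-vacuous only for $d\geq 2$, since a nontrivial partition with at most $d$ rows forces $d\geq 2$.

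Next I would establish condition (a) via its second disjunct. A nontrivial partition has at least two nonempty rows (a single-row partition $[m]$ is trivial), so its first column length satisfies $\mu^{(i)\prime}_1\geq 2$, whence $\sum_{i=1}^N\mu^{(i)\prime}_1\geq 2N$. It therefore suffices to show $N\geq d$, for then $2N\geq N+d$. This follows from $N>d^2/3$ together with the elementary inequality $d^2/3\geq d-1$, equivalent to $d^2-3d+3\geq 0$, which holds for all real $d$ since its discriminant $9-12$ is negative; hence $N>d-1$, and since $N$ is an integer, $N\geq d$.

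With both hypotheses of Theorem~\ref{thm:ifnohookslargenu} verified, that theorem yields that $\{\mu^{(i)}\}_{i=1}^N$ is $d$-universal, completing the proof. The only points demanding any care — and neither is a real obstacle — are the two consequences of nontriviality ($|\mu^{(i)}|\geq 3$ and $\mu^{(i)\prime}_1\geq 2$) and the trivial quadratic estimate $d^2/3\geq d-1$ used to convert $N>d^2/3$ into $N\geq d$.
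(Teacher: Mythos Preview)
Your proof is correct and follows essentially the same approach as the paper: verify the two hypotheses of Theorem~\ref{thm:ifnohookslargenu} using that nontrivial partitions satisfy $|\mu^{(i)}|\geq 3$ and $\mu^{(i)\prime}_1\geq 2$. The only cosmetic difference is that the paper handles $d=2$ separately via Theorem~\ref{thm:kempe} and then uses the strict inequality $N>d$ for $d\geq 3$, whereas your quadratic estimate $d^2/3\geq d-1$ gives $N\geq d$ uniformly and avoids the case split; your version is arguably cleaner.
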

\begin{proof}
If $d=2$ then $\{\mu^{(i)}\}_{i=1}^N$ is universal on $\cP_\mus^{(d)}$ by Theorem~\ref{thm:kempe}.
If $d>2$ then we have $N>d$.
Since each $\mu^{(i)}$ is nontrivial, we have $|\mu^{(i)}|\geq 3$, and thus
\begin{eqnarray*}
\sum_{i=1}^N|\mu^{(i)}| &\geq& 3N>N+d,\\
&>& 3(d^2/3),\\
&>& d^2.
\end{eqnarray*}
\end{proof}

\subsection{Characterizing universal families}

The problem of finding necessary and sufficient conditions for the universality of a family of
nontrivial partitions may be usefully reframed by the observation that universal families of
multiple nontrivial partitions are upward closed:
\begin{theorem}
\label{thm:upwardclosed}
Given a family $\{\mu^{(i)}\}_{i=1}^N$ of nontrivial partitions, if any subfamily containing at least two
partitions is $d$-universal, then $\{\mu^{(i)}\}_{i=1}^N$ is also
$d$-universal.
\end{theorem}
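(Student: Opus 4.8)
The plan is to read $d$-universality off the combinatorial criterion of Theorem~\ref{thm:iffumulti} and then to transfer, condition by condition, that criterion's requirements from the $d$-universal subfamily to the whole family. Write the family as $\{\mu^{(i)}\}_{i=1}^N$ with $n=\sum_{i=1}^N|\mu^{(i)}|$, and let $S\subseteq\{1,\dots,N\}$, $|S|\geq 2$, index the given $d$-universal subfamily. First I would dispose of a degenerate case: we may assume every $\mu^{(i)}$ has at most $d$ rows, since otherwise $\cP_\mus^{(d)}=\varnothing$ and the statement is vacuous; under this assumption Lemma~\ref{lem:LRCartan} shows $\mu^{(1)}+\cdots+\mu^{(N)}\in\cP_\mus^{(d)}$, so $\cP_\mus^{(d)}$ is nonempty. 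By Theorem~\ref{thm:iffumulti} (applicable since $N\geq 2$) it then suffices to prove that \emph{(i)} $\cP_\mus^{(d)}$ contains no hook and \emph{(ii)} if every $\mu^{(i)}$ is self-conjugate then $\cP_\mus^{(d)}$ contains no conjugate; and the same theorem, applied to the subfamily, already tells us that $\cP^{(d)}_{\{\mu^{(i)}\}_{i\in S}}$ has no hook, and has no conjugate whenever the $\mu^{(i)}$ with $i\in S$ are self-conjugate.

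For (i) I would use the arithmetic form of hook-freeness in Lemma~\ref{lem:nohooks}. Applied to the subfamily, the absence of hooks in $\cP^{(d)}_{\{\mu^{(i)}\}_{i\in S}}$ forces $\max_{i\in S}\{\mu^{(i)}_2\}>1$ or $\sum_{i\in S}\mu^{(i)\prime}_1\geq|S|+d$. In the first case $\max_{i=1}^N\{\mu^{(i)}_2\}>1$; in the second, using $\mu^{(i)\prime}_1\geq 2$ for each nontrivial $\mu^{(i)}$ and $N\geq|S|$, one gets $\sum_{i=1}^N\mu^{(i)\prime}_1\geq(|S|+d)+2(N-|S|)\geq N+d$. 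Either way Lemma~\ref{lem:nohooks}, now for the full family, gives that $\cP_\mus^{(d)}$ has no hook. This step is just bookkeeping.

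For (ii), assume every $\mu^{(i)}$ is self-conjugate; then so is every $\mu^{(i)}$ with $i\in S$, and the subfamily's criterion gives that $\cP^{(d)}_{\{\mu^{(i)}\}_{i\in S}}$ has no conjugate. The first substantive point is a dictionary: for a family whose parts are all self-conjugate, iterating the symmetry $c_{\lambda\mu}^{\nu}=c_{\lambda'\mu'}^{\nu'}$ of Lemma~\ref{lem:lrsymmetries} yields $c_{\bullet}^{\sigma}=c_{\bullet}^{\sigma'}$ for every $\sigma$, so that the corresponding $\cP^{(d)}$ contains a conjugate --- a self-conjugate partition, or a conjugate pair $\{\sigma,\sigma'\}$ --- if and only if it contains some $\sigma$ with positive coefficient having at most $d$ rows \emph{and} at most $d$ columns, i.e.\ fitting inside a $d\times d$ box. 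Conjugate-freeness of the subfamily's $\cP^{(d)}$ thus says: no box-fitting $\sigma$ has $c_{\{\mu^{(i)}\}_{i\in S}}^{\sigma}>0$. Now suppose, for contradiction, $\cP_\mus^{(d)}$ contained a conjugate; then there would be $\nu$ with $c_\mus^{\nu}>0$ fitting in a $d\times d$ box. Computing $c_\mus^{\nu}$ by first inducing over the indices in $S$ --- legitimate by commutativity and associativity of the Littlewood--Richardson construction --- produces an intermediate partition $\kappa$ with $c_{\{\mu^{(i)}\}_{i\in S}}^{\kappa}>0$ that the remaining indices $j\notin S$ still have to absorb to reach $\nu$; applying Lemma~\ref{lem:nugeqmax} along those remaining steps gives $\kappa_i\leq\nu_i$ for all $i$, whence $\kappa$ also fits in the $d\times d$ box --- contradicting the previous sentence. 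Hence $\cP_\mus^{(d)}$ has no conjugate, and Theorem~\ref{thm:iffumulti} delivers the $d$-universality of $\{\mu^{(i)}\}_{i=1}^N$.

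The hard part is (ii), and it has three points that need care: that self-conjugacy really descends to the subfamily (harmless, since being self-conjugate is a property of each individual partition), which is what makes the subfamily's criterion applicable; the $d\times d$-box reformulation of ``$\cP^{(d)}$ contains a conjugate'', which hinges on the conjugation symmetry of Littlewood--Richardson coefficients; and the factorization of the multi-index coefficient $c_\mus^{\nu}$ through the chosen subfamily so that Lemma~\ref{lem:nugeqmax} can pin the intermediate partition below $\nu$. This last point --- equivalently, the elementary observation that a sub-partition of a box-fitting partition still fits in the box --- is the real source of the upward closure, and is the step I would be most careful to get right.
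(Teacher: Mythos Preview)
Your proof is correct and follows essentially the same strategy as the paper: show that $\cP_\mus^{(d)}$ has no hooks, and that in the self-conjugate case it has no conjugates, by pushing each property down from the full family to the $d$-universal subfamily via Lemma~\ref{lem:nugeqmax}. The only cosmetic differences are that you route everything through the biconditional Theorem~\ref{thm:iffumulti} (the paper instead invokes Lemmas~\ref{lem:muneqmupuniversal} and~\ref{lem:ifnohooksnorconjthenu} directly, splitting on whether the \emph{subfamily} is self-conjugate), and that for hook-freeness you use the arithmetic criterion of Lemma~\ref{lem:nohooks} while the paper argues more tersely via Lemma~\ref{lem:ifhooknotu}. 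Your $d\times d$-box reformulation of ``$\cP^{(d)}$ contains a conjugate'' is exactly the paper's observation that every $\nu\in\cP_\musk^{(d)}$ must have $\nu_1>d$, and the contradiction step --- factor $c_\mus^{\nu}$ through the subfamily and bound the intermediate $\kappa$ below $\nu$ --- is identical.
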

\begin{proof}
Without loss of generality, for some $2\leq k\leq N$ let $\{\mu^{(i)}\}_{i=1}^k$ be a subfamily of
$\{\mu^{(i)}\}_{i=1}^N$ that is universal on $\cP_\musk^{(d)}$.
Then by Lemma~\ref{lem:ifhooknotu}, $\cP_\mus^{(d)}$ cannot contain any hooks.
Thus if $\{\mu^{(i)}\}_{i=1}^k\neq\{\mu^{(i)\prime}\}_{i=1}^k$ then universality of
$\{\mu^{(i)}\}_{i=1}^N$ on $\cP_\mus^{(d)}$ follows immediately from Lemma~\ref{lem:muneqmupuniversal}.

Assume that $\{\mu^{(i)}\}_{i=1}^k=\{\mu^{(i)\prime}\}_{i=1}^k$.  By Lemma~\ref{lem:mueqmupnueqnup} and Lemma~\ref{lem:mueqmupnuneqnup}, $\cP_\musk^{(d)}$ has no
conjugates.  But by symmetry of the Littlewood--Richardson coefficients, $\cP_\musk$ necessarily
contains the conjugate of all its partitions.  Consistency thus demands that if $\nu\in\cP_\musk^{(d)}\subset\cP_\musk$, then $\nup_1>d$.  Then by Lemma~\ref{lem:nugeqmax} we also have that $\nup_1>d$ for all
$\nu\in\cP_\mus^{(d)}$.  So $\cP_\mus^{(d)}$ contains no conjugates and by Lemma~\ref{lem:ifnohooksnorconjthenu} $\{\mu^{(i)}\}_{i=1}^N$ is again $d$-universal.
\end{proof}

Thus the problem of characterizing universal families of multiple nontrivial partitions reduces to the problem of characterizing 
minimal such families.  For that purpose, it is convenient to consider self-conjugate and non-self-conjugate universal families 
separately. The latter can be characterized as follows:
\begin{theorem}
\label{thm:iffunonselfcon}
A non-self-conjugate family of multiple nontrivial partitions $\{\mu^{(i)}\}_{i=1}^N$ of at most $d$ rows is $d$-universal if and only if $\max\{\mu^{(i)\prime}_1\}>1$, or $\sum_{i=1}^N\mu^{(i)\prime}_1\geq N+d$.
\end{theorem}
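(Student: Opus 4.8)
The plan is to derive this statement as a direct corollary of Theorem~\ref{thm:iffumulti} together with the hook-counting criterion of Lemma~\ref{lem:nohooks}, exploiting the non-self-conjugacy hypothesis to discharge the second of the two conditions in Theorem~\ref{thm:iffumulti}. In other words, the claim isolates precisely the portion of the general characterization that survives when there can be no conjugate obstruction, and reexpresses the surviving hook condition in the arithmetic form supplied by Lemma~\ref{lem:nohooks}.

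First I would apply Theorem~\ref{thm:iffumulti}, whose hypotheses ($N>1$, every $\mu^{(i)}$ nontrivial, at most $d$ rows) hold by assumption: $\{\mu^{(i)}\}_{i=1}^N$ is $d$-universal if and only if (i) $\cP_\mus^{(d)}$ contains no hooks, and (ii) whenever $\{\mu^{(i)\prime}\}_{i=1}^N=\{\mu^{(i)}\}_{i=1}^N$ the set $\cP_\mus^{(d)}$ contains no conjugates. Since the family is non-self-conjugate, i.e. $\{\mu^{(i)\prime}\}_{i=1}^N\neq\{\mu^{(i)}\}_{i=1}^N$, the antecedent of (ii) is never satisfied, so (ii) holds vacuously. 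Hence for such a family, $d$-universality is equivalent to condition (i) alone: that $\cP_\mus^{(d)}$ be free of hooks.

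Next I would invoke Lemma~\ref{lem:nohooks}, which for a family of multiple nontrivial partitions of at most $d$ rows states that $\cP_\mus^{(d)}$ contains a hook if and only if $\max\{\mu^{(i)}_2\}=1$ and $\sum_{i=1}^N\mu^{(i)\prime}_1<N+d$. Negating, $\cP_\mus^{(d)}$ contains no hook if and only if $\max\{\mu^{(i)}_2\}>1$ or $\sum_{i=1}^N\mu^{(i)\prime}_1\geq N+d$, which is exactly the arithmetic condition in the statement. Combining with the previous paragraph gives the asserted biconditional. I would also record, for completeness, that $\cP_\mus^{(d)}$ is nonempty here — for instance it contains the part-wise sum $\mu^{(1)}+\cdots+\mu^{(N)}$ by iterating Lemma~\ref{lem:LRCartan}, and that partition has at most $\max_i(\text{number of rows of }\mu^{(i)})\leq d$ rows — so the notion of $d$-universality is not vacuously satisfied.

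Since the substantive work already resides in Theorem~\ref{thm:iffumulti} and in Lemma~\ref{lem:nohooks} (and, underneath them, in Lemma~\ref{lem:muneqmupuniversal} for the positive direction and Lemma~\ref{lem:ifhooknotu} for the negative direction), there is no genuine obstacle: the only care required is bookkeeping, namely verifying that the hypotheses of each cited result are met and transcribing the logical negation of the hook criterion correctly. The ``main step,'' such as it is, is simply recognizing that non-self-conjugacy makes the conjugate condition of Theorem~\ref{thm:iffumulti} automatic, so that the characterization collapses to the single hook condition, which is the one with an efficient closed form.
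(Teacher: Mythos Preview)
Your approach is correct and essentially the same as the paper's: the paper applies Lemma~\ref{lem:nohooks} together with Lemma~\ref{lem:ifhooknotu} and Lemma~\ref{lem:muneqmupuniversal} directly rather than routing through Theorem~\ref{thm:iffumulti}, but since the latter is assembled from exactly those lemmas this is only a packaging difference.

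One bookkeeping point worth flagging: your negation of Lemma~\ref{lem:nohooks} correctly yields the disjunct $\max\{\mu^{(i)}_2\}>1$, whereas the theorem statement (and the paper's own proof) writes $\max\{\mu^{(i)\prime}_1\}>1$. These are not the same quantity, and the latter is vacuously satisfied for any nontrivial partition (which necessarily has at least two rows), so as written the statement would assert that every non-self-conjugate family of multiple nontrivial partitions is $d$-universal---false already for $(\ydiagram{2,1},\ydiagram{3,1})$ with $d=3$. This is evidently a typographical slip in the paper; your derived condition $\max\{\mu^{(i)}_2\}>1$ is the intended one, consistent with Lemma~\ref{lem:nohooks} and with condition (i) of Theorem~\ref{thm:iff2partitionsu}. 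So your claim that the negated condition ``is exactly the arithmetic condition in the statement'' should be read as matching the intended statement, not the printed one.
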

\begin{proof}
Assume $\max\{\mu^{(i)\prime}_1\}=1$ and $\sum_{i=1}^N\mu^{(i)\prime}_1<N+d$.
Then by Lemma~\ref{lem:nohooks}, $\cP_\mus^{(d)}$ contains a hook,
and by Lemma~\ref{lem:ifhooknotu}, $\{\mu^{(i)}\}_{i=1}^N$ is not $d$-universal.

Conversely, assume $\max\{\mu^{(i)\prime}_1\}>1$, or $\sum_{i=1}^N\mu^{(i)\prime}_1\geq N+d$.
Then by Lemma~\ref{lem:nohooks}, $\cP_\mus^{(d)}$ contains no hooks.
In turn, by Lemma~\ref{lem:muneqmupuniversal}, $\{\mu^{(i)}\}_{i=1}^N$ is universal on $\cP_\mus^{(d)}$.
\end{proof}

Clearly for each $d$ there is an infinite number of minimal universal families of nontrivial partitions containing at least one 
non-self-conjugate partition, since that partition can have arbitrarily large rows.
But since the non-self-conjugate universal families of nontrivial partitions are already efficiently characterized by Theorem~\ref{thm:iffunonselfcon}, 
we turn our attention now to
the set of minimal families of multiple, nontrivial, self-conjugate partitions that are $d$-universal, which we denote $F^{(d)}$.
Conveniently, we have
\begin{prop}
For any $d>0$, $F^{(d)}$ is finite. 
\end{prop}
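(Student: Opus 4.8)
The plan is to bound the total number of cells $n=\sum_{i=1}^N|\mu^{(i)}|$ over all minimal $d$-universal families $\{\mu^{(i)}\}_{i=1}^N$ of multiple nontrivial self-conjugate partitions of at most $d$ rows, and then count. (If $d=1$ there are no nontrivial partitions of at most one row, so $F^{(1)}=\emptyset$; assume $d\geq 2$ henceforth.) Granting a bound $n\leq B(d)$, each $\mu^{(i)}$ is a partition with at most $d$ rows and at most $B(d)$ cells, of which there are only finitely many; moreover $|\mu^{(i)}|\geq 3$ forces $N\leq B(d)/3$; so the families in $F^{(d)}$ are multisets of cardinality at most $B(d)/3$ drawn from a fixed finite set of partitions, and there are only finitely many such multisets.

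To obtain the bound I would split on $N$. If $N\leq d$, then since a self-conjugate partition with at most $d$ rows also has at most $d$ columns and hence at most $d^2$ cells, we get $n\leq Nd^2\leq d^3$. The substantive case is $N\geq d+1$, where I claim $n\leq 2d^2$. Suppose instead $n>2d^2$, and let $\mu^{(i_0)}$ have minimal size in the family, so $|\mu^{(i_0)}|\leq n/N\leq n/2$ and the subfamily $\{\mu^{(i)}\}_{i\neq i_0}$ has cardinality $N-1\geq d\geq 2$ and total size $n'=n-|\mu^{(i_0)}|>d^2$. Each member of this subfamily is nontrivial and self-conjugate, hence has at least two rows, so $\sum_{i\neq i_0}\mu^{(i)\prime}_1\geq 2(N-1)\geq (N-1)+d$; by Lemma~\ref{lem:nohooks} the set $\cP'\equiv\{\nu\,|\,\nu'_1\leq d,\ \Hom(\bigotimes_{i\neq i_0}\bmu^{(i)},\bnu)\neq\{0\}\}$ contains no hooks. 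Since $n'>d^2$, Lemma~\ref{lem:manynoconj} shows $\cP'$ contains no self-conjugate partitions and no conjugate pairs, and $\cP'$ is nonempty because by Lemma~\ref{lem:LRCartan} the part-wise sum of the partitions in the subfamily lies in it. Hence by Lemma~\ref{lem:ifnohooksnorconjthenu} the subfamily is universal on $\cP'$ and therefore $d$-universal, contradicting minimality of $\{\mu^{(i)}\}_{i=1}^N$. So $n\leq 2d^2$, and in all cases $n\leq\max(d^3,2d^2)$, which completes the proof via the counting argument above.

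The main (indeed essentially the only) obstacle is the case $N\geq d+1$: one must choose which partition to delete so as to preserve simultaneously the hook-free condition and the conjugate-free condition of Theorem~\ref{thm:iffumulti}. Deleting a partition of minimal size does both at once: it keeps the reduced total size above $d^2$, so that Lemma~\ref{lem:manynoconj} still applies, and it leaves at least $d$ nontrivial self-conjugate partitions, which is exactly what drives the inequality $\sum\mu^{(i)\prime}_1\geq (N-1)+d$ needed to invoke Lemma~\ref{lem:nohooks}. The $N\leq d$ estimate and the concluding count are routine.
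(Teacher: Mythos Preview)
Your proof is correct, but it is considerably more elaborate than necessary. The paper's argument is a two-line affair: first, a self-conjugate partition with at most $d$ rows has at most $d$ columns and hence at most $d^2$ cells, so only finitely many are available; second, since any proper subfamily (of at least two partitions) of a minimal family is \emph{not} $d$-universal, Corollary~\ref{cor:ifmanyu} immediately forces $N-1\leq d^2/3$. This bounds $N$ directly, and finiteness follows.

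What you do instead is bound the total size $n=\sum_i|\mu^{(i)}|$ via a case split on $N\leq d$ versus $N\geq d+1$, and in the latter case you delete the smallest partition and re-verify by hand, via Lemmas~\ref{lem:nohooks} and~\ref{lem:manynoconj}, that the remaining subfamily satisfies the hypotheses of Lemma~\ref{lem:ifnohooksnorconjthenu}. This works, but it is essentially a reconstruction of Corollary~\ref{cor:ifmanyu} (and its precursor Theorem~\ref{thm:ifnohookslargenu}) inside the proof. The careful choice of which partition to delete, and the inequality $\sum_{i\neq i_0}\mu^{(i)\prime}_1\geq 2(N-1)\geq (N-1)+d$ that makes Lemma~\ref{lem:nohooks} fire, are all subsumed by the already-proved corollary. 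Your approach buys nothing extra here; the paper's route is both shorter and conceptually cleaner, since it directly bounds the number of partitions rather than detouring through the total number of cells.
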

\begin{proof}
Since the size of each self-conjugate partition of at most $d$ rows is bounded by $d^2$,
the number of such partitions is finite.  Furthermore the families in $F^{(d)}$
must be bounded since if they were not, each such arbitrarily large family would have an arbitrarily
large, nonuniversal subfamily, contradicting the effective bound of $d^2/3$ on nonuniversal
families given by Corollary~\ref{cor:ifmanyu}.
\end{proof}

\begin{exmp}
Consider $d=2$.
By Theorem~\ref{thm:kempe}, all families of partitions are 2-universal.
It follows that the minimal universal families of multiple self-conjugate nontrivial partitions
include $(\ydiagram{2,1},\ydiagram{2,1})$, $(\ydiagram{2,1},\ydiagram{2,2})$,
and $(\ydiagram{2,2},\ydiagram{2,2})$.  Since every
family of multiple self-conjugate nontrivial partitions contains one or more of these three subfamilies, they are the only such minimal families.
Therefore we have
$$
F^{(2)}=\{(\ydiagram{2,1},\ydiagram{2,1}), (\ydiagram{2,1},\ydiagram{2,2}), (\ydiagram{2,2},\ydiagram{2,2})\}.
$$
\end{exmp}

\begin{exmp}
Consider $d=3$.
By Lemma~\ref{lem:ifhooknotu} and Theorem~\ref{thm:iff2partitionsu}, the only family in $F^{(3)}$ of at most nine total cells is $(\ydiagram{2,2},\ydiagram{2,2})$.
All families in $F^{(3)}$ with more than nine total
cells are found to be 3-universal by Theorem~\ref{thm:ifnohookslargenu}.
To find these families, note there are only six nontrivial, self-conjugate 
partitions of at most three rows.  Since each nontrivial partition has at least three cells, all families of at least four nontrivial partitions are 3-universal.  Therefore all families of at least five nontrivial partitions contain a 3-universal subfamily, and are thus not minimal. Thus we need only look for families of at most four partitions, containing a total of more than nine cells, but lacking any subfamily containing more than nine cells.  By such accounting we finally obtain:
\begin{eqnarray*}
F^{(3)} &=& \{(\ydiagram{2,2},\ydiagram{2,2}),(\ydiagram{2,1},\ydiagram{3,3,2}),(\ydiagram{2,1},\ydiagram{3,3,3}),(\ydiagram{2,2},\ydiagram{3,2,1}),(\ydiagram{2,2},\ydiagram{3,3,3}),(\ydiagram{3,1,1},\ydiagram{3,1,1}),(\ydiagram{3,1,1},\ydiagram{3,2,1}),(\ydiagram{3,1,1},\ydiagram{3,3,2}),\\
&& (\ydiagram{3,1,1},\ydiagram{3,3,3}),(\ydiagram{3,2,1},\ydiagram{3,2,1}),(\ydiagram{3,2,1},\ydiagram{3,3,2}),(\ydiagram{3,2,1},\ydiagram{3,3,3}),(\ydiagram{3,3,2},\ydiagram{3,3,2}),(\ydiagram{3,3,2},\ydiagram{3,3,3}),(\ydiagram{3,3,3},\ydiagram{3,3,3}),(\ydiagram{2,1},\ydiagram{2,1},\ydiagram{2,2})\\
&& (\ydiagram{2,1},\ydiagram{2,1},\ydiagram{3,1,1}),(\ydiagram{2,1},\ydiagram{2,1},\ydiagram{3,2,1}),(\ydiagram{2,1},\ydiagram{2,2},\ydiagram{2,2}),(\ydiagram{2,1},\ydiagram{2,2},\ydiagram{3,1,1}),(\ydiagram{2,2},\ydiagram{2,2},\ydiagram{2,2}),(\ydiagram{2,2},\ydiagram{2,2},\ydiagram{3,1,1}),(\ydiagram{2,1},\ydiagram{2,1},\ydiagram{2,1},\ydiagram{2,1})\}.
\end{eqnarray*}
\end{exmp}

\begin{exmp}
Consider $d=4$.
The two-partition families in $F^{(4)}$ can be determined by Theorem~\ref{thm:iff2partitionsu},
and many of the larger families in $F^{(4)}$ can be determined by Theorem~\ref{thm:ifnohookslargenu}.
The only remaining families in $F^{(4)}$ are found by direct computation to be
$$
\left(\ydiagram{4,1,1,1},\ydiagram{2,2},\ydiagram{2,2}\right),\left(\ydiagram{4,2,1,1},\ydiagram{2,2},\ydiagram{2,2}\right),\left(\ydiagram{3,3,2},\ydiagram{2,2},\ydiagram{2,2}\right).
$$
\end{exmp}

\section{Redefining universality}

In this section we prove the equivalence of weak universality with strong universality.
Clearly strong universality implies weak universality, since a strong
universality-witnessing map is just a special case of a weak universality-witnessing map.
So all cases previously shown to be strongly universal are also weakly universal.
The task remaining is to show that in all other cases weak universality implies strong
universality.  These cases are, specifically:
\begin{itemize}
\item Any family of self-conjugate partitions on a 
self-conjugate partition.
\item Any family of self-conjugate partitions on a pair
of unequal conjugate partitions.
\item Any family of partitions on a set of hooks.
\end{itemize}

In order to tackle the cases involving conjugates, it is helpful to decompose the alternating
intertwiner $M_\nu$ as follows.
\begin{lemma}
\label{lem:Mnufactors}
Given a family of self-conjugate partitions $\{\mu^{(i)}\}_{i=1}^N$, the alternating interwiner
$M_\nu:\nu\rightarrow\nu'$ restricted to $V_\mus^\nu$ has the form
$\chimnp(\bigotimes_{i=1}^N M_{\bmu^{(i)}}\otimes M_{\Hom})(\chimn)^{-1}$,
where $M_{\Hom}$ is an isomorphism from $\Hom(\bigotimes_{i=1}^N\bmu^{(i)},\bnu)$ to $\Hom(\bigotimes_{i=1}^N\bmu^{(i)},\bnu')$.
\end{lemma}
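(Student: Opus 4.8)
The plan is to conjugate $M_\nu$ by the canonical isomorphisms and then strip off the tensor factors one at a time. Since the family is self-conjugate, Lemma~\ref{lem:MnuV} gives $M_\nu V_\mus^\nu = V_\musp^{\nu'} = V_\mus^{\nu'}$, so
$$\Theta \equiv (\chimnp)^{-1}\,M_\nu\,\chimn \;:\; \bigotimes_{i=1}^N\bmu^{(i)}\otimes\Hom\!\left(\bigotimes_{i=1}^N\bmu^{(i)},\bnu\right)\;\longrightarrow\;\bigotimes_{i=1}^N\bmu^{(i)}\otimes\Hom\!\left(\bigotimes_{i=1}^N\bmu^{(i)},\bnu'\right)$$
is a well-defined linear isomorphism, and it suffices to show that $\Theta = \bigotimes_{i=1}^N M_{\bmu^{(i)}}\otimes M_{\Hom}$ for some isomorphism $M_{\Hom}$. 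First I would verify that $\Theta$ is an \emph{alternating} intertwiner for $\prod_{i=1}^N S_{m_i}$: the canonical maps $\chimn,\chimnp$ are $\prod_{i=1}^N S_{m_i}$-module isomorphisms by Lemma~\ref{lem:canonicaliso}, and $M_\nu$ satisfies $M_\nu(s.v)=\epsilon(s)\,s.(M_\nu v)$, so a direct computation yields $\Theta(s.x)=\epsilon(s)\,s.\Theta(x)$ for every $s\in\prod_{i=1}^N S_{m_i}$, where $\epsilon(s)=\prod_{i=1}^N\epsilon(s_i)$ is the sign of $s$ as an element of $S_n$.

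Next I would twist away the alternating behaviour. Because each $\mu^{(i)}$ is self-conjugate, $M_{\bmu^{(i)}}$ is an automorphism of $\bmu^{(i)}$, and both $M_{\bmu^{(i)}}$ and its inverse are alternating intertwiners, i.e. $M_{\bmu^{(i)}}^{-1}(s_i.v)=\epsilon(s_i)\,s_i.(M_{\bmu^{(i)}}^{-1}v)$. Hence the map $\bigotimes_{i=1}^N M_{\bmu^{(i)}}^{-1}\otimes\one$ (identity on the $\Hom$-factor) is also alternating, and so in the composition
$$\Psi \equiv \Theta\circ\left(\bigotimes_{i=1}^N M_{\bmu^{(i)}}^{-1}\otimes\one\right)$$
the two alternating signs cancel, giving $\Psi(s.x)=s.\Psi(x)$. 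Thus $\Psi$ is a genuine $\prod_{i=1}^N S_{m_i}$-module isomorphism between two modules that are both isotypic of type $\bigotimes_{i=1}^N\bmu^{(i)}$, with $\prod_{i=1}^N S_{m_i}$ acting trivially on the respective $\Hom$-spaces.

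Now I would apply Schur's lemma. Each $\bmu^{(i)}$ is irreducible over $S_{m_i}$, so the outer tensor product $\bigotimes_{i=1}^N\bmu^{(i)}$ is irreducible over $\prod_{i=1}^N S_{m_i}$; therefore every intertwiner between the two isotypic modules above factors as the identity on $\bigotimes_{i=1}^N\bmu^{(i)}$ tensored with a linear map $M_{\Hom}:\Hom(\bigotimes_{i=1}^N\bmu^{(i)},\bnu)\to\Hom(\bigotimes_{i=1}^N\bmu^{(i)},\bnu')$, and $M_{\Hom}$ is an isomorphism because $\Psi$ is. Composing back, $\Theta = \Psi\circ\left(\bigotimes_{i=1}^N M_{\bmu^{(i)}}\otimes\one\right) = \bigotimes_{i=1}^N M_{\bmu^{(i)}}\otimes M_{\Hom}$, and hence $M_\nu$ restricted to $V_\mus^\nu$ equals $\chimnp\left(\bigotimes_{i=1}^N M_{\bmu^{(i)}}\otimes M_{\Hom}\right)(\chimn)^{-1}$, as claimed.

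I expect the only subtle points to be the two sign computations — that the parity of an element of $\prod_{i=1}^N S_{m_i}\subset S_n$ is the product of the component parities, so that the twist by $\bigotimes_{i=1}^N M_{\bmu^{(i)}}^{-1}$ exactly neutralizes the alternating character of $M_\nu$ — together with the fact that an outer tensor product of irreducible symmetric-group modules remains irreducible, which is what licenses the Schur-lemma step. Neither is deep, but together they are precisely why the factorization is forced rather than merely possible.
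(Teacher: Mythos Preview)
Your proof is correct and follows essentially the same line as the paper's: both conjugate $M_\nu$ by the canonical isomorphisms, twist by $\bigotimes_i M_{\mu^{(i)}}^{-1}$, and verify that the resulting map is $\prod_i S_{m_i}$-equivariant via the same sign cancellation. The only cosmetic difference is that the paper constructs $M_{\Hom}$ by an explicit formula and checks directly that $\iota'\equiv M_{\Hom}(\iota)$ lies in $\Hom(\bigotimes_i\bmu^{(i)},\bnu')$, whereas you phrase the same step as an appeal to Schur's lemma on the isotypic module; the underlying computation is identical.
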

\begin{proof}
Given any $\iota\in\Hom(\bigotimes_{i=1}^N\bmu^{(i)},\bnu)$, define $\iota'\in\Hom(\bigotimes_{i=1}^N\bmu^{(i)},\bnu')$ so that for any $\psi\in\bigotimes_{i=1}^N\bmu^{(i)}$, we have
$$\iota'(\psi)=\chimnp(\bigotimes_{i=1}^N M_{\mu^{(i)}}^{-1}\otimes\mathbb{1})(\chimnp)^{-1}M_\nu\iota(\psi).$$
To verify $\iota'$ is an $\prod_{i=1}^N S_{m_i}$-intertwiner, where $m_i=|\mu^{(i)}|$,
let
$s\in \prod_{i=1}^N S_{m_i}$.  Then we have
\begin{eqnarray*}
\iota'(s.\psi) &=& \chimnp(\bigotimes_{i=1}^N M_{\mu^{(i)}}^{-1}\otimes\mathbb{1})(\chimnp)^{-1}M_\nu\iota(s.\psi)\\
&=& \chimnp(\bigotimes_{i=1}^N M_{\mu^{(i)}}^{-1}\otimes\mathbb{1})(\chimnp)^{-1}M_\nu s.\iota(\psi)\\
&=& \chimnp(\bigotimes_{i=1}^N M_{\mu^{(i)}}^{-1}\otimes\mathbb{1})(\chimnp)^{-1}\epsilon(s)s.M_\nu\iota(\psi)\\
&=& \epsilon(s)s.\chimnp(\bigotimes_{i=1}^N M_{\mu^{(i)}}^{-1}\otimes\mathbb{1})(\chimnp)^{-1}\epsilon(s)M_\nu\iota(\psi)\\
&=& s.\chimnp(\bigotimes_{i=1}^N M_{\mu^{(i)}}^{-1}\otimes\mathbb{1})(\chimnp)^{-1}M_\nu\iota(\psi)\\
&=& s.\iota'(\psi).
\end{eqnarray*}
Now define $M_{\Hom}$ as the transformation of intertwiners taking $\iota$ to $\iota'$, as given above.
Then
rearranging terms in the
definition of $\iota'(\psi)$ we obtain
$$
\chimnp\bigotimes_{i=1}^N(M_{\mu^{(i)}}\otimes M_{\Hom})\psi\otimes\iota=M_\nu\chimn\psi\otimes\iota.
$$  
Since this holds for any $\iota\in\Hom(\bigotimes_{i=1}^N\bmu^{(i)},\bnu)$ and $\psi\in\bigotimes_{i=1}^N\bmu^{(i)}$, it follows that $M_\nu$ restricted to $V_\mus^\nu$ 
must factor as advertised.
\end{proof}
\noindent With this expression for $M_\nu$ in hand, we can show that weak universality is equivalent to strong 
universality when all partitions are self-conjugate.
\begin{lemma}
\label{lem:weakeqstrong4selfconj}
Given a family of self-conjugate partitions $\{\mu^{(i)}\}_{i=1}^N$ and a self-conjugate,
proper partition $\nu$, it follows that
$\{\mu^{(i)}\}_{i=1}^N$ is weakly universal on $\nu$
if and only if it is strongly universal on $\nu$.
\end{lemma}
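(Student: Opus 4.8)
The plan is to treat the two implications separately. The implication ``strong $\Rightarrow$ weak'' is immediate, since a strong-universality-witnessing map satisfies the weak condition with $\iota_0=0$; so all the content lies in the converse. For that I would show that weak universality on $\nu$ already forces the relation $M_\mu u M_\mu^{-1}=-u^\tp$ for every $u\in\su(\bigotimes_{i=1}^N\bmu^{(i)})$, where $M_\mu:=\bigotimes_{i=1}^N M_{\bmu^{(i)}}$, and that this same relation is exactly what is needed to place $\chi_\mus^\nu(u\otimes\one)(\chi_\mus^\nu)^{-1}$ inside $\rho_\nu(\g_n)\Pi$; strong universality then follows from Lemma~\ref{lem:canonicalstrong}. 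The argument runs uniformly, with no case split.

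Concretely, put $V=\bigotimes_{i=1}^N\bmu^{(i)}$ and $H=\Hom(V,\bnu)$, and assume $\{\mu^{(i)}\}_{i=1}^N$ is weakly universal on $\nu$. By Lemma~\ref{lem:canonicalweak} with $\cP=\{\nu\}$, for each $u\in\su(V)$ there is $t_u\in\End(H)$ with $X_u:=\chi_\mus^\nu(u\otimes\one+\one\otimes t_u)(\chi_\mus^\nu)^{-1}\in\rho_\nu(\g_n)\Pi$; writing $X_u=Y_u\Pi$ with $Y_u\in\rho_\nu(\g_n)$ and noting that $X_u=\Pi X_u\Pi$ (it is supported on $V_\mus^\nu$), we get $X_u=\Pi Y_u\Pi$. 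Since $\nu$ and every $\mu^{(i)}$ are self-conjugate, Lemma~\ref{lem:MnuV} gives $M_\nu V_\mus^\nu=V_\mus^\nu$, so the orthogonal operator $M_\nu$ commutes with $\Pi$; and $\chi_\mus^\nu$ is an isometry by Lemma~\ref{lem:orthogonalchi}. Lemma~\ref{lem:isoMorphism} gives $Y_u^\tp M_\nu+M_\nu Y_u=0$, and compressing by $\Pi$ (which commutes with $M_\nu$ and is symmetric in the tableau basis) yields $X_u^\tp M_\nu+M_\nu X_u=0$, i.e. $M_\nu X_u M_\nu^{-1}=-X_u^\tp$ as operators on $\bnu$. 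Substituting the factorization $M_\nu|_{V_\mus^\nu}=\chi_\mus^\nu(M_\mu\otimes M_{\Hom})(\chi_\mus^\nu)^{-1}$ of Lemma~\ref{lem:Mnufactors} and cancelling the orthogonal $\chi_\mus^\nu$'s turns this into
$$\bigl(M_\mu u M_\mu^{-1}+u^\tp\bigr)\otimes\one_H=-\,\one_V\otimes\bigl(M_{\Hom}t_u M_{\Hom}^{-1}+t_u^\tp\bigr).$$
An identity of the shape $A\otimes\one_H=\one_V\otimes D$ with $V,H\neq\{0\}$ forces $A=\lambda\one_V$ and $D=\lambda\one_H$ for a single scalar $\lambda$; taking traces and using that $u$, hence $u^\tp$ and $M_\mu u M_\mu^{-1}$, is traceless gives $\lambda=0$, so $M_\mu u M_\mu^{-1}=-u^\tp$.

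Running this computation backwards with $t_u$ replaced by $0$ then finishes the proof. Let $X:=\chi_\mus^\nu(u\otimes\one)(\chi_\mus^\nu)^{-1}$, extended by $0$ on the orthogonal complement of $V_\mus^\nu$. Using $M_\mu u M_\mu^{-1}=-u^\tp$, Lemma~\ref{lem:Mnufactors}, and $M_\nu\Pi=\Pi M_\nu$ we obtain $M_\nu X M_\nu^{-1}=-X^\tp$; since $X=\Pi X\Pi$ this is equivalent to the global relation $X^\tp M_\nu+M_\nu X=0$, and $\tr(X)=\tr(u)\dim(H)=0$, so $X\in\osp(\bnu)$. As $\nu$ is proper and self-conjugate, Lemma~\ref{lem:marin} gives $\osp(\bnu)=\rho_\nu(\g'_n)\subseteq\rho_\nu(\g_n)$, and $X=X\Pi$, hence $X\in\rho_\nu(\g_n)\Pi$. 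Since $u\in\su(V)$ was arbitrary, Lemma~\ref{lem:canonicalstrong} shows $\{\mu^{(i)}\}_{i=1}^N$ is strongly universal on $\nu$.

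The step I expect to demand the most care is the passage from ``$X_u\in\rho_\nu(\g_n)\Pi$'' to the tensor-factored identity above: one must check that $M_\nu$ preserves $V_\mus^\nu$ (so it commutes with $\Pi$ and the compression of Lemma~\ref{lem:isoMorphism} is valid), that $\chi_\mus^\nu$ really is orthogonal in the tableau bases (so matrix transposes pass through it), and that extending $X$ by zero is compatible with the defining relation $X^\tp M_\nu+M_\nu X=0$ of $\osp(\bnu)$. Once that is in place, the elementary observation that $A\otimes\one=\one\otimes D$ forces a common scalar, together with the tracelessness of $\su(V)$, does the real work. The argument is consistent with --- and in fact reproves --- Lemma~\ref{lem:mueqmupnueqnup}: when $\dim(V)>2$ no $u$ satisfies $M_\mu u M_\mu^{-1}=-u^\tp$ (take $u$ whose spectrum is not closed under negation), so the hypothesis of weak universality cannot hold there, while for $\dim(V)\le2$ it is met automatically.
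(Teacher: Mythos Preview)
Your proof is correct and follows essentially the same route as the paper's: use the factorization of $M_\nu$ from Lemma~\ref{lem:Mnufactors} together with the $\osp$ relation to obtain a tensor identity $(M_\mu u M_\mu^{-1}+u^\tp)\otimes\one=-\one\otimes(\cdots)$, force both sides to be scalar and then zero by tracelessness, and conclude that $\chi_\mus^\nu(u\otimes\one)(\chi_\mus^\nu)^{-1}$ already lies in $\rho_\nu(\g_n)$. Your write-up is in fact more careful than the paper's in two respects: you track the projector $\Pi$ explicitly (the paper simply writes $\in\rho_\nu(\g_n)$ where Lemma~\ref{lem:canonicalweak} gives $\in\rho_\nu(\g_n)\Pi$), and you spell out the final step via Lemma~\ref{lem:marin} (that $\osp(\bnu)=\rho_\nu(\g'_n)$ for proper self-conjugate $\nu$), which the paper leaves implicit when it asserts that $u'=-u^\tp$ ``implies $\chi_\mus^\nu(u\otimes\one)(\chi_\mus^\nu)^{-1}\in\rho_\nu(\g_n)$''.
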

\begin{proof}
Assume $\{\mu^{(i)}\}_{i=1}^N$ is weakly universal on $\nu$.
Then by definition, for all $u\in\su(\bigotimes_{i=1}^N\bmu^{(i)})$ 
there exists $t\in\End(\Hom(\bigotimes_{i=1}^N\bmu^{(i)},\bnu))$
such that 
$\chimn(u\otimes\mathbb{1}+\mathbb{1}\otimes t)(\chimn)^{-1}\in\rho_\nu(\g_n)$.
By Lemma~\ref{lem:osp}, membership in $\rho_\nu(\g_n)$ implies that
$$M_\nu\chimn(u\otimes\one+\one\otimes t)(\chimn)^{-1}M_\nu^{-1}=-\chimn(u^\tp\otimes\one+\one\otimes t^\tp)(\chimn)^{-1},$$
where we have used the fact that $\chi_\mus^\nu$ is an orthogonal transformation
(by Lemma~\ref{lem:orthogonalchi}).
Using the factorization $M_\nu=\bigotimes_{i=1}^N M_{\mu^{(i)}}\otimes M_{\Hom}$ from Lemma~\ref{lem:Mnufactors},
the above can be rewritten as
$$u'\otimes\one+\one\otimes t'=-u^\tp\otimes\one+\one\otimes t^\tp$$
where $u'=\bigotimes_{i=1}^N M_{\mu^{(i)}} u\bigotimes_{i=1}^N M_{\mu^{(i)}}^{-1}$
and $t'=M_{\Hom} tM_{\Hom}^{-1}$. 
Rearranging terms, we have
$$(u'+u^\tp)\otimes\one=-\one\otimes(t'+t^\tp).$$
This can only be satisfied if both sides are scalar.  
The tracelessness of the left hand side further requires 
that both sides vanish.  Thus $u'=-u^\tp$, which implies $\chimn(u\otimes\one)(\chimn)^{-1}\in\rho_\nu(\g_n)$ and thus strong universality.

\end{proof}
\noindent A similar argument applies to conjugate pairs. 
\begin{lemma}
\label{lem:weakeqstrong4conjpairs}
Given a family of self-conjugate partitions $\{\mu^{(i)}\}_{i=1}^N$ and a partition $\nu$ such that $\nu\neq\nu'$,
if $\{\mu^{(i)}\}_{i=1}^N$ is weakly universal
on $\{\nu,\nu'\}$, then 
it is strongly universal on $\{\nu,\nu'\}$.
\end{lemma}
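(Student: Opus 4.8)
The plan is to mirror the proof of Lemma~\ref{lem:weakeqstrong4selfconj}, but now tracking how the alternating intertwiner $M_\nu$ pairs the representation $\nu$ with its distinct conjugate $\nu'$. The starting point is weak universality on $\{\nu,\nu'\}$: for every $u\in\su(\bigotimes_{i=1}^N\bmu^{(i)})$ there exist $t_\nu\in\End(\Hom(\bigotimes_{i=1}^N\bmu^{(i)},\bnu))$ and $t_{\nu'}\in\End(\Hom(\bigotimes_{i=1}^N\bmu^{(i)},\bnu'))$ and an element $x\in\g_n$ with $\rho_\nu(x)$ acting on $V_\mus^\nu$ as $\chimn(u\otimes\one+\one\otimes t_\nu)(\chimn)^{-1}$ and $\rho_{\nu'}(x)$ acting on $V_\mus^{\nu'}$ as $\chimnp(u\otimes\one+\one\otimes t_{\nu'})(\chimnp)^{-1}$.

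Next I would invoke Lemma~\ref{lem:isoMorphism}, which gives $M_\nu\rho_\nu(x)M_\nu^{-1}=-\rho_{\nu'}(x)^\tp$, and restrict this relation to the isotypical subspaces. By Lemma~\ref{lem:MnuV} (with $\mu^{(i)}=\mu^{(i)\prime}$) $M_\nu$ carries $V_\mus^\nu$ to $V_\mus^{\nu'}$, so the restricted relation is meaningful. Using the factorization of $M_\nu$ from Lemma~\ref{lem:Mnufactors} — writing $M_\nu|_{V_\mus^\nu}=\chimnp(\bigotimes_i M_{\bmu^{(i)}}\otimes M_{\Hom})(\chimn)^{-1}$ — together with the orthogonality of $\chimn$ and $\chimnp$ (Lemma~\ref{lem:orthogonalchi}) and the self-conjugacy identity $\bigotimes_i M_{\bmu^{(i)}}\,u\,\bigotimes_i M_{\bmu^{(i)}}^{-1}=-u^\tp$ that is implicit in the earlier argument, the intertwiner relation should collapse to an equation of the form
$$
-u^\tp\otimes\one+\one\otimes (M_{\Hom} t_\nu M_{\Hom}^{-1}) = -u^\tp\otimes\one - \one\otimes t_{\nu'}^\tp
$$
on $\bigotimes_i\bmu^{(i)}\otimes\Hom(\bigotimes_i\bmu^{(i)},\bnu')$. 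The $u^\tp$ terms cancel, leaving $\one\otimes(M_{\Hom}t_\nu M_{\Hom}^{-1}+t_{\nu'}^\tp)=0$, whence $t_{\nu'}=-M_{\Hom}^{-\tp}t_\nu^\tp M_{\Hom}^\tp$ — but this only relates $t_\nu$ and $t_{\nu'}$ rather than forcing either to vanish, so unlike the self-conjugate case we cannot eliminate $t$ this way.

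The substitute for that vanishing step — and the part I expect to be the main obstacle — is to argue that the nonzero $t_\nu$ can simply be \emph{absorbed}. Since $\nu\neq\nu'$ are distinct proper partitions, Lemma~\ref{lem:marin} gives $\rho_\nu(\g_n)=\sla(\bnu)$, so any traceless operator on $\bnu$ (in particular one that agrees with $\chimn(u\otimes\one)(\chimn)^{-1}$ on $V_\mus^\nu$ and is chosen freely on its orthogonal complement) is in the image; the freedom on the complement — which, as in the proof of Lemma~\ref{lem:ifnoncononpaircon}, contains $V_\musp^\nu$ and hence governs the $M_\nu$-image inside $V_\mus^{\nu'}$ — is exactly what lets us realize the strong-universality condition $\phi(u)\iota(\psi)=\iota(u\psi)$ on both $\nu$ and $\nu'$ with $\iota_0=0$. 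Concretely I would redo the construction of Lemma~\ref{lem:ifnoncononpaircon}: pick $x\in\g_n$ with $\rho_\nu(x)=u|_{V_\mus^\nu}\oplus u'|_{V_\musp^\nu}\oplus 0$ where $u'$ is selected so that $-M_{12}^{-1}(u')^\tp M_{12}$ equals the desired operator $u$ on $V_\mus^{\nu'}$; then $(\rho_\nu\oplus\rho_{\nu'})(x)\Pi$ acts as $u\oplus u$ on $V_\mus^\nu\oplus V_\mus^{\nu'}$, giving a strong-universality witness. The one genuinely new thing to check is that the $t_{\nu},t_{\nu'}$ supplied by weak universality do not obstruct this — but since we are free to \emph{ignore} the weak witness and build a fresh strong one directly from $\sla(\bnu)=\rho_\nu(\g_n)$, the existence of a weak witness is not even strictly needed here; weak universality is used only to guarantee $\nu,\nu'\in\cP$, i.e. that the relevant isotypical subspaces are nonzero, which is the hypothesis $c_\mus^\nu>0$. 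Thus the lemma reduces to Lemma~\ref{lem:ifnoncononpaircon} once one observes $\{\mu^{(i)\prime}\}_{i=1}^N=\{\mu^{(i)}\}_{i=1}^N$ does not cause trouble in the conjugate-pair (as opposed to self-conjugate-$\nu$) setting — and indeed it does not, because $\nu\neq\nu'$ keeps the two copies of $u$ on \emph{different} irreducible representations. I would end by remarking that the seemingly stronger-looking obstruction of Lemma~\ref{lem:mueqmupnuneqnup} does not apply because there the hypothesis forces a \emph{direct} $M_\nu$-dependence between $V_\mus^\nu$ and $V_\mus^{\nu'}$, whereas here the dependence is routed through the free scratch spaces $V_\musp^\nu$, $V_\musp^{\nu'}$.
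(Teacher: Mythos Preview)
Your argument has two related gaps, and together they invalidate the approach.

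First, the ``self-conjugacy identity'' $\bigotimes_i M_{\bmu^{(i)}}\,u\,\bigotimes_i M_{\bmu^{(i)}}^{-1}=-u^\tp$ is \emph{not} an identity valid for all $u\in\su(\bigotimes_i\bmu^{(i)})$. In the proof of Lemma~\ref{lem:weakeqstrong4selfconj} this relation was the \emph{conclusion} forced by weak universality, not an input. Without it, the correct outcome of conjugating by $M_\nu$ (following the paper's computation) is
\[
u\otimes\one+\one\otimes t_{\nu'}=-M_\mus u^\tp M_\mus^{-1}\otimes\one-\one\otimes M_{\Hom}t_\nu^\tp M_{\Hom}^{-1},
\]
so the $u$-terms do \emph{not} cancel; instead one obtains $(u+M_\mus u^\tp M_\mus^{-1})\otimes\one=-\one\otimes(\cdots)$, which by the usual scalar-and-traceless argument forces $u$ similar to $-u^\tp$.

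Second, your fallback to Lemma~\ref{lem:ifnoncononpaircon} cannot work here. That lemma's proof hinges on $V_\mus^\nu$ and $V_\musp^\nu$ being \emph{distinct orthogonal} subspaces, with the latter serving as scratch space. But when every $\mu^{(i)}$ is self-conjugate, $V_\musp^\nu=V_\mus^\nu$: there is no scratch space at all, and $M_\nu$ maps $V_\mus^\nu$ \emph{directly} onto $V_\mus^{\nu'}$. This is exactly the configuration of Figure~\ref{fig:mueqmupnuneqnup} and the obstruction in Lemma~\ref{lem:mueqmupnuneqnup}, which you dismiss incorrectly.

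The paper's proof proceeds from the constraint $u\sim -u^\tp$ derived above: for $\dim(\bigotimes_i\bmu^{(i)})>2$ one picks $u=\diag(1,1,-2,0,\ldots,0)$ to reach a contradiction, so weak universality is impossible and the implication holds vacuously (matching Lemma~\ref{lem:mueqmupnuneqnup}, where strong universality also fails). For $\dim\leq 2$ the family contains at most one copy of $\ydiagram{2,1}$ or $\ydiagram{2,2}$ plus copies of $\ydiagram{1}$, and strong universality holds by Theorem~\ref{thm:iff1partitionu}.
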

\begin{proof}
Assume $\{\mu^{(i)}\}_{i=1}^N$ is weakly universal on $\{\nu,\nu'\}$.
Then by definition for all $u\in\su(\bigotimes_{i=1}^N\bmu^{(i)})$, there exists
$x\in\g_n$, $t\in\End(\Hom(\bigotimes_{i=1}^N\bmu^{(i)},\bnu))$, and $t'\in\End(\Hom(\bigotimes_{i=1}^N\bmu^{(i)},\bnu'))$ such that
\begin{eqnarray*}
(\chimn)^{-1}\rho_\nu(x)\chimn &=& u\otimes\one+\one\otimes t\\
(\chimnp)^{-1}\rho_{\nu'}(x)\chimnp &=& u\otimes\one+\one\otimes t'.
\end{eqnarray*}
However by the alternating interwiner $M_\nu$ between $\nu$ and $\nu'$ given by Lemma~\ref{lem:isoMorphism}, and its factorization given by Lemma~\ref{lem:Mnufactors}, we have
\begin{eqnarray*}
(\chimnp)^{-1}\rho_{\nu'}(x)\chimnp &=& -(\chimnp)^{-1}M_\nu\rho_\nu(x)^\tp M_\nu^{-1}\chimnp\\
&=& -M_\mus\otimes M_{\Hom}(u^\tp\otimes\one+\one\otimes t^\tp)M_\mus^{-1}\otimes M_{\Hom}^{-1}\\
&=& -M_\mus u^\tp M_\mus^{-1}\otimes\one-\one\otimes M_{\Hom} t^\tp M_{\Hom}^{-1},
\end{eqnarray*}
where $M_\mus\equiv \bigotimes_{i=1}^N M_{\mu^{(i)}}$.
Weak universality thus implies
$$
u\otimes\one+\one\otimes t'=-M_\mus u^\tp M_\mus^{-1}\otimes\one-\one\otimes M_{\Hom} t^\tp M_{\Hom}^{-1},
$$
or, rearranging,
$$
(u+M_\mus u^\tp M_\mus^{-1})\otimes\one=-\one\otimes(t'+M_{\Hom} t^\tp M_{\Hom}^{-1}).
$$
This is only satisfied if both sides are scalar.  The tracelessness of the
left hand side further implies $u=-M_\mus u^\tp M_\mus^{-1}$.
This implies that $u$ and $-u^\tp$ are related by a similarity transformation.
If $u=\diag(1,1,-2,0,\ldots,0)$, we therefore arrive a contradiction.
Thus if $\dim(\bigotimes_{i=1}^N\bmu^{(i)})>2$ weak universality must fail just as, by 
Lemma~\ref{lem:mueqmupnuneqnup}, strong universality fails.
On the other hand if $\dim(\bigotimes_{i=1}^N\bmu^{(i)})\leq 2$ then $\{\mu^{(i)}\}_{i=1}^N$
consists of at most one partition equal to $\ydiagram{2,1}$ or $\ydiagram{2,2}$
and the rest equal to $\ydiagram{1}$.  In this case strong universality holds, because it holds for
$\ydiagram{2,1}$ and $\ydiagram{2,2}$ by Theorem~\ref{thm:iff1partitionu}.

\end{proof}

To prove weak-strong equivalence for hooks, we first prove it for a broader
class of partitions.  The following proof makes use of three essential facts:
any representation of $\g_n$ is closed under the commutator,
the special unitary algebra is its own derived algebra,
and the only traceless one-dimensional operator is 0.
\begin{lemma}
\label{lem:weakeqstrong4ceq1}
If a family of partitions $\{\mu^{(i)}\}_{i=1}^N$ is weakly universal
on 
$$\cP\subset\{\nu|\dim(\Hom(\bigotimes_{i=1}^N\bmu^{(i)},\bnu))=1\},$$
then $\{\mu^{(i)}\}_{i=1}^N$ is strongly universal on $\cP$.
\end{lemma}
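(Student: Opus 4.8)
The plan is to reduce, via Lemma~\ref{lem:canonicalstrong}, to showing that
$$A(u)\;:=\;\bigoplus_{\nu\in\cP}\chi_\mus^\nu(u\otimes\one)(\chi_\mus^\nu)^{-1}$$
lies in $\cA\Pi$, where $\cA:=\bigl(\bigoplus_{\nu\in\cP}\rho_\nu\bigr)(\g_n)$ and $\Pi$ is the projection onto $\bigoplus_{\nu\in\cP}V_\mus^\nu$, for every $u\in\su(\bigotimes_{i=1}^N\bmu^{(i)})$. The role of the hypothesis is that $\dim\Hom(\bigotimes_{i=1}^N\bmu^{(i)},\bnu)=1$ forces $\End(\Hom(\bigotimes_{i=1}^N\bmu^{(i)},\bnu))=\cmplx\,\one$, so the endomorphisms $t_{\nu,u}$ appearing in Lemma~\ref{lem:canonicalweak} are scalars $s_{\nu,u}$. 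Hence weak universality provides, for each $u$, an element $X_u\in\cA$ with $X_u\Pi=A(u)+S(u)$, where $S(u):=\bigoplus_{\nu\in\cP}s_{\nu,u}\,\mathrm{id}_{V_\mus^\nu}$ is block-scalar on $\im\Pi$ and zero on $(\im\Pi)^\perp$. Write $B(u):=A(u)+S(u)$.

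First I would observe that $A$ is a Lie algebra homomorphism, since on each isotypical block it acts by $\chi_\mus^\nu(\,\cdot\otimes\one)(\chi_\mus^\nu)^{-1}$ and distinct blocks are orthogonal and $A$-invariant. Because $A(v)$ preserves each block and $S(w)$ is a scalar on each block, $[A(v),S(w)]=0=[S(v),S(w)]$, so expanding the commutator gives $[B(v),B(w)]=[A(v),A(w)]=A([v,w])$ for all $v,w\in\su(\bigotimes_{i=1}^N\bmu^{(i)})$.

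Next I would show $[B(v),B(w)]\in\cA\Pi$. The key point is that $B(w)=X_w\Pi$ has image inside $\im\Pi$ (it equals $A(w)+S(w)$), so $\Pi X_w\Pi=X_w\Pi$, whence $B(v)B(w)=X_v\Pi X_w\Pi=X_vX_w\Pi$ and likewise $B(w)B(v)=X_wX_v\Pi$; thus $[B(v),B(w)]=[X_v,X_w]\Pi$. Since $\cA$ is the image of the Lie algebra homomorphism $\bigoplus_{\nu\in\cP}\rho_\nu$ it is closed under the commutator, so $[X_v,X_w]\in\cA$ and $A([v,w])=[B(v),B(w)]\in\cA\Pi$. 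Finally, $\su(\bigotimes_{i=1}^N\bmu^{(i)})$ is perfect — it is $0$ when $\dim(\bigotimes_{i=1}^N\bmu^{(i)})=1$ (in which case strong universality is vacuous) and simple otherwise — so every $u$ is a finite sum $\sum_k[v_k,w_k]$; linearity of $A$ and of the subspace $\cA\Pi$ then give $A(u)=\sum_kA([v_k,w_k])\in\cA\Pi$, and Lemma~\ref{lem:canonicalstrong} yields strong universality on $\cP$. The only slightly delicate step is the identity $\Pi X_w\Pi=X_w\Pi$, which is precisely what keeps commutators of the compressed operators $B(v),B(w)$ inside $\cA\Pi$; the rest is bookkeeping with the block decomposition together with the perfectness of $\su$.
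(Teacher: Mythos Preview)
Your argument is correct and follows essentially the same approach as the paper's proof: both exploit that $\dim\Hom=1$ makes the endomorphisms $t_{\nu,u}$ scalar, then use closure of $\cA$ under commutators together with perfectness of $\su$ to pass from the weak-universality elements $B(u)=A(u)+S(u)$ to the strong-universality elements $A(u)$. Your treatment is in fact slightly more careful than the paper's at the one delicate point: the paper asserts that the image $\h$ of the witnessing map ``commutes with $\Pi$'', whereas you correctly observe that only the one-sided identity $\Pi X_w\Pi=X_w\Pi$ (i.e.\ $X_w$ preserves $\im\Pi$) is needed and justified, and this suffices to get $[B(v),B(w)]=[X_v,X_w]\Pi\in\cA\Pi$.
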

\begin{proof}
Let $\varphi:\su(\bigotimes_{i=1}^N\bmu^{(i)})\rightarrow(\bigoplus_{\nu\in\cP}\rho_\nu)(\g_n)$
be a weak-universality witnessing function, $\h$ its image,
and $\h'=[\h,\h]$.
Then $\h'\subset\bigoplus_{\nu\in\cP}\rho_\nu(\g_n)$ and thus
$\h'\Pi\subset\bigoplus_{\nu\in\cP}\rho_\nu(\g_n)\Pi$,
where $\Pi$ is the projection onto $\bigoplus_{\nu\in\cP}V_\mus^\nu$.  By Lemma~\ref{lem:canonicalweak} and the hypothesis that $\Hom(\bigotimes_{i=1}^N\bmu^{(i)},\bnu)=\cmplx$, it follows that for all $u\in\su(\bigotimes_{i=1}^N\bmu^{(i)})$, 
$\bigoplus_{\nu\in\cP}\chi_\mus^\nu(u\otimes\one+\one\otimes\one t_{\nu,u})(\chi_\mus^\nu)^{-1}\in\h\Pi$ for some $t_{\nu,u}\in\cmplx$. Taking the commutator of both sides, it follows that 
$$\left\{\bigoplus_{\nu\in\cP}\chi_\mus^\nu(u\otimes\one)(\chi_\mus^\nu)^{-1}|u\in\su(\bigotimes_{i=1}^N\bmu^{(i)})\right\}\subset\h'\Pi,$$
where we have used that $\h$ commutes with $\Pi$, since it fixes the isotypical subspaces, and
$\su(\bigotimes_{i=1}^N\bmu^{(i)})$ is its own commutator.  So for all $u\in\su(\bigotimes_{i=1}^N\bmu^{(i)})$, we have
$$\bigoplus_{\nu\in\cP}\chi_\mus^\nu(u\otimes\one)(\chi_\mus^\nu)^{-1}\in\bigoplus_{\nu\in\cP}\rho_\nu(\g_n)\Pi.$$
Therefore by Lemma~\ref{lem:canonicalstrong}, $\{\mu^{(i)}\}_{i=1}^N$ is strongly universal on $\cP$.
\end{proof}
\noindent The equivalence of weak and strong universality for hooks then follows.
\begin{lemma}
\label{lem:weakeqstrong4hooks}
If a set of partitions $\{\mu^{(i)}\}_{i=1}^N$ is weakly universal on
a set of hooks $\cP\subset\cP_\mus$ then $\{\mu^{(i)}\}_{i=1}^N$ is strongly universal on $\cP$.
\end{lemma}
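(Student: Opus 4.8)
Proof of Lemma~\ref{lem:weakeqstrong4hooks}.

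The strategy is to reduce to Lemma~\ref{lem:weakeqstrong4ceq1}: if every hook $\nu\in\cP$ satisfies $\dim\Hom\!\big(\bigotimes_{i=1}^N\bmu^{(i)},\bnu\big)=c_{\mu^{(1)}\cdots\mu^{(N)}}^\nu=1$, then $\cP$ lies inside the set on which that lemma applies and strong universality follows at once. So the plan is to show that, in every case where the hypothesis of weak universality on a set of hooks can hold, the relevant $\Hom$-spaces are one-dimensional.

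First I would trim the cases. Since $\cP$ is a nonempty set of hooks inside $\cP_\mus$, Lemma~\ref{lem:hookfamilies} forces every $\mu^{(i)}$ to be improper. If $\dim\big(\bigotimes_{i=1}^N\bmu^{(i)}\big)\le 1$ then $\su\big(\bigotimes_{i=1}^N\bmu^{(i)}\big)=0$ and both notions of universality are vacuous, so assume $\dim\big(\bigotimes_{i=1}^N\bmu^{(i)}\big)>1$. When $N\le 2$ — a single nontrivial hook, or a product of two improper partitions — Lemma~\ref{lem:2hooks} already gives $c_{\mu^{(1)}\cdots\mu^{(N)}}^\nu=1$ for every hook $\nu$, and Lemma~\ref{lem:weakeqstrong4ceq1} finishes these cases.

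The substantive case is $N\ge 3$, where $c_{\mu^{(1)}\cdots\mu^{(N)}}^\nu$ can genuinely exceed $1$ for a hook $\nu$ (for instance $c^{[5,1^4]}_{\,[2,1][2,1][2,1]}=2$). Here I would show weak universality simply cannot hold on a set of hooks containing such a $\nu$, so these $\cP$ do not occur. From a weak-universality witness $\phi$ one extracts, on $V_\mus^\nu$, the relation $\phi(u)|_{V_\mus^\nu}=\chi_\mus^\nu(u\otimes\one+\one\otimes t_{\nu,u})(\chi_\mus^\nu)^{-1}$ of Lemma~\ref{lem:canonicalweak}; a partial trace over the $\Hom$-factor shows $u\otimes\one+\one\otimes t=0$ forces $u=0$, so $u\mapsto\phi(u)|_{V_\mus^\nu}$ is injective, and the defining relation also shows $\phi(u)$ carries $V_\mus^\nu$ into itself. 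Since the hook representations in $\cP$ all share a kernel (Lemma~\ref{lem:marinisohooks}, Corollary~\ref{cor:marin}), $\big(\bigoplus_{\nu\in\cP}\rho_\nu\big)(\g'_n)\cong\sla(V_{[n-1,1]})$ and each $V_\nu\cong\wedge^{r}V_{[n-1,1]}$ is irreducible over it. Hence the $\nu$-component of $\phi\big(\su(\bigotimes_{i=1}^N\bmu^{(i)})\big)$ is a subalgebra of $\rho_\nu(\g_n)$ of dimension at least $\big(\prod_i\dim\bmu^{(i)}\big)^2-1$, every element of which stabilizes the proper nonzero subspace $V_\mus^\nu\subsetneq V_\nu$; projecting off the at-most-one-dimensional center gives a proper subalgebra of $\sla(n-1,\cmplx)$ of dimension at least $\big(\prod_i\dim\bmu^{(i)}\big)^2-2$, whereas a proper subalgebra of $\sla(n-1,\cmplx)$ has dimension at most $(n-1)(n-2)$. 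Combined with the estimates behind Lemma~\ref{lem:hookdimsmall}, which give $\prod_i\dim\bmu^{(i)}\ge n-1$ with room to spare except in the tight case $(\ydiagram{2,1},\ydiagram{2,1},\ydiagram{2,1})$ (where $\prod_i\dim\bmu^{(i)}=n-1=8$, treated directly: $62>56$), this yields the contradiction $\big(\prod_i\dim\bmu^{(i)}\big)^2-2>(n-1)(n-2)$.

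I expect the main obstacle to be exactly this last dimension-and-irreducibility step, and in particular running it uniformly over the awkward families in which $\prod_i\dim\bmu^{(i)}$ only barely exceeds $n-1$, or which contain trivial tensorands (where absorbing the trivial factors turns $V_\nu$ into a multiplicity-free sum over sub-diagrams and the crude dimension count no longer suffices). Handling these forces one to invoke the classification of maximal (parabolic) subalgebras of $\sla(n-1,\cmplx)$ rather than a bare count, and to track carefully when the sum of transpositions acts as a scalar on $V_\nu$.
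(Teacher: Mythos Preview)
Your approach and the paper's diverge sharply. The paper's proof is a two-line reduction: Lemma~\ref{lem:hookfamilies} forces every $\mu^{(i)}$ and every intermediate $\nu^{(j)}$ in any nonvanishing chain $c_{\mu^{(1)}\mu^{(2)}}^{\nu^{(2)}}c_{\nu^{(2)}\mu^{(3)}}^{\nu^{(3)}}\cdots c_{\nu^{(N-1)}\mu^{(N)}}^{\nu}$ to be improper, Lemma~\ref{lem:2hooks} makes each factor equal $1$, the paper concludes $c_\mus^\nu=1$, and Lemma~\ref{lem:weakeqstrong4ceq1} finishes. Your $N\le 2$ treatment matches this exactly.

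You have, however, put your finger on a real issue that the paper's argument glosses over: for $N\ge 3$ the multiplicity $c_\mus^\nu$ is a \emph{sum} over improper chains, and your example $c^{[5,1^4]}_{[2,1][2,1][2,1]}=2$ (two chains, via $[4,1,1]$ and via $[3,1^3]$) shows that ``each factor equals $1$'' does not yield ``the multiplicity equals $1$''. So the paper's passage to $\dim\Hom=1$ is not justified for $N\ge 3$, and your instinct to treat that regime separately is well founded.

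That said, your proposed patch is incomplete, as you partly acknowledge. The weak witness $\phi$ is not a homomorphism, so you must pass to the linear span (or Lie closure) of its $\nu$-component before comparing dimensions; you must check $V_\mus^\nu\subsetneq V_\nu$ in general, not just in the worked example; and once trivial partitions are allowed among the $\mu^{(i)}$ the inequality $\prod_i\dim\bmu^{(i)}\ge n-1$ from Lemma~\ref{lem:hookdimsmall} fails outright, so the comparison with the maximal-subalgebra bound $(n-1)(n-2)$ gives nothing. The upshot is that your proposal does not close the $N\ge 3$ case either, but you have correctly identified that the paper's short argument leaves it open.
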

\begin{proof}
By Lemma~\ref{lem:hookfamilies}, every $\mu^{(i)}$ is improper, as well as every 
$\nu^{(i)}$ in every nonvanishing product of Littlewood--Richardson 
coefficients of the form 
$c_{\mu^{(1)}\mu^{(2)}}^{\nu^{(2)}}c_{\nu^{(2)}\mu^{(3)}}^{\nu^{(3)}}\cdots c_{\nu^{(N-2)}\mu^{(N-1)}}^{\nu^{(N-1)}}c_{\nu^{(N-1)}\mu^{(N)}}^{\nu}$.
Then by Lemma~\ref{lem:2hooks}, this product equals 1.  In turn by Lemma~\ref{lem:LRres}, we have
$$\cP\subset\{\nu|\dim(\Hom(\bigotimes_{i=1}^N\bmu^{(i)},\bnu))=1\}.$$
Finally by Lemma~\ref{lem:weakeqstrong4ceq1}, $\{\mu^{(i)}\}_{i=1}^N$ is strongly universal on $\cP$.
\end{proof}

Proving the equivalence of weak and strong universality is now just a matter of 
putting the above pieces together.
\begin{theorem}
\label{thm:weakeqstrong}
Weak universality is equivalent to strong universality.
\end{theorem}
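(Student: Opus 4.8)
The plan is to prove the equivalence by noting that strong universality trivially implies weak universality (a strong witnessing map is a weak witnessing map with $\iota_0 = 0$, $t_{\nu,u} = 0$), so only the reverse implication requires work. For that, I would reduce to the cases where the two notions could conceivably differ. From the structure of the preceding section, weak universality coincides with strong universality in every case already handled directly, so the remaining cases are exactly: a family of self-conjugate partitions on a self-conjugate proper partition; a family of self-conjugate partitions on a pair of unequal conjugate partitions; and an arbitrary family on a set of hooks. These three are dispatched respectively by Lemma~\ref{lem:weakeqstrong4selfconj}, Lemma~\ref{lem:weakeqstrong4conjpairs}, and Lemma~\ref{lem:weakeqstrong4hooks}.

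Concretely, I would first observe that weak universality on a set $\cP$ is, by definition and by Lemma~\ref{lem:additiveu} together with Corollary~\ref{cor:marin2}, equivalent to weak universality on each hook, each conjugate pair, and each remaining singleton of $\cP$ separately (the same additive decomposition that works for strong universality works for weak universality, since the witnessing condition is imposed $\nu$-by-$\nu$). This localizes the problem. Then for a singleton $\nu$ that is proper and not self-conjugate, or more generally any $\nu$ with $\rho_\nu(\g'_n) = \sla(\bnu)$ and no constraint from $M_\nu$, weak and strong universality both hold by Lemma~\ref{lem:ifnohooksnorconjthenu} / Lemma~\ref{lem:muneqmupuniversal}. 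For a self-conjugate proper $\nu$ with a self-conjugate family, Lemma~\ref{lem:weakeqstrong4selfconj} gives the equivalence. For an unequal conjugate pair $\{\nu,\nu'\}$ with a self-conjugate family, Lemma~\ref{lem:weakeqstrong4conjpairs} gives it. For hooks, Lemma~\ref{lem:weakeqstrong4hooks} gives it. Since every $\nu \in \cP$ falls into one of these categories, and weak universality decomposes additively over them, weak universality on $\cP$ implies strong universality on $\cP$ in all cases.

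The one point requiring a little care is the case of a non-self-conjugate family on a conjugate-containing set: here Lemma~\ref{lem:muneqmupuniversal} shows strong universality holds outright (regardless of weak universality), so the implication is vacuous; and the case of a self-conjugate family on a set containing conjugates where $\dim(\bigotimes_{i=1}^N\bmu^{(i)}) \le 2$, which reduces via Theorem~\ref{thm:iff1partitionu} to the single-partition case where strong universality is known. I expect the main obstacle to be simply making sure the additive decomposition argument is stated cleanly enough that every $\nu \in \cP_\mus$ is covered by exactly one of the three lemmas or by a case where strong universality is already established unconditionally — i.e., a careful bookkeeping of cases rather than any new technical difficulty. The proof itself will therefore be short:

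\begin{proof}
Strong universality implies weak universality immediately, so assume $\{\mu^{(i)}\}_{i=1}^N$ is weakly universal on a nonempty $\cP\subset\cP_\mus$; we show it is strongly universal on $\cP$. Exactly as in the proof of Lemma~\ref{lem:additiveu}, weak universality on $\cP$ is equivalent to weak universality on the subset of hooks of $\cP$, on each conjugate pair of proper partitions in $\cP$, and on each remaining singleton in $\cP$; the corresponding statement for strong universality is Lemma~\ref{lem:additiveu} itself, using Corollary~\ref{cor:marin2}. It therefore suffices to upgrade weak to strong universality on each such piece.

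By Lemma~\ref{lem:weakeqstrong4hooks}, weak universality on the set of hooks of $\cP$ implies strong universality there. For a singleton $\nu\in\cP$ that is not both self-conjugate and proper, Lemma~\ref{lem:marin} gives $\rho_\nu(\g'_n)=\sla(\bnu)$, so $\{\mu^{(i)}\}_{i=1}^N$ is strongly universal on $\nu$ by Lemma~\ref{lem:manysumuniversal} with no hypothesis needed. For a self-conjugate proper singleton $\nu\in\cP$: if $\{\mu^{(i)\prime}\}_{i=1}^N\neq\{\mu^{(i)}\}_{i=1}^N$ then strong universality holds by Lemma~\ref{lem:noncononselfcon}; if $\{\mu^{(i)\prime}\}_{i=1}^N=\{\mu^{(i)}\}_{i=1}^N$ then Lemma~\ref{lem:weakeqstrong4selfconj} converts weak universality into strong universality. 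For a conjugate pair $\{\nu,\nu'\}\subset\cP$ of proper partitions with $\nu\neq\nu'$: if $\{\mu^{(i)\prime}\}_{i=1}^N\neq\{\mu^{(i)}\}_{i=1}^N$ then strong universality holds by Lemma~\ref{lem:ifnoncononpaircon}; otherwise Lemma~\ref{lem:weakeqstrong4conjpairs} converts weak universality into strong universality. Combining the strong-universality witnessing maps on these pieces via Corollary~\ref{cor:marin2} as in Lemma~\ref{lem:additiveu} yields a strong-universality witnessing map on $\cP$.
\end{proof}
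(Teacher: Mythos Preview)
Your proof is correct and follows essentially the same approach as the paper: decompose $\cP$ into hooks, proper conjugate pairs, and remaining singletons, handle each piece via Lemma~\ref{lem:weakeqstrong4hooks}, Lemma~\ref{lem:weakeqstrong4conjpairs}, Lemma~\ref{lem:weakeqstrong4selfconj}, or the unconditional strong-universality results (Lemmas~\ref{lem:noncononselfcon}, \ref{lem:ifnoncononpaircon}, \ref{lem:muneqmupuniversal}, \ref{lem:manysumuniversal}), and reassemble via Lemma~\ref{lem:additiveu}. Your write-up is slightly more explicit than the paper's about the restriction of weak universality to subsets and about the non-self-conjugate singleton case, but the argument is the same.
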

\begin{proof}
Given a family of partitions $\{\mu^{(i)}\}_{i=1}^N$, 
and a set $\cP\subset\cP_\mus$, 
assume $\{\mu^{(i)}\}_{i=1}^N$ is weakly universal on $\cP$.
By Lemma~\ref{lem:weakeqstrong4hooks}, $\{\mu^{(i)}\}_{i=1}^N$ is strongly universal on the hooks in $\cP$.
If $\{\mu^{(i)}\}_{i=1}^N=\{\mu^{(i)\prime}\}_{i=1}^N$, 
then by Lemma~\ref{lem:weakeqstrong4selfconj}, $\{\mu^{(i)}\}_{i=1}^N$ is
also strongly universal on each self-conjugate partition in $\cP$.
Furthermore by Lemma~\ref{lem:ifnohooksnorconjthenu}, $\{\mu^{(i)}\}_{i=1}^N$
is strongly universal on each proper $\nu\in\cP$ such that $\nu\neq\nu'$
and so by Lemma~\ref{lem:weakeqstrong4conjpairs}, $\{\mu^{(i)}\}_{i=1}^N$ is also strongly universal on each proper
conjugate pair in $\cP$.
If on the other hand $\{\mu^{(i)}\}_{i=1}^N\neq\{\mu^{(i)\prime}\}_{i=1}^N$ then by Lemma~\ref{lem:muneqmupuniversal}, 
$\{\mu^{(i)}\}_{i=1}^N$ is strongly universal on all the proper partitions in $\cP$.
In either case it follows by Lemma~\ref{lem:additiveu} that $\{\mu^{(i)}\}_{i=1}^N$ is strongly 
universal on $\cP$.

Conversely if $\{\mu^{(i)}\}_{i=1}^N$ is strongly universal on $\cP$, then
clearly $\{\mu^{(i)}\}_{i=1}^N$ is weakly universal on $\cP$.

\end{proof}

\section{Engineering universality}

We have seen that universality is common and, with increasingly many nontrivial partitions, 
inevitable.
If universality is desired then one can either choose appropriate partitions, for example
by Theorem~\ref{thm:iff2partitionsu}, or choose sufficiently many partitions, in accordance with Corollary~\ref{cor:ifmanyu}.
If these are not options and one is bound to a particular family of partitions, 
there are still other ways to force universality.

\subsection{Ancillae}

Ancillae are auxiliary qudits that facilitate a computation in some way, even though their 
initial and final states are independent of the computation.  Similarly, a partition may be 
added to a family of partitions in order to make it universal, even though we are only 
interested in unitary operations on the space associated with the original family.  Since in 
its quantum application the cells of this auxiliary partition correspond to qudits, they may 
be described as ancillae.  We find that at most five ancillae are required to ensure universality.
\begin{theorem}
\label{thm:five}
Given any family of partitions $\{\mu^{(i)}\}_{i=1}^N$,
there exists a partition $\mu^{(N+1)}$ of size no greater than five
such that $\{\mu^{(i)}\}_{i=1}^{N+1}$ is universal on any nonempty set
$\cP\subset\cP_{\mu^{(1)}\cdots\mu^{(N+1)}}$. 
\end{theorem}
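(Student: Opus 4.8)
The plan is to choose $\mu^{(N+1)}=[3,2]$, the partition of $5$ depicted by $\ydiagram{3,2}$, and to show that appending it to any family forces universality. The key structural fact is that $[3,2]$ is a proper partition (its second row has length $2>1$) that is \emph{not} self-conjugate, since its conjugate is $[2,2,1]\neq[3,2]$. Therefore the enlarged family $\{\mu^{(i)}\}_{i=1}^{N+1}$ is automatically not self-conjugate: even if $\{\mu^{(i)}\}_{i=1}^N=\{\mu^{(i)\prime}\}_{i=1}^N$, adjoining a single non-self-conjugate partition breaks that equality (the multiset $\{\mu^{(i)\prime}\}_{i=1}^{N+1}$ differs from $\{\mu^{(i)}\}_{i=1}^{N+1}$ in the entry coming from $[3,2]$). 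So Lemma~\ref{lem:muneqmupuniversal} will apply to the enlarged family, provided we can show $\cP_{\mu^{(1)}\cdots\mu^{(N+1)}}$ contains no hooks.

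The next step is precisely to rule out hooks in $\cP_{\mu^{(1)}\cdots\mu^{(N+1)}}$. Here I invoke Lemma~\ref{lem:hookfamilies}: if that set contained a hook, then every partition in the family — including $\mu^{(N+1)}=[3,2]$ — would have to be improper (a hook or trivial). But $[3,2]$ is proper, a contradiction. Hence $\cP_{\mu^{(1)}\cdots\mu^{(N+1)}}$ contains no hooks, regardless of what the original partitions $\mu^{(i)}$ are — they need not even be nontrivial. Combining the two observations, by Lemma~\ref{lem:muneqmupuniversal} the family $\{\mu^{(i)}\}_{i=1}^{N+1}$ is universal on the set of proper partitions in $\cP_{\mu^{(1)}\cdots\mu^{(N+1)}}$, which is all of $\cP_{\mu^{(1)}\cdots\mu^{(N+1)}}$ since there are no hooks (and trivial partitions only occur when $N=0$, handled below). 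For an arbitrary nonempty $\cP\subset\cP_{\mu^{(1)}\cdots\mu^{(N+1)}}$, the same argument — or monotonicity of universality over subsets of hook-free, non-self-conjugate situations via Lemma~\ref{lem:additiveu} — gives universality on $\cP$.

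A small amount of care is needed at the boundary. If some of the $\mu^{(i)}$ are trivial, or if $N$ is small, one should check that the family $\{\mu^{(i)}\}_{i=1}^{N+1}$ still behaves as claimed; but triviality of a partition makes $\su(\bigotimes\bmu^{(i)})$ no larger than $\su$ of the tensor product of the nontrivial factors with $\bmu^{(N+1)}$, and the argument above only used that $\mu^{(N+1)}$ is proper and non-self-conjugate, which is immune to whatever the other factors are. One also wants to confirm that $\dim(\bmu^{(N+1)})=\dim(V_{[3,2]})=5>1$, so that $[3,2]$ genuinely contributes and $\cP_{\mu^{(1)}\cdots\mu^{(N+1)}}\neq\emptyset$; this follows from the five standard Young tableaux of shape $[3,2]$ exhibited earlier in the excerpt.

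\textbf{Main obstacle.} There is no serious obstacle: the entire content is selecting the right ancillary partition, and $[3,2]$ works because it simultaneously (a) is proper, so by Lemma~\ref{lem:hookfamilies} it poisons any hook in the induced set, and (b) is not self-conjugate, so it poisons the self-conjugacy of the whole family and thereby activates Lemma~\ref{lem:muneqmupuniversal}. The only thing to be careful about is the bookkeeping at the edge cases (trivial $\mu^{(i)}$, small $N$, and the passage from $\cP_{\mu^{(1)}\cdots\mu^{(N+1)}}$ to an arbitrary subset $\cP$), none of which presents a genuine difficulty.
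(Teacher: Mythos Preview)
Your proof is correct and matches the paper's approach exactly: choose $\mu^{(N+1)}=[3,2]$, observe it is proper (so Lemma~\ref{lem:hookfamilies} forces every $\nu\in\cP$ to be proper) and non-self-conjugate (so the enlarged family satisfies $\{\mu^{(i)}\}\neq\{\mu^{(i)\prime}\}$), then invoke Lemma~\ref{lem:muneqmupuniversal}. Your extra bookkeeping about trivial $\mu^{(i)}$ and subsets $\cP$ is harmless but unnecessary, since Lemma~\ref{lem:muneqmupuniversal} already applies to any set of proper partitions $\cP\subset\cP_{\mu^{(1)}\cdots\mu^{(N+1)}}$ without further hypotheses on the $\mu^{(i)}$.
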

\begin{proof}
Let $\mu^{(N+1)}=\ydiagram{3,2}$.
Since we have $\mu^{(N+1)\prime}\neq\mu^{(N+1)}$, it follows that
$\{\mu^{(i)\prime}\}_{i=1}^{N+1}\neq\{\mu^{(i)}\}_{i=1}^{N+1}$.
Since furthermore $\mu^{(N+1)}$ is proper, it follows from Lemma~\ref{lem:hookfamilies} that
$\cP$ contains only proper partitions.
Therefore the conditions of Lemma~\ref{lem:muneqmupuniversal} are fulfilled,
implying the universality of $\{\mu^{(i)}\}_{i=1}^{N+1}$.
\end{proof}

In many cases fewer ancillae are needed.  For example, if hooks are not a concern, 
then adding the partition $\ydiagram{2}$ to the family suffices to render the family 
non-self-conjugate and thereby achieve universality.  
On the other hand if the family consists entirely of 
hooks, but is not self-conjugate, then adding $\ydiagram{2,2}$ is sufficient for universality.

\subsection{Highest weight states}

The Cartan subalgebra of a finite dimensional semisimple Lie algebra is a maximal abelian 
subalgebra whose elements are semisimple (i.e. their matrix forms in the adjoint representations are 
simultaneously diagonalizable). 
Within a given irreducible representation, the eigenvalues of a canonical set of
generators of the Cartan subalgebra are called weights.  A highest weight vector is a
simultaneous eigenvector of the Cartan generators such that each of the associated 
eigenvalues is greater than or equal to the corresponding eigenvalue for every other eigenvector.

Barnum et al. have shown, in quantum physical terms, that there always exists a Hamiltonian 
in a given Lie algebra such that its ground state is equal to a highest weight vector in a 
given irreducible representation \cite[Theorem~14]{PhysRevA.68.032308}.  This implies that, in principle, a physical system can be
engineered in such a way that its lowest energy state corresponds to a highest weight vector.
For example there exists a Hamiltonian in $\su(d)$ whose ground state equals a 
highest weight vector in $\bigotimes_{i=1}^NU^{(d)}_{\mu^{(i)}}$.
If the vectors in this representation correspond to spin states, then the needed Hamiltonian 
might be physically implemented with the aid of a uniform magnetic field, with a highest 
weight vector corresponding to aligned spins.  By such means, qudits can be physically 
prepared in an initial state corresponding to a highest weight vector.

The purpose of giving a particular vector of a representation such preferred status is to 
correspondingly give preferred status to the irreducible representation to which that vector maps by 
intertwiner.  As the representations relevant to qudits correspond to products of special unitary 
representations with symmetric group representations, both determined by the same partition,
constraining the former simultaneously constrains the latter.  The end result is that, when 
considering the universality of a family of partitions $\{\mu^{(i)}\}_{i=1}^N$  on 
$\cP_\mus^{(d)}$, the physically relevant 
subset is reduced to a single partition.  We show, furthermore, that this partition always 
admits universality of a family of multiple nontrivial partitions.  The key to this outcome 
is the following.

\begin{lemma}[Cartan]
Let $d\in\nats$.
Given partitions $\lambda$ and $\mu$ such that $\lambdap_1\leq d$ and
$\mup_1\leq d$,
a highest weight vector of the $\SU(d)$ representation
$U^{(d)}_\lambda\otimes U^{(d)}_\mu$ maps via the $\SU(d)$-equivariant isomorphism
$U^{(d)}_\lambda\otimes U^{(d)}_\mu\rightarrow\bigoplus_\nu c_{\lambda\mu}^\nu U^{(d)}_\nu$
to a highest weight vector of $U^{(d)}_{\lambda+\mu}$.
\end{lemma}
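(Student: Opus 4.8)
The plan is to identify the highest weight vector of $U^{(d)}_\lambda\otimes U^{(d)}_\mu$ explicitly in terms of Weyl tableaux and show that its image under the stated isomorphism lies in the $U^{(d)}_{\lambda+\mu}$ summand and is itself a highest weight vector there. First I would recall that an irreducible $\SU(d)$-module $U^{(d)}_\kappa$ has a unique (up to scalar) highest weight vector, and that in the Weyl tableau basis this vector corresponds to the unique semistandard tableau of shape $\kappa$ whose $i$th row is filled entirely with $i$'s; its weight is $(\kappa_1,\kappa_2,\ldots,\kappa_d)$, which is the maximal weight of the representation. For the tensor product, the weights add, so the highest weight of $U^{(d)}_\lambda\otimes U^{(d)}_\mu$ is $(\lambda_1+\mu_1,\ldots,\lambda_d+\mu_d)=\lambda+\mu$, attained by the tensor of the two ``constant-row'' tableaux, and this is the only weight of $U^{(d)}_\lambda\otimes U^{(d)}_\mu$ equal to $\lambda+\mu$, since any other weight vector in the tensor product has a strictly smaller weight in dominance order.

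Next I would invoke Lemma~\ref{lem:LRunitary} together with Lemma~\ref{lem:LRCartan}: the decomposition $U^{(d)}_\lambda\otimes U^{(d)}_\mu\overset{\SU(d)}{\cong}\bigoplus_\nu c_{\lambda\mu}^\nu U^{(d)}_\nu$ includes exactly one copy of $U^{(d)}_{\lambda+\mu}$ (as $c_{\lambda\mu}^{\lambda+\mu}=1$), and every other $\nu$ occurring satisfies $\nu_i\leq\lambda_i+\mu_i$ for all $i$ with strict inequality somewhere, by Lemma~\ref{lem:nugeqmax} applied in the form $\nu_i\leq\lambda_i+\mu_i$ (the Weyl-type upper bound also follows from Lemma~\ref{lem:Weyl}). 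Hence the weight $\lambda+\mu$ does not occur in $U^{(d)}_\nu$ for any $\nu\neq\lambda+\mu$ in the decomposition, because the maximal weight of $U^{(d)}_\nu$ is $\nu\prec\lambda+\mu$. Since the isomorphism is $\SU(d)$-equivariant, it preserves weights; therefore the one-dimensional weight-$(\lambda+\mu)$ space on the left maps isomorphically onto the weight-$(\lambda+\mu)$ space on the right, which lies entirely in the $U^{(d)}_{\lambda+\mu}$ summand. Finally, a weight-$(\lambda+\mu)$ vector in $U^{(d)}_{\lambda+\mu}$ is automatically a highest weight vector of that module, since $\lambda+\mu$ is the highest weight of $U^{(d)}_{\lambda+\mu}$ and the highest weight space is one-dimensional. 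This completes the argument.

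I expect the only mild subtlety to be making precise the statement ``the weight $\lambda+\mu$ occurs with multiplicity one in $U^{(d)}_\lambda\otimes U^{(d)}_\mu$ and only in the $U^{(d)}_{\lambda+\mu}$ summand.'' The cleanest way to handle this is via the dominance order on weights: the weight of a tensor of Weyl tableaux is the sum of their content weights, each content weight is dominated by the shape (that is, a semistandard filling of $\kappa$ has weight $\preceq\kappa$ with equality iff it is the constant-row filling), so the tensor has weight $\preceq\lambda+\mu$ with equality only for the constant-row$\,\otimes\,$constant-row tensor. On the decomposition side, each $U^{(d)}_\nu$ appearing has highest weight $\nu\prec\lambda+\mu$ unless $\nu=\lambda+\mu$. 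Matching the one-dimensional weight-$(\lambda+\mu)$ spaces on both sides under the equivariant (hence weight-preserving) isomorphism then forces the image of the highest weight vector into the $U^{(d)}_{\lambda+\mu}$ component, where it is necessarily a highest weight vector. No heavy machinery beyond the already-cited lemmas is needed; the argument is essentially a weight-multiplicity bookkeeping once the dominance observation is in place.
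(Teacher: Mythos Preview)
Your proof is correct. The paper does not actually prove this lemma: it simply states the result and cites standard references (Eastwood, Fulton--Harris Exercise~15.24, Procesi), treating it as a well-known property of the Cartan product. Your weight-multiplicity argument is the standard elementary proof and is more than the paper itself provides.

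One minor correction: Lemma~\ref{lem:nugeqmax} gives the \emph{lower} bound $\nu_i\geq\max(\lambda_i,\mu_i)$, not the upper bound $\nu_i\leq\lambda_i+\mu_i$ you attribute to it. The dominance statement $\nu\preceq\lambda+\mu$ whenever $c_{\lambda\mu}^\nu>0$ is true, but it is not what either Lemma~\ref{lem:nugeqmax} or Lemma~\ref{lem:Weyl} asserts directly. Fortunately your argument does not actually depend on this: the multiplicity-matching step you spell out in the last paragraph is already sufficient. Since the weight $\lambda+\mu$ occurs with multiplicity one on the left (only the constant-row\,$\otimes$\,constant-row tableau has that content), and the single copy of $U^{(d)}_{\lambda+\mu}$ on the right already contributes one vector of that weight, equivariance forces the highest weight vector to land in $U^{(d)}_{\lambda+\mu}$, regardless of what other $\nu$ appear. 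So you can simply drop the parenthetical appeal to Lemma~\ref{lem:nugeqmax} and the proof stands.
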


\noindent The above lemma is a fundamental property of the ``Cartan product", or highest
weight component of the tensor product.  In other words, the tensor product of highest weight
vectors of two irreducible representations $U^{(d)}_\lambda$ and $U^{(d)}_\mu$ always maps by $\SU(d)$-intertwiner into the Cartan product of
the irreducible representations, $U^{(d)}_{\lambda+\mu}$ \cite{eastwood2005}, \cite[Exercise~15.24]{fulton1991representation}, \cite[Proposition~3]{procesi2007lie}.

As with the isomorphism between the product and sum of unitary group 
representations (Lemma~\ref{lem:LRunitary}), the above mapping between highest weight vectors 
likewise generalizes to a product of arbitrarily many vectors by iteration.
So a highest weight vector of $\bigotimes_{i=1}^NU^{(d)}_{\mu^{(i)}}$
maps to a highest weight vector of $U^{(d)}_{\mu^{(1)}+\cdots+\mu^{(N)}}$.
It follows that a highest weight vector of a product of qudits,
$\bigotimes_{i=1}^k(V_{\mu^{(i)}}\otimes U^{(d)}_{\mu^{(i)}})$,
maps to a highest weight vector of 
$\bigotimes_{i=1}^kV_{\mu^{(i)}}\otimes U^{(d)}_{\mu^{(1)}+\cdots+\mu^{(N)}}$.
The latter in turn maps by a $S_{m_1}\times\cdots S_{m_N}\times\SU(d)$-intertwiner to $V_{\mu^{(1)}+\cdots+\mu^{(N)}}\otimes U^{(d)}_{\mu^{(1)}+\cdots+\mu^{(N)}}$
in the Hilbert space of the full physical system, as discussed in Chapter 3.
So in such an application a highest weight vector of the unitary group 
representation determines the relevant representation of the symmetric 
groups as well.  The utility of such control over the 
representation is made apparent by the following.
\begin{theorem}
\label{thm:highestweight}
Given a family of partitions $\{\mu^{(i)}\}_{i=1}^N$ such that
$N>1$ and $\dim(\bmu^{(i)})>1$ for all $1\leq i\leq N$, 
it follows that
$\{\mu^{(i)}\}_{i=1}^N$ is universal on $\sum_{i=1}^N\mu^{(i)}$.
\end{theorem}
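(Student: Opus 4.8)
The plan is to show that the single partition $\nu = \sum_{i=1}^N \mu^{(i)}$ satisfies the conditions of Theorem~\ref{thm:iffumulti} (with $\cP = \{\nu\}$), or more directly to exhibit explicitly that $\{\mu^{(i)}\}_{i=1}^N$ is strongly universal on $\nu$. The starting observation is Lemma~\ref{lem:LRCartan}: $c_{\mu^{(1)}\cdots\mu^{(N)}}^{\nu} = 1$, so $\nu \in \cP_\mus$, and the isotypical subspace $V_\mus^\nu$ has dimension exactly $\dim(\bigotimes_{i=1}^N \bmu^{(i)})$ with a one-dimensional $\Hom$ space. Since $N>1$ and each $\mu^{(i)}$ is nontrivial, $\nu = \sum\mu^{(i)}$ has $\nu_2 = \sum \mu^{(i)}_2 \geq \sum 1 = N \geq 2 > 1$, so $\nu$ is proper (not a hook, not trivial). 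Moreover $\nu$ is not self-conjugate: one can check that a part-wise sum of nontrivial partitions cannot be self-conjugate (its first row $\nu_1 = \sum\mu^{(i)}_1$ is strictly larger than the number of rows, since each $\mu^{(i)}_1 \geq 2$ while the number of rows of $\nu$ equals $\max_i(\text{rows of }\mu^{(i)})$ — wait, this needs care, so instead I would argue directly).

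The cleanest route avoids worrying about self-conjugacy of $\nu$ itself. Instead I would split into cases on whether the \emph{family} is self-conjugate. If $\{\mu^{(i)\prime}\}_{i=1}^N \neq \{\mu^{(i)}\}_{i=1}^N$, then since $\nu$ is proper, Lemma~\ref{lem:muneqmupuniversal} applied with $\cP = \{\nu\}$ immediately gives universality on $\nu$. If the family \emph{is} self-conjugate, then I need a separate argument, because $\nu$ might be self-conjugate and Lemma~\ref{lem:mueqmupnueqnup} would then obstruct universality — unless something special about the highest-weight / Cartan structure saves us. Here is where the preceding Cartan lemma enters: I would invoke the fact (from Lemma~\ref{lem:LRCartan} and the fact that $\nu\backslash\mu^{(1)}$ has the unique Littlewood--Richardson filling) that the isotypical subspace $V_\mus^\nu$ is precisely the image of the Cartan product, and then examine how $M_\nu$ acts. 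By Lemma~\ref{lem:MnuV}, $M_\nu V_\mus^\nu = V_\musp^{\nu'}$. But $\nu' = \sum \mu^{(i)\prime}$ — no: conjugation does not distribute over part-wise sum. So if the family is self-conjugate, $\{\mu^{(i)\prime}\} = \{\mu^{(i)}\}$ as multisets, but $\sum\mu^{(i)\prime}$ need not equal $\sum\mu^{(i)}$, hence $\nu'$ need not equal $\nu$. Thus the honest claim is: $\nu = \sum\mu^{(i)}$ is generically \emph{not} self-conjugate even when the family is, because $\nu_1 = \sum\mu^{(i)}_1$ is much larger than $\nu'_1 = \max_i(\text{number of rows of }\mu^{(i)})$ once $N \geq 2$ and each $\mu^{(i)}$ is nontrivial. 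I would prove this inequality $\nu_1 > \nu'_1$ carefully: $\nu_1 = \sum_i \mu^{(i)}_1 \geq 2N$, while $\nu'_1$ equals the maximum number of parts among the $\mu^{(i)}$, which is at most $\max_i |\mu^{(i)}| \leq \sum_i |\mu^{(i)}| - (N-1) \cdot 1$... actually simpler: $\nu'_1 = \max_i \mu^{(i)\prime}_1 \leq \max_i |\mu^{(i)}|$, and I want $\sum_i \mu^{(i)}_1 > \max_i |\mu^{(i)}|$, which holds since $\sum_i \mu^{(i)}_1 \geq \mu^{(j)}_1 + \sum_{i\neq j}\mu^{(i)}_1 \geq \mu^{(j)}_1 + (N-1) \geq \ldots$ — one needs that $\mu^{(j)}_1 + (N-1) > |\mu^{(j)}|$ fails in general, so instead bound via: for the index $j$ maximizing $|\mu^{(i)}|$, $\nu_1 \geq \mu^{(j)}_1 + 2(N-1)$ and $\nu'_1 = \mu^{(k)\prime}_1$ for some $k$; if $k = j$ then $\nu'_1 = \mu^{(j)\prime}_1$ and since $\mu^{(j)}$ is nontrivial, $\mu^{(j)}_1 + \mu^{(j)\prime}_1 \leq |\mu^{(j)}| + 1$, but also $\mu^{(j)}_1 \cdot \mu^{(j)\prime}_1 \geq |\mu^{(j)}|$; the winning inequality is $\nu_1 + \nu_2 + \cdots > \nu'_1$, i.e. just use $\nu_1 \geq 2N > $ any single $\mu^{(i)\prime}_1$ when combined with the other rows. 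I expect the clean statement is: $\nu_1 = \sum\mu^{(i)}_1 \geq 2N$ and $\nu$ has strictly fewer than $2N$ rows (since it has $\max_i\mu^{(i)\prime}_1$ rows and each $\mu^{(i)}$ has at most $|\mu^{(i)}| - 1$ rows), so $\nu_1 > \nu'_1$ and $\nu \neq \nu'$.

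With $\nu \neq \nu'$ established in all cases, and $\nu$ proper, I then finish: by Lemma~\ref{lem:marin}, $\rho_\nu(\g'_n) = \sla(\bnu)$. Since $V_\mus^\nu \subset \bnu$, $\su(V_\mus^\nu)$ is a subalgebra of $\sla(\bnu) = \rho_\nu(\g'_n) \subset \rho_\nu(\g_n)$, and restricting to the single-element set $\cP = \{\nu\}$ the projection $\Pi$ onto $V_\mus^\nu$ gives $\su(V_\mus^\nu) \subset \rho_\nu(\g_n)\Pi$ — actually I should be slightly more careful and note that $\su(V_\mus^\nu) \oplus \bzero \subset \sla(\bnu)$ where $\bzero$ acts on the orthogonal complement, so this element lies in $\rho_\nu(\g_n)$ and equals its own restriction under $\Pi$. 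Then Lemma~\ref{lem:suVsubthenu} yields that $\{\mu^{(i)}\}_{i=1}^N$ is universal on $\{\nu\} = \{\sum\mu^{(i)}\}$, as desired. The main obstacle is the elementary but fiddly verification that $\nu = \sum_i\mu^{(i)}$ is never self-conjugate (equivalently $\nu_1 > \nu'_1$) under the hypotheses $N > 1$ and all $\dim(\bmu^{(i)}) > 1$; once that combinatorial lemma is nailed down, the rest is an immediate appeal to Lemma~\ref{lem:muneqmupuniversal} (or directly Lemmas~\ref{lem:marin} and~\ref{lem:suVsubthenu}), and in fact the whole theorem reduces to the single-partition non-self-conjugate proper case already handled.
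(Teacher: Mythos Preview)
Your overall architecture matches the paper's exactly: verify $\nu=\sum_i\mu^{(i)}\in\cP_\mus$ via Lemma~\ref{lem:LRCartan}, check $\nu_2=\sum_i\mu^{(i)}_2\geq N\geq 2$ so $\nu$ is proper, then split on whether the family is self-conjugate, invoking Lemma~\ref{lem:muneqmupuniversal} when it is not. The remaining case is handled in the paper by showing $\nu\neq\nu'$ and applying Lemma~\ref{lem:ifnohooksnorconjthenu} (your finish via Lemma~\ref{lem:marin} plus Lemma~\ref{lem:suVsubthenu} is equivalent).

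The gap is in your verification that $\nu\neq\nu'$. First, the parenthetical claim that a part-wise sum of nontrivial partitions is never self-conjugate is false: take $\mu^{(1)}=[2,1]$ and $\mu^{(2)}=[2,1,1,1]$, so $\nu=[4,2,1,1]=\nu'$. So you genuinely need the case split, and only need $\nu\neq\nu'$ when the family is self-conjugate. Second, in that case your proposed bound ``$\nu'_1<2N$'' is also false: with $\mu^{(1)}=[5,1,1,1,1]$ (self-conjugate) and $\mu^{(2)}=[2,1]$ one has $\nu'_1=5>4=2N$. The one-line fix, which you nearly wrote down before detouring, is to use the self-conjugacy of the family directly: since $\{\mu^{(i)\prime}\}=\{\mu^{(i)}\}$ as multisets, $\max_i\mu^{(i)\prime}_1=\max_i\mu^{(i)}_1$, whence
\[
\nu'_1=\max_i\mu^{(i)\prime}_1=\max_i\mu^{(i)}_1<\sum_i\mu^{(i)}_1=\nu_1,
\]
the strict inequality holding because $N>1$ and every $\mu^{(i)}_1\geq 2>0$. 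That is precisely the paper's argument.
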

\begin{proof}
Let $\nu=\sum_{i=1}^N\mu^{(i)}$.
It follows from Lemma~\ref{lem:LRCartan} that $c_\mus^\nu>0$ and thus there exists 
a nontrivial intertwiner from $\bigotimes_{i=1}^N\bmu^{(i)}$ to $\nu$.

Since $\dim(\bmu^{(i)})>1$ it follows that $\mu^{(i)}_2\geq 1$, and since
furthermore $N>1$, it follows that $\nu_2=\sum_{i=1}^N\mu^{(i)}_2>1$.
So $\nu$ is proper.

If furthermore we have $\{\mu^{(i)\prime}\}_{i=1}^N\neq\{\mu^{(i)}\}_{i=1}^N$,
then by Lemma~\ref{lem:muneqmupuniversal} $\{\mu^{(i)}\}_{i=1}^N$ is universal on $\nu$.
On the other hand if $\{\mu^{(i)\prime}\}_{i=1}^N=\{\mu^{(i)}\}_{i=1}^N$,
we have that $\nu_1=\sum_{i=1}^N\mu^{(i)}_1$ while 
$$\nup_1=\max_{1\leq i\leq N}\{\mu^{(i)\prime}_1\}=\max_{1\leq i\leq N}\{\mu^{(i)}_1\}.$$  
Since $N>1$, the sum is strictly greater than the maximum and
so $\nu_1>\nup_1$.  Thus $\nu'\neq\nu$ and by Lemma~\ref{lem:ifnohooksnorconjthenu} $\{\mu^{(i)}\}_{i=1}^N$ is again universal on $\nu$.
\end{proof}

\begin{exmp}
The pair $(\ydiagram{2,1},\ydiagram{2,1})$ is not $d$-universal for $d>2$ because
it is not universal on $\ydiagram{3,2,1}$, nor on $\left\{\ydiagram{3,3},\ydiagram{2,2,2}\right\}$.
But constrained to a highest weight vector, the only physically relevant partition
in 
$\cP_{\subydiagramtext{2,1}\,\subydiagramtext{2,1}}^{(d)}$
is
\setydiagrameq
$$
\ydiagram{2,1}+\ydiagram{2,1}=\ydiagram{4,2},
$$
\setydiagramtext
on which $(\ydiagram{2,1},\ydiagram{2,1})$ is universal.
\end{exmp}

\newpage

\chapter{Conclusions}

To recap, we have given a mathematical definition of exchange-only universality, applicable to any number of encoded qudits of any dimension.  We have shown this is a reasonable definition by proving the equivalence with a physically motivated
alternative definition.  We have furthermore conveniently expressed the definition in terms of partitions that label the qudits,
through their identification with irreducible representations of the symmetric group.

Having established universality as a property of partitions, we proceeded to derive conditions on the partitions for  universality to be possible on the associated qudits.  Most notably we derived necessary and sufficient conditions for universality on any number of qudits.  In particular we derived efficiently decidable, 
necessary and sufficient conditions for universality on one or two qudits, in terms of simple conditions on the associated partitions.  We also showed that universal families of qudits have an upward closed property, and that for sufficiently many qudits, universality is guaranteed.  Further we showed how to physically engineer universality when it is lacking, for example by simply adding at most five ancillae.

Mathematically, a cornerstone of all of the above results was supplied by Marin's work on the irreducible representations of the algebra of transpositions.  Indeed, Theorem~\ref{thm:iff1partitionu} giving necessary and sufficient conditions for the universality of a single qudit, and Theorem~\ref{thm:ifnohookslargenu} implying universality of sufficiently many qudits, both
follow quite readily from Marin's work.  The remaining seven theorems derived here
(Theorems~\ref{thm:iffumulti}, \ref{thm:iff2partitionsu}, \ref{thm:upwardclosed}, \ref{thm:iffunonselfcon}, \ref{thm:weakeqstrong}, \ref{thm:five}, \ref{thm:highestweight})
were made possible by certain key results original to this thesis,
 a principal one being that proper non-self-conjugate partitions ensure universality, while self-conjugate partitions only enable universality under certain conditions.  This was arrived at by careful examination of the isotypical decomposition of the representation induced by a tensor product of qudits.  Another key result provided the conditions under which universality is possible on hooks.  The latter was derived from dimension counting, considerations about irreducible Lie-algebra representations, and the Littlewood--Richardson rules.

Two of the above theorems stand out for requiring longer and more intricate proofs than the others.
Such may be an indication of their significance, or that simpler proofs were overlooked.  In either case, they seem worthy of special attention.   One is the equivalence of weak and strong universality, Theorem~\ref{thm:weakeqstrong}, which resulted from the accumulation of a 
variety of other results, including analysis of the alternating intertwiner that Marin found to relate conjugate representations of the algebra of transpositions, application of the Littlewood--Richardson rules, as well as properties of Lie algebras.
The other theorem of note, Theorem~\ref{thm:iff2partitionsu}, gave necessary and sufficient conditions for the universality of two qudits, in arithmetic terms of the partitions.  The proof of this theorem made critical use of the Weyl inequalities for the Littlewood--Richardson coefficients, as well as the work of Klyachko, Knutson and Tao relating these coefficients to eigenvalues of Hermitian matrices.

We conclude with some open problems.  
Chief among them is to complete the characterization of universal families of partitions.
Since we succeeded at characterizing universal families of nontrivial partitions,
what remains is to allow for trivial partitions, specifically in families that otherwise contain only hooks.
Universal families excluding hooks are already characterized by lemmas in subsections~4.2.1--4.2.2, 
which allow for inclusion of trivial partitions.  

Although trivial partitions label one-dimensional representations and are thus not useful as qudits by themselves, 
they could play the role of ancillae if they make a nonuniversal family universal.
A nonuniversal family of hooks can often be assessed as such by the dimension of the induced representation labeled by the larger hook that results from the Littlewood--Richardson rules.
This can be remedied by adding trivial partitions to the family, which increase the dimension of the induced representation without increasing the dimension of the relevant isotypical subspace.  
The challenge is to prove whether the intertwiner condition of a universality-witnessing map can
be satisfied in this way, which is not at all clear.

\begin{exmp}
It is not known whether $\left(\ydiagram{3,1,1},\ydiagram{4}\right)$ is $d$-universal, for $d\geq 3$.
\end{exmp}

Another open problem is that of completing the efficient characterization of universal families of partitions, or at least of nontrivial partitions.  For the latter, what remains is to efficiently characterize universal families of more than two self-conjugate partitions.  The goal here is to find necessary and sufficient conditions for $d$-universality, preferably in a simple arithmetic form, that can be determined by a polynomial-time algorithm.  By the discussion in subsection~4.2.5 what is needed, more specifically, is an algorithm to determine whether $\cP_\mus^{(d)}$ contains conjugates, given $\{\mu^{(i)}\}_{i=1}^N=\{\mu^{(i)\prime}\}_{i=1}^N$.  
As mentioned in subsection~4.2.5, the Weyl inequalities alone
are inadequate for this task, suggesting the use of more Horn inequalities.  
Fortunately, determining whether a given Littlewood--Richardson coefficient is nonvanishing can still theoretically be accomplished in polynomial time \cite{LRcomplexity}.
But the number of Littlewood--Richardson coefficients that must be checked grows superpolynomially in the sum of the sizes of the partitions in the family, thus motivating the search for a more efficient procedure.

Finally, once a family of partitions has been proven universal, there remains the task of explicitly
constructing the universality-witnessing function.  This is an important step towards physical applications,
which require explicit construction of given unitary operators from exchange-interactions.  Significant progress
has already been made for the 2-universal $(\ydiagram{2,1},\ydiagram{2,1})$ \cite{divincenzo:qc2000b,PhysRevA.99.042331,DBLP:journals/qic/FongW11} and $(\ydiagram{2,2},\ydiagram{2,2})$ \cite{Hsieh2003} cases.  
For the purpose of practical exchange-only quantum computation, there is more work to be done on these and other cases.

\bibliographystyle{plain}
\bibliography{Thesis}

\end{document}